%% For double-blind review submission, w/o CCS and ACM Reference (max submission space)
\documentclass[format=acmsmall, screen=true, nonacm,pdftex,svgnames]{acmart}
%% For double-blind review submission, w/ CCS and ACM Reference
%\documentclass[sigplan,review,anonymous]{acmart}\settopmatter{printfolios=true}
%% For single-blind review submission, w/o CCS and ACM Reference (max submission space)
%\documentclass[sigplan,review]{acmart}\settopmatter{printfolios=true,printccs=false,printacmref=false}
%% For single-blind review submission, w/ CCS and ACM Reference
%\documentclass[sigplan,review]{acmart}\settopmatter{printfolios=true}
%% For final camera-ready submission, w/ required CCS and ACM Reference
%\documentclass[sigplan]{acmart}\settopmatter{}

%% Conference information
%% Supplied to authors by publisher for camera-ready submission;
%% use defaults for review submission.
%\acmJournal{PACMPL}
%\acmVolume{1}
%\acmNumber{POPL} % CONF = POPL or ICFP or OOPSLA
%\acmArticle{1}
%\acmYear{2024}
%\acmMonth{1}
%\acmDOI{} % \acmDOI{10.1145/nnnnnnn.nnnnnnn}
%\startPage{1}

%% Copyright information
%% Supplied to authors (based on authors' rights management selection;
%% see authors.acm.org) by publisher for camera-ready submission;
%% use 'none' for review submission.
\setcopyright{none}
%\setcopyright{acmcopyright}
%\setcopyright{acmlicensed}
%\setcopyright{rightsretained}
%\copyrightyear{2018}           %% If different from \acmYear

%% Bibliography style
\bibliographystyle{ACM-Reference-Format}
%% Citation style
\citestyle{acmauthoryear}  %% For author/year citations
%\citestyle{acmnumeric}     %% For numeric citations
%\setcitestyle{nosort}      %% With 'acmnumeric', to disable automatic
                            %% sorting of references within a single citation;
                            %% e.g., \cite{Smith99,Carpenter05,Baker12}
                            %% rendered as [14,5,2] rather than [2,5,14].
%\setcitesyle{nocompress}   %% With 'acmnumeric', to disable automatic
                            %% compression of sequential references within a
                            %% single citation;
                            %% e.g., \cite{Baker12,Baker14,Baker16}
                            %% rendered as [2,3,4] rather than [2-4].

%\citestyle{acmnumeric}   %% For author/year citations
%%%%%%%%%%%%%%%%%%%%%%%%%%%%%%%%%%%%%%%%%%%%%%%%%%%%%%%%%%%%%%%%%%%%%%
%% Note: Authors migrating a paper from traditional SIGPLAN
%% proceedings format to PACMPL format must update the
%% '\documentclass' and topmatter commands above; see
%% 'acmart-pacmpl-template.tex'.
%%%%%%%%%%%%%%%%%%%%%%%%%%%%%%%%%%%%%%%%%%%%%%%%%%%%%%%%%%%%%%%%%%%%%%

%\usepackage{bbm}
%% Some recommended packages.
\usepackage{booktabs}   %% For formal tables:
                        %% http://ctan.org/pkg/booktabs
\usepackage{subcaption} %% For complex figures with subfigures/subcaptions
                        %% http://ctan.org/pkg/subcaption
\usepackage{bussproofs}
\usepackage[cal=boondoxo]{mathalfa}
\DeclareMathAlphabet{\mathpzc}{OT1}{pzc}{m}{it}
\newtheorem{theorem}{Theorem}[section]

\usepackage{color}
\usepackage{xspace}

\usepackage{listings}
\newcommand{\ignore}[1]{{}}

\newcommand{\syntaxDef}[3]{\rulebox{%
\syntaxKeyword$#1\mathrel{::=}{#2}$ \ifthenelse{\equal{#3}{}}{}{[#3]}%
}%
}

\makeatletter
\newcommand{\shorteq}{%
  \settowidth{\@tempdima}{-}% Width of hyphen
  \resizebox{\@tempdima}{\height}{=}%
}
\makeatother

%\newcommand{\function}{\texttt{function}\xspace}

%\newcommand{\send}[3]{\textsf{!}[#1|#2]\to#3}

%[|(#2,#3)|#4|]#5
%% \newcommand{\bigparallels}[5]{{\ensuremath{
%% \begin{array}[t]{l}
%% #1\\
%%  |\![\!\{#2.#3|#4\}\!]\!|\\
%% #5
%% \end{array}}}}

%\newcommand{\intchoices}[2]{#1|\sim|#2}

% Process terms

%\newcommand{\zero}{\ensuremath{\texttt{stop}}}

% ATM example

% Auction example

% Types

%\newcommand{\name}{\mathbf{Name}}

   % message
     % branch type
 % select type
  % mult-set effects

%\newcommand{\noConnection}{\ensuremath{\mathbf{\perp}}}

% Semantics

%\newcommand{\redAc}[1]{\ensuremath{\stackrel{{\footnotesize #1}}{\longrightarrow}}}

% Wf types and channels - Rule Names

% Type System - Rule Names

% Semantics - Structural Congruence

%\newcommand{\pat}{\mathsf{Pat}}

%\newcommand{\discard}{\mathsf{Discard}}

%\newcommand{\etal}{{\it et al}}

% Semantics - Unlabelled Transitions

% Semantics - Labelled Transitions

% Semantics - Transitions

% Semantics - Actions

% Miscel

   % Append to context

 %Generic/with defs
 %Generic/with defs
 %Generic/with defs
 %Generic/with defs
 %Generic/with defs
%\newcommand{\judge}[3]{\ensuremath{#1\vdash #2}} % Generic/no defs
 % Effects/ with defs
%\newcommand{\judgeEf}[4]{\ensuremath{#1\vdash #2:#3}} %Effects/no defs

  % Messages
%\newcommand{\code}[1]{\texttt{#1}}

 % free process variables

%%%%  Added by Elsa, needs to be changed:

%\usepackage{bm}

\usepackage{xcolor}
\usepackage{colortbl}

\usepackage[utf8]{inputenc}

\usepackage{amsmath}
\usepackage{hyperref}

\usepackage{subcaption}
\usepackage{wrapfig}

\usepackage{xspace}
\usepackage{float}
\usepackage{balance}

\usepackage{textcomp} %just for a vertical quote

\usepackage{changepage}
\usepackage{array,etoolbox}

\usepackage{algorithm}
\usepackage{algorithmic}

\preto\tabular{\setcounter{magicrownumbers}{0}}
\newcounter{magicrownumbers}

% References
%\usepackage[capitalize, noabbrev]{cleveref} % must be loaded after hyperref and amsmath
\usepackage{cleveref} % must be loaded after hyperref and amsmath

% Figures
\usepackage{wrapfig}
\usepackage{caption}

% Tables
\usepackage{longtable}
\usepackage{tabu}

% Lists
%\usepackage[inline,shortlabels]{enumitem}
\usepackage{wasysym}
% Proof trees
\usepackage{mathpartir}
%\usepackage{bussproofs}

% Symbolx
\usepackage{mathtools} %psmallmatrix
\usepackage{xfrac} % fractions

% Enumeration
% \usepackage{enumitem} % resume option

% Quantum
\usepackage[qm,braket]{qcircuit}

\hyphenation{Comp-Cert}

% sessions

%Language names

\colorlet{kwd}{black!80!green}
\definecolor{spec1}{RGB}{78, 131, 162}
\definecolor{spec0}{RGB}{66, 102, 136}
\definecolor{lespec}{RGB}{30, 80, 180}
\colorlet{spec}{lespec}
\colorlet{auto}{lespec!35!lightgray}
\colorlet{stack}{magenta}

\newcommand{\sqir}{SQIR\xspace}

\newcommand{\qwire}{{Qwire}\xspace}
\newcommand{\qbricks}{{QBricks}\xspace}

\newcommand{\voqc}{\textsc{voqc}\xspace}

\newcommand{\myparagraph}[1]{\noindent\paragraph{\textbf{#1}}}

%quantum tikz macros
% These are useful if not using qcircuit (which kind of sucks, alternatives do exist
\usepackage{tikz}

\newcommand{\mapp}[2]{#1\circ #2}
\newcommand{\uapp}[2]{#1\, #2}
% From POPL2017

\newcommand{\one}{\ensuremath{1}}

\newcommand{\slen}[1]{|#1|}
\newcommand{\pau}{\mathpzc{P}}
% From QPL2017

% From CoqPL2018

% From QPL2018

% Standard mathematical definitions
% Field

% Integers

% Naturals

% Rationals

% Reals

% Complex

% \newcommand{\C}{\ensuremath{\mathbb{C}}\xspace}

%   Tikz
\usepackage{pgf}

\usepackage{tikz} % circuit diagrams
\usetikzlibrary{%
  arrows,%
  shapes.misc,% wg. rounded rectangle
  shapes.geometric, % diamonds
  shapes.arrows,%
  shapes.callouts,
  shapes.gates.logic.US,
  chains,%
  matrix,%
  positioning,% wg. " of "
  scopes,%
  decorations.pathmorphing,% /pgf/decoration/random steps | erste Graphik
  decorations.text,
  decorations.pathreplacing, % braces
  shadows,%
  automata,
  fit, calc, arrows.meta
}

\tikzset{ machine/.style={
    % The shape:
    rectangle,
    % The size:
    minimum width=25mm,
    minimum height=18mm,
    text width=24mm,
    % The alignment
    align=center,
    % The border:
    very thick,
    draw=black,
    % The colors:
    color=black,
    fill=white,
    % Font
%    font=\ttfamily,
  }
}

% Place two figures side-by-side, possibly overlapping

\newcommand\wideparen[1]{%
\tikz[baseline=(wideArcAnchor.base)]{
    \node[inner sep=0] (wideArcAnchor) {$#1$}; 
    \coordinate (wideArcAnchorA) at ($0.9*(wideArcAnchor.north west) + 0.1*(wideArcAnchor.north east)+(0.0em,0.75ex)$);
    \coordinate (wideArcAnchorB) at ($0.1*(wideArcAnchor.north west) + 0.9*(wideArcAnchor.north east)+(0.0em,0.75ex)$);
    \draw[line width=0.1ex,line cap=round] 
        ($(wideArcAnchor.north west)+(0.0em,0.1ex)$) 
            .. controls (wideArcAnchorA) and (wideArcAnchorB) ..
        ($(wideArcAnchor.north east)+(0.0em,0.1ex)$)        
    ;
}}
\newcommand{\cmsg}[1]{\wideparen{#1}}

\DeclarePairedDelimiter\abs{\lvert}{\rvert}
\DeclarePairedDelimiter\norm{\lVert}{\rVert}

\makeatletter
\let\oldabs\abs
\def\abs{\@ifstar{\oldabs}{\oldabs*}}
\let\oldnorm\norm
\def\norm{\@ifstar{\oldnorm}{\oldnorm*}}
\makeatother

%\DeclarePairedDelimiter{\inpar}[2]{}{}{#1\;\delimsize\|\;#2}

% Proper division symbol
\makeatletter
\DeclareRobustCommand{\vardivision}{%
  \mathbin{\mathpalette\@vardivision\relax}% 
}
\newcommand{\@vardivision}[2]{%
  \reflectbox{$\m@th\smallsetminus$}%
}
\makeatother

% Latin  Abbr

\newcommand{\ie}{\emph{i.e.,}\xspace}

% Math commands

% Table formatting

\usepackage{booktabs}

% Types

% Qwire Syntax

%\renewcommand{\Box}[2]{\code{Box}(#1,#2)}

  % \control conflicts with qcircuit package

% Code

% Listings
\usepackage{alltt}
\usepackage{listings,lstcoq}
\definecolor{ltblue}{rgb}{0,0.4,0.4}
\definecolor{dkblue}{rgb}{0,0.1,0.6}
\definecolor{dkgreen}{rgb}{0,0.35,0}
\definecolor{dkviolet}{rgb}{0.3,0,0.5}
\definecolor{dkred}{rgb}{0.5,0,0}
\lstset{language=Coq}
\usepackage[export]{adjustbox}

%\newcommand{\mcode}[1]{{\small\mathtt{#1}}}
%\newcommand{\code}[1]{\lstinline{#1}}

% Preventing pagebreaks

 %% VQIMP syntax, semantic values
 \newcommand{\rulelab}[1]{{\small \textsc{#1}}}

\newcommand{\sifb}[3]{\texttt{if}~{(#1)}~{#2}~\texttt{else}~{#3}}

\DeclareMathOperator*{\Motimes}{\text{\raisebox{0.25ex}{\scalebox{0.8}{$\bigotimes$}}}}

\newcommand{\hspc}[2]{\bigotimes^{#1}\mathpzc{H}^{#2}}
\newcommand{\lambdae}[3]{\lambda #1 : #2 .\, #3}
\newcommand{\tjudge}[3]{#1 \vdash #2 \triangleright #3}

\newcommand{\teq}[2]{ #1 \equiv #2}

\newcommand{\ttimes}{\,\textcolor{spec}{\otimes}\,}
\newcommand{\bigttimes}[1]{\textcolor{spec}{\bigotimes}^{#1}\,}

\newcommand{\sapp}[2]{#1\circ #2}
\newcommand{\smu}[3]{\mu #1:#2.\,#3}
\newcommand{\stype}[2]{{#1}\,{\textcolor{spec}{: #2}}}

\newcommand{\zero}{\mathpzc{O}}

\newcommand{\dualb}[3]{#1\ket{#2}_{0}\ket{#3}_{1}}

\newcommand{\sdag}[1]{#1^{\dag}}

\newcommand{\pmx}{\cn{p}}
\newcommand{\hmx}{\cn{h}}
\newcommand{\umx}{\cn{u}}

\newcommand{\rocq}{\textsc{Rocq}\xspace}

\newcommand{\sminus}{\texttt{-}}
\newcommand{\splus}{\texttt{+}}
\newcommand{\eexp}[1]{\cn{exp}(\sminus i #1)}
\newcommand{\pexp}[1]{\cn{exp}(i #1)}
\newcommand{\quan}[3]{\mathpzc{#1}^{#2}(#3)}

\newcommand{\funsa}[3]{\mathpzc{#1}(#2)(#3)}
\newcommand{\elog}[1]{\sminus i\cn{log}(#1)}

\newcommand{\iseq}[2]{{#1}\,{#2}}

\newcommand{\qsnd}{$\textsc{QBlue}$\xspace}
\newcommand{\Hilb}{\mathcal H}
\newcommand{\subcap}[1]{_{\textsc{#1}}}
\newcommand{\complex}{\mathbb C}

\newcommand{\cn}[1]{\texttt{#1}}

%% Coq documentation
%\usepackage{coqdoc}

% VPHL commands

%\newcommand{\IF}{\keyfont{if}\xspace}
%\newcommand{\THEN}{\keyfont{then}\xspace}
%\newcommand{\ELSE}{\keyfont{else}\xspace}
%\newcommand{\ift}[3]{\ensuremath{\IF\ {#1} \ \THEN\ {#2} \ \ELSE\ {#3}}\xspace}
%\newcommand{\END}{\keyfont{end}\xspace} 
%\newcommand{\WHILE}{\keyfont{while}\xspace}
%\newcommand{\DO}{\keyfont{do}\xspace}
%\newcommand{\UNTIL}{\keyfont{until}\xspace}
%\newcommand{\while}[2]{\WHILE\ {#1}\ \DO\ {#2}\ \xspace}

\newcommand{\denote}[1]{\llbracket #1 \rrbracket\xspace}
\newcommand{\dabs}[1]{|\!| #1 |\!|}

\newcommand{\tob}[1]{[\!\!( #1 )\!\!]}

%\newcommand{\TRUE}{\texttt{true}\xspace}
%\newcommand{\FALSE}{\texttt{false}\xspace} 
 %Changed for VPHL

% Quantum Hoare Commands

%\newcommand{\bra}[1]{\ensuremath{\langle #1 |}\xspace}
%\newcommand{\ket}[1]{\ensuremath{| #1 \rangle }\xspace} % bra and ket are defined in qcircuit
 % defined as ip in qcircuit

 % measure conflicts with qcircuit package

%\newcommand{\discard}[1]{\keyfont{discard } #1\xspace}

 % replace with unitary

 %May not use
%\newcommand{\unitary}[2]{\ensuremath{\assign{#1}{\mat{#2} #1}}\xspace}
%\newcommand{\unitary}[2]{\ensuremath{#1 \timeseq #2}\xspace}

%% Underscore issues
%\usepackage{relsize}
%\renewcommand{\_}{\textscale{.7}{\textunderscore}}
% LaTeX says no...

%% Unicode
\usepackage{newunicodechar}
\let\Alpha=A
\let\Beta=B
\let\Epsilon=E
\let\Zeta=Z
\let\Eta=H
\let\Iota=I
\let\Kappa=K
\let\Mu=M
\let\Nu=N
\let\Omicron=O
\let\omicron=o
\let\Rho=P
\let\Tau=T
\let\Chi=X

\newunicodechar{Α}{\ensuremath{\Alpha}}
\newunicodechar{α}{\ensuremath{\alpha}}
\newunicodechar{Β}{\ensuremath{\Beta}}
\newunicodechar{β}{\ensuremath{\beta}}
\newunicodechar{Γ}{\ensuremath{\Gamma}}
\newunicodechar{γ}{\ensuremath{\gamma}}
\newunicodechar{Δ}{\ensuremath{\Delta}}
\newunicodechar{δ}{\ensuremath{\delta}}
\newunicodechar{Ε}{\ensuremath{\Epsilon}}
\newunicodechar{ε}{\ensuremath{\epsilon}}
\newunicodechar{ϵ}{\ensuremath{\varepsilon}}
\newunicodechar{Ζ}{\ensuremath{\Zeta}}
\newunicodechar{ζ}{\ensuremath{\zeta}}
\newunicodechar{Η}{\ensuremath{\Eta}}
\newunicodechar{η}{\ensuremath{\eta}}
\newunicodechar{Θ}{\ensuremath{\Theta}}
\newunicodechar{θ}{\ensuremath{\theta}}
\newunicodechar{ϑ}{\ensuremath{\vartheta}}
\newunicodechar{Ι}{\ensuremath{\Iota}}
\newunicodechar{ι}{\ensuremath{\iota}}
\newunicodechar{Κ}{\ensuremath{\Kappa}}
\newunicodechar{κ}{\ensuremath{\kappa}}
\newunicodechar{Λ}{\ensuremath{\Lambda}}
\newunicodechar{λ}{\ensuremath{\lambda}}
\newunicodechar{Μ}{\ensuremath{\Mu}}
\newunicodechar{μ}{\ensuremath{\mu}}
\newunicodechar{Ν}{\ensuremath{\Nu}}
\newunicodechar{ν}{\ensuremath{\nu}}
\newunicodechar{Ξ}{\ensuremath{\Xi}}
\newunicodechar{ξ}{\ensuremath{\xi}}
\newunicodechar{Ο}{\ensuremath{\Omicron}}
\newunicodechar{ο}{\ensuremath{\omicron}}
\newunicodechar{Π}{\ensuremath{\Pi}}
\newunicodechar{π}{\ensuremath{\pi}}
\newunicodechar{ϖ}{\ensuremath{\varpi}}
\newunicodechar{Ρ}{\ensuremath{\Rho}}
\newunicodechar{ρ}{\ensuremath{\rho}}
\newunicodechar{ϱ}{\ensuremath{\varrho}}
\newunicodechar{Σ}{\ensuremath{\Sigma}}
\newunicodechar{σ}{\ensuremath{\sigma}}
\newunicodechar{ς}{\ensuremath{\varsigma}}
\newunicodechar{Τ}{\ensuremath{\Tau}}
\newunicodechar{τ}{\ensuremath{\tau}}
\newunicodechar{Υ}{\ensuremath{\Upsilon}}
\newunicodechar{υ}{\ensuremath{\upsilon}}
\newunicodechar{Φ}{\ensuremath{\Phi}}
\newunicodechar{φ}{\ensuremath{\phi}}
\newunicodechar{ϕ}{\ensuremath{\varphi}}
\newunicodechar{Χ}{\ensuremath{\Chi}}
\newunicodechar{χ}{\ensuremath{\chi}}
\newunicodechar{Ψ}{\ensuremath{\Psi}}
\newunicodechar{ψ}{\ensuremath{\psi}}
\newunicodechar{Ω}{\ensuremath{\Omega}}
\newunicodechar{ω}{\ensuremath{\omega}}

\newunicodechar{ℕ}{\ensuremath{\mathbb{N}}}
\newunicodechar{∅}{\ensuremath{\emptyset}}

\newunicodechar{∙}{\ensuremath{\bullet}}
\newunicodechar{≈}{\ensuremath{\approx}}
\newunicodechar{≅}{\ensuremath{\cong}}
\newunicodechar{≡}{\ensuremath{\equiv}}
\newunicodechar{≤}{\ensuremath{\le}}
\newunicodechar{≥}{\ensuremath{\ge}}
\newunicodechar{≠}{\ensuremath{\neq}}
\newunicodechar{∀}{\ensuremath{\forall}}
\newunicodechar{∃}{\ensuremath{\exists}}
\newunicodechar{±}{\ensuremath{\pm}}
\newunicodechar{∓}{\ensuremath{\pm}}
\newunicodechar{·}{\ensuremath{\cdot}}
\newunicodechar{⋯}{\ensuremath{\cdots}}
\newunicodechar{…}{\ensuremath{\ldots}}
\newunicodechar{∷}{~\mathrel{:\!\!\!:}~}
\newunicodechar{×}{\ensuremath{\times}}
\newunicodechar{∞}{\ensuremath{\infty}}
\newunicodechar{→}{\ensuremath{\to}}
\newunicodechar{←}{\ensuremath{\leftarrow}}
\newunicodechar{⇒}{\ensuremath{\Rightarrow}}
\newunicodechar{↦}{\ensuremath{\mapsto}}
\newunicodechar{↝}{\ensuremath{\leadsto}}
\newunicodechar{∨}{\ensuremath{\vee}}
\newunicodechar{∧}{\ensuremath{\wedge}}
\newunicodechar{⊢}{\ensuremath{\vdash}}
\newunicodechar{⊣}{\ensuremath{\dashv}}
\newunicodechar{∣}{\ensuremath{\mid}}
\newunicodechar{∈}{\ensuremath{\in}}
\newunicodechar{⊆}{\ensuremath{\subseteq}}
\newunicodechar{⊂}{\ensuremath{\subset}}
\newunicodechar{∪}{\ensuremath{\cup}}
\newunicodechar{⋓}{\ensuremath{\Cup}}
\newunicodechar{∉}{\ensuremath{\not\in}}
\newunicodechar{√}{\ensuremath{\sqrt}}

\newunicodechar{⊸}{\ensuremath{\multimap}}
\newunicodechar{⊗}{\ensuremath{\otimes}}
\newunicodechar{⨂}{\ensuremath{\bigotimes}}
\newunicodechar{⊕}{\ensuremath{\oplus}}
\newunicodechar{〈}{\ensuremath{\langle}}
\newunicodechar{⟨}{\ensuremath{\langle}}
\newunicodechar{⟩}{\ensuremath{\rangle}}
\newunicodechar{〉}{\ensuremath{\rangle}}
\newunicodechar{¡}{\ensuremath{\upsidedownbang}}
\newunicodechar{∘}{\ensuremath{\circ}}
\newunicodechar{†}{\ensuremath{\dagger}}
\newunicodechar{⊤}{\ensuremath{\top}}
\newunicodechar{⊥}{\ensuremath{\bot}}

\newunicodechar{〚}{\ensuremath{\llbracket}}
\newunicodechar{〛}{\ensuremath{\rrbracket}}

%% COMMENTS 
\usepackage{etoolbox} % replaces ifthen package
\newtoggle{comments}
\toggletrue{comments}
%\togglefalse{comments}
  \usepackage[normalem]{ulem}
  \usepackage{minted}

\iftoggle{comments}{
  \newcommand{\fixme}[1]{\textbf{\textcolor{red}{[ Fixme: #1]}}}
  \newcommand{\todo}[1]{\textbf{\textcolor{green}{[ TODO: #1 ]}}}
  \newcommand{\mwh}[1]{\textbf{\textcolor{red}{[ Mike: #1 ]}}}
  
  \newcommand{\khh}[1]{\textbf{\textcolor{orange}{[ Kesha: #1 ]}}}
  \newcommand{\shh}[1]{\textbf{\textcolor{purple}{[ Shih-Han: #1 ]}}}
  \newcommand{\liyi}[1]{\textbf{\textcolor{orange}{[ Liyi: #1 ]}}}
  
 \newcommand{\chand}[1]{\textbf{\textcolor{blue}{[ Chandeepa: #1 ]}}}
  \newcommand{\oth}[2]{\textbf{\textcolor{red}{[ #1: #2 ]}}}
  \newcommand{\xwu}[1]{\textbf{\textcolor{purple}{[ Xiaodi: #1 ]}}}
  
  \newcommand{\ynote}[1]{\textbf{\textcolor{magenta}{[ Yi: #1 ]}}}

  \colorlet{MZ}{violet!80!pink}

  \newcommand{\mzr}[1]{{\color{MZ}{#1}}}
  \newcommand{\was}[1]{}
  
  % trick to get around with sout
  \NewCommandCopy{\Creff}{\Cref}
  \renewcommand{\Cref}[1]{\mbox{\Creff{#1}}}

  \usepackage[inline]{enumitem}
  \colorlet{LC}{cyan!31!teal}

  \usepackage[inline]{enumitem}
}{
  \newcommand{\fixme}[1]{}
  \newcommand{\todo}[1]{}
  \newcommand{\rnr}[1]{}
  \newcommand{\mwh}[1]{}  
  \newcommand{\khh}[1]{}
  \newcommand{\liyi}[1]{}
  \newcommand{\shh}[1]{}
  \newcommand{\xwu}[1]{}
  \newcommand{\oth}[2]{}
  \newcommand{\mzr}[1]{}
  \newcommand{\chand}[1]{}

  \newcommand{\ynote}[1]{}
}

\newtoggle{submission}
\toggletrue{submission}
%\togglefalse{submission}

\iftoggle{submission}{
  
}{
  
}

%% END COMMENTS

%%% Local Variables:
%%% mode: latex
%%% TeX-master: "main"
%%% End:

\usepackage{bbold}
\usepackage{pifont}% http://ctan.org/pkg/pifont
\usepackage{multicol,tabularx,capt-of}
\usepackage{multirow}

\newcommand{\tferm}{\ensuremath{t^\aleph(2)}}
%\usepackage{stmaryrd}
%\includeonly{appendix}

\begin{document}

%% Title information
%\title{The Quantum Superposition Virtual Machine}         %% [Short Title] is optional;
                                        %% when present, will be used in
                                        %% header instead of Full Title.
%\titlenote{with title note}             %% \titlenote is optional;
                                        %% can be repeated if necessary;
                                        %% contents suppressed with 'anonymous'
\title{A Verified Compiler for Quantum Simulation}                      %% \subtitle is optional
%\subtitlenote{Extended Version}       %% \subtitlenote is optional;
                                        %% can be repeated if necessary;
                                        %% contents suppressed with 'anonymous'

%% Author information
%% Contents and number of authors suppressed with 'anonymous'.
%% Each author should be introduced by \author, followed by
%% \authornote (optional), \orcid (optional), \affiliation, and
%% \email.
%% An author may have multiple affiliations and/or emails; repeat the
%% appropriate command.
%% Many elements are not rendered, but should be provided for metadata
%% extraction tools.

%% Author with single affiliation.

\def\titlerunning{A Verified Compiler for Quantum Simulation}
\def\authorrunning{L. Li, F. Zahariev, C. Dissanayake, J. Swanepoel, A. Sabry, M. Gordon}

\author{Liyi Li}
\affiliation{
 \institution{Iowa State University}
 \country{USA} 
}
\email{liyili2@umd.edu }

\author{Fenfen An}
%\authornote{ORCID: 0000-0002-3675-5670}
\affiliation{
 \institution{Iowa State University}
 \country{USA} 
}
\email{anff@iastate.edu}% 0000-0002-3675-5670

\author{Federico Zahariev}
%\authornote{ORCID: 0000-0002-3223-3576}
\affiliation{
 \institution{Iowa State University}
 \country{USA} 
} 
\email{fzahari@iastate.edu}

%https://orcid.org/0000-0002-3223-3576

\author{Zhi Xiang Chong}
%\authornote{ORCID: 0009-0004-7921-6898}
\affiliation{
 \institution{Iowa State University}
 \country{USA} 
}
\email{ianchong@iastate.edu}% 0009-0004-7921-6898

\author{Amr Sabry}
\affiliation{
  \institution{Indiana University}
  \country{USA}
}
\email{sabry@iu.edu}

\author{Mark S. Gordon}
%\authornote{ORCID: 0000-0001-6893-553X}
\affiliation{
  \institution{Iowa State University}
  \country{USA}
}
\email{mgordon@iastate.edu}%https://orcid.org/0000-0001-6893-553X

%% Abstract
%% Note: \begin{abstract}...\end{abstract} environment must come
%% before \maketitle command
\begin{abstract}
 %Quantum simulations are designed to model quantum systems, and many compilation frameworks have been developed for executing such simulations on quantum computers. 
%These systems are typically Pauli operation-based and lack clear programmability to show the utility of Hamiltonian simulation. In addition, the compilers were fragmented, focusing on a piece of the compilation pipeline, and were not verified.
%This paper proposed \qsnd, the first verified compilation framework for second-quantization-based Hamiltonian simulation.
%We intend to lift the Pauli-based Hamiltonians to constraints based on second quantization and examine the programmability of programming Hamiltonian constraints.
%We then propose a type system to generalize a qubit-based quantum system to quantum particle systems and provide a certified compilation of Hamiltonian simulation for quantum circuits, both digital and analog. All the works are mechanized in Rocq.

Hamiltonian simulation is a central application of quantum computing, with significant potential in modeling physical systems and solving complex optimization problems. Existing compilers for such simulations typically focus on low-level representations based on Pauli operators, limiting programmability and offering no formal guarantees of correctness across the compilation pipeline. We introduce \qsnd, a high-level, formally verified framework for compiling Hamiltonian simulations. \qsnd is based on the formalism of second quantization, which provides a natural and expressive way to describe quantum particle systems using creation and annihilation operators. To ensure safety and correctness, \qsnd includes a type system that tracks particle types and enforces Hermitian structure. The framework supports compilation to both digital and analog quantum circuits and captures multiple layers of semantics, from static constraints to dynamic evolution. All components of \qsnd, including its language design, type system, and compilation correctness, are fully mechanized in the \rocq proof framework, making it the first end-to-end verified compiler for second-quantized Hamiltonian simulation.

%clearly define states across different quantum systems and treat quantum computers as quantum particle systems of specific types. The type system is compatible with the compilation of quantum simulations expressed in \qsnd for digital and analog quantum computers. With \qsnd, users can specify the desired quantum particle system and model the system on quantum computers.

%quantum computers and all user programs as descriptions of particle systems, and the compilation from user programs to quantum computers, definable in \qsnd, can be viewed as the transformation from one particle system to the other.
%We develop a type system in \qsnd to type different particle states to simplify the compilation procedures.

%Circuit-based quantum models suffer from the issue that neither the users nor the target quantum computer machines are related to them; both users and target machines are related to the Hamiltonian concept in quantum mechanics.

%Therefore, an end-to-end compilation from a user program to a quantum machine might suffer unnecessary complications due to the extra circuit layer.

%On the other hand, analog quantum computing views quantum computers as a Hamiltonian simulator and tries to bypass quantum circuit compilation, but it is typically defined in terms of Pauli group operations, restricted to quantum computing settings, and it is very unintuitive.

%We find that user programs and quantum computers can both be viewed as particle systems, describable in second quantization.

\end{abstract}

%% 2012 ACM Computing Classification System (CSS) concepts
%% Generate at 'http://dl.acm.org/ccs/ccs.cfm'.
% \begin{CCSXML}
% <ccs2012>
% <concept>
% <concept_id>10011007.10011006.10011008</concept_id>
% <concept_desc>Software and its engineering~General programming languages</concept_desc>
% <concept_significance>500</concept_significance>
% </concept>
% <concept>
% <concept_id>10003456.10003457.10003521.10003525</concept_id>
% <concept_desc>Social and professional topics~History of programming languages</concept_desc>
% <concept_significance>300</concept_significance>
% </concept>
% </ccs2012>
% \end{CCSXML}

% \ccsdesc[500]{Software and its engineering~General programming languages}
% \ccsdesc[300]{Social and professional topics~History of programming languages}
%% End of generated code

%% Keywords
%% comma separated list
%\keywords{keyword1, keyword2, keyword3}  %% \keywords are mandatory in final camera-ready submission

%% \maketitle
%% Note: \maketitle command must come after title commands, author
%% commands, abstract environment, Computing Classification System
%% environment and commands, and keywords command.
\maketitle

%\textcolor{red}{Pointing out quantum conditional proofs somewhere. The essense of qafny is to permit inter-op automated proof.}

\section{Introduction}\label{sec:intro}

Quantum simulation is a flagship application of quantum computing, promising to model physical systems~\cite{Zhang_2021,osti_1782924,du2023multinucleon}, chemical reactions~\cite{Lanyon2010,Cao2019,Evangelista2023,RevModPhys.92.015003}, and materials~\cite{Emani2021,Fedorov2021,wcms.1481} with precision beyond classical capabilities~\cite{feynman1982simulating}. Yet programming such simulations remains precarious. Current compilers for Hamiltonian simulation operate at low-level abstractions, often directly manipulating Pauli strings or untyped tensor expressions, and provide no formal guarantees about the correctness of the resulting circuits~\cite{10.1145/3470496.3527394,10.1145/3632923,cao2024marqsimreconcilingdeterminismrandomness,10.1145/3503222.3507715}. This disconnect between high-level physical intent and low-level implementation risks silent errors, miscompiled models, and untrustworthy scientific outcomes.

We present \qsnd, the first \emph{formally verified compiler} for quantum simulation based on the formalism of \emph{second quantization}. \qsnd allows users to express Hamiltonians in terms of creation and annihilation operators, enforcing Hermiticity and physical validity through a high-level type system. Programs written in \qsnd are compiled into executable quantum circuits or analog schedules with \emph{end-to-end formal guarantees}: every compilation path is verified using a mechanized proof system, and all generated artifacts preserve the intended quantum dynamics within proven error bounds.

To support this, \qsnd employs a three-tiered semantic architecture:
\begin{itemize}
  \item \textbf{Constraint semantics} defines symbolic, well-typed Hamiltonians using second-quantized expressions subject to constraints such as Hermiticity and particle type;
  \item \textbf{Dynamic semantics} interprets the simulation operator $\eexp{\uapp{r}{e}}$ as the time evolution of the Hamiltonian $e$ over a duration $r$;
  \item \textbf{Application semantics} compiles the simulation operator into backend-specific execution artifacts, such as gate-based circuits via Trotterization or QDrift, or weighted interaction graphs for analog Ising platforms.
\end{itemize}

%compile these constraint-level expressions into Pauli-based operators acting on qubit registers, via verified transformations such as the Jordan-Wigner map

These layers are connected via a \emph{type-indexed compilation pipeline}, verified in the \rocq proof framework \cite{bertot2013interactive}. Each transformation is guided by types that track not just dimensions and modes (e.g., bosonic  \footnote{More precisely, we model finite ``boson-like" systems as representing a real boson needs an infinite Hilbert space state.} vs. fermionic  \footnote{For simplicity, we restrict fermions in one system to be of the same kind and the same spin.}), but also matrix kinds (plain, Hermitian, unitary), ensuring that the semantics of quantum simulation is preserved throughout. The type system plays a central role, as it enforces anti-commutation for fermions, validates that matrix expressions are well-formed and Hermitian where required, and ensures that exponentiated Hamiltonians yield unitary operators. Equational rules, canonicalization procedures, and symbolic rewrites are all type-preserving and fully mechanized.

\qsnd supports \emph{multiple backends}, each with distinct compilation paths and correctness theorems. For digital platforms, \qsnd emits standard gate sequences with Trotterized simulation of Pauli terms. For stochastic simulation, it supports QDrift-based sampling. For analog devices, it produces native Ising Hamiltonians for direct scheduling. All paths are type-preserving and formally linked to the semantics of the original program.

We prove a central \emph{compilation correctness theorem}: for any Hermitian-typed Hamiltonian expression $e$ and simulation time $r$, the compiled program $U$ simulates $\exp(-i r e)$ within a formally derived error bound~$\epsilon$. These bounds are computed symbolically as part of the compiler and are backend-specific, capturing, for instance, differences between first-order Trotterization and QDrift. Backend-specific optimizations (e.g., analog graph emission) are verified to preserve semantics modulo these error bounds. 

\myparagraph{Summary of Contributions}
This paper contributes:
\begin{itemize}
  \item A typed second-quantization quantum programming language supporting bosonic and fermionic modes, symbolic operators, and physically meaningful type-indexed semantics;
  \item A verified compilation pipeline that spans constraint, dynamic, and application semantics targeting multiple backends (gate-based, stochastic, and analog) with certified error bounds;
  \item A mechanized metatheory, including type soundness, anti-commutation preservation, equational correctness, and end-to-end compilation theorems;
  \item A complete example of typed, verified Hamiltonian simulation, illustrating how symbolic physics programs can be compiled and executed with formal confidence.
\end{itemize}

\myparagraph{Paper Roadmap}
Section~\ref{sec:hubbard-example} introduces the two-site Hubbard model as a running example, showcasing how \qsnd encodes physical systems via typed creation and annihilation operators.  
Section~\ref{sec:formalism} reviews mathematical preliminaries from quantum simulation and second quantization.  
Section~\ref{sec:formal} presents the core formal system of \qsnd, including syntax, type system, and equational theory.  
Section~\ref{sec:compilation} describes the compilation pipeline in detail, including canonicalization, backend-specific transformations, and certified error bounds.  
Section~\ref{sec:related} discusses related work across quantum simulation languages and verified compilers.  
Finally, Section~\ref{sec:conclusion} summarizes contributions and outlines future directions in verified quantum programming.

The \rocq proofs mentioned in the paper are available on Zenodo \url{https://zenodo.org/records/15852130}, anonymously. We additionally include eight appendices: Appendix~\ref{app:paulimap} provides in-depth background; Appendix~\ref{appx:trotterthm} collects lemmas on Trotterization error bounds; Appendix~\ref{appx:gadget} analyzes the perturbative gadget construction; Appendix~\ref{sec:jw-trans} surveys other particle transformation methods; Appendix~\ref{sec:hubbard} presents a Hubbard model for hydrogen chains; and Appendix~\ref{sec:othermodels} outlines other physical models definable in QBlue.

\ignore{
A singleton \qsnd operation can be understood as creating (a creator) or eliminating (an annihilator) electrons appearing in particle sites.
Modeling \qsnd as a particle system is a trivial but useful generalization of the the Hamiltonian simulator view of quantum computers,
because it coincides with the program views of the main quantum computer users from the scientific computing fields, such as in physics \cite{Zhang_2021,osti_1782924,du2023multinucleon}, chemistry \cite{Lanyon2010,Cao2019,Evangelista2023,RevModPhys.92.015003}, and computational biology \cite{Emani2021,Fedorov2021,wcms.1481}.

\begin{figure}[h]
    \includegraphics[width=0.6\textwidth]{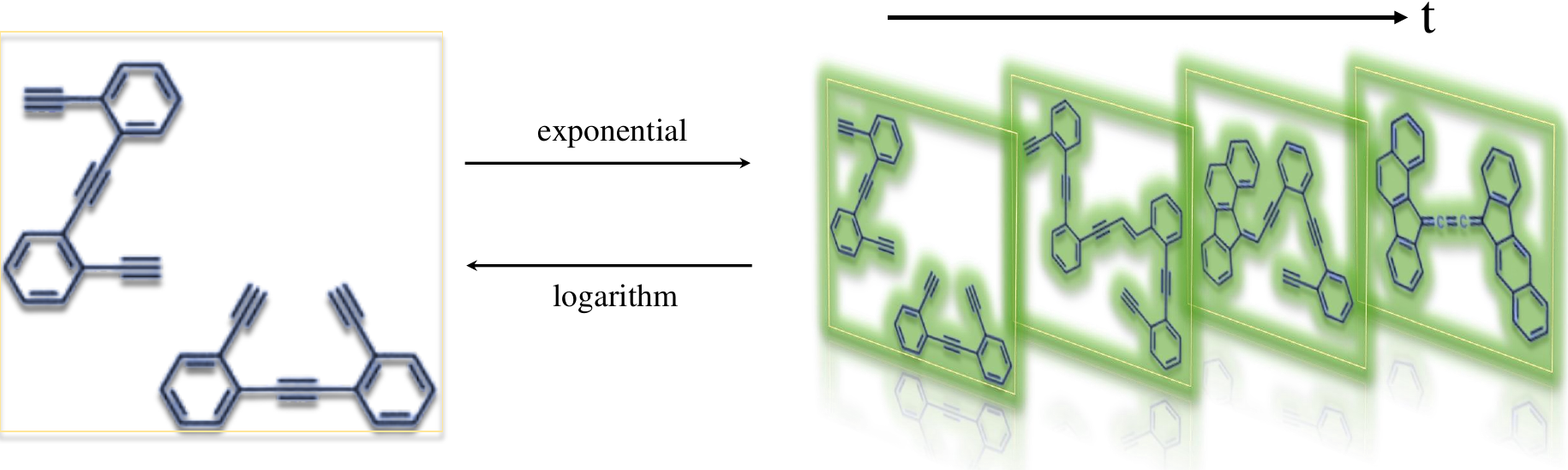}
            \caption{Hamiltonian as a snapshot (left) and quantum simulation (unitary) as movie generation (right); $\cn{exp}$ and $\cn{log}$ perform switching between the two modes.}\label{fig:ham-sim}   
\end{figure}

The particle system program view can be thought of as abstract classes or high order functions in traditional programming fields.
The physical model systems are essentially high order abstract classes physicists defined to represent a general pattern, and when dealing with a specific problem, such as hydrogens, they instantiate the abstract class with specific functions. Such programming style can be extended to solving classical computation problems, by using these physical abstract classes, as the examples shown in \Cref{sec:overview}.

%where they define physical systems in terms of Hamiltonians with applications, such as performing ground energy state computation and Hamiltonian simulation. 
%Lattice-based Hamiltonian matrices (\Cref{sec:background}) describe particle interactions by locating particles in a fixed-site structure formed as a lattice, and Hamiltonian simulation simulates the Hamiltonian's transition behavior for a quantum state over a period of time.
%In this program view, the quantum computer behavior is a procedure of a lattice-based particle system Hamiltonian simulation.
%A trivial generalization is to generalize the Hamiltonian simulator view of quantum computers as a particle system, describable by second quantization, a standard physical formalism for describing quantum particle movements. 

Generalizing the circuit-based model to \qsnd allows the possibility of connecting different groups together to provide better optimization.
Previously, there are rich optimization works related to the circuit model,
but few of them are linked to users and machines, so many optimizations are performed without context,
e.g., the compilation in \Cref{sec:sim-gates} shows some circuit-level compilation, and optimization might be counterproductive if a target machine is provided.

The \qsnd formalism also provides type information to support programming in the user fields.
The current second quantization formalisms are based on Fork space (\Cref{sec:background}),
referring to that a particle state can be described by tensor products of arbitrary dimensional Hilbert spaces, without types.
A main utility of the particle system view is to transform between different systems;
thus the untypedness caused confusions in such transformation, i.e., transforming fermion systems to quantum computers \cite{Cervera_Lierta_2018,Yang_2020} are unintuitive
and some steps might be unnecessarily complicated.
The \qsnd type system intends to classify different particle systems by providing particle types,
so users can rely on the type information to understand and perform computation.
}
\ignore{

Third, different physical systems have different characteristics and second-quantization formalisms.
High-order functions and typing in \qsnd is the way to classify these characteristics.
Fock space (\Cref{sec:background}) is the quantum state formalism in second quantization and suggests one can use tensor products of arbitrary Hilbert spaces to describe a particle system, i.e., it is untyped; such untypedness did not provide a good intuition in understanding quantum systems, e.g., the interpretation of a particle state having more than one spin states is unintuitive.
In addition, fermionic and bosonic systems might have different behaviors.
\qsnd provides a type system to classify all these systems. Under \qsnd, a quantum computer can be typed as a particle system with all its particles having two-dimensional Hilbert space with only one spin state.

Programming in second quantization requires the switching between the two modes, which can be viewed as function applications in a functional computation model, such as $\lambda$-calculus.
In \qsnd, we develop operations representing matrix exponential and logarithm to capture the mode-switching behaviors, model similar operations succinctly through high-order functions, and distinguish differences through typing.
For example, \qsnd models matrix multiplications and energy computation operations as function applications,
with the use of typing relations to distinguish the input and output types.

%that desperately need the quantum computing power, from the scientific computing fields and try to solve hard problems in physics \cite{Zhang_2021,osti_1782924,du2023multinucleon}, chemistry \cite{Lanyon2010,Cao2019,Evangelista2023,RevModPhys.92.015003}, and computational biology \cite{Emani2021,Fedorov2021,wcms.1481};
%such problems are typically defined in terms of Hamiltonians with applications on these Hamiltonians, such as performing ground energy state computation and Hamiltonian simulation.

We propose \qsnd, a simple typed $\lambda$-calculus style computation model based on lattice-based Hamiltonian described by second quantization operations,
a standard physical formalism for describing quantum particle movements.

These facts indicate the development of a general quantum computing model extending the current circuit-based model.
%The key observations for the model development is that circuit-based quantum computers and analog quantum simulators are essentially the same, where the control pulses, the layer directly connecting to quantum computers, describable by a kind of target machine level lattice-based Hamiltonians with particle sites in the lattice being qubits.
%
In the general compilation procedure, the black path in \Cref{fig:process}, only the intermediate layer, is described by quantum unitary circuits, with all others represented by Hamiltonians.
The analog simulator compilation, the blue path, tries to map a physical system Hamiltonian to target machine Hamiltonians directly; thus, it removes many intermediate steps of compiling to an intermediate quantum circuit.
From the user perspective,  the users that desperately need the quantum computing power might be from scientific computing fields and try to solve hard problems in physics \cite{Zhang_2021,osti_1782924,du2023multinucleon}, chemistry \cite{Lanyon2010,Cao2019,Evangelista2023,RevModPhys.92.015003}, and computational biology \cite{Emani2021,Fedorov2021,wcms.1481};
such problems are typically defined in terms of Hamiltonians with applications on these Hamiltonians, such as performing ground energy state computation and Hamiltonian simulation.

We propose \qsnd, a simple typed $\lambda$-calculus style computation model based on lattice-based Hamiltonian described by second quantization operations,
a standard physical formalism for describing quantum particle movements.
In \qsnd, particles can be thought of as living in some lattice sites, expressed as matrix entries in a Hamiltonian.
A singleton \qsnd operation can be understood as creating (a creator) or eliminating (an annihilator) electrons appearing in particle sites.

%Many works \cite{VOQC,10.1145/3519939.3523433,oracleoopsla,ccx-adder,quilc,Fagan2018,reverC,scaffCCnew,Nam2018,quantumssa,ripple-carry-mod,qft-adder} are proposed to optimize quantum circuits extensively to run quantum algorithms in NISQ computers, though NISQ computers are still far from executing quantum algorithms showing quantum advantage.
%Recently, researchers \cite{Daley2022,10.1145/3632923} investigated the possibility of analog quantum computing, i.e., they viewed a quantum computer a lattice-based analog Hamiltonian simulator, such as the blue path in \Cref{fig:process}, to simulate a physical model system \footnote{It is called physical models in many literature, to avoid confusion, we call it a physical model system or physical system.} Hamiltonian by directly mapping the Hamiltonian to the kinds of Hamiltonians the machine supports, which shows some performance advantages; they have yet to establish a general compilation strategy to perform analog computing.

These facts indicate the development of a general quantum computing model extending the current circuit-based model.
The key observations for the model development is that circuit-based quantum computers and analog quantum simulators are essentially the same, where the control pulses, the layer directly connecting to quantum computers, describable by a kind of target machine level lattice-based Hamiltonians with particle sites in the lattice being qubits.
In the general compilation procedure, the black path in \Cref{fig:process}, only the intermediate layer, is described by quantum unitary circuits, with all others represented by Hamiltonians.
The analog simulator compilation, the blue path, tries to map a physical system Hamiltonian to target machine Hamiltonians directly; thus, it removes many intermediate steps of compiling to an intermediate quantum circuit.
From the user perspective,  the users that desperately need the quantum computing power might be from scientific computing fields and try to solve hard problems in physics \cite{Zhang_2021,osti_1782924,du2023multinucleon}, chemistry \cite{Lanyon2010,Cao2019,Evangelista2023,RevModPhys.92.015003}, and computational biology \cite{Emani2021,Fedorov2021,wcms.1481};
such problems are typically defined in terms of Hamiltonians with applications on these Hamiltonians, such as performing ground energy state computation and Hamiltonian simulation.

We propose \qsnd, a simple typed $\lambda$-calculus style computation model based on lattice-based Hamiltonian described by second quantization operations,
a standard physical formalism for describing quantum particle movements.
In \qsnd, particles can be thought of as living in some lattice sites, expressed as matrix entries in a Hamiltonian.
A singleton \qsnd operation can be understood as creating (a creator) or eliminating (an annihilator) electrons appearing in particle sites.

\qsnd intends to generalize the quantum computing circuit model to the one viewing applications as particle movements, including quantum computer applications, and expressions in \qsnd can typed in two different modes.
We first examine how second quantization users think of physical system programs.

The functional and typing views have several benefits.
First, we provide a unified platform to connect different user groups and communicate using the same standard.
Many previous systems are restricted in a specific subdomain, such as quantum circuits (the works above), pulse and machine level programming \cite{10.1145/3632923,Li_2022}, and user level systems \cite{Zahariev2023,10.1063/5.0004997}; the fact leads to possibly unnecessary transformations between these systems, such as the circuit model compilation issues above, and misinterpretation, e.g., some works \cite{Cervera_Lierta_2018,Yang_2020} implemented a physical system in quantum circuits, based on a limited setting, e.g., modeling spin states directly as qubits ($\ket{\uparrow}$  as $\ket{1}$ and $\ket{\downarrow}$ as $\ket{0}$) only works for Ising systems, and causing other works \cite{Yang_2020} based on the limited setting have unnecessary complications in transformation. Having a uniform model to show the difference is the first step in overcoming the complications and confusion in the field.

}

\ignore{
Previously, the former world \cite{vandenBerg2020circuitoptimization,Smith2019a,Berry_2015b,10.5555/2481569.2481570} typically shows that the ability of implementing Hamiltonian simulation via quantum computer gates without showing the meaning of the gate combination,
while the latter world (the works above) represents problems in terms of Hamiltonians and provided some computation on the Hamiltonians, without indicating if the computation can be performed in a quantum computer. \qsnd provides the platform for the two worlds to communicate together.
Especially, the linking between \qsnd and $\lambda$-calculus permits authors to understand quantum systems in terms of intuitive functional language,
such that Hamiltonian operations can be understood as function applications.
By doing so, we also lift the quantum algorithm finding, previously restricted to quantum circuit computing, to a broader high-performance computing (HPC) field.
There is a popular critization \cite{JackKrupansky,MattSwayne,RichardWaters} that quantum computers lack of applications.
The designs of new quantum algorithms should be in the lifted field instead of restricting to quantum computing fields,
since many algorithms showing quantum advantage act as subroutines of a large application.
For example, Energy computation in quantum chemistry \cite{Lee2023} might not have quantum advantage in executing it in a quantum computer,
but some subroutine computation might experience quantum advantage \cite{BLOMBERG2006969,Kovyrshin2023}.

Regarding the user perspective, the users that desperately need the quantum computing power might be from scientific computing fields and try to solve hard problems in physics \cite{Zhang_2021,osti_1782924,du2023multinucleon}, chemistry \cite{Lanyon2010,Cao2019,Evangelista2023,RevModPhys.92.015003}, and computational biology \cite{Emani2021,Fedorov2021,wcms.1481};
such problems are typically defined in terms of Hamiltonians with applications on these Hamiltonians, such as performing ground energy state computation and Hamiltonian simulation.
If both the user and the underlying machine levels of quantum computing are performing applications related to Hamiltonians, it indicates strongly the needs of a new Hamiltonian-based intermediate representation language for expressing quantum computing algorithms.

1. machine reality ---> quantum computers are simulators. analog quantum computing ---> directly implement problems in quantum computers.

2. user prosepctive, many applications are scintific comptuing problems, such as ....

3. if users and machines are both hamiltonians, why using unitary in the middle? % might not want to say this.

4. provide a chance to examine the necciecity of linking the two.
Whether or not we should use Hamiltonian or unitary, it is time to develop a user (programmer) friendly framework to provide a generalized view of quantum computing in the context of quantum mechanics.

reasons. 

a. time to connect users and machines. Many current works only provide one or two example cases, tend to merge everything together by mapping an example system in quantum circuits and states, probably with optimizations to make a case. Then, different way of mapping for optimization causes confusion on learning about original system, and how to compile systems to quantum computers in general, e.g., the descriptions of the physical systems are misunderstood due to the understand of the compiled version (CITE).

realize what quantum computers actually are t(1) type bosonic system.

b. correct optimizations in the machine and user context. current optimizations have no context, with the user and machine level details. 
enabling the analysis of viewing quantum computers as simulators, bypassing the circuit models.

c. provide typing for long existing problems in describing physical systems. fork states are not good for describing systems. a formalism of second quantizations. differet verions discring different systems, without having the context hard to understand, so we need a type system.

Having a functional computation model for second quantization, high-order functions for including both fermions and bosons. context switching between snapshot and simulation mode.

There are two observations based on machine reality and user prospective.
In the quantum machine level, each elementary quantum gate is implemented as controlled pulses, which is equivalent to perform a Hamiltonian simulation, shown in \Cref{fig:process}.
Essentially, every quantum computer has a fixed form of recognizable Hamiltonian.
To implement a quantum gate, we perform an eigendecomposition of the gate, construct the gate Hamiltonian that is recognizable by the quantum computer,
and simulate a certain time on the Hamiltonian as the simulation of the gate in the quantum computer.
This procedure represents the controlled pulse level programming in a quantum computer and it unveils that a quantum computer, in its machine level, is a Hamiltonian simulator that simulates gate behaviors.

}

%There are several reasons for the \qsnd development.
%The primary reason is to connect the quantum computing and the user worlds.
 
\ignore{
In quantum field theory, it is known as canonical quantization, in which the fields (typically as the wave functions of matter) are thought of as field operators, in a manner similar to how the physical quantities (position, momentum, etc.) are thought of as operators in first quantization. The key ideas of this method were introduced in 1927 by Paul Dirac \cite{Dirac1988}, and were later developed, most notably, by Pascual Jordan \cite{Jordan1928} and Vladimir Fock \cite{Fock1932,Reed:1975uy}. In this approach, the quantum many-body states are represented in the Fock state basis, which are constructed by filling up each single-particle state with a certain number of identical particles \cite{Becchi:2010}. The second quantization formalism introduces the creation and annihilation operators to construct and handle the Fock states, providing useful tools to the study of the quantum many-body theory.

We propose a 

Second quantization, also referred to as occupation number representation, is a formalism used to describe and analyze quantum many-body systems. In quantum field theory, it is known as canonical quantization, in which the fields (typically as the wave functions of matter) are thought of as field operators, in a manner similar to how the physical quantities (position, momentum, etc.) are thought of as operators in first quantization. The key ideas of this method were introduced in 1927 by Paul Dirac \cite{Dirac1988}, and were later developed, most notably, by Pascual Jordan \cite{Jordan1928} and Vladimir Fock \cite{Fock1932,Reed:1975uy}. In this approach, the quantum many-body states are represented in the Fock state basis, which are constructed by filling up each single-particle state with a certain number of identical particles \cite{Becchi:2010}. The second quantization formalism introduces the creation and annihilation operators to construct and handle the Fock states, providing useful tools to the study of the quantum many-body theory.

In this paper, we develop a programming language for second quantization, \qsnd, based on the combination of the second qauntization formalism \cite{Dirac1988} and simple typed $\lambda$-calculus \cite{Church1940,Church1956}.

Developing more and more comprehensive quantum programs and algorithms is essential for the continued practical development of quantum computing \cite{JackKrupansky,MattSwayne}.
Unfortunately, because quantum systems are inherently probabilistic and also must obey laws of quantum physics, traditional validation techniques based on run-time testing are virtually impossible to develop for large quantum algorithms,
This leaves \emph{formal methods} as a viable alternative for program checking, and yet these typically require a great effort; for example,
four experienced researchers needed two years to formally verify Shor's algorithm \cite{shorsprove}.

To alleviate the effort required for formal verification, many frameworks have been proposed to verify quantum algorithms \cite{qhoreusage,qhoare,qbricks,qsepa,qseplocal,VOQC}
using interactive theorem provers, such as Isabelle, \rocq, and Why3, by building quantum semantic interpretations and libraries.
Some attempts towards proof automation have been made by creating new proof systems for quantum data structures such as Hilbert spaces; however,
building and verifying quantum algorithms in these frameworks are still time-consuming and require great human effort.
Meanwhile, automated verification is an active research field in classical computation with many proposed frameworks
\cite{HoareLogic,separationlogic,nat-proof-fun,nat-proof,nat-proof-frame,10.1145/3453483.3454087,arxiv.1609.00919,martioliet00rewriting,rosu-stefanescu-2011-nier-icse,rosu-stefanescu-ciobaca-moore-2013-lics,10.1007/978-3-642-17511-4_20,10.1007/978-3-642-03359-9_2}
showing strong results in reducing programmer effort when verifying classical programs. 
%Can classical automated verification frameworks be utilized for verifying quantum programs?
None of the existing quantum verification frameworks utilize these classical verification infrastructures, however.

Second quantization, also referred to as occupation number representation, is a formalism used to describe and analyze quantum many-body systems. In quantum field theory, it is known as canonical quantization, in which the fields (typically as the wave functions of matter) are thought of as field operators, in a manner similar to how the physical quantities (position, momentum, etc.) are thought of as operators in first quantization. The key ideas of this method were introduced in 1927 by Paul Dirac \cite{Dirac1988}, and were later developed, most notably, by Pascual Jordan \cite{Jordan1928} and Vladimir Fock \cite{Fock1932,Reed:1975uy}. In this approach, the quantum many-body states are represented in the Fock state basis, which are constructed by filling up each single-particle state with a certain number of identical particles \cite{Becchi:2010}. The second quantization formalism introduces the creation and annihilation operators to construct and handle the Fock states, providing useful tools to the study of the quantum many-body theory.

In this paper, we develop a programming language for second quantization, \qsnd, based on the combination of the second qauntization formalism \cite{Dirac1988} and simple typed $\lambda$-calculus \cite{Church1940,Church1956}.

\liyi{1. Quantum Computation limitation. 2. describe the situation: 1) the main customers of quantum computers are scientific computing, 2) machine is implemented in terms of Ham simulation (see overview, describe a little). --> why we ever need quantum circuit? 3. main point is to create a language for scientific computing (marry the world of physics and CS). 4. (maybe not be good) Remove the circuit level and create a language for directly linking customers and mechine basis. }

\liyi{Describe what Ham can do. How physics view the world --> snapshot of Ham, and unitary for simulation for movies. }
}
                   % intro
\section{A Guided Example: \qsnd on a Two-Site Hubbard Model}
\label{sec:hubbard-example}

To illustrate how \qsnd expresses, compiles, and verifies quantum models, we present a complete example: the fermionic Hubbard model on two lattice sites. This minimal system captures key features of many-body quantum mechanics, including particle motion, local interactions, and coherent evolution, while remaining compact enough to explore in a single section. Conceptually, it models a simple chemical reaction in which electrons tunnel between orbitals and repel each other when co-located, offering a physically grounded yet abstractly expressive setting for quantum programming and formal verification.

\myparagraph{Model Setup and Physical Operators}
The fermionic Hubbard model captures essential features of strongly correlated systems, such as quantum tunneling and on-site repulsion. Although originally formulated for extended lattices, even a two-site version reveals the core phenomena of many-body dynamics. We consider a one-dimensional lattice with two fermionic sites, labeled site~$0$ and site~$1$. Each site may be either \emph{unoccupied} or \emph{occupied} by a single fermion. The occupation number at each site is given by a value in $\{0,1\}$ (for bosons, the occupation number can be bigger than $1$, see \Cref{sec:formal}), where $0$ means no particle and $1$ means a particle is present. These values describe physical quantum states, not Boolean truth values, and are distinct from the numeric indices labeling the sites.

The total Hilbert space is $\mathbb{C}^2 \otimes \mathbb{C}^2$, with computational basis states $\ket{m}\ket{n}$, where $m$ and $n$ indicate the occupation of site~$0$ and site~$1$, respectively. Three standard fermionic operators define the system’s dynamics. The \emph{creation operator}~$\sdag{a}(j)$ adds a particle to site~$j$ if it is unoccupied: for instance, $\sdag{a}(0)\ket{0}\ket{0} = \ket{1}\ket{0}$, and $\sdag{a}(1)\ket{m}\ket{0} = (-1)^m\ket{m}\ket{1}$. Applying $\sdag{a}(j)$ to an occupied site yields the vacuous state~$\zero$. The \emph{annihilation operator}~$a(j)$ removes a particle if one is present: for example, $a(0)\ket{1}\ket{0} = \ket{0}\ket{0}$ and $a(1)\ket{m}\ket{1} = (-1)^m\ket{m}\ket{0}$. Again, acting on an unoccupied site yields $\zero$. The \emph{number operator}~$\mathbb{1}(j) = \mapp{\sdag{a}(j)}{a(j)}$ (and its complement $\mathbb{0}(j) = \mapp{a(j)}{\sdag{a}(j)}$) projects onto the subspace where site~$j$ is occupied (or unoccupied, respectively).\footnote{In traditional literature, the number operator is denoted \( n(j) = \sdag{a}(j) a(j) \); here, we write it as \( \mathbb{1}(j) \) to emphasize its role as a projector onto the occupied subspace.} This composition first annihilates a particle if present and then recreates it, preserving the state. For instance, $\mathbb{1}(0)\ket{1}\ket{0} = \ket{1}\ket{0}$, while $\mathbb{1}(0)\ket{0}\ket{0} = \zero$.

\myparagraph{Constraint Semantics: Hamiltonian as a Typed Expression}
The physical behavior of the system is captured as a constraint-level expression in \qsnd:
\[
\hat{H} =
  - \left( \sapp{\sdag{a}(0)}{a(1)} + \sapp{\sdag{a}(1)}{a(0)} \right)
  + 2 \cdot \sapp{\mathbb{1}(0)}{\mathbb{1}(1)}.
\]
The first component models coherent \emph{tunneling} of a fermion between the two sites. The term $\mapp{\sdag{a}(0)}{a(1)}$ removes a particle from site~1 and creates it at site~0, and vice versa for the mirror term. Together, they ensure symmetric transition amplitudes and reversible evolution. The second term models \emph{on-site repulsion}: the state $\ket{1}\ket{1}$, where both sites are occupied, incurs an energy cost of~2.

Typing ensures this expression is well-formed. The type system classifies particle operators by the types of the sites  ($\iota$) they act on. Here, each site has type~$\tferm$, representing a spinless fermionic mode. The creation and annihilation operators act on single $\tferm$-typed sites, and compositions like $\mapp{\sdag{a}(0)}{a(1)}$ act over $\tferm \ttimes \tferm$, two sites each typed as $\tferm$ ($\ttimes$ is marked \textcolor{spec}{blue} when referring to type operators). The entire Hamiltonian is typed as $\quan{F}{\hmx}{\tferm \ttimes \tferm}$, meaning it is a Hermitian constraint-level expression over fermionic types.
This forms the \emph{constraint semantics} of the program: a typed specification of allowable physical interactions. 

Note that a single \qsnd operator, $\sdag{a}$ or $a$, is not Hermitian. \qsnd utilizes a \emph{kind flag} to identify expressions as Hermitian ($\hmx$) or non-Hermitian ($\pmx$), e.g., $\sdag{a}$ is typed as $\quan{F}{\pmx}{\tferm}$. We then utilize type-guided equational theory to identify Hermitian constraint-level expressions. Via the \qsnd type soundness theorem, we unveil the following. 

\begin{proposition}[Unitary Existence of Hermitian Constraints]\label{thm:type-good-simp}\rm 
Given a Hamiltonian $e$, typed as $\tjudge{\iota}{e}{\quan{F}{\hmx}{\iota}}$, for all $r$, the $\eexp{\uapp{r}{e}}$ semantics is properly defined in \qsnd.
\end{proposition}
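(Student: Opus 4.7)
The plan is to reduce the statement to the classical spectral-theoretic fact that $\exp(-i r H)$ is a well-defined unitary on a finite-dimensional Hilbert space whenever $H$ is Hermitian. The kind annotation $\hmx$ in $\quan{F}{\hmx}{\iota}$ is designed precisely so that, after denotational interpretation, $e$ corresponds to a Hermitian operator on the Fock-like space $\hspa{\iota}$ determined by $\iota$; once that correspondence is made rigorous, the proposition reduces to invoking the spectral theorem and reading off that the output is unitary.

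\textbf{Key steps.} First, I would fix a denotation $\interp{\cdot}$ that maps any well-typed $\tjudge{\iota}{e}{\quan{F}{k}{\iota}}$ to a linear operator on $\hspa{\iota}$, interpreting $\sdag{a}(j)$ and $a(j)$ in the usual second-quantized manner with Jordan--Wigner-style sign bookkeeping on the fermionic components of $\iota$ and with bosonic components truncated in line with the finite boson-like modes discussed in the paper. Second, I would prove by induction on the typing derivation a \emph{Hermiticity-preservation lemma}: whenever $\tjudge{\iota}{e}{\quan{F}{\hmx}{\iota}}$, we have $\interp{e}^\dagger = \interp{e}$. The base cases cover number-like operators such as $\mathbb{1}(j)$ and $\mathbb{0}(j)$ (self-adjoint projectors) and real scalar multiples; the inductive cases cover sums and the symmetrized patterns $\mapp{x}{y} + \mapp{y^\dagger}{x^\dagger}$ by which the type system lifts $\pmx$-kinded subexpressions to kind $\hmx$. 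Third, I would define the dynamic semantics of $\eexp{\uapp{r}{e}}$ to be the matrix exponential $\exp(-i r \interp{e})$; because $\interp{e}$ is Hermitian on a finite-dimensional space, it admits a unitary diagonalization $U D U^\dagger$, so $\exp(-i r \interp{e}) = U \exp(-i r D) U^\dagger$ is norm-convergent, well-defined, and unitary, which is the sense of ``properly defined'' asserted by the statement.

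\textbf{Main obstacle.} The hardest part will be the Hermiticity-preservation induction, and within it the fermionic cases. For fermionic site types the sign conventions in the denotation of $\sdag{a}(j)$ and $a(j)$ interact nontrivially with adjunction, so establishing identities of the form $\interp{\mapp{e_1}{e_2}}^\dagger = \interp{\mapp{e_2^\dagger}{e_1^\dagger}}$ for compositions indexed by $\ttimes$-structured types must be done carefully, and must be compatible with the equational theory that the type system uses to discharge $\hmx$. I expect most of the mechanized \rocq effort to be spent proving these adjoint-transpose identities and anti-commutation-stable rewrites; once those are in place, the spectral-theorem invocation that closes the argument is routine and can reuse the finite-dimensional matrix-exponential machinery already available in the \rocq development.
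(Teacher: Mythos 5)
Your proposal follows essentially the same route as the paper: the paper's Type Soundness theorem is proved by induction on the typing derivation and establishes exactly your Hermiticity-preservation lemma (that a $\hmx$-kinded term denotes a Hermitian matrix, with the fermionic sign bookkeeping handled by the context $g$ in the semantics and the Hermiticity obligation of rule \rulelab{T-Her} discharged through the equational theory), after which rule \rulelab{S-Sim} defines $\eexp{\uapp{r}{e}}$ as the convergent matrix-exponential series yielding a unitary. Your spectral-theorem closing step is an equivalent finite-dimensional packaging of the paper's power-series argument, so the two proofs coincide in substance.
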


The lemmas are given in \Cref{sec:theorem}. The fact ensures that every $\hmx$-kind expression can be simulated via our Hamiltonian simulation operator, explained shortly below. 

\myparagraph{Dynamic Semantics: Mapping to Pauli Operators}
To simulate the model, \qsnd compiles constraint-level expressions into qubit-level operators using the Jordan-Wigner transformation. This yields the \emph{dynamic semantics}, a.k.a., Hamiltonian simulation ($\eexp{\uapp{r}{e}}$), which governs unitary evolution on qubit registers.

The number operator maps to $\mathbb{1}(j) = \frac{I - Z(j)}{2}$ \footnote{We use a single $I$ without any indices to indicate a tensor of identity operations applied to all sites.}, where $Z(j)$ is the Pauli-$Z$ matrix. The tunneling term becomes $\frac{1}{2}(X(0) \circ X(1) + Y(0) \circ Y(1))$, where $X(j)$ and $Y(j)$ are Pauli-$X$ and $Y$ matrices. Substituting yields:
\[
\hat{H} = -\frac{1}{2}(X(0) \circ  X(1) + Y(0) \circ Y(1))
  + 2 \cdot \left(\frac{I - Z(0)}{2}\right)\circ\left(\frac{I - Z(1)}{2}\right),
\]
which simplifies to:
\[
\hat{H} = -\frac{1}{2}(X(0)\circ X(1) + Y(0) \circ Y(1))
  + \frac{1}{2}(I - Z(0) - Z(1) + Z(0) \circ Z(1)).
      \tag{2.1}
  \label{eq:ex1}
\]
Dropping constant and local-field terms yields the effective Hamiltonian:
\[
\hat{H}' = -\frac{1}{2}(X(0) \circ X(1) + Y(0) \circ Y(1)) + \frac{1}{2} Z(0) \circ Z(1).
    \tag{2.2}
  \label{eq:ex}
\]

Transforming $\hat{H}$ to $\hat{H}'$ introduces some system errors, explained shortly below.
This form is used for digital simulation or direct execution on analog backends. In both cases, type-correctness ensures that the transformation preserves the intended physical behavior.
Extending the result from our type soundness, we show that every $\hmx$-kind constraint expression $e$ can be rewritten into a Pauli-based expression with a canonicalized form, as follows.

\begin{proposition}[Always Canonicalizible]\label{thm:type-good-simp}\rm 
Given a $n$-site Hamiltonian $e$, typed as $\tjudge{\iota}{e}{\quan{F}{\hmx}{\iota}}$, $e$ can be canonicalized to a Pauli-string-based Hamiltonian $\hat{H}=\sum_j r_j \cdot (\bigotimes^{n\sminus 1}_{k=0} \pau_{j,k})$ with $\pau\in\{X, Y, Z, I\}$.
\end{proposition}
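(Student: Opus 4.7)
The plan is to prove the proposition by structural induction on the typed expression $e$, leveraging the type-soundness result (Proposition on Unitary Existence) to ensure the constraints apply uniformly and using the Jordan-Wigner-style transformation exemplified in equation~\ref{eq:ex1}. The induction is guided by the grammar of well-typed constraint-level expressions: primitive operators $\sdag{a}(j)$, $a(j)$, $\mathbb{1}(j)$, $\mathbb{0}(j)$; scalar multiplication $r \cdot e$; addition $e_1 + e_2$; and composition $\sapp{e_1}{e_2}$. At each step, type preservation under the equational rewrites (which was part of the mechanized metatheory) ensures that the Pauli representation is still typed as $\quan{F}{\hmx}{\iota}$.

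For the base cases, I would unfold each primitive into its Pauli-string representation. As already shown in the paper, $\mathbb{1}(j) = (I - Z(j))/2$ and the hopping pair $\sapp{\sdag{a}(0)}{a(1)} + \sapp{\sdag{a}(1)}{a(0)} = (X(0) \circ X(1) + Y(0) \circ Y(1))/2$. More generally, each fermionic creator/annihilator at site $j$ decomposes into a Jordan-Wigner string of $Z$'s on sites $0,\ldots,j\sminus 1$ tensored with an $X \pm iY$ factor at site $j$. For the (finite) bosonic-like modes, a similar matrix-unit decomposition into Pauli strings applies, since any operator on a finite tensor-product Hilbert space over $\mathbb{C}^2$ factors lies in the Pauli-string basis.

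For the inductive cases, sums fall out immediately by appending Pauli-string lists. For composition $\sapp{e_1}{e_2}$, I would invoke the fact that the $n$-fold Pauli group is closed under tensor-product composition up to a phase in $\{\pm 1, \pm i\}$: each product $(\bigotimes_k \pau_{i,k}) \circ (\bigotimes_k \pau_{j,k})$ yields a single Pauli string scaled by a phase, and the distributive law reduces the composed expression to a Pauli-string sum. Scalar multiplication preserves the form trivially. Canonicalization is then a finite-support merge: group by Pauli-signature $(\pau_{j,0},\ldots,\pau_{j,n\sminus 1})$ and sum coefficients.

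The main obstacle is enforcing that the final coefficients $r_j$ are real, not merely complex. Intermediate products produce factors of $i$, so a direct inductive invariant ``coefficients stay real'' fails. My plan to handle this is to pair each intermediate sub-expression with its formal adjoint and use the Hermiticity-kind judgment $\quan{F}{\hmx}{\iota}$ supplied by the type system: since every Pauli string is itself Hermitian, a sum $\sum_j c_j \bigotimes_k \pau_{j,k}$ is Hermitian iff each $c_j \in \mathbb{R}$, and the $\hmx$ judgment must be preserved by every inductive step (using the equational theory that captures $e + e^\dagger$ cancellations and symmetrized hopping pairs as in the running example). The careful bookkeeping of Jordan-Wigner phases for operators at non-adjacent fermionic sites, combined with proving that the symmetrization enforced by the $\hmx$ kind always cancels the imaginary parts, is the technical heart of the argument and the step I expect to consume most of the mechanized proof effort.
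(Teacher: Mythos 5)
Your overall route is essentially the paper's: primitives are mapped to Pauli strings by a Jordan--Wigner-style transformation, sums and compositions are flattened using distributivity and closure of the Pauli group under multiplication (the paper's Pauli-merging table in \Cref{sec:compilecanonical}), and realness of the coefficients is recovered from Hermiticity of the whole expression rather than of its subterms. The paper organizes this as two passes --- the particle-transformation judgment $\iota \vdash_n e \gg_{\cn{t}} \hat{H} \triangleright \iota'$ of \Cref{sec:transformation}, followed by canonicalization via the equational theory --- and its realness argument is Lemma~\ref{thm:canonical-good1}, which pairs each non-real term $\uapp{z_j}{e_j}$ of a $\dag$-canonical Hermitian program with a conjugate partner; your alternative of arguing at the Pauli level that $\sum_j c_j \hat{P}_j$ is Hermitian iff every $c_j$ is real, by linear independence of Pauli strings, is an equally valid and arguably cleaner way to finish. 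Two points need repair. First, your bosonic base case is circular as stated: a $t(m)$ site is a single $\mathbb{C}^m$ factor, not a tensor product of qubits, so you cannot simply invoke ``the Pauli-string basis''; you need the explicit binary encoding of \rulelab{PT-BoLow}/\rulelab{PT-BoUp}, which maps each $t(m)$ site to $\ulcorner \log(m) \urcorner$ qubits and therefore changes the output arity (the fixed $n$ in $\bigotimes_{k=0}^{n\sminus 1}$ is literal only when every site is two-dimensional). Second, the phrase ``the $\hmx$ judgment must be preserved by every inductive step'' is not right: subterms of a Hermitian program are generally only $\pmx$-kind (e.g.\ $\sapp{\sdag{a}(0)}{a(1)}$ alone is not Hermitian), so the induction must carry complex coefficients throughout and invoke Hermiticity exactly once at the top level --- which is what your pairing device, and the paper's Lemma~\ref{thm:canonical-good1}, actually do.
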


The lemmas are shown in \Cref{sec:transformation,sec:compilecanonical}. We show that the canonicalized Pauli-string-based Hamiltonian can always be in the form of a sum of pairs of a real-number amplitude with a Pauli string of single-Pauli operators.

\myparagraph{Application Semantics: Target-Specific Execution}
The final step assigns meaning to the Hamiltonian through backend-specific execution---the \emph{application semantics}. On gate-based quantum devices, \qsnd applies Trotterization to approximate time evolution. For $r = \frac{\pi}{4}$, we obtain \footnote{Hamiltonian simulation $\eexp{r \hat{H}}$ typically requires $r$ to be positive, while negative time simulation can be done by inverting the circuit. The restriction is on the analog schedule generation, as we require $r$ to be positive in every step.}:
\[
\eexp{r \hat{H}'} \approx 
  \eexp{\frac{-\pi}{8} X(0) \circ X(1)}
  \circ
  \eexp{\frac{-\pi}{8} Y(0) \circ Y(1)}
  \circ
  \eexp{\frac{\pi}{8} Z(0) \circ Z(1)}.
      \tag{2.3}
  \label{eq:ex_final}
\]
Each exponential is implemented using standard gate patterns:
\begin{itemize}
  \item $X(0) \circ X(1)$: Hadamard basis change, \cn{CNOT}–$\cn{Rz}$–\cn{CNOT}, undo basis.
  \item $Y(0) \circ Y(1)$: rotate via $S^\dagger$ and Hadamards, then same as above.
  \item $Z(0) \circ Z(1)$: directly synthesized using \cn{CNOT}–$\cn{Rz}$–\cn{CNOT}.
\end{itemize}

\ignore{
\begin{figure}[h]
\centering
\[
\Qcircuit @C=0.8em @R=0.8em {
  \lstick{\ket{q_0}} & \gate{\cn{H}} & \ctrlo{1} & \gate{\cn{Rz}(-\pi/4)} & \ctrlo{1} & \gate{\cn{H}} 
                    & \gate{\cn{S}^\dagger} & \gate{\cn{H}} & \ctrlo{1} & \gate{\cn{Rz}(-\pi/4)} & \ctrlo{1} & \gate{\cn{H}} & \gate{\cn{S}} 
                    & \ctrlo{1} & \gate{\cn{Rz}(\pi/4)} & \ctrlo{1} & \qw \\
  \lstick{\ket{q_1}} & \gate{\cn{H}} & \control \qw & \qw & \control \qw & \gate{\cn{H}} 
                    & \gate{\cn{S}^\dagger} & \gate{\cn{H}} & \control \qw & \qw & \control \qw & \gate{\cn{H}} & \gate{\cn{S}} 
                    & \control \qw & \qw & \control \qw & \qw
}
\]
\vspace{-1.5em}
\caption{Quantum circuit for one Trotter step of evolution under $\hat{H}'$, with $\theta = \pi/8$. Terms are applied in the order $X(0)\circ X(1)$, $Y(0) \circ Y(1)$, $Z(0) \circ Z(1)$.}
\label{fig:hubbard-trotter-circuit}
\end{figure}
}

\begin{figure}[h]
\centering
\[
\Qcircuit @C=0.8em @R=0.8em {
  \lstick{\ket{q_0}} & \ctrlo{1} & \gate{\cn{Rz}(\pi/4)} & \ctrlo{1}  
                    & \gate{\cn{S}} & \gate{\cn{H}} & \ctrlo{1} & \gate{\cn{Rz}(-\pi/4)} & \ctrlo{1} & \gate{\cn{H}} &  \gate{\cn{S}^{\dagger}} & \gate{\cn{H}} 
                    & \ctrlo{1} & \gate{\cn{Rz}(-\pi/4)} & \ctrlo{1} & \gate{\cn{H}} &\qw \\
  \lstick{\ket{q_1}} & \control \qw & \qw & \control \qw 
                    & \gate{\cn{S}} & \gate{\cn{H}} & \control \qw & \qw & \control \qw & \gate{\cn{H}} & \gate{\cn{S}^{\dagger}} & \gate{\cn{H}}
                    & \control \qw & \qw & \control \qw & \gate{\cn{H}} & \qw
}
\]
\vspace{-1.5em}
\caption{Quantum circuit for one Trotter step of evolution under $\hat{H}'$, with $\theta = \pi/8$. Terms are applied in the order $Z(0) \circ Z(1)$, $Y(0) \circ Y(1)$, $X(0)\circ X(1)$.}
\label{fig:hubbard-trotter-circuit}
\end{figure}

In \qsnd, we permit different compilation paths, e.g., users can select a different Trotterization algorithm, QDrift, to compile the Hamiltonian simulation.
We also permit analog-based compilation. For analog platforms such as the IBM~\cite{Alexander_2020} and Indiana~\cite{richerme2025multimodeglobaldrivingtrapped} machines, the compiler targets native Ising models. For the IBM machine model, these are of the form $\sum_j r_j \cdot X(j) + \sum_k r_k \cdot Z(k) + \sum_{j<k} \theta_{jk} \cdot Z(j) \circ Z(k)$ \cite{10.1145/3632923}. The interaction term $Z(0) \circ Z(1)$ is directly compatible with the native couplers. In this case, \qsnd emits a weighted interaction graph suitable for analog scheduling.

\myparagraph{Correctness Across the Stack}
At each stage---constraint, dynamic, and application---\qsnd guarantees correctness through types. The constraint semantics ensure well-formed expressions; the dynamic semantics preserve physical behavior; and the application semantics respect hardware constraints. These guarantees are enforced and verified in \rocq, the logical backend that ensures semantic alignment between typed programs and their denotations. In this way, \qsnd provides an end-to-end framework for writing, reasoning about, and executing quantum programs grounded in both physical intuition and formal correctness.

Our compilation correctness ensures that the compiled circuit, simulating a Hamiltonian $\hat{H}$ with time $r$, is within error bound  $\epsilon$ of the dynamic semantics $\eexp{r\hat{H}}$. For example, when simulating the above two-site Hubbard model, in the trotterization step, the error of compiling $\hat{H}$ in Eq. \ref{eq:ex} to $\eexp{r\hat{H}'}$ in Eq. \ref{eq:ex_final} has been calculated by our certified compiler.
In the standard Trotterization method, the error bound is $\pi^2/16$ following Eq. \ref{eq:trottererr_std}. The \qsnd compiler implements different compilation algorithms for different use cases. By selecting QDrift-based Trotterization, our compiler produces an error bound of $3\pi^2/16$ following Eq. \ref{eq:trottererr_qdrift}.
%$\frac{r^2}{2N}$ with $N$ being a selective step value.
Note that the error introduced by dropping the constant and local-field terms in Eq. \ref{eq:ex1} has been naturally included in the calculation.
We list our main compilation theorem (simplified version) below.

\begin{proposition}[Compilation Correctness]\label{thm:compile-good-simp}\rm 
Given a Hamiltonian $e$, typed as $\tjudge{\iota}{e}{\quan{F}{\hmx}{\iota}}$, and a time period $r$, the simulation $\eexp{\uapp{r}{e}}$ is correctly compiled to quantum circuit, via the pipeline $\quan{F}{\hmx}{\iota} \vdash (e, r) \gg U : \quan{F}{\umx}{\iota'}$, with a verified error bound  $\epsilon$ between $U$ and $\eexp{\uapp{r}{e}}$.
\end{proposition}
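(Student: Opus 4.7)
The plan is to structure the proof as an induction over the compilation pipeline, composing local error bounds at each stage and using the type system as an invariant that guarantees each stage is well-defined on its input. First, by \Cref{thm:type-good-simp} (Proposition 2.1), the hypothesis $\tjudge{\iota}{e}{\quan{F}{\hmx}{\iota}}$ already guarantees that $\eexp{\uapp{r}{e}}$ has a well-defined unitary semantics, so $U$ has a meaningful target to approximate. Then by Proposition 2.2 the expression $e$ can be canonicalized to a Pauli-string sum $\hat{H} = \sum_j r_j (\bigotimes_k \pau_{j,k})$ with real coefficients. I would first verify that this canonicalization is semantics-preserving up to a tracked constant-shift error (since, as in the Hubbard example, one may legitimately drop identity tensor factors and single-site fields that only contribute a global phase when exponentiated); the Hermitian typing ensures all coefficients are real, which is the key invariant that makes this step sound.

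Next I would handle the Hamiltonian-simulation stage. Since $\hat{H}$ is now a sum of Hermitian terms, I would invoke the Trotterization (or QDrift) error lemmas collected in Appendix~\ref{appx:trotterthm}, instantiated to the chosen backend, to bound the operator-norm distance between $\eexp{r\hat{H}}$ and the product of per-term exponentials $\prod_j \eexp{r \cdot r_j (\bigotimes_k \pau_{j,k})}$. The error contribution $\epsilon_{\mathrm{Trotter}}$ is a symbolic expression in $r$ and the $r_j$ (and, for higher-order schemes, in commutator norms), computed by the certified compiler itself; the two-site Hubbard example's $\pi^2/16$ versus $3\pi^2/16$ bounds illustrate the two backend-specific instantiations I would need to cover.

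Then I would address gate synthesis: each single Pauli-string exponential $\eexp{\theta (\bigotimes_k \pau_k)}$ compiles exactly to a \cn{CNOT}-$\cn{Rz}(\theta)$-\cn{CNOT} sandwich with Hadamard/$S$ basis changes, and this step contributes zero error because the rewriting is an exact unitary identity. The corresponding backend judgment $(e,r)\gg U$ would be shown, by induction on its derivation, to produce a $U$ of type $\quan{F}{\umx}{\iota'}$ and to satisfy $\norm{U - \eexp{r\hat{H}}} \le \epsilon_{\mathrm{gate}} = 0$. Composing the three error contributions by the triangle inequality (canonicalization shift, Trotter/QDrift error, exact synthesis) gives the overall bound $\epsilon$ reported by the compiler. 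For the analog backend one skips Trotterization and instead proves that the emitted weighted interaction graph, when scheduled, realizes $\eexp{r\hat{H}}$ up to a device-native residual captured by the gadget analysis of Appendix~\ref{appx:gadget}.

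The main obstacle I expect is tracking error bounds \emph{through} the type-indexed pipeline rather than on a fixed Hamiltonian, because the intermediate types $\quan{F}{\hmx}{\iota}$ change as one moves from the fermionic site-indexed expression to a qubit-indexed Pauli sum, and the proof must transport norm bounds across the Jordan--Wigner change of basis. Establishing that this transport is isometric (so that operator-norm bounds are preserved) and that the equational rewrites used during canonicalization commute with the compiler's symbolic error accumulator is the delicate piece; once that is in place, the remaining composition of bounds is a routine triangle-inequality argument, and the \rocq mechanization reduces to discharging each local lemma against the corresponding case of the compilation judgment.
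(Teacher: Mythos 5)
Your overall strategy---decompose the pipeline into stages, prove each stage either exact or bounded-error, and compose by the triangle inequality---is exactly how the paper proves this: Theorem~\ref{thm:compile-good} is obtained by composing Lemma~\ref{thm:particle-trans} (particle transformation), Lemma~\ref{thm:canonical-good} (canonicalization), Lemma~\ref{thm:trotter-good} (Trotterization, the sole source of $\epsilon$), and one of Lemmas~\ref{thm:decompose-good}/\ref{thm:analog-decompose-good} (exact synthesis). Your identification of the Jordan--Wigner change of representation as the delicate step is also apt; the paper handles it not by an isometry/norm-transport argument but by stating Lemma~\ref{thm:particle-trans} as a commuting square on \emph{states} (transform the state, apply the transformed Hamiltonian, and you land on the transform of the original result), which is why Theorem~\ref{thm:compile-good} is phrased in terms of transformed input and output states rather than a single operator norm over one Hilbert space.

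Two points in your plan diverge from what actually works. First, you assign canonicalization a ``tracked constant-shift error,'' but in the paper canonicalization is an \emph{exact} equational rewrite (Lemma~\ref{thm:canonical-good}), and Lemma~\ref{thm:equiv-sem} guarantees that equationally equivalent Hamiltonians have identical simulation semantics; the dropped constant and local-field terms in the Hubbard example are absorbed into the Trotterization error computation rather than booked as a separate canonicalization residual. Second, and more substantively, your analog path is wrong: you propose to ``skip Trotterization'' and realize $\eexp{r\hat{H}}$ directly as a scheduled interaction graph, with the residual covered by the gadget analysis. This would fail, because the analog machines only natively simulate restricted Pauli-string families ($\{X,Z,ZZ\}$ for IBM, $\{Z,X^*\}$ for the Indiana machine), so an arbitrary Pauli-sum Hamiltonian still must be Trotterized into individual term exponentials before analog synthesization maps each one (exactly) onto native simulations---that is precisely Lemma~\ref{thm:analog-decompose-good}. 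Moreover, the perturbative gadget is an optional locality-reduction step that the main theorem explicitly \emph{excludes}; its error analysis lives in the separate Theorem~\ref{thm:compile-good1} and cannot serve as the error accounting for the analog backend in the statement you are proving. With the analog branch corrected to ``Trotterize, then synthesize exactly,'' your composition argument goes through as in the paper.
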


\section{Mathematical Background}
\label{sec:formalism}

This section reviews the mathematical foundations used throughout the paper, focusing on the representation of quantum states, operators, and their algebraic manipulation. These form the basis for our compilation and simulation framework, particularly as it relates to encoding Hamiltonians, managing particle types, and enforcing structural constraints during circuit generation.

\myparagraph{Quantum States and Superposition}
In quantum computing, the state of a system is described by a unit vector in a Hilbert space. We adopt Dirac (bra-ket) notation~\cite{Dirac1939} to express such vectors compactly. For a system of dimension $m$, a state in the Hilbert space $\Hilb_m$ %s written as $z_0\ket{0} + \cdots + %z_{m-1}\ket{m-1}$, where each $\ket{j}$ is a %computational basis vector such as $\ket{0} = %\%begin{psmallmatrix} 1 \\ 0 \\ \cdots %\%end{psmallmatrix}$ and $\ket{1} = %\begin{psmallmatrix} 0 \\ 1 \\ \cdots %%\end{psmallmatrix}$. 
is written as
$z_0\ket{0} + \cdots + z_{m-1}\ket{m-1}$,
where each $\ket{j}$ is a computational basis vector, such as
$\ket{0} = [1 \hspace{0.2em} 0 \hspace{0.1em} \cdots]^\mathsf{T}$ 
and
$\ket{1} = [0 \hspace{0.2em} 1 \hspace{0.1em} \cdots]^\mathsf{T}$,
and $z_j$ is a complex-number amplitude.
When more than one amplitude $z_j$ is nonzero, the state is said to be in a \emph{superposition}~\cite{mike-and-ike}, such as the state $\frac{1}{\sqrt{2}}(\ket{0} + \ket{1})$. Quantum systems can be composed via the tensor product. For instance, the two-qubit state $\ket{0} \otimes \ket{1}$ (also written as $\ket{0}\ket{1}$) corresponds to the vector 
$[0 \hspace{0.2em} 1 \hspace{0.2em} 0  \hspace{0.2em} 0]^\mathsf{T}$.
%$\begin{psmallmatrix} 0 \\ 1 \\ 0 \\ 0 \end{psmallmatrix}$. 
However, not all joint states can be decomposed into tensor products. Such states exhibit non-classical correlations between particles and are called \emph{entangled}. A canonical example is the Bell pair $\frac{1}{\sqrt{2}}(\ket{0}\ket{0} + \ket{1}\ket{1})$.

\myparagraph{Hamiltonian Simulation and Pauli Strings}
To simulate physical systems, quantum computers approximate the unitary time evolution $\eexp{r \hat{H}}$ generated by a Hamiltonian $\hat{H}$. Here, $\hat{H}$ is a Hermitian operator ($\hat{H} = \sdag{\hat{H}}$), and $r$ is a time or interaction parameter. A standard compilation strategy involves Trotterization, which rewrites $\hat{H}$ as a weighted sum of local terms: $\hat{H} = \sum_j z_j \hat{P}_j$, where each $\hat{P}_j$ is a tensor product of operators and $z_j$ are typically real scalars.\footnote{Any Hermitian Hamiltonian can be rewritten in a Pauli string form with real coefficients; see \Cref{sec:compilecanonical}.} In many cases, the $\hat{P}_j$ terms are drawn from the Pauli group, forming what are called \emph{Pauli strings}. For an $n$-site system, a Pauli string takes the form $\hat{P}_j = \pau_n \otimes \cdots \otimes \pau_1$, where each $\pau_k \in \{I, X, Y, Z\}$ acts on a qubit Hilbert space $\Hilb_2$.

\myparagraph{Second Quantization and Tensor Strings}
More expressive Hamiltonians (particularly those modeling fermionic or bosonic systems) are naturally represented using the formalism of second quantization. Instead of Pauli operators, the building blocks here are creation ($\sdag{a}$) and annihilation ($a$) operators. A typical Hamiltonian term is expressed as a composition $\overline{\alpha} = \alpha_1 \circ \cdots \circ \alpha_j$ with $\alpha_j \in \{a, \sdag{a}\}$ and $\circ$ denoting matrix multiplication. To represent spatial structure, such operator sequences are tensorized across sites, forming what we call \emph{tensor strings}: $\overline{\alpha}_n \otimes \cdots \otimes \overline{\alpha}_1$ \footnote{When it occurs as part of a tensor structure, the identity $I$ refers to a single site.}. For convenience, we write indexed notation such as $a(j)$ or $\sdag{a}(j)$ to insert an operator at site $j$ with identities elsewhere. For example, $a(j) = \Motimes_{k=0}^{j-1} I \otimes a \otimes \Motimes_{k=j+1}^{n-1} I$. 
%Compositions such as $\sapp{\sdag{a}(j)}{a(j+1)}$ insert both operators at adjacent sites.

\myparagraph{Fermionic Ladder Semantics}
To reason formally about tensor strings in second quantization, it helps to instantiate the algebra of creation and annihilation operators to a concrete setting. We focus on a simplified, two-dimensional model that behaves like a boson system but preserves key properties shared with fermions (particularly the notion of occupancy constraints). In this model, the ladder operators act on a two-level system with basis $\{\ket{0}, \ket{1}\}$, corresponding to the absence or presence of a particle. The annihilation operator is defined as $a = \begin{psmallmatrix} 0 & 1 \\ 0 & 0 \end{psmallmatrix}$ and the creation operator as $\sdag{a} = \begin{psmallmatrix} 0 & 0 \\ 1 & 0 \end{psmallmatrix}$. These matrices satisfy the following behavior 
$\sdag{a}\ket{0} = \ket{1}$, $\sdag{a}\ket{1} = \zero$, $a\ket{1} = \ket{0}$, and $a\ket{0} = \zero$. From these, we define two compound operations: $\mathbb{1} = \sapp{\sdag{a}}{a}$, which acts as a projector onto the occupied state $\ket{1}$, and $\mathbb{0} = \sapp{a}{\sdag{a}}$, which projects onto the unoccupied state $\ket{0}$. These projections are not unitary but are useful for expressing static constraints on particle presence within a state. Such projectors can be applied across multiple sites. For instance, the operator $\mathbb{1} \otimes \mathbb{0}$ preserves the state $\ket{1}\ket{0}$ and maps all other basis states to $\zero$. This mechanism enables local constraints to be embedded compositionally within a larger quantum system. For example, 
${\mapp{\sdag{a}}{a}} \otimes {\mapp{a}{\sdag{a}}} \ket{1}\ket{0} = \ket{1}\ket{0}$ but vanishes on other basis states. This semantics serve as the foundation for the constraint-based interpretation of programs in \qsnd, distinct from their dynamic (unitary) semantics.

\myparagraph{Pauli Operators as Ladder Composites}
The creation and annihilation operators not only express physical processes like particle addition and removal, but they also provide a basis for reconstructing familiar gate-level abstractions. In particular, the standard Pauli matrices can be algebraically derived as composites of these ladder operations: $I = \mathbb{0} + \mathbb{1}$, $X = \sdag{a} + a$, $Y = i a - i \sdag{a}$, and $Z = \mathbb{0} - \mathbb{1}$. 
%Here, $\mathbb{0} = a \circ \sdag{a}$ and $\mathbb{1} = \sdag{a} \circ a$ are projections onto the $\ket{0}$ and $\ket{1}$ subspaces, respectively. 
These definitions not only reproduce the expected matrix representations of the Pauli group but also clarify how Pauli strings (as used in quantum simulation) can be viewed as special cases of more general tensor strings involving creators and annihilators. This relationship reinforces the idea that the Pauli operators can be interpreted as expressive but coarse-grained encodings of particle-level dynamics. In the \qsnd framework, this makes it possible to derive constraint-oriented operations from lower-level semantics while retaining compatibility with gate-based formulations. However, it is important to emphasize that the resulting operators are interpreted as symbolic constraints, not necessarily as executable quantum gates.

\myparagraph{Typing and Compilation in \qsnd}
In \qsnd, quantum programs are enriched with a type system that classifies each site according to particle type. For example, $t^{\aleph}(2)$ denotes a two-dimensional fermionic state, while $t(2)$ is used for standard qubit or two-dimensional bosonic sites. These types are essential for interpreting the semantics of Hamiltonians correctly and guiding their compilation into circuits. Typing becomes particularly useful when translating systems across representations. For example, a two-site system typed as $t(m) \ttimes t(m)$ can be compiled to a qubit system by expressing each state $j \in [0, m)$ as a bitstring of length $N = \ulcorner \log(m) \urcorner$, yielding a system with $2N$ qubits of type $t(2)$. Such translations are captured by higher-order functions 
$\textcolor{spec}{(\Motimes^2 t(m) \to \Motimes^2 t(m)) \to (\Motimes^{2N} t(2) \to \Motimes^{2N} t(2))}$. This flexible typing and compilation infrastructure supports our goal of modeling diverse quantum systems within a unified symbolic and operational framework.

\section{Formalism}\label{sec:formal}

This section discusses \qsnd's syntax, type system, semantics, and equivalence relations on typed programs.

\begin{figure}[t]
%\vspace*{-0.5em}
{\small
  \[
  \begin{array}{c}
  {\hspace{-1em}
  \begin{array}{l@{\;\;\;}l@{\;}c@{\;}l @{\qquad} l@{\;\;\;}l@{\;}c@{\;}l @{\qquad} l@{\;\;\;}l@{\;}c@{\;}l @{\qquad} l@{\;\;\;}l@{\;\;}l}
  \text{Nat. Num} & m, n, j, k,g & \in & \mathbb{N}
&
\text{Real Number} & \theta, r & \in & {\mathbb{R}}
&
\text{Complex Number} & z & \in & {\mathbb{C}}
&
      \text{Variable} & x,y,f
\end{array}
}
\\[0.5em]
  \begin{array}{l@{\quad}l@{\;\;}c@{\;\;}l@{}c@{\;\;\;}l} 
\text{Single-Site Basis Vector in } \Hilb_{m} & \eta & ::= & {\displaystyle \ket{k}} && k\in[0,m) \\[0.5em]
n\text{-Site Basis-ket State} & w & ::= &  {\displaystyle \Motimes_{k=0}^{n-1} \eta_{k} }
%\text{Single Particle Type With }$n$\text{ Spins} & t(n,m) & ::= &  {\bigotimes^{n}\textcolor{spec}{\mathpzc{H}^{m}}}
%\\[1em]
%\text{Single Particle State} & v & ::= & {\displaystyle{\sum_j} z_j\eta_j }
\\[1.5em]
\text{Quantum State Type} & \iota & ::= & t^{[\aleph]}(m) &\mid& \iota \ttimes \iota
\\[0.5em]
\text{Quantum State} & \psi & ::= & {\displaystyle {\sum_j} z_j w_j  } &\mid& \zero
    \end{array}
        \end{array}
  \]
}
%\vspace*{-1.5em}
  \caption{\qsnd state structure for particles. A $t(m)$-typed basis vector $\ket{j}$ can have $j \in [0,m)$.  We abbreviate $z \cdot \Motimes_{k=0}^{0} \eta_{k} \equiv z\, \eta_0$ and $z_1\eta_1 \otimes z_2\eta_2\equiv (z_1 * z_2)\eta_1 \eta_2$.  $[\aleph]$ means an operational $\aleph$ flag.}
  \label{fig:data}
 % \vspace*{-1.2em}
\end{figure}

\subsection{Quantum Particle System State Representation}\label{sec:staterep}

In \qsnd, we model quantum systems as collections of \emph{sites}, where each site is associated with a local Hilbert space. In typical qubit-based systems, each site corresponds to a two-dimensional Hilbert space~$\Hilb_2$. More generally, however, sites may have different dimensionalities to account for distinct particle types (such as bosons and fermions), which require different local state spaces. For example, a three-site system might assign $\Hilb_3$ to the first site, $\Hilb_4$ to the second, and $\Hilb_5$ to the third, yielding a composite Hilbert space $\Hilb_{60} = \Hilb_3 \otimes \Hilb_4 \otimes \Hilb_5$. In this tensor product, each factor encodes the quantum state of an individual site. 

\Cref{fig:data} defines the quantum state syntax used in \qsnd, which formalizes such systems using \emph{computational basis vectors and basis kets}. A single-site basis vector of type $t^{[\aleph]}(m)$ \footnote{$m$ in the site type can be conceptually understood as the maximum number of particles that can occupy a site.} is written as the ket $\eta = \ket{k}$, where $k \in [0, m)$ identifies one of the $m$ local basis vectors. A composite $n$-site basis-ket state is then expressed as $w = \Motimes_{k=0}^{n-1} \eta_k$. For example, two sites of type $t(3)$ in basis vectors $\ket{2}$ and $\ket{1}$ compose into the joint state $\ket{2} \otimes \ket{1}$, which we abbreviate as $\ket{2}\ket{1}$. 

A single-site basis vector $\ket{k}$ of type $t(m)$ can be scaled by a complex amplitude $z$, forming a single-site state $z \ket{k}$. For an $n$-site system, we write basis-ket states as $w = \Motimes_{k=0}^{n-1} \eta_k$, and associate amplitudes component-wise. For example, the state $z_1 \eta_1 \otimes z_2 \eta_2$ can be abbreviated as $(z_1 * z_2) \eta_1 \eta_2$. In general, a quantum superposition state is written canonically as $\psi = \sum_j z_j w_j$, where each $w_j$ is a basis-ket term and $z_j \in \mathbb{C}$. We also include the zero vector $\zero$ to represent the null state, which plays a role in the semantics of second quantization.

\subsection{Program Syntax}\label{sec:syntax}

\Cref{fig:syntax} defines the abstract syntax of \qsnd programs. A constraint expression $e$ (or $\hat{H}$ when explicitly Hermitian) denotes a symbolic Hamiltonian imposing user-defined constraints to a system, while a unitary expression $U=\eexp{\hat{H}}$ represents compiled time evolution. The type constructor $\quan{F}{\zeta}{\iota}$ classifies program fragments over a typed state space $\iota$ by kind $\zeta$: plain ($\pmx$), Hermitian ($\hmx$) or unitary ($\umx$). Hermitian terms correspond to physical Hamiltonians, while unitary terms represent compiled simulations executable on quantum hardware.

\begin{figure}[t]
%\vspace*{-0.5em}
{\small
  \[\begin{array}{l@{\quad}l@{\;\;}c@{\;\;}l@{\;\;}} 
%\text{Single Particle Op} & \alpha & ::= & z a_{\sigma}
%\\
%\text{Arith Type} & \xi & ::= & \cn{nat} \mid \mathbb{R} \mid \mathbb{C} \mid \xi \to \xi
%\\
\text{DataType Kind} & \zeta & ::= & \pmx \mid \hmx \mid \umx
\\[0.2em]
\text{Quantum Operation Type} & \tau & ::= & \quan{F}{\zeta}{\iota}
%\\
%\text{Type} & \omega & ::= & \xi \mid \tau \mid \omega \to \omega
\\[0.2em]
       \text{Constraint Expression} & \hat{H},e & ::= &\stype{z\,{a}}{t^{[\aleph]}(m)} \mid \stype{I }{t^{[\aleph]}(m)} \mid \sdag{e} \mid e \otimes e \mid {e}+{e} \mid \mapp{e}{e}
       \\[0.2em]
       \text{Unitary Expression} & U & ::= & \eexp{\hat{H}} \mid \sapp{U}{U}
    \end{array}
  \]
}
 % \vspace*{-1.2em}
  \caption{\qsnd syntax. Both $e$ and $\hat{H}$ range over programs; we reserve $\hat{H}$ to Hamiltonian programs. \textcolor{spec}{Blue} type annotations are optional in the syntax. } 
  \label{fig:syntax}
 %   \vspace*{-1.5em}
\end{figure}

The term $\stype{z a}{t^{[\aleph]}(m)}$ represents an annihilation operator, scaled by an amplitude $z$, acting on a site of type $t^{[\aleph]}(m)$; the corresponding identity operator is  $\stype{I}{t^{[\aleph]}(m)}$. The type annotation $\aleph$ clarifies particle type (fermionic if $\aleph$ is set, bosonic otherwise), but it is not part of the term syntax by itself; types are inferred and checked by the system. The conjugate transpose operator $\sdag{e}$ denotes the adjoint of $e$. In particular, $\sdag{(\stype{z a}{t^{[\aleph]}(m)})}$ desugars to $\stype{\sdag{z} \sdag{a}}{t^{[\aleph]}(m)}$, representing a creator. In \qsnd, expressions $\stype{z \alpha}{t^{[\aleph]}(m)}$ and $\stype{z’ \alpha}{t^{[\aleph]}(m)}$ are syntactically equal if $z = z’$ for $\alpha \in \{a, \sdag{a}\}$.

Expressions include tensor compositions $e \otimes e$ to represent operations on distinct sites, additive combinations $e + e$ for linear combinations of terms (see \Cref{appx:linearsum}), and sequential compositions $\mapp{e}{e}$ for operator application. At the unitary level, $\eexp{\hat{H}}$ performs Hamiltonian simulation and $\sapp{U}{U}$ represents unitary composition. Both sequential compositions $\mapp{e}{e}$  and  $\sapp{U}{U}$  are matrix multiplications but they operate at distinct semantic levels, constraint and dynamic, respectively.

Programs often use indexed notation as syntactic sugar for tensor expressions. For a system with $n$ sites, an indexed operator such as $\sdag{a}(j)$ denotes the tensor product of identity operators at all sites except for site $j$, where the operator $\sdag{a}$ is applied. For instance, the Hamiltonian of the two-site Hubbard model includes these terms: $\textstyle -z_t \sum_j \left( \mapp{\sdag{a}(j)}{a(j+1)} + \mapp{\sdag{a}(j+1)}{a(j)} \right)$. Here, each indexed term implicitly expands into a tensor product over the full system. Note that the identity operator $I$ is type-sensitive; for a $t(2)$ typed site, $I = \mapp{\sdag{a}}{a} + \mapp{a}{\sdag{a}}$, capturing both occupied and unoccupied projections.

\subsection{Typing Rules}

We now introduce the \qsnd typing system. Every well-typed \qsnd program, whether a Hamiltonian $\hat{H}$ or a unitary $U$, is assigned a type of the form $\quan{F}{\zeta}{\iota}$, where $\zeta$ denotes the matrix kind and $\iota$ specifies the type of the quantum state to which the program applies. For instance, a single-site operation might have type $\quan{F}{\zeta}{t(m)}$, meaning that it acts on a bosonic state space of dimension $m$. Such types can be viewed as abstractly denoting maps $t(m) \to t(m)$ on single-site states. The kind $\zeta$ tracks the nature of the matrix represented by a program: ordinary ($\pmx$), Hermitian ($\hmx$), or unitary ($\umx$). A Hamiltonian program $\hat{H}$ must be Hermitian (i.e., $\hat{H} = \sdag{\hat{H}}$), and a unitary program $U$ must arise as a matrix exponential $U = \eexp{\hat{H}}$. This structure ensures that only unitary operators are considered executable on quantum hardware. To model multi-site systems, \qsnd includes tensor types of the form $\quan{F}{\zeta}{\iota_1 \ttimes \iota_2}$, used to separate sites (see \Cref{fig:data}). These types track how a program composes across sites with heterogeneous particle types.

\begin{figure*}[t]
  %\vspace*{-0.5em}
{\footnotesize

\begin{flushleft}\textcolor{blue}{Expression Equivalence:}\end{flushleft}

\begin{mathpar}
 %   \inferrule[]{}
   %             {\teq{\sapp{(e_1 \otimes e_2)}{(e_3 \otimes e_4)}}{(\sapp{e_1}{e_3}) \otimes (\sapp{e_2}{e_4})} }
    \inferrule[]{}
                {\vdash \teq{\sapp{(e_1 + e_2)}{e}}{(\sapp{e_1}{e}) + (\sapp{e_2}{e})} }
        
    \inferrule[]{}
               {\vdash \teq{\sapp{e}{(e_1+e_2)}}{(\sapp{e}{e_1}) + (\sapp{e}{e_2})} }

    \inferrule[]{}
                {\vdash \teq{{(e_1 + e_2)}\otimes{e}}{{e_1}\otimes{e} + {e_2}\otimes {e}} }
                        
    \inferrule[]{}
               { \vdash\teq{{e}\otimes{(e_1+e_2)}}{{e}\otimes{e_1} + {e}\otimes{e_2}} }
    
    \inferrule[]{}
        { \vdash\teq{\sdag{(e_1 \otimes e_2)}}{\sdag{e_1} \otimes \sdag{e_2}} }

    \inferrule[]{}
        { \vdash\teq{\sdag{(e_1 + e_2)}}{\sdag{e_1} + \sdag{e_2}} }
        
    \inferrule[]{}
        { \vdash\teq{ \sdag{(\sapp{e_1}{e_2})}}{\sapp{\sdag{e_2}}{\sdag{e_1}}}}          

    \inferrule[]{}
        {\vdash \teq{ \sdag{(\sdag{e})}}{e}}        

           \inferrule[]{}
        {\vdash \teq{ \sdag{I}}{I}}   

    \inferrule[]{}
               { \vdash\teq{\sapp{I}{e}}{e} }

    \inferrule[]{}
               { \vdash\teq{\sapp{e}{I}}{e} }

    \inferrule[E-Ten]{}{\iota^{\cn{b}}\vdash\teq{\sapp{(e_1 \otimes e_2)}{(e_3 \otimes e_4)}}{(\sapp{e_1}{e_3}) \otimes (\sapp{e_2}{e_4})} }

  \end{mathpar}

\ignore{
\begin{mathpar}
\hspace*{-1em}
    \inferrule[]{}
                { (\tau \otimes \tau_1) \to (\tau \otimes \tau_2) \equiv (\tau \to \tau ) \otimes (\tau_1 \to \tau_2)}

    \inferrule[]{}
                {\tau \otimes (\tau_1 \otimes \tau_2) \equiv (\tau \otimes \tau_1) \otimes \tau_2}

     % \inferrule[T-Var]{}
     %   {\tjudge{\Gamma}{x}{\Gamma(x)}}
        
    %\inferrule[T-Vec]{\forall j\in[0,n)\,.\,m_j < k}
    % {\tjudge{\Gamma}{\duala{j=0}{n}{m_j}{j}:t(n,k)}{(t(n,k))}}

  %  \inferrule[T-Lambda]{\tjudge{\Gamma[x\mapsto \omega]}{e}{\omega'}}
    %            {\tjudge{\Gamma}{\lambdae{x}{\omega}{e}}{\omega \to \omega'}}

  % \inferrule[T-Mu]{\tjudge{\Gamma[f\mapsto \omega \to \omega]}{e}{\omega}}
   %             {\tjudge{\Gamma}{\smu{f}{\omega}{e}}{\omega\to\omega}}
      % \inferrule[T-Dag]{\tjudge{\Gamma}{e}{t(\iota)}}
   %             {\tjudge{\Gamma}{\sdag{e}}{\sdag{t}(\iota)}}
   
      % \inferrule[T-Nor]{\tjudge{\Gamma}{e}{{t}^{[\dag]}(\iota)}}
    %            {\tjudge{\Gamma}{\cn{nor}(e)}{{t}^{[\dag]}(\iota)}}
                
    %\inferrule[T-Seq]{\tjudge{\Gamma}{e}{\quan{F}{\zeta}{\iota}}\quad\;\;\tjudge{\Gamma}{e'}{\quan{F}{\zeta'}{\iota}}}
    %            {\tjudge{\Gamma}{\sapp{e}{e'}}{\quan{F}{\zeta\sqcup \zeta'}{\iota}}}
                                
    %\inferrule[T-Mat]{\tjudge{\Gamma}{e}{\quan{F}{\zeta}{\iota}}\quad\;\;\tjudge{\Gamma}{e'}{\iota}}
    %            {\tjudge{\Gamma}{\sapp{e}{e'}}{\iota}}
                                
    %\inferrule[T-Inner]{\tjudge{\Gamma}{e}{\sdag{t}(\iota)}\\\tjudge{\Gamma}{e'}{t(\iota)}}
    %            {\tjudge{\Gamma}{\sapp{e}{e'}}{\mathbb{C}}}

   % \inferrule[T-Exp]{\tjudge{\Gamma}{e}{\quan{F}{\hmx}{\iota}}}
   %            {\tjudge{\Gamma}{\eexp{e}}{\quan{F}{\umx}{\iota}}}
                
    % \inferrule[T-Log]{\tjudge{\Gamma}{e}{\quan{F}{\umx}{\iota}}}
     %           {\tjudge{\Gamma}{\elog{e}}{\quan{F}{\hmx}{\iota}}}
           \inferrule[T-Par]{e \equiv e' \\ \iota \vdash e' \triangleright \tau}{ \iota \vdash e \triangleright \tau }
  \end{mathpar}
  }

  \vspace{0.5em}
  \begin{flushleft}\textcolor{blue}{Typing Rules:}\end{flushleft}
  
  \begin{mathpar}
    \inferrule[T-OpF]{}
                {\tjudge{t^{\aleph}(2)}{z\,{a}}{{\quan{F}{\pmx}{t^{\aleph}(2)}}}}
                
    \inferrule[T-OpB]{}
                {\tjudge{t(m)}{z\,{a}}{{\quan{F}{\pmx}{t(m)}}}}

    \inferrule[T-ID]{}
                {\tjudge{t^{[\aleph]}(m)}{I}{{\quan{F}{\hmx}{t^{[\aleph]}(m)}}}}

    \inferrule[T-Dag]{\tjudge{\iota}{e}{\quan{F}{\zeta}{\iota}}}
                {\tjudge{\iota}{\sdag{e}}{\quan{F}{\zeta}{\iota}}}
                
    \inferrule[T-Her]{\iota \vdash\teq{\sdag{e}}{e} \quad \tjudge{\iota}{e}{\quan{F}{\pmx}{\iota}}}
                {\tjudge{\iota}{e}{\quan{F}{\hmx}{\iota}}}

    \inferrule[T-Ten]{\tjudge{\iota}{e}{\quan{F}{\zeta}{\iota}}\\ \tjudge{\iota'}{e'}{\quan{F}{\zeta}{\iota'}}}
                {\tjudge{\iota \ttimes \iota'}{e \otimes e'}{\quan{F}{\zeta}{\iota \ttimes \iota'}}}

    \inferrule[T-App]{\tjudge{\iota}{e}{\quan{F}{\zeta}{\iota}}\\ \tjudge{\iota}{e'}{\quan{F}{\zeta'}{\iota}}}
                {\tjudge{\iota}{\sapp{e}{e'}}{\quan{F}{\zeta \sqcup \zeta'}{\iota}}}
                
    \inferrule[T-Sim]{\tjudge{\iota}{e}{\quan{F}{\hmx}{\iota}}}
                {\tjudge{\iota}{\eexp{e}}{\quan{F}{\umx}{\iota}}}
                
    \inferrule[T-Plus]{\tjudge{\iota}{e}{\quan{F}{\zeta}{\iota}}\\ \tjudge{\iota}{e'}{\quan{F}{\zeta}{\iota}}}
                {\tjudge{\iota}{e + e'}{\quan{F}{\zeta}{\iota}}}
                
    \inferrule[T-UApp]{\tjudge{\iota}{U}{\quan{F}{\umx}{\iota}}\\\tjudge{\iota}{U'}{\quan{F}{\umx}{\iota}}}
                {\tjudge{\iota}{\sapp{U}{U'}}{\quan{F}{\umx}{\iota}}}
  \end{mathpar}
}
 % \vspace*{-0.8em}
\caption{Typing and equational rules. $\equiv$ is the equivalence relation, assuming associativity and commutativity for $+$. $\hmx\sqcup \pmx=\pmx$ and $\umx\sqcup \pmx=\pmx$ and type error otherwise. $\iota^{\cn{b}}$ means $\iota$ has no fermionic type subterms ($t^{\aleph}(2)$).}
\label{fig:exp-proofsystem-1}
 % \vspace*{-1em}
\end{figure*}

The typing judgment $\iota \vdash e \triangleright \tau$ asserts that under input state type $\iota$, the expression $e$ has type $\tau$. \Cref{fig:exp-proofsystem-1} defines the typing rules and equivalence principles of \qsnd. The type system enforces two critical properties:
\begin{itemize}
\item Correct typing for each operator with respect to its local particle type.
\item Kind soundness: Hermitian and unitary kinds are preserved and checked via rewrite equivalences.
\end{itemize}
\qsnd models particle systems with various state types $t^{[\aleph]}(m)$, where the dimensionality $m$ corresponds to the Hilbert space of a site, and the flag $[\aleph]$ distinguishes bosons (flag off) from fermions (flag on). The type $\iota$ denotes the full system’s shape, constructed using tensor products $\ttimes$ across sites.

Rules \rulelab{T-OpF} and \rulelab{T-OpB} type a single annihilation operator applied to a fermionic or bosonic state, respectively. We assume fermions are modeled using a two-dimensional Hilbert space $t^{\aleph}(2)$~\footnote{Due to the anti-commutation property, a fermionic site can have at most one particle.}, while bosons may use arbitrary dimensions $t(m)$. Multi-site systems are typed using rule \rulelab{T-Ten}, which constructs a composite operation over a system with type $\iota = \iota_1 \ttimes \iota_2$. For example, a two-site system $\eta_1 \otimes \eta_2$ has type $t(m) \ttimes t(j)$, and a composite operator $a \otimes a$ is well-typed if its components are typed as $\quan{F}{\zeta}{t(m)}$ and $\quan{F}{\zeta}{t(j)}$. The resulting type is $\quan{F}{\zeta}{t(m) \ttimes t(j)}$, making particle types explicit and analyzable in the type system. The typing rules also ensure well-formedness of other operations. For instance, rule \rulelab{T-Plus} types additive combinations $e + e’$, which represent quantum superpositions. The operands must have matching types and kinds; notably, additive composition is not permitted for unitary kinds, as quantum hardware does not support coherent superpositions of distinct unitaries.

Since \qsnd programs represent matrices, the kind system tracks how program fragments compose. Every physically realizable (i.e., executable) program must ultimately have unitary kind $\umx$, and this must arise through simulation of a Hamiltonian of kind $\hmx$ using $\eexp{\hat{H}}$. In second quantization, basic operators such as $a$ and $\sdag{a}$ are not Hermitian. To promote an expression to kind $\hmx$, rule \rulelab{T-Her} requires that the expression be equivalent to its adjoint under the equational rules shown in \Cref{fig:exp-proofsystem-1}. For example, the Hubbard Hamiltonian in \Cref{sec:hubbard-example} includes the term $\sapp{\sdag{a}(j)}{ a(j+1)} + \sapp{\sdag{a}(j+1)}{ a(j)}$, which is Hermitian because it is equal to its own conjugate transpose under syntactic equivalence. Rule \rulelab{T-Sim} enforces that the matrix exponential $\eexp{e}$ is only allowed when $e$ is Hermitian, producing a unitary. Rule \rulelab{T-App} governs composition: the resulting kind is computed via a join operation $\zeta \sqcup \zeta’$, defined over the subtyping lattice $\hmx \sqsubseteq \pmx$ and $\umx \sqsubseteq \pmx$. This permits combining Hermitian or unitary operations with ordinary ones (defaulting to $\pmx$), but prohibits mixing $\hmx$ and $\umx$ directly, which would result in a kinding error.

\subsection{Equational Theory}\label{sec:eqthm}

\qsnd includes a system of equational rules that define algebraic equivalences between programs. These equivalences are interpreted modulo commutativity for the $+$ operation and associativity, which we assume implicitly throughout. The equational theory is primarily used to relate well-typed programs. We define this notion of equivalence formally below.

\begin{definition}[Well-Typed Equational Rewrites]\label{def:dag-form}\rm
We write $\iota \vdash e \equiv e’$ to mean that two well-typed \qsnd programs $e$ and $e’$ are equationally equivalent. That is, $e \equiv e’$ can be derived using the equational rules, and both programs have the same type $\tau$ under the same system shape $\iota$: $\tjudge{\iota}{e}{\tau}$ and $\tjudge{\iota}{e’}{\tau}$.
We denote $\vdash e \equiv e'$ as $\forall \iota\,.\,\iota \vdash e \equiv e'$.
\end{definition}

Most of the equational rules in \Cref{fig:exp-proofsystem-1} are valid for arbitrary types, with the exception of rule \rulelab{E-Ten}, which is only valid for bosons. Specifically, the subterm $\iota^{\cn{b}}$ must not contain any component typed with $t^{\alpha}(2)$, as tensor operations involving fermions are not independent due to anti-commutation. This restriction is illustrated by the equation 
$(I \otimes \sdag{a}) \circ (\sdag{a} \otimes I) = - (\sdag{a} \otimes I) \circ (I \otimes \sdag{a})$ which exhibits fermionic anti-commutation. If \rulelab{E-Ten} were allowed, the left-hand side could be rewritten as $(I \circ \sdag{a}) \otimes (\sdag{a} \circ I) = \sdag{a} \otimes \sdag{a}$ while the right-hand side would become $- (\sdag{a} \circ I) \otimes (I \circ \sdag{a}) = - \sdag{a} \otimes \sdag{a}$. Clearly, $\sdag{a} \otimes \sdag{a}$ and $-\sdag{a} \otimes \sdag{a}$ are not equal but \rulelab{E-Ten} would incorrectly equate them. This demonstrates the necessity of type sensitivity in our equational system.

Equational rewrites allow us to normalize programs into canonical forms, which serve as a foundation for both the semantics of \qsnd and the compilation procedure in \Cref{sec:compilation}. In particular, they allow adjoint operators ($\dag$) to be pushed inward to atomic terms and linear sum ($+$) outward to the program top level, simplifying the structure of expressions and enabling structural comparison. This motivates the following notion of normal form.

\begin{definition}[$\dag$-Canonical Form]\label{def:dag-form-canon}\rm
A well-typed program $e$ is in \emph{$\dag$-canonical form} if:

\begin{itemize}
\item $e$ is written as a linear combination $\sum_j \uapp{z_j}{e_j}$, where each $z_j$ is a scalar amplitude and each $e_j$ contains no further sum operators;
\item in every subterm of the form $\sdag{e_1}$, $e_1$ is of the form $z\,{a}:\iota$, meaning the adjoint is applied only to atomic particle operators.
\end{itemize}
\end{definition}

\begin{lemma}[Canonicalization via Rewriting]\label{lem:dag-can}\rm
For every Hamiltonian program $e$, there exists a $\dag$-canonical program $e’$ such that $\iota \vdash e \equiv e’$.
\end{lemma}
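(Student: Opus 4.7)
The plan is to proceed by structural induction on $e$, using a two-phase rewriting procedure that first pushes every $\dag$ down to atomic operators and then lifts every $+$ to the top of the term. Both phases rely solely on the equational rules in \Cref{fig:exp-proofsystem-1}, which, by inspection, are all type-preserving; this guarantees that the derived $e'$ is well-typed under the same $\iota$ as $e$, matching \Cref{def:dag-form}.

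\emph{Phase 1 (push $\dag$ inward).} I define a rewrite function $\cn{pushDag}$ that traverses $e$ and replaces any occurrence of a non-atomic $\sdag{(\cdot)}$ using the rules $\sdag{(e_1 \otimes e_2)} \equiv \sdag{e_1}\otimes\sdag{e_2}$, $\sdag{(e_1 + e_2)}\equiv \sdag{e_1}+\sdag{e_2}$, $\sdag{(\sapp{e_1}{e_2})} \equiv \sapp{\sdag{e_2}}{\sdag{e_1}}$, $\sdag{(\sdag{e})} \equiv e$, and $\sdag{I} \equiv I$. Termination follows from a standard well-founded measure: count the total size of subterms lying under a $\dag$, which each rule strictly decreases (the involution rule removes two $\dag$s at once, and the distributive rules shrink the depth of the outermost $\dag$). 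Once $\cn{pushDag}$ terminates, every surviving $\sdag{e_1}$ in $e$ has $e_1$ of the form $z\,a:\iota$ (the only base case left), satisfying the second clause of \Cref{def:dag-form-canon}.

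\emph{Phase 2 (float $+$ outward).} Starting from the output of Phase 1, I apply the four distributivity rules for $+$ through $\sapp{-}{-}$ and $\otimes$ exhaustively. Again, a lexicographic measure works: the outer pair counts the number of $+$ nodes that sit strictly below a $\sapp{-}{-}$ or $\otimes$ node, and the inner pair is the term's size; each distributive rewrite strictly decreases the outer count while other bookkeeping steps are size-bounded. After termination, every $\sapp{-}{-}$ and $\otimes$ node has sum-free operands, and by associativity and commutativity of $+$ (assumed implicit) the result flattens to $\sum_j z_j e_j$ with each $e_j$ sum-free, fulfilling the first clause. Crucially, Phase 2 does not reintroduce non-atomic $\dag$s, since none of the distributive rules operate through $\dag$, so the invariant from Phase 1 is preserved.

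\emph{Main obstacle.} The delicate point is the interaction with rule \rulelab{E-Ten}: it is only sound under the bosonic side condition $\iota^{\cn{b}}$, so my rewrites must never silently fuse two fermionic tensor factors. Fortunately, neither Phase 1 nor Phase 2 requires \rulelab{E-Ten}—the $\dag$ rules and the $+$ distributivity rules are all type-generic—so anti-commutation is never invoked and the argument goes through uniformly for bosonic and fermionic sites. The remaining mild subtlety is verifying type preservation at each rewrite, which I would discharge by a short lemma showing that every rule in \Cref{fig:exp-proofsystem-1} yields identically typed sides, so the final $e'$ is well-typed and equivalent to $e$ as required by \Cref{def:dag-form}.
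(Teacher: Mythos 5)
Your proposal follows essentially the same route as the paper: structural induction over the (typing derivation of the) term, normalizing with the equational rules of \Cref{fig:exp-proofsystem-1}, and correctly observing that neither phase needs \rulelab{E-Ten}, so the fermionic side condition never bites. One concrete slip: your Phase~2 termination measure does not work as stated. Counting ``$+$ nodes strictly below a $\circ$ or $\otimes$'' is not strictly decreased by a distributivity step, because the non-distributed operand is duplicated --- e.g.\ $(e_1+e_2)\circ(e_3+e_4) \rightsquigarrow (e_1\circ(e_3+e_4)) + (e_2\circ(e_3+e_4))$ leaves the count at $2$ while the term size (your tie-breaker) grows. The standard fix is a multiplicative interpretation, e.g.\ $\mu(\text{atom})=2$, $\mu(e_1+e_2)=\mu(e_1)+\mu(e_2)+1$, $\mu(e_1\circ e_2)=\mu(e_1\otimes e_2)=\mu(e_1)\cdot\mu(e_2)$, $\mu(\sdag{e})=\mu(e)$, under which every distributivity step strictly decreases $\mu$; with that substitution your argument goes through.
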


\begin{proof}
By structural induction on the typing derivation of $e$, using the equational rewrite rules in \Cref{fig:exp-proofsystem-1}. All cases have been fully mechanized.
\end{proof}

This canonical form underlies the Hamiltonian constraint semantics developed in the next section. In \Cref{sec:compilecanonical}, we introduce a second, more general canonicalization procedure (also defined via the equational theory of \Cref{fig:exp-proofsystem-1}) that applies to Hamiltonians expressed not in terms of second-quantized creation and annihilation operators, but using Pauli operators.

\begin{figure*}[t]
{\footnotesize
\begin{flushleft}\textcolor{blue}{Single Ket Semantics:}\end{flushleft}\vspace{0.3em}

{
\begin{center}
$\hspace{-0.5em}
\denote{\sdag{a}}(m)\ket{k} := \sqrt{k\splus 1}\ket{k\splus 1}\;\;\cn{if}\;k\neq m
\quad\;\;
\denote{\sdag{a}}(m)\ket{m} := \zero
\quad\;\;
\denote{a}(m)\ket{k} := \sqrt{k}\ket{k\sminus 1}\;\;\cn{if}\;k\neq 0
\quad\;\;
\denote{a}(m)\ket{0} := \zero
$
\\[0.7em]
\end{center}
}
}
{\small 
\begin{flushleft}\textcolor{blue}{Semantics Rules:}\end{flushleft}

       \begin{mathpar}    
        \inferrule[S-Top]{}
        { \denote{\tjudge{\iota}{e}{\tau}}_g \sum_j z_j w_j := \textcolor{purple}{\sum_j} z_j (\denote{\tjudge{\iota}{e}{\tau}}_g w_j)}
        
        \inferrule[S-OpB]{}
        { \denote{\tjudge{t(m)}{z a^{[\dag]}}{\tau}}_g\ket{j} := z (\denote{a^{[\dag]}}(m)\ket{j})}
        
        \inferrule[S-OpF]{}
        { \denote{\tjudge{t^{\aleph}(2)}{z a^{[\dag]}}{\tau}}_g\ket{j} := ((-1)^g * z) (\denote{a^{[\dag]}}(2)\ket{j})}

        \inferrule[S-ID]{}{ \denote{\tjudge{t^{[\aleph]}(m)}{I}{\tau}}_g\ket{j} := \ket{j}}
        
        \inferrule[S-Ten]{}
        { \denote{\inferrule[]{\tjudge{\iota}{e}{\quan{F}{\zeta}{\iota}}\\ \tjudge{\iota'}{e'}{\quan{F}{\zeta}{\iota'}}}{\tjudge{\iota \ttimes \iota'}{e \otimes e'}{\quan{F}{\zeta}{\iota \ttimes \iota'}}}}_g (w_1 \otimes w_2) := \denote{\tjudge{\iota}{e}{\quan{F}{\zeta}{\iota}}}_{g}{w_1}\,\textcolor{purple}{\otimes}\,\denote{\tjudge{\iota'}{e'}{\quan{F}{\zeta}{\iota'}}}_{(g + \textcolor{spec}{\funsa{S}{\iota}{w_1}})}{w_2} }
                
        \inferrule[S-App]{}
        { \denote{\inferrule[]{\tjudge{\iota}{e}{\quan{F}{\zeta}{\iota}}\\ \tjudge{\iota}{e'}{\quan{F}{\zeta'}{\iota}}}
                {\tjudge{\iota}{\sapp{e}{e'}}{\quan{F}{\zeta \sqcup \zeta'}{\iota}}}}_g := \denote{{\tjudge{\iota}{e}{\quan{F}{\zeta}{\iota}}}}_{g}(\denote{\tjudge{\iota}{e'}{\quan{F}{\zeta'}{\iota}}}_g) }

        \inferrule[S-Plus]{}
        { \denote{\inferrule[]{\tjudge{\iota}{e}{\quan{F}{\zeta}{\iota}}\\ \tjudge{\iota}{e'}{\quan{F}{\zeta}{\iota}}}
                {\tjudge{\iota}{e + e'}{\quan{F}{\zeta}{\iota}}}}_g :=
                \denote{\tjudge{\iota}{e}{\quan{F}{\zeta}{\iota}}}_{g}\, \textcolor{purple}{+}\, \denote{\tjudge{\iota}{e'}{\quan{F}{\zeta}{\iota}}}_g }
                
    \inferrule[S-Sim]{}
        {\denote{\inferrule[]{\tjudge{\iota}{e}{\quan{F}{\hmx}{\iota}}}
                {\tjudge{\iota}{\eexp{e}}{\quan{F}{\umx}{\iota}}}}_g := \textcolor{purple}{\sum_{n=0}^{\infty}}\frac{\textcolor{purple}{(}\sminus i \denote{\tjudge{\iota}{e}{\quan{F}{\hmx}{\iota}}}_g\textcolor{purple}{)^n}}{n!} }
                
        \inferrule[S-UApp]{}
        { \denote{\inferrule[]{\tjudge{\iota}{U}{\quan{F}{\umx}{\iota}}\\\tjudge{\iota}{U'}{\quan{F}{\umx}{\iota}}}
                {\tjudge{\iota}{\sapp{U}{U'}}{\quan{F}{\umx}{\iota}}}}_g := \denote{{\tjudge{\iota}{U}{\quan{F}{\umx}{\iota}}}}_{g}(\denote{\tjudge{\iota}{U'}{\quan{F}{\umx}{\iota}}}_g) }
  \end{mathpar}
}
{\small
$
{
\textcolor{spec}{\funsa{S}{t^{\aleph}(m)}{\ket{j}} = j}
\qquad
\textcolor{spec}
{\funsa{S}{t(m)}{\ket{j}} = 0}
\qquad
\textcolor{spec}
{\funsa{S}{\iota \otimes \iota'}{\eta \otimes \eta'} = \funsa{S}{\iota}{\eta}} + \textcolor{spec}{\funsa{S}{\iota'}{\eta'}}
}
$
}
  \vspace*{-0.5em}
\caption{\qsnd semantics for $\dag$-canonicalized program. $a^{[\dag]}$ means either a creator $\sdag{a}$ or an annihilator $a$. Terms inside $\denote{-}$ are QBlue operators. For terms outside $\denote{-}$, black $*$ and $+$ are scalar multiplication and addition, \textcolor{purple}{purple} items are quantum state vector operations, and $\denote{-}(\denote{-})$ is matrix multiplication.} %$e^n = \underbrace{\sapp{e}{\sapp{e}{\sapp{\cdots}}{e}}}}_{n}$.}
\label{fig:equiv}
  \vspace*{-0.5em}
\end{figure*}

\subsection{Semantics}\label{sec:sem}

The semantics of \qsnd is defined over $\dag$-canonicalized programs (see \Cref{def:dag-form}) as presented in \Cref{fig:equiv}. These programs describe finite $n$-site systems whose site states are typed by a shape $\iota$, with $\slen{\iota} = n$. To interpret such programs, we begin with a typed quantum state $\iota \vdash \psi$, defined in terms of the Dirac notation introduced in \Cref{fig:data}.

\begin{definition}[\qsnd Typed State]\label{def:typed-state}\rm
Let $\psi = \sum_j z_j w_j$ be a quantum state. We write $\iota \vdash \psi$ to mean that for every $j$, term $w_j$ is a well-typed basis ket under $\iota$: that is, $\iota \vdash w_j$:

\begin{itemize}
\item $\iota \ttimes \iota’ \vdash w \otimes w’$, if $\iota \vdash w$ and $\iota’ \vdash w’$;
\item $t(m) \vdash \ket{j}$, if $j < m$;
\item $t^{\aleph}(2) \vdash \ket{j}$, if $j \in  \{0,1\}$.
\end{itemize}
\end{definition}

Given a typed state, the semantics of \qsnd is defined denotationally as $\denote{\tjudge{\iota}{\Box}{\tau}}_g$, where $\Box$ is a well-typed $\dag$-canonicalized program (either a Hamiltonian $e$ or a unitary $U$), and $g \in \mathbb{N}$ is a context parameter. As explained in \Cref{sec:hubbard-example}, each \qsnd program simultaneously defines both constraint semantics (via Hamiltonians) and dynamic semantics (via unitary evolution).

The interpretation $\denote{\tjudge{\iota}{\Box}{\tau}}_0$ (with $g=0$) gives the core meaning of a Hamiltonian $\hat{H}$ as a Hermitian operator, and a unitary program $U$ as a unitary matrix, corresponding to the constraint and dynamic semantics, respectively. The context $g$ is used to account for fermionic anti-commutation effects and is initialized to $0$ during evaluation. More generally, $\denote{\tjudge{\iota}{\Box}{\tau}}_g$ defines a transformation on $\iota$-typed quantum states, i.e., a function $\iota \to \iota$.

\Cref{fig:equiv} presents the semantic rules using two complementary views. The rules \rulelab{S-Top}, \rulelab{S-OpB}, \rulelab{S-OpF}, and \rulelab{S-Ten} adopt the function view, describing how programs act on specific basis kets or superpositions thereof. The rules \rulelab{S-Plus}, \rulelab{S-App}, \rulelab{S-Sim}, and \rulelab{S-UApp} adopt the matrix view, interpreting programs as linear operators and expressing their composition explicitly. The context $g$ plays a role in enforcing global constraints, particularly related to fermionic parity signs, as described next.

\myparagraph{Constraint Semantics.}
The constraint-level semantics, $\denote{\tjudge{\iota}{e}{\tau}}_g$, begins with rule \rulelab{S-Top}, which lifts the semantics of $e$ pointwise over a superposition $\sum_j z_j w_j$. The rules \rulelab{S-OpB} and \rulelab{S-OpF} define the effect of bosonic and fermionic single-site operators, using the primitive definitions at the top of \Cref{fig:equiv}. Each operator is scaled by $z$, or $(-1)^g * z$ in the fermionic case.

In rule \rulelab{S-OpF}, the natural-numbered context $g$ enforces fermionic anti-commutation: applying a fermionic operator introduces a sign $(-1)^g$, where $g$ is the number of occupied fermion sites (i.e., $\ket{1}$ basis kets) before the target site. We assume a fixed total ordering of the sites in shape $\iota$. For example, applying an annihilation operator $a$ to the third site in the state $\ket{1}\ket{1}\ket{1}\ket{0}$ (where the first is bosonic and the second fermionic) yields $- \ket{1}\ket{1}\ket{0}\ket{0}$, due to one prior fermionic excitation. The sign correction is propagated via the accumulated context $g$ using the function $\mathpzc{S}$ defined in rule \rulelab{S-Ten}.
Rule \rulelab{S-Ten} defines the semantics of tensor composition. Given two subterms and a tensor product input state $w_1 \otimes w_2$, the semantics applies the two subterms recursively and combines the results using $\textcolor{purple}{\otimes}$. Note that zero vectors propagate through tensor products: $\zero \,\textcolor{purple}{\otimes}\, w = \zero$ and $w \,\textcolor{purple}{\otimes}\, \zero = \zero$. A zero-valued basis ket anywhere in the tensor string collapses the entire state.
Rules \rulelab{S-Plus} and \rulelab{S-App} define semantic interpretations of sum and application. A sum corresponds to a linear combination of operators, while a sequential composition corresponds to matrix multiplication or function composition.

\myparagraph{Dynamic Semantics.}
The rules \rulelab{S-Sim} and \rulelab{S-UApp} define the dynamic semantics of \qsnd. Rule \rulelab{S-Sim} interprets the expression $\eexp{e}$ as the matrix exponential, where $e$ is Hermitian. The exponential is defined via its standard power series expansion, which converges to a unitary operator when $e$ is Hermitian. We interpret this construct as the least fixed point of the matrix exponential series, approximated computationally as needed.
Rule \rulelab{S-UApp} interprets sequential composition of unitaries as standard matrix multiplication. Thus, dynamic programs in \qsnd are built by composing unitary evolutions, each derived from a Hamiltonian constraint via simulation.

\ignore{
\myparagraph{Permitting Equivalence Relations.}
The type system enables the equivalence relations, shown in \Cref{fig:equiv}, that support the semantic definitions in \Cref{fig:sem}.
The first line performs $\alpha$ conversions in a $\lambda$ and a $\mu$ term, provided that the substituted variable does not cause variable-capturing issues.

    %\inferrule[]{}
    %            { \teq{\sapp{(e_1 \otimes e_3)}{(e_2 \otimes e_4)}}{(\sapp{e_1}{e_2}) \otimes (\sapp{e_3}{e_4})} }

    %\inferrule[]{} maybe big mistakes to include the rules that are commented
    %            { \teq{}{\lambdae{x}{\tau_1 \otimes \tau_2 \to \tau_1 \otimes \tau_2}{e_1 \otimes e_2}}{ 
    %             (\lambdae{x}{\tau_1\to \tau_1}{e_1}) \otimes (\lambdae{x}{\tau_2\to \tau_2}{e_2})} }
    %\inferrule[]{}
    %            { \smu{f}{\tau_1 \otimes \tau_2}{e_1 \otimes e_2} \equiv 
    %             (\smu{f}{\tau_1}{e_1}) \otimes (\smu{x}{\tau_2}{e_2}) }
        %\inferrule[]{}
    %            { \lambdae{x}{\tau}{e_1 + e_2} \equiv (\lambdae{x}{\tau}{e_1}) + (\lambdae{x}{\tau}{e_2})}

    %\inferrule[]{}
    %            { \smu{x}{\tau}{e_1 + e_2} \equiv (\smu{x}{\tau}{e_1}) + (\smu{x}{\tau}{e_2})}
    
The second line relates application operations with different particle sites, separated by tensor operations ($\otimes$).
The equivalence relation implements distributed laws. The left-hand rule deals with the case when $e_1$ and $e_3$ are row vectors, and the right-hand rule happens when $e_1$ and $e_3$ are matrix operations ($\quan{F}{\zeta}{\kappa(n)}$).
The result pairs up $e_1$ and $e_2$ as well as $e_3$ and $e_4$, with the connection of a tensor operation.
In the two rules, the left-hand side can be rewritten to the right-hand side if $e_1$ and $e_2$ are typed to have the same dimensionality.

The third line deals with sum operations. When dealing with a sum of kets, we can move the operation $e$ to apply to the kets directly.
The right-hand-side rule distributes the superposition sum operations $e_1$ and $e_2$, so the rewrite results in a sum of two applications of applying $e_1$ and $e_2$ to the state $e_3$.
The fourth line deals with transpose operations ($\dag$), where the operation moves to the inner level when it is combined with a tensor ($\otimes$) and sum ($+$) operation.
When combined with an application, the order of the application is swapped if $e_1$ is typed as a row vector or a matrix operation, while the order is fixed if $e_1$ is typed as a number.
The last line defines the semantics of matrix exponential and logarithm operations as equivalence relations.
They are defined as a power series, which may converge to a fixed matrix operation. The exponential of a Hamiltonian should converge to a unitary, while the logarithm of a unitary should converge to a Hamiltonian. The \qsnd compilation implements the two operations as finite approximations.

\liyi{Maybe add some examples. }
}

\subsection{Metatheory}\label{sec:theorem}

We present the \qsnd theorems that connect the semantics and the type system. These results highlight how the type system enforces the anti-commutation property for fermions.

Lemma \ref{thm:equiv-sem} below connects Hamiltonians and unitaries, stating that a type guided equivalence Hamiltonian $\hat{H'}$ has the same quantum simulation behavior as the original Hamiltonian $\hat{H}$.

\begin{lemma}[Hamiltonian Equivalence Correctness]\label{thm:equiv-sem}\rm 
Given $\hat{H}$ and $\hat{H'}$, such that ${\iota}\vdash \teq{\hat{H}}{\hat{H'}}$, for every $r$, $\denote{\tjudge{\iota}{\eexp{\hat{H}}}{\quan{F}{\umx}{\iota}}}_g = \denote{\tjudge{\iota}{\eexp{\hat{H'}}}{\quan{F}{\umx}{\iota}}}_g$.
\end{lemma}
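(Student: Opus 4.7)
The plan is to reduce the lemma to a more general semantic soundness result for the equational theory, and then lift it through the power series defining the exponential.

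First, I would establish the key intermediate lemma: for every well-typed equivalence $\iota \vdash e \equiv e'$ (as in \Cref{def:dag-form}), and for every context $g$, we have $\denote{\tjudge{\iota}{e}{\tau}}_g = \denote{\tjudge{\iota}{e'}{\tau}}_g$. The proof proceeds by induction on the derivation of $e \equiv e'$, checking each equational rule of \Cref{fig:exp-proofsystem-1}. The distributivity rules for $+$ over $\circ$ and $\otimes$, the involutivity $\sdag{(\sdag{e})}\equiv e$, the identity unit laws, and the $\dag$-pushing rules all follow routinely by unfolding the relevant semantic clauses (\rulelab{S-Plus}, \rulelab{S-App}, \rulelab{S-Ten}) and using standard linearity of tensor products, matrix multiplication and adjoints. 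Commutativity and associativity of $+$ lift immediately from the corresponding structure on the semantic codomain.

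The delicate case is \rulelab{E-Ten}: $\sapp{(e_1\otimes e_2)}{(e_3 \otimes e_4)} \equiv (\sapp{e_1}{e_3})\otimes(\sapp{e_2}{e_4})$, which is only valid under the side-condition $\iota^{\cn{b}}$. Here the main obstacle is that the semantics of $\otimes$ in rule \rulelab{S-Ten} threads the fermion-parity counter $g$ through the subterms via $\funsa{S}{\iota}{w_1}$; naively pushing $\circ$ inside $\otimes$ could shift which $w_1$ is used to compute the sign offset for the right factor, creating a mismatch. The side-condition $\iota^{\cn{b}}$ ensures no subterm has fermionic type, and by the definition of $\mathpzc{S}$ we have $\funsa{S}{t(m)}{\ket{j}}=0$ and $\funsa{S}{\iota \otimes \iota'}{\eta \otimes \eta'}=0$ whenever the shape is purely bosonic. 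Hence the parity counter is constantly $g$ on both sides, and the equivalence reduces to the usual bilinearity identity $(A\otimes B)(C\otimes D)=AC\otimes BD$ for tensor products of matrices, applied pointwise on basis kets via \rulelab{S-Top}. I would isolate this in a small auxiliary lemma (``$\mathpzc{S}$ is identically zero on bosonic shapes'') before invoking it.

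Once the semantic soundness of $\equiv$ is in hand, the main statement follows immediately. From $\iota \vdash \hat{H} \equiv \hat{H'}$ we obtain $\denote{\tjudge{\iota}{\hat{H}}{\quan{F}{\hmx}{\iota}}}_g = \denote{\tjudge{\iota}{\hat{H'}}{\quan{F}{\hmx}{\iota}}}_g$ as operators on $\iota$-typed states. Unfolding \rulelab{S-Sim} on both sides gives
\[
\denote{\eexp{\hat{H}}}_g = \sum_{n=0}^{\infty}\frac{(-i\denote{\hat{H}}_g)^n}{n!}, \qquad \denote{\eexp{\hat{H'}}}_g = \sum_{n=0}^{\infty}\frac{(-i\denote{\hat{H'}}_g)^n}{n!},
\]
and term-by-term equality of the two power series is immediate from the operator equality. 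Since the series converges (as $\denote{\hat{H}}_g$ is Hermitian by type soundness), the two exponentials coincide, completing the proof. I expect the mechanization of the \rulelab{E-Ten} case to be the bulk of the work; the exponential step is essentially bookkeeping once the soundness lemma is established.
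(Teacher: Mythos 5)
Your proposal is correct and follows essentially the same route as the paper's (very terse) proof: establish semantic soundness of the equational theory so that $\iota \vdash \hat{H} \equiv \hat{H'}$ yields equal denotations, then conclude equality of the exponentials term-by-term through the power series of \rulelab{S-Sim}. Your treatment of the \rulelab{E-Ten} case --- observing that the side condition $\iota^{\cn{b}}$ forces the parity function $\mathpzc{S}$ to vanish so the counter $g$ is unchanged on both sides --- is exactly the delicate point the paper's one-line sketch glosses over, and is handled correctly.
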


\begin{proof}
Fully mechanized proofs were done by induction on semantic rules, with the extra lemma that our equational theory is an equivalence relation (reflective, symmetric, and transitive).
\end{proof}

We prove the \qsnd type soundness theorem, which states that a well-typed \qsnd program, when applied to a well-typed quantum state, produces a result that is also well-typed.
%A well-typed state $\tau \vdash \psi$ means that each particle site described by $\psi$ respects the particle site types indicated in $\tau$.
%
%The result relies on the definition of a \emph{value} ($\nu$), which extends the value definition in $\lambda$ calculus (a variable $x$ or a lambda/mu abstraction) to include pure quantum states $\psi$, row vectors $\sdag{\psi}$, numbers subtyped to $\mathbb{C}$, and pure matrix operations ($\quan{F}{\xi}{\iota}$) consisting only of creators, annihilators, linear sums, and tensors.

\begin{theorem}[Type Soundness]\label{thm:type-progress-oqasm}\rm 
  If $\iota \vdash \Box \triangleright \quan{F}{\zeta}{\iota}$, with $\Box$ being $\hat{H}$ or $U$, and $\iota \vdash \psi$, then $\denote{\iota \vdash \Box \triangleright \quan{F}{\zeta}{\iota}}_g$ well-defines a $\zeta$-kind matrix, and for all $\psi'$, $\denote{\iota \vdash \Box \triangleright \tau}_g (\psi) = \psi'$, then $\iota \vdash \psi'$.
\end{theorem}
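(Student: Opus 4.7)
The plan is to prove the theorem by structural induction on the typing derivation $\iota \vdash \Box \triangleright \quan{F}{\zeta}{\iota}$, after first reducing to a normal form. Since the denotational semantics in \Cref{fig:equiv} is defined only on $\dag$-canonicalized programs, I would first invoke \Cref{lem:dag-can} to obtain a canonical $\Box'$ with $\iota \vdash \Box \equiv \Box'$, and use \Cref{thm:equiv-sem} (extended to Hamiltonians in the obvious way) to reduce the claim to the canonical case. This splits the theorem into two independent obligations at each typing rule: (i) \emph{kind-well-definedness}, that the denotation is a matrix of the declared kind $\zeta \in \{\pmx,\hmx,\umx\}$, and (ii) \emph{type preservation}, that applying the denotation to a state with $\iota \vdash \psi$ yields $\psi'$ with $\iota \vdash \psi'$.

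For type preservation I would proceed by induction on $\Box$. The base cases \rulelab{T-OpB}, \rulelab{T-OpF}, \rulelab{T-ID} follow directly from the single-ket semantics at the top of \Cref{fig:equiv}: $\denote{a}(m)\ket{k}$ and $\denote{\sdag{a}}(m)\ket{k}$ produce either a basis ket with index in $[0,m)$ or the zero vector $\zero$, both of which satisfy $\iota \vdash \cdot$ vacuously or by construction. The tensor case \rulelab{T-Ten} uses the definition of $\funsa{S}{\iota}{w_1}$ to show that the accumulated fermionic context $g$ remains a natural number (so $(-1)^g$ is well-defined), and then lifts typed kets through $\textcolor{purple}{\otimes}$. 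The sum and application cases follow from linearity and the fact that matrix composition/sum of typed maps $\iota \to \iota$ remains typed $\iota \to \iota$. For $\eexp{e}$, I would show the power series preserves the state space, since each finite partial sum does by the previous cases, and typed states form a closed subspace.

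For kind-well-definedness, the subtlest rule is \rulelab{T-Her}: I need to show that the denotation of a program shown equationally equivalent to its adjoint is in fact a Hermitian matrix. Here I would prove a separate lemma that the semantic function $\denote{\sdag{e}}_g$ equals the conjugate transpose of $\denote{e}_g$ (this requires reasoning about how $g$ propagates through adjoints, which is exactly why \qsnd canonicalizes $\dag$ onto atomic operators), and then combine this with \Cref{thm:equiv-sem} so $\denote{e}_g = \denote{\sdag{e}}_g = \sdag{\denote{e}_g}$. For \rulelab{T-Sim} the result is unitary because the exponential of $-i$ times a Hermitian operator is unitary by the standard power-series identity. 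For \rulelab{T-App}, the join $\zeta \sqcup \zeta'$ reflects exactly which compositions preserve kind structure (products of Hermitians are not Hermitian in general, hence default to $\pmx$), and for \rulelab{T-UApp} products of unitaries are unitary.

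The main obstacle will be the fermionic case, and specifically showing that the context-parameter $g$ correctly implements anti-commutation inside \rulelab{S-Ten} when the subterms interact across fermionic sites. Concretely, I expect the hardest auxiliary lemma to be a \emph{parity-shift} invariant: $\denote{e}_{g+k} = (-1)^{k \cdot p(e)} \denote{e}_g$, where $p(e)$ counts the net fermionic parity of $e$. This lemma is needed both for type preservation (to see that fermionic signs cancel correctly when composing across tensor boundaries) and for the Hermitian case of \rulelab{T-Her}, where swapping creators and annihilators through the adjoint must agree up to sign with the equational theory that disallows \rulelab{E-Ten} on fermionic $\iota$. Once this parity-shift lemma is in hand, the remaining cases are routine and, as noted in the subsequent Rocq statement, fully mechanizable by induction on the typing derivation.
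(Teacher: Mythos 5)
Your proposal is essentially the paper's proof: the paper's argument is exactly a case analysis by induction on the typing derivation (with the semantics already restricted to $\dag$-canonical programs), and your two obligations---kind-well-definedness of the denoted matrix and preservation of $\iota \vdash \psi$---are precisely the two conjuncts of the statement discharged rule-by-rule in the mechanization. One small caveat: the parity-shift lemma you anticipate, $\denote{e}_{g+k}=(-1)^{k\cdot p(e)}\denote{e}_g$, only holds when every summand of $e$ has the same fermionic parity (which \rulelab{T-Plus} does not enforce), and it is not actually needed in that form since \rulelab{S-Ten} threads the anti-commutation sign through the occupation count $\mathpzc{S}$ of the \emph{input ket} rather than through an operator-level parity of $e$.
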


\begin{proof}
Fully mechanized proofs were done by induction on the \qsnd type system.
\end{proof}

One utility of the type soundness theorem is to guarantee the anti-commutation property of fermion operations in \qsnd, stated as follows.

\begin{corollary}[Guarantee Anti-Commutation]\label{thm:type-progress-oqasm}\rm 
  If $\bigttimes{n} t^{\aleph}(2) \vdash \hat{H} \triangleright \quan{F}{\hmx}{\bigttimes{n} t^{\aleph}(2)}$, $\hat{H}$'s interpretation $\denote{\bigttimes{n} t^{\aleph}(2) \vdash \hat{H} \triangleright \quan{F}{\hmx}{\bigttimes{n} t^{\aleph}(2)}}_0$ preserves fermions' anti-commutation property.
\end{corollary}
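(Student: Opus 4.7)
The plan is to reduce the claim to a statement about the canonical anti-commutation relations (CAR) for atomic operators, then verify those directly from the semantics. First, I would normalize $\hat{H}$ using \Cref{lem:dag-can} into the $\dag$-canonical form $\sum_j z_j\,e_j$, where each $e_j$ is a tensor string of singleton atoms $a$, $\sdag{a}$, or $I$ over the $n$ fermionic sites; by \Cref{thm:equiv-sem} it suffices to prove the corollary for the normalized program. Since the semantic rule \rulelab{S-Plus} is linear in its operands, the anti-commutation property at the level of $\hat{H}$ reduces to the anti-commutation property for atomic indexed operators $a(i), \sdag{a}(j)$, where each such operator is (by the sugar conventions of \Cref{sec:syntax}) a tensor string with a single nontrivial factor at one site.

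Next, I would compute $\denote{a(i)}_0(w)$ for a basis ket $w = \ket{k_{n-1}} \otimes \cdots \otimes \ket{k_0}$ by unfolding \rulelab{S-Ten} from left to right. The context $g$ is threaded through using the function $\mathpzc{S}$ defined at the bottom of \Cref{fig:equiv}, so that when the $i$-th factor is reached, $g = \sum_{l>i} k_l$ (or the analogous sum depending on the site ordering convention). Rule \rulelab{S-OpF} then introduces precisely the Jordan-Wigner sign $(-1)^{\sum_{l>i} k_l}$ on the nontrivial single-site action, while the remaining factors act as identity. This yields the standard Jordan-Wigner formula for the semantic action of $a(i)$ and, symmetrically, of $\sdag{a}(j)$.

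With these explicit formulae in hand, the CAR $\{\denote{a(i)}_0, \denote{\sdag{a}(j)}_0\} = \delta_{ij}\,\mathrm{id}$ and $\{\denote{a(i)}_0, \denote{a(j)}_0\} = 0 = \{\denote{\sdag{a}(i)}_0, \denote{\sdag{a}(j)}_0\}$ follow by case analysis on basis kets. The diagonal case $i=j$ is a direct consequence of the single-site definitions $\denote{a}(2)$ and $\denote{\sdag{a}}(2)$ at the top of \Cref{fig:equiv}, together with the fact that the accumulated sign is squared. For $i \neq j$, applying the two operators in opposite orders produces the \emph{same} magnitude (since neither operator affects the occupation number at the other site) but signs differing by $(-1)$, because the action of the inner operator toggles the occupation $k_{\min(i,j)}$ seen by the outer operator's sign counter. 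Summing the two orderings therefore cancels, giving the required anti-commutator zero. Importantly, the equational rule \rulelab{E-Ten} is blocked on fermionic types, so this algebraic structure cannot be destroyed by rewriting prior to interpretation.

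The main obstacle is the bookkeeping for composition: in \rulelab{S-App}, the two sides of $\sapp{e}{e'}$ are each interpreted with the \emph{same} top-level context $g=0$, and the signs must instead arise from $\mathpzc{S}$ being recomputed on the intermediate state produced by $e'$. I would prove a supporting lemma stating that for any tensor string of atoms at a single site $j$, the semantic action preserves the occupation numbers $k_l$ for $l \neq j$; this guarantees that the Jordan-Wigner sign exponent seen by the outer operator is computed against the correct history. Once this lemma is mechanized, the CAR verification is a direct algebraic check on four-by-four matrix actions at the two involved sites, with identity elsewhere, and the full anti-commutation property of $\denote{\hat{H}}_0$ follows by linearity.
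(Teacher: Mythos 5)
Your proposal is correct and follows essentially the same route as the paper, whose entire proof is the one-line remark that the result is mechanized by induction on the semantics of well-typed programs under the assumption that all sites are fermionic: your canonicalize-then-verify-the-CAR argument, with the $\mathpzc{S}$-threaded Jordan--Wigner sign computation and the supporting lemma that single-site atoms preserve occupations elsewhere, is a faithful and more detailed unpacking of that induction. One minor caveat: the Hamiltonian Equivalence Correctness lemma you invoke only equates the exponentials $\eexp{\hat{H}}$ and $\eexp{\hat{H'}}$ rather than the operator denotations themselves, so to justify passing to $\dag$-canonical form you should instead observe that the semantics is only defined on $\dag$-canonicalized programs (making the normalization step obligatory rather than optional) or appeal to soundness of the equational theory at the level of $\denote{-}_g$.
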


\begin{proof}
Fully mechanized proofs were done by induction on the \qsnd semantics for well-typed programs with the assumption that all sites are fermions ($t^{\aleph}(2)$).
\end{proof}

%Based on \Cref{def:hamsim}, the Hamiltonian simulation, $\eexp{{\hat{H}}{t}}$, of a Hermitian Hamiltonian $\hat{H}$ produces a unitary.
%Our type soundness theorem in \Cref{thm:type-progress-oqasm} shows that every Hermitian type ($\hmx$) program $e$ can be turned into a unitary through the matrix exponent operation ($\eexp{e}$).
%Additionally, Deutsch \textit{et al.} \cite{Deutsch1989QuantumCN,Deutsch1995} showed that any unitary matrix can be decomposed to elementary quantum gates, executable in quantum computers.
%Combined with our type soundness theorem, this fact results in the following compilation existence corollary.

The type soundness theorem indicates the existence of a compilation to quantum circuits and is used in our compilation correctness in \Cref{sec:compilation}.

\ignore{
{\small
\[
\label{equ1}
\sapp{(z_t \sdag{a} \otimes I)}{(I \otimes a)} + \sapp{(z_t I \otimes \sdag{a})}{(a \otimes I)}
\]
}
}

                   % formal developement

%\liyi{This section describes the compilation from QBlue to quantum program for Hamiltonian simulation. Three steps: JW/BK, Gadgetization, and Trotterization. Each with an optimized algorithm.}

\begin{figure}[t]
\includegraphics[width=\textwidth]{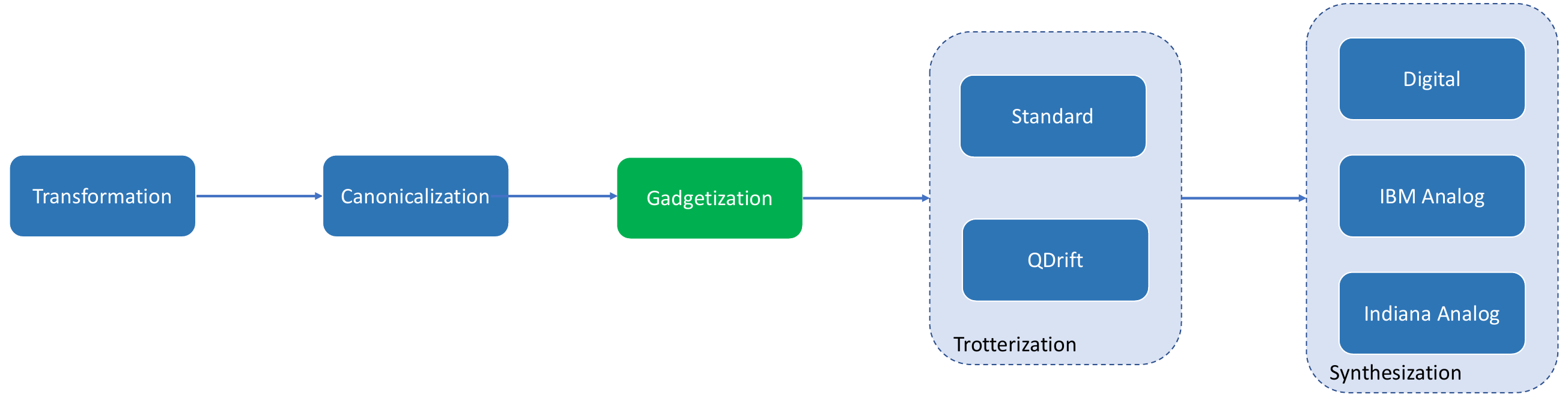}
  \caption{Hamiltonian Simulation Compilation Flow}
\label{fig:compilationprocess}
\end{figure}

\section{The QBlue Compilation Procedure}\label{sec:compilation}

This section presents the \qsnd compilation steps (\Cref{fig:compilationprocess}) along with correctness proofs for each step. The compiler transforms a Hamiltonian simulation expression $\eexp{\uapp{r}{e}}$, where $e$ is a \qsnd Hamiltonian program and $r$ is a time parameter, into a quantum circuit executable on quantum hardware. We model compilation as a judgment of the form $\quan{F}{\hmx}{\iota} \vdash (e, r) \gg U : \quan{F}{\umx}{\iota’}$, which extends the typing judgment $\iota \vdash e \triangleright \quan{F}{\hmx}{\iota}$ introduced in \Cref{sec:syntax}. The \qsnd compiler supports multiple compilation paths, allowing users to choose among different strategies. The perturbative gadget transformation (\Cref{sec:gadget}) is optional. We also provide two Trotterization algorithms and three different synthesis methods. The main theorem in \Cref{sec:mainthm} assumes a compilation path that omits the perturbative gadget, but it is general enough to handle both Trotterization variants. The correctness of compilation paths that include the perturbative gadget is addressed in \Cref{appx:gadget}. We begin by describing the individual compilation steps and then state the theorem covering the entire pipeline.

\ignore{
\subsection{Compilation to $\lambda$-Calculus}\label{sec:theory}

We show the encoding of \qsnd as a simply typed $\lambda$-calculus as our theory foundation.
It is well-known \cite{6278971} that one can encode natural numbers, pairs, lists, and conditionals by using Church or Mogensen–Scott encoding,
as well as approximate the computation of real and complex numbers by encoding them as some data structures.
With the assumption encoding for the few operations above, we show that every \qsnd operation can be encoded through Church encoding based on a simply typed $\lambda$-calculus.

We use $\textcolor{spec}{\cmsg{z}}$ to mean the church encoding of approximating complex number $z$, and $\textcolor{spec}{\cmsg{[,]}}$ refers to the encoding of a length $n$ array.
For representing a quantum state $v$, we utilize its vector representation, i.e., an $t(n,m)$ typed Hilbert space ($\hspc{n}{m}$) is encoded as a $2^{k}$ dimensional complex vector with $k=m*n$,
which is encoded as a $2^{k}$ length array $\textcolor{spec}{\cmsg{[\cmsg{z_1},...,\cmsg{z_{2^{k}}}]}}$;
$\zero$ can be interpreted as a zero vector.
The sum of two vectors ($\textcolor{spec}{\cmsg{+}}$) is encoded as the vector sum of complex numbers, as:

{\small
\begin{center}
$\textcolor{spec}{\cmsg{[\cmsg{z_1},...,\cmsg{z_{2^{k}}}]}\,\cmsg{+}\,\cmsg{[\cmsg{z'_1},...,\cmsg{z'_{2^{k}}}]}=\cmsg{[\cmsg{z_1\splus z'_1},...,\cmsg{z'_{2^{k}}\splus z'_{2^{k}}}]}}$
\end{center}
}

The interpretation of a tensor product of two vectors ($v \otimes v'$) can be interpreted as a Cartesian product of entries in the two complex arrays.
A \qsnd matrix operation applying to a $t(n,m)$ typed Hilbert space, having type $\quan{F}{\zeta}{t(n,m)}$,
can be encoded as a $2^{k} \times 2^{k}$ matrix with $k=m*n$, i.e., a two dimensional array $\textcolor{spec}{\cmsg{[[\cmsg{z^1_1},...,\cmsg{z^1_{2^{k}}}],...,[\cmsg{z^{2^{k}}_1},...,\cmsg{z^{2^{k}}_{2^{k}}}]]}}$.
The application of applying a $\quan{F}{\zeta}{\iota}$ typed operation to a vector $t(\iota)$ can be interpreted as a matrix multiplication, which can be encoded as:

{\small
\begin{center}
$\textcolor{spec}{\cmsg{[[\cmsg{z^1_1},...,\cmsg{z^1_{2^{k}}}],...,[\cmsg{z^{2^{k}}_1},...,\cmsg{z^{2^{k}}_{2^{k}}}]]}\,\cmsg{\times}\,\cmsg{[\cmsg{z'_1},...,\cmsg{z'_{2^{k}}}]}=
\cmsg{[\cmsg{z^1_1} \cmsg{z'_{1}}\splus ... \splus \cmsg{z^1_{2^{k}}} \cmsg{z'_{2^{k}}},..., \cmsg{z^{2^{k}}_1} \cmsg{z'_{1}}\splus ... \splus \cmsg{z^{2^{k}}_{2^{k}}} \cmsg{z'_{2^{k}}}]}}$
\end{center}
}

For the encoding of a $\quan{F}{\zeta}{\iota}$ typed operation, we can first rewrite the operation type to be $\quan{F}{\zeta}{t(n_1)} \otimes ... \otimes \quan{F}{\zeta}{t(n_m)}$, if $\iota = t(n_1) \otimes ... \otimes t(n_m)$. A tensor product ($\textcolor{spec}{\cmsg{\otimes}}$) can be interpreted as a $\lambda$-calculus term similar to the matrix operations above.
To interpret a transpose operation $\sdag{e}$, we first rewrite $\sdag{e}$ to its equivalence form through the equivalence relations in \Cref{fig:equiv}.
After rewrites, the only application is a row vector measurement operation, i.e., applying a row vector ($\sdag{v}$) to a vector ($v'$), as $\sapp{\sdag{v}}{v'}$,
which essentially performs an inner product as $\sdag{v}\cdot v'$, and an inner product can be interpreted as a $\lambda$-calculus term as a similar manner above.

To interpret $\eexp{e}$ and $\elog{e}$, we expand the terms by using the least fixed-point equations in \Cref{fig:equiv}, with possible approximation by finite truncation.
Each iteration includes scalar and matrix multiplications, whose interpretations are described above.
An $n$-step expansion for $\eexp{e}$ can be encoded as a fixed-point operator below.

{\small
\begin{center}
$\smu{f}{\quan{F}{u}{\iota}}{\lambdae{j}{\cn{nat}}{\sifb{j=0}{I}{\frac{-i}{j!}e^j}}}
$
\end{center}
}

Depending on the dimension of $\iota$, $I$ is an identity matrix having the same dimension as $\iota$, $e^k$ is a $k$ series of matrix multiplications,
and the equality $j=0$ checks if a natural number $j$ is zero, definable in a church encoding.
A matrix logarithm ($\elog{e}$) can be defined similarly.
Essentially, the encoding emits a compilation from \qsnd to $\lambda$-calculus. We show the soundness theorem below.

\begin{theorem}[Approximation Soundness]\label{thm:compile-sound}\rm 
The approximation of any \qsnd operation is sound with respect to simply typed $\lambda$-calculus plus a fixed-point combinator.
\end{theorem}

Since \qsnd is a model of second quantization, which is general enough in quantum mechanics to define computational behaviors of quantum systems, the soundness theorem indicates that the computation of any quantum system is approximable in $\lambda$-calculus.
}

\subsection{Particle Transformation}\label{sec:transformation}

\begin{figure*}[t]
 % \vspace*{-0.8em}
{\footnotesize
  \begin{mathpar}
      \inferrule[PT-Plus]{{\iota} \vdash_n e_1 \gg_{\cn{t}} \hat{H}_1 \triangleright {\iota'} \\{\iota} \vdash_n e_2 \gg_{\cn{t}} \hat{H}_2 \triangleright {\iota'} }{ {\iota} \vdash_n e_1+ e_2 \gg_{\cn{t}} \hat{H}_1 + \hat{H}_2 \triangleright {\iota'}}
    
        \inferrule[PT-App]{{\iota} \vdash_n e_1 \gg_{\cn{t}} \hat{H}_1 \triangleright {\iota'} \\{\iota} \vdash_{n} e_2 \gg_{\cn{t}} \hat{H}_2 \triangleright {\iota'} }{ {\iota} \vdash_n e_1 \circ e_2 \gg_{\cn{t}} \hat{H}_1 \circ \hat{H}_2 \triangleright {\iota'}}
        
    \inferrule[PT-Ten]{\slen{\iota_1}^{\aleph} = m \\{\iota_1} \vdash_n e_1 \gg_{\cn{t}} \hat{H}_1 \triangleright{\iota'_1} \\{\iota_2} \vdash_{n+m} e_2 \gg_{\cn{t}} \hat{H}_2\triangleright {\iota'_2} }{ {{\iota_1 \textcolor{spec}{\otimes} \iota_2}} \vdash_n e_1 \otimes e_2 \gg_{\cn{t}} \hat{H}_1 \otimes \hat{H}_2\triangleright {{\iota'_1 \textcolor{spec}{\otimes} \iota'_2}}}
                
    \inferrule[PT-FeLow]{}
                {{t^{\aleph}(2)} \vdash_n \iseq{z}{{a}} \gg_{\cn{t}} \iseq{z}{ (\bigotimes^n Z) }\otimes (X + iY) \triangleright {t(2)} }

    \inferrule[PT-FeUp]{}
                {{t^{\aleph}(2)} \vdash_n \sdag{(\iseq{z}{{a})}} \gg_{\cn{t}} \iseq{\sdag{z}}{ (\bigotimes^n Z)} \otimes (X - iY) \triangleright {t(2)} }
               
    \inferrule[PT-BoLow]{B(j) = \bigotimes_{k=0}^{\ulcorner \log(m) \urcorner\sminus 1}\mathpzc{A}^k(j,{j\sminus 1})}
                {{t(m)} \vdash_n \iseq{z}{{a}} \gg_{\cn{t}} \iseq{z}{ \sum_{j=1}^{m\sminus 1} \sqrt{j} \,B(j)} \triangleright {\textcolor{spec}{\bigotimes^{\ulcorner \log(m) \urcorner}} t(2)} }

    \inferrule[PT-BoUp]{B'(j) = \bigotimes_{k=0}^{\ulcorner \log(m) \urcorner\sminus 1}\mathpzc{A}^k({j},{j\splus 1})}
                {{t(m)} \vdash_n \sdag{(\iseq{z}{{a})}} \gg_{\cn{t}} \sdag{z}\, \sum_{j=0}^{m\sminus 2} \sqrt{j\splus 1} \, B'(j) \triangleright {\textcolor{spec}{\bigotimes^{\ulcorner \log(m) \urcorner}} t(2)} }
                
       \inferrule[PTS-Fe]{\iota \vdash w \gg_{\cn{t}} w' \triangleright \iota' }
                {t^{\aleph}(2) \textcolor{spec}{\otimes} \iota \vdash z \,\eta \otimes w \gg_{\cn{t}}  z \,\eta \otimes w' \triangleright t(2) \textcolor{spec}{\otimes} \iota' }

       \inferrule[PTS-Sum]{\forall j . \iota \vdash w_j \gg_{\cn{t}} w'_j \triangleright \iota'}
                {\iota \vdash \sum_j w_j \gg_{\cn{t}} \sum_j w'_j \triangleright \iota'}
                
       \inferrule[PTS-Bo]{\iota \vdash w \gg_{\cn{t}} w' \triangleright \iota' }
                {t(m) \textcolor{spec}{\otimes} \iota \vdash z \,\ket{j} \otimes w \gg_{\cn{t}} z \,\ket{{j}}_{{\ulcorner \log(m) \urcorner}} \otimes w' \triangleright \textcolor{spec}{\bigotimes^{\ulcorner \log(m) \urcorner}} t(2) \textcolor{spec}{\otimes} \iota'}
  \end{mathpar}
}
{\footnotesize
$
\begin{array}{c}
\textcolor{spec}{\ket{j}_n} = \textcolor{spec}{\ket{\tob{j}[0]}\ket{\tob{j}[1]}...\ket{\tob{j}[n\sminus 1]}}
\\[0.2em]
\begin{array}{l@{\;}c@{\;}l@{\;\;}c@{\;}l@{\qquad\quad}l@{\;}c@{\;}l@{\;\;}c@{\;}l}
\textcolor{spec}{\mathpzc{A}^k(j,m)}&=&\textcolor{spec}{\frac{1}{2}(X - i Y)}&\cn{if}&\textcolor{spec}{\tob{j}[k]=0 \wedge \tob{m}[k]=1}
&
\textcolor{spec}{\mathpzc{A}^k(j,m)}&=&\textcolor{spec}{\mathbb{1}}&\cn{if}&\textcolor{spec}{\tob{j}[k]=\tob{m}[k]=1} 
\\[0.2em]
\textcolor{spec}{\mathpzc{A}^k(j,m)}&=&\textcolor{spec}{\frac{1}{2}(X + i Y)} & \cn{if} & \textcolor{spec}{\tob{j}[k]=1 \wedge \tob{m}[k]=0}
&
\textcolor{spec}{\mathpzc{A}^k(j,m)}&=&\textcolor{spec}{\mathbb{0}}&\cn{if}&\textcolor{spec}{\tob{j}[k]=\tob{m}[k]=0} 
\end{array}
\end{array}
$
}
\vspace*{-0.3em}
\caption{Particle transformation rules. $\slen{\iota}^{\aleph}$ collects the \# of fermions ($t^{\aleph}(2)$) in $\iota$. $\tob{j}$ converts $j$ to bitstring. }
\label{fig:exp-transform}
\vspace*{-0.8em}
\end{figure*}

The first compilation step is particle transformation, transforming the second quantization program in \Cref{fig:syntax} to a $t(2)$-typed qubit-based Hamiltonian program $\hat{H}$ in a Pauli-string-based form. For clarity, we list below the Pauli string program syntax, as a syntactic sugar extension of the \qsnd syntax in \Cref{fig:syntax}, with all the sites being $t(2)$-typed.

{
\begin{center}
$
\mathpzc{P} ::= X \mid Y \mid Z \mid I\qquad\qquad
\hat{H} ::= \iseq{z}{\mathpzc{P}} \mid \hat{H} + \hat{H} \mid \hat{H} \otimes \hat{H} 
$
\end{center}
}

The particle transformation compilation can be expressed as the flow-sensitive judgment $\iota \vdash_n e \gg_{\cn{t}} \hat{H} \triangleright \iota'$, as an extension of the \qsnd typing rules. Here, $e$ is the input $\hmx$-kind Hamiltonian program in the second quantization form with the type $: \quan{F}{\hmx}{\iota}$, $\hat{H}$ is the Pauli string Hamiltonian (syntax defined above) with the type $\quan{F}{\hmx}{\iota'}$, and $n$ is the flow sensitive context. A Hamiltonian program describes constraints for a finite number of quantum particles. Without losing generality, for any program $e$, we assume that $e$ can be canonicalized as its operations ordered as a list structure describing a one dimensional list of applications to the sites, and the flag $n$ indicates that the current compilation is processing the $n\splus 1$-th element in the list as $n$ is the previous site adjacent to the $n\splus 1$-th site.

\Cref{fig:exp-transform} shows the transformation rules. The transformation for fermion particles is based on Jordan-Wigner transformation \cite{JordanAboutTP}, while the boson transformation is based on Holstein–Primakoff transformation \cite{qua.25176}. We show an alternative transformation in \Cref{sec:jw-trans}. 
Rules \rulelab{PT-Plus} and \rulelab{PT-App} deal with linear sum and sequence operations; such operation transformations depend on the subterm transformations.
In transforming tensor operations in \rulelab{PT-Ten}, we ensure that $e_1$'s type is $\quan{F}{\hmx}{\iota_1}$. Without losing generality, we also find the site application length (the site number of applied operations in $e_1$) to be $m$, as $\slen{\iota_1} = m$. Recall that we assume the sites of $e_1$ and $e_2$ applied to are canonicalized as an ordered list, so $e_2$'s flow-sensitive context is set to $n+m$, as $n+m+1$ represents the first site position $e_2$ applies to.

Rules \rulelab{PT-FeLow} and \rulelab{PT-FeUp} describe the transformation for fermion annihilators and creators.
As we mentioned, fermions are always $\textcolor{spec}{t^{\aleph}(2)}$ typed. To compile fermion particle simulations, we apply $Z$ operations to all the sites before the $n$-th site.
For the current site ($n+1$), we apply a lowering ($X + iY$) or a raising ($X - iY$) operation.
The resulting type is $\textcolor{spec}{\quan{F}{\hmx}{t(2)}}$, representing that the compiled result is a single qubit operation, i.e., a qubit is enough to describe a fermion site state.

Rules \rulelab{PT-BoLow} and \rulelab{PT-BoUp} describe the binary compilation scheme to transform boson annihilators and creators.
In compiling the particle simulation, we use $\ulcorner \log(m) \urcorner$ qubits to represent a $\textcolor{spec}{t(m)}$ typed boson state, as the compilation results in a $\textcolor{spec}{\quan{F}{\hmx}{\otimes^{\ulcorner \log(m) \urcorner} t(2)}}$ typed operation. 
Generally, applying a creator $\sdag{a}$ to $\ket{j}$ results in $\ket{j\splus 1}$ and applying an annihilator $a$ to $\ket{j}$ results in $\ket{j\sminus 1}$.
The compiled system produces a sum of all $j$-th Pauli strings turning $\ket{j}$ to $\ket{j\splus 1}$ (or $\ket{j\sminus 1}$ for $a$).
For each $j$, we examine $j$'s binary representation,  compare $j$ and $j\splus 1$ bitwise (or $j\sminus 1$ for $a$), and then apply a proper Pauli operation ($\mathpzc{A}^k(j,m)$).
%The compilation scheme is based on having exactly a single $\ket{1}$ state in am $m$ basis-ket, e.g., the $\ket{j}$ (a $t(m)$ ket with $j < m$) state is representing as $\otimes^{j\sminus 1} \ket{0} \otimes \ket{1} \otimes \otimes^{m\sminus j} \ket{0}$ (rule \rulelab{PT-SBo}). A $\quan{F}{\hmx}{t(m)}$ boson operation is compiled to $m+1$ $\quan{F}{\hmx}{t(2)}$ typed qubit operations, as the last qubit represents the overflow bit indicating that an out-of-bound boson state is reached.
%The compiled result is a linear sum of different cases of removing the $\ket{1}$ particle and putting it somewhere else.
For example, if we compiled a $t(5)$ boson state $\ket{3}$ as $\ket{1}\ket{1}\ket{0}\ket{0}$ (LSB form), applying a creator to the state results in $\ket{0}\ket{0}\ket{1}\ket{0}$ ($\ket{4}$);
Such behavior is captured by $B'(3)$ in \rulelab{PT-BoUp}, where we use a lowering operation ($X + iY$) to remove the two $\ket{1}$ bits and apply a raising operation on the third qubit.
Note that in the case both bits ($\tob{j}[k]$ and $\tob{m}[k]$) are the same, we need to apply $\mathbb{1}$ or $\mathbb{0}$ operations, corresponding to the Pauli operations $\frac{1}{2}(I-Z)$ and $\frac{1}{2}(I+Z)$, respectively.

Rules \rulelab{PTS-Sum}, \rulelab{PTS-Fe}, and \rulelab{PTS-Bo} describe the transformation of states for a particle system state to a state in the qubit system.
The transformation judgment is $\iota \vdash \psi \gg_{\cn{t}} \psi' \triangleright \iota'$, where we transform the typed $\iota$ state $\psi$ to type $\iota'$ state $\psi'$.
Rules \rulelab{PTS-Fe} and \rulelab{PTS-Bo} transform fermion and boson states, described above.

\begin{lemma}[Particle Transformation Correctness]\label{thm:particle-trans}\rm 
Given ${\iota} \vdash_0 e \gg_{\cn{t}} \hat{H} \triangleright {\iota'}$, for any $\psi_1$, such that $\iota \vdash \psi_1$, with the compiled state $\psi'_1$, denoted as $\iota \vdash \psi_1\gg_{\cn{t}} \psi'_1\triangleright \iota'$, $\psi_2=\denote{\tjudge{\iota}{e}{\quan{F}{\hmx}{\iota}}}_0(\psi_1)$ and $\psi'_2=\denote{\tjudge{\iota}{{\hat{H}}}{\quan{F}{\hmx}{\iota}}}_0(\psi'_1)$; thus, $\iota \vdash \psi_2 \gg_{\cn{t}} \psi'_2\triangleright \iota'$.
\end{lemma}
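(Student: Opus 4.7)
The plan is to prove the lemma by induction on the derivation of the particle transformation judgment ${\iota} \vdash_n e \gg_{\cn{t}} \hat{H} \triangleright {\iota'}$, generalized over an arbitrary flow-sensitive context $n$ (and a corresponding semantic context $g$), since the base case $n=0$ required by the lemma is only a specialization of the more general statement. The generalization is forced upon us by the \rulelab{PT-Ten} rule, which recurses with an incremented context $n+m$; the semantic companion is the function $\funsa{S}{\iota_1}{w_1}$ from rule \rulelab{S-Ten} that accumulates the fermionic parity. Concretely, the strengthened invariant I would prove is: if ${\iota} \vdash_n e \gg_{\cn{t}} \hat{H} \triangleright {\iota'}$, and $\iota \vdash \psi_1 \gg_{\cn{t}} \psi'_1 \triangleright \iota'$, then $\denote{\tjudge{\iota}{e}{\quan{F}{\hmx}{\iota}}}_g(\psi_1)$ compiles to $\denote{\tjudge{\iota'}{\hat{H}}{\quan{F}{\hmx}{\iota'}}}_0(\psi'_1)$ provided $g$ equals the number of occupied fermionic sites appearing before position $n$ in the ambient context; in the initial call we have $n=0$ and $g=0$, trivially matching.

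First I would dispatch the structural cases. Rule \rulelab{PT-Plus} reduces to linearity of both $\denote{-}$ (rule \rulelab{S-Plus}) and the compiled sum, which is immediate by the inductive hypotheses. Rule \rulelab{PT-App} reduces to functoriality of the two semantics under composition (\rulelab{S-App}), again following from the IHs once we note that the compilation of $\psi'_1$ is shared by both sides. Rule \rulelab{PT-Ten} is discharged using rule \rulelab{S-Ten}: I split a compiled state $\psi'_1 = w'_1 \otimes w'_2$ along $\iota'_1 \otimes \iota'_2$, invoke the IH for $e_1$ with context $n$, and the IH for $e_2$ with context $n+m$; the key observation here is that the state compilation rules \rulelab{PTS-Fe} and \rulelab{PTS-Bo} preserve the occupation number $j$ at each fermionic site, so $\funsa{S}{\iota_1}{w_1}$ on the source side equals the corresponding parity count on the compiled side, allowing the $g + \funsa{S}{\iota}{w_1}$ offset to match on both compilations.

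The base cases are where the real content lives. For the boson rules \rulelab{PT-BoLow} and \rulelab{PT-BoUp}, I would unfold $\denote{a}(m)$ and $\denote{\sdag{a}}(m)$ on a basis ket $\ket{j}$, compare against the Pauli strings $B(j)$ and $B'(j)$ applied to the binary encoding $\ket{j}_{\ulcorner\log(m)\urcorner}$, and verify by a bitwise calculation that the operators $\mathpzc{A}^k$ realize the correct increment/decrement at each bit position with amplitude $\sqrt{j}$ or $\sqrt{j+1}$; this reduces to a purely algebraic identity on $2\times 2$ blocks plus the observation that $\mathbb{0}$ and $\mathbb{1}$ act as projectors selecting the correct bit pattern. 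For the fermionic rules \rulelab{PT-FeLow} and \rulelab{PT-FeUp}, I would show that $\bigotimes^n Z$ applied to the compiled prefix yields exactly the sign $(-1)^g$ demanded by rule \rulelab{S-OpF}, using the invariant that $g$ counts the number of $\ket{1}$ basis kets in the first $n$ fermionic sites and that $Z\ket{k} = (-1)^k \ket{k}$; then $X + iY$ (respectively $X - iY$) acting on the single-site compiled ket reproduces $\denote{a}(2)$ (respectively $\denote{\sdag{a}}(2)$) up to the convention that $\ket{0}$ maps to $\ket{0}$ and $\ket{1}$ maps to $\ket{1}$, noting that $X + iY = 2a$ on the $2\times 2$ block and similarly for the raising operator.

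The main obstacle will be the fermionic tensor case, specifically the bookkeeping that ties the flow-sensitive index $n$ in the compiler to the accumulated semantic context $g$ (via $\funsa{S}{-}{-}$) and to the number of $Z$ factors produced by \rulelab{PT-FeLow}/\rulelab{PT-FeUp}. All three counters must agree on a per-derivation basis: the compiler's static $n$ tells us how long the $Z$-string is, the semantic $g$ tells us the sign that must be produced, and $\funsa{S}{\iota_1}{w_1}$ tells us what the $Z$-string computes dynamically on the compiled prefix. I would isolate this as an auxiliary lemma stating that for any compiled prefix $w'_1$ of a fermionic subshape $\iota_1$, $(\bigotimes^{\slen{\iota_1}^{\aleph}} Z) w'_1 = (-1)^{\funsa{S}{\iota_1}{w_1}} w'_1$, and then the main induction goes through cleanly. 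Bosonic sites inside the prefix contribute trivially because they are not acted on by the $Z$-string and $\funsa{S}{t(m)}{\ket{j}}=0$, which is exactly the compatibility condition needed for rule \rulelab{PT-Ten} to preserve the invariant across heterogeneous tensors.
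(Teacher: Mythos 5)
Your proposal is correct and follows essentially the same route as the paper, whose proof is precisely an induction on the particle transformation rules of \Cref{fig:exp-transform} (carried out in the mechanization). The strengthening you identify---generalizing over the flow-sensitive index $n$ and the semantic context $g$, and tying both to $\funsa{S}{-}{-}$ so that the $Z$-string in \rulelab{PT-FeLow}/\rulelab{PT-FeUp} reproduces the sign $(-1)^g$---is exactly the invariant needed to push the induction through the \rulelab{PT-Ten} case, which the paper leaves implicit.
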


\begin{proof}
Fully mechanized proofs were done by induction on the particle transformation rules in \Cref{fig:exp-transform}.
\end{proof}

\myparagraph{Example Transformation.}\label{sec:bosonsystem}
We show the compilation of a bosonic particle system to quantum computers (not the \qsnd boson-like system).
Bosons are a special particle and typically require an infinite-dimensional Hilbert space to describe a particle site, e.g., a boson is typed $t(\infty)$.
A typical way of approximating a bosonic system \cite{qua.25176} utilizes $2^{n}$ dimensional Hilbert space, i.e., we use a length $n$ basis vector to track $2^n-2$ basis-vector state from $\ket{0}$ to $\ket{2^n-2}$.
The extra $n\splus 1$-th basis-vector $\ket{2^n-1}$ acts as an overflow bit, indicating if a maximal orbital is reached.

We now apply the transformation to a two-site bosonic system, with $m=4$. We use the Hubbard model $\hat{H}$ in \Cref{sec:hubbard-example} to describe the superconducting behaviors of bosons. We focus on transforming the sub-expression $\sapp{\sdag{a}(0)\textcolor{spec}{:t(4)}}{a(1)\textcolor{spec}{:t(4)}}+\sapp{a(0)\textcolor{spec}{:t(4)}}{\sdag{a}(1)\textcolor{spec}{:t(4)}}$ to Pauli-string-based Hamiltonian.
It is enough to show the transformation for a single creator and annihilator applying to different $\ket{j}$ basis-vectors, e.g., examining the cases of applying $\sdag{a}$ to $\ket{0}$ getting $\ket{1}$ and applying $a$ to $\ket{2}$ outputting $\ket{1}$.

\vspace{0.3em}
{\small
$
\begin{array}{|c|c|c|c|c|}
\hline
\multirow{3}{*}{$\sdag{a}$}
&
 \ket{j} \text{ transition} & \ket{0} \to \ket{1} & \ket{1} \to \ket{2} & \ket{2} \to \ket{3} 
\\[0.2em]
\cline{2-5}
&&&&\\[-1.1em]
&
 \text{Binary transition} & \ket{0}\ket{0} \to \ket{1}\ket{0} & \ket{1}\ket{0} \to \ket{0}\ket{1} & \ket{0}\ket{1} \to \ket{1}\ket{1} 
 \\[0.2em]
\cline{2-5}
&&&&\\[-1em]
& 
  \text{Pauli String} & \frac{1}{2}(X - iY) \otimes \frac{1}{2}(I + Z) & \frac{1}{2}(X + iY) \otimes \frac{1}{2}(X- iY)  & \frac{1}{2}(X - iY) \otimes \frac{1}{2}(I - Z)
  \\[0.2em]
  \hline
  \multirow{3}{*}{$a$}
&
 \ket{j} \text{ transition} & \ket{3} \to \ket{2} & \ket{2} \to \ket{1} & \ket{1} \to \ket{0} 
\\[0.2em]
\cline{2-5}
&&&&\\[-1.1em]
&
 \text{Binary transition} & \ket{1}\ket{1} \to \ket{0}\ket{1} & \ket{0}\ket{1} \to \ket{1}\ket{0} & \ket{1}\ket{0} \to \ket{0}\ket{0} 
 \\[0.2em]
\cline{2-5}
&&&&\\[-1em]
& 
  \text{Pauli String} & \frac{1}{2}(X + iY) \otimes \frac{1}{2}(I - Z) & \frac{1}{2}(X - iY) \otimes \frac{1}{2}(X + iY) & \frac{1}{2}(X + iY) \otimes \frac{1}{2}(I + Z)
  \\[0.2em]
  \hline
\end{array}
$
}
\vspace{0.3em}

The above table shows different cases for $\sdag{a}$ and $a$. In each case, we first show the desired basis-vector transitions, the binary-based basis-vector transitions, and then the Pauli string encoding of the transitions. The transformation for the $\textcolor{spec}{t(4)}$ typed creators and annihilators are the sum of the three cases listed above, e.g., $\sdag{a}\textcolor{spec}{:t(4)}$ is transformed to $\frac{1}{2}(X - iY) \otimes \frac{1}{2}(I + Z) + \frac{1}{2}(X + iY) \otimes \frac{1}{2}(X- iY)  + \frac{1}{2}(X - iY) \otimes \frac{1}{2}(I - Z)$. The transformation of the whole sub-expression ($\sapp{\sdag{a}(0)\textcolor{spec}{:t(4)}}{a(1)\textcolor{spec}{:t(4)}}+\sapp{a(0)\textcolor{spec}{:t(4)}}{\sdag{a}(1)\textcolor{spec}{:t(4)}}$) replaces each of the creators and annihilators with the transformed sum terms above.

\subsection{Canonicalization}\label{sec:compilecanonical}

The compiled Pauli-string-based Hamiltonian program $\hat{H}$ in \Cref{sec:transformation} can have a complicated structure, which can be canonicalized to a standard form via the use of the equational rules in \Cref{sec:sem}.
We define the canonical form as follows.

\ignore{
\begin{definition}[Pauli Canonical Form]\label{def:canon-form}\rm 
Let $\mathpzc{P}$ be a syntactic category for Pauli operations $\iseq{r}{P}$ (real number amplitudes), where $P \in \{I,X, Y, Z\}$, every $\hat{H}$ program, typed as $\tjudge{\textcolor{spec}{\bigotimes^n t(2)}}{\hat{H}}{\textcolor{spec}{\quan{F}{\hmx}{\bigotimes^n t(2)}}}$, can be rewritten as the form $\sum_j \bigotimes_{k=0}^{n\sminus 1} \mathpzc{P}_{j,k}$ via the equational rules in \Cref{sec:sem}, where

\begin{itemize}
\item $\hat{P}=\bigotimes_{k=0}^{n\sminus 1} \mathpzc{P}_k$ is defined as $\mathpzc{P}_0 \otimes ... \otimes \mathpzc{P}_{n\sminus 1}$.
\item $\sum_j \hat{P}_j$ is defined as there exists $m$, such that $\hat{P}_0 + ... + \hat{P}_m$.
\end{itemize}
\end{definition}

\begin{definition}[Pauli Canonical Form]\label{def:canon-form}\rm 
Let $\mathpzc{P}$ be a syntactic category for Pauli operations $\iseq{r}{P}$, where $P \in \{I,X, Y, Z\}$ and $r$ is a real amplitude, then a program $\hat{H}$ , typed as $\tjudge{\textcolor{spec}{\bigotimes^n t(2)}}{\hat{H}}{\textcolor{spec}{\quan{F}{\hmx}{\bigotimes^n t(2)}}}$, has Pauli Canonical if it is in the form of $\sum_j \bigotimes_{k=0}^{n\sminus 1} \mathpzc{P}_{j,k}$, where

\begin{itemize}
\item $\hat{P_j}=\bigotimes_{k=0}^{n\sminus 1} \mathpzc{P}_{j, k}$. %is defined as $\mathpzc{P}_{j, 0} \otimes ... \otimes \mathpzc{P}_{j, n\sminus 1}$.
\item It exists $m$, $\sum_{j=0}^m \hat{P}_j$. %= \hat{P}_0 + ... + \hat{P}_m$ supposing there  $m$ terms.
\end{itemize}
\end{definition}
}

\begin{definition}[Pauli Canonical Form]\label{def:canon-form}\rm 
Let $\mathpzc{P} \in \{I, X, Y, Z\}$, a program $\hat{H}$, typed as $\tjudge{\textcolor{spec}{\bigotimes^n t(2)}}{\hat{H}}{\textcolor{spec}{\quan{F}{\hmx}{\bigotimes^n t(2)}}}$, is called Pauli Canonical $\sum_j r_{j} \hat{P}_j$, where $\hat{P}_j=\bigotimes_{k=0}^{n\sminus 1} \mathpzc{P}_{j,k}$ ($r_j$ is a real number).
\end{definition}

          \begin{wrapfigure}{r}{3.8cm}
          %  \vspace*{-0.2em}
           $\begin{array}{|c|c|c|c|}
           \hline
             & X & Y & Z\\           \hline
           X & I & \iseq{i}{Z} & \iseq{-i}{Y} \\           \hline
           Y & \iseq{-i}{Z} & I & \iseq{i}{X} \\           \hline
           Z & \iseq{i}{Y} & \iseq{-i}{X} & I\\           \hline
           \end{array}$
            \caption{Pauli Merging}
            \label{fig:pauli-compile}
          \end{wrapfigure}

\Cref{def:canon-form} states that every program $\hat{H}$ typed as $\tjudge{\textcolor{spec}{\bigotimes^n t(2)}}{\hat{H}}{\textcolor{spec}{\quan{F}{\hmx}{\bigotimes^n t(2)}}}$, where $\slen{\textcolor{spec}{\bigotimes^n t(2)}} = n$, can be rewritten as a two-dimensional list. The outer list corresponds to a sum $\sum_j r_j \hat{P}_j$, and each inner element $\hat{P}_j$ is a fixed-length tensor product $\bigotimes_{k=0}^{n-1} \mathpzc{P}_k$, representing a list of $n$ Pauli operations. Each of these inner lists (i.e., tensor factors) has fixed size $n$, matching the type length $\slen{\iota} = n$ in the typing judgment $\tjudge{\iota}{\hat{H}}{\quan{F}{\hmx}{\iota}}$. The $k$-th element of the tensor list corresponds to the operation applied to the $k$-th site.

\ignore{
\Cref{def:canon-form} states that every program $\hat{H}$, typed as $\tjudge{\textcolor{spec}{\bigotimes^n t(2)}}{\hat{H}}{\textcolor{spec}{\quan{F}{\hmx}{\bigotimes^n t(2)}}}$ ($\slen{\textcolor{spec}{\bigotimes^n t(2)}}=n$), can be rewritten as a two-dimensional list, with the elements in the outer list in a sum relation ($\sum_j r_j \hat{P}_j$); each element $\hat{P}_j$ is a fixed length-$n$ list ($\bigotimes_{k=0}^{n-1} \mathpzc{P}_k$), representing $n$ tensors of Pauli operations.
Note that, each element (a tensor list) in the inner list is always a fixed size $n$ that is the same as the size numbers, defined as $\slen{\iota}=n$ in the typing $\tjudge{\iota}{\hat{H}}{\quan{F}{\hmx}{\iota}}$, as the $k$-th unit element in the tensor list represents the the operations applying to the $k$-th site.
}
%Other than the equational rules in \Cref{fig:exp-proofsystem-1}, in the Pauli-string-based $t(2)$ typed system, we have an additional equational rule above, helping the rewrites of Pauli-string-based Hamiltonian $\hat{H}$ to be Pauli canonical.

Another key simplification is Pauli operation sequence applications ($\pau \in \{I, X, Y, Z\}$) in a single site can be merged into a single Pauli operation, i.e., $\pau_1 \circ \pau_2$ can be rewritten as a single Pauli operation. Applying an identity $I$ with a $\pau$ results in $\pau$. 
The application results of other Pauli operations ($\pau_1$ and $\pau_2$) are defined in \Cref{fig:pauli-compile}.
The two Pauli applications represent simplification, where a sequence of Pauli group applications results in a single Pauli operation with a certain real number amplitude.
We have the following canonicalization correctness lemma.

\begin{lemma}[Canonical Compilation Correctness]\label{thm:canonical-good}\rm 
Given $\hat{H}$, typed $\tjudge{\textcolor{spec}{\bigotimes^n t(2)}}{\hat{H}}{\textcolor{spec}{\quan{F}{\hmx}{\bigotimes^n t(2)}}}$, there is $\hat{H'}$, such that $\textcolor{spec}{{\bigotimes^n t(2)}} \vdash \teq{\hat{H}}{\hat{H'}}$ and $\hat{H'}$ is Pauli canonical in \Cref{def:canon-form}.
\end{lemma}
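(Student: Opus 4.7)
The plan is to combine the general $\dag$-canonicalization (Lemma~\ref{lem:dag-can}) with a site-wise Pauli merging pass, both carried out entirely within the equational theory of \Cref{fig:exp-proofsystem-1}. First I would apply Lemma~\ref{lem:dag-can} to rewrite $\hat{H}$ to an equivalent $\dag$-canonical form $\sum_j z_j\, e_j$, where each $e_j$ is sum-free and every remaining $\dag$ is pushed down to an atomic leaf. Since the only single-site constants in the Pauli-string sublanguage are $\{I, X, Y, Z\}$, the adjoint-on-leaves rewrites fold every $\sdag{\mathpzc{P}}$ back into $\pm \mathpzc{P}$, absorbing the sign into the amplitude $z_j$. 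This reduces the problem to normalizing a sum-free expression built from tensor products and compositions of single-site Pauli operators, scaled by scalars.

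Next I would normalize each $e_j$ into a tensor product of length exactly $n$. The crucial observation is that the side condition $\iota^{\cn{b}}$ on rule \rulelab{E-Ten} is discharged automatically here, because the entire type is $\bigotimes^n t(2)$ and contains no fermionic component $t^{\aleph}(2)$. Thus \rulelab{E-Ten} lets me repeatedly rewrite $(e_1 \otimes e_2) \circ (e_3 \otimes e_4)$ into $(e_1 \circ e_3) \otimes (e_2 \circ e_4)$, pushing all compositions down to individual sites. Together with the identity laws $I \circ e \equiv e$ and $e \circ I \equiv e$ and the distributivity laws for $\otimes$ over $+$, this produces, for each $j$, an expression of the shape $z'_j\, \bigotimes_{k=0}^{n-1} \overline{\mathpzc{Q}}_{j,k}$, where each $\overline{\mathpzc{Q}}_{j,k}$ is a finite composition of Pauli constants at site $k$.

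I would then apply the Pauli merging table (\Cref{fig:pauli-compile}) to collapse each per-site sequence $\overline{\mathpzc{Q}}_{j,k}$ into a single Pauli $\mathpzc{P}_{j,k}\in\{I,X,Y,Z\}$ scaled by a factor in $\{1,-1,i,-i\}$. These factors collect into the outer scalar, giving an expression of the form $\sum_j c_j\, \bigotimes_{k=0}^{n-1} \mathpzc{P}_{j,k}$ with $c_j \in \mathbb{C}$. The final step is to argue that the coefficients $c_j$ are in fact real. This uses the hypothesis that $\hat{H}$ is typed at kind $\hmx$, so $\sdag{\hat{H}} \equiv \hat{H}$; since each $\bigotimes_k \mathpzc{P}_{j,k}$ is self-adjoint and the tensor products over distinct Pauli strings are linearly independent, grouping like terms forces each $c_j$ to equal its own conjugate, giving $r_j := c_j \in \mathbb{R}$, as required by \Cref{def:canon-form}.

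The main obstacle I anticipate is the bookkeeping in step two: the proof is a structural induction on the typing derivation of $\hat{H}$, and the inductive hypothesis must be strong enough to produce a \emph{length-$n$} tensor product at every intermediate stage so that \rulelab{E-Ten} can be applied site-by-site. The distributivity laws for $\otimes$ over $+$ interact subtly with this invariant because sums may appear under tensors of arbitrary shape, so the induction needs a simultaneous normalization lemma that pushes $+$ outward while preserving tensor length. The reality-of-coefficients argument is minor once linear independence of distinct Pauli tensors is established; conversely, the \rulelab{E-Ten} applicability relies crucially on the fact that no fermionic site remains after the particle transformation, which is guaranteed by the target type $\bigotimes^n t(2)$.
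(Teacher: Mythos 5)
Your normalization pipeline matches the paper's: both proofs work entirely inside the equational theory, start from the $\dag$-canonical form supplied by Lemma~\ref{lem:dag-can}, exploit that \rulelab{E-Ten} is unconditionally applicable over $\bigotimes^n t(2)$ (no fermionic subterm), and use distributivity plus the Pauli merging table of \Cref{fig:pauli-compile} to reach $\sum_j c_j \bigotimes_k \mathpzc{P}_{j,k}$. Where you diverge is the final step forcing $c_j \in \mathbb{R}$. The paper does this syntactically via Lemma~\ref{thm:canonical-good1}: in a Hermitian-typed $\dag$-canonical program $\sum_j z_j e_j$, every term either already has a real coefficient or comes paired with a term $z_k e_k$ with $e_j = \sdag{e_k}$ and $z_j = \sdag{z_k}$; since each Pauli string is self-adjoint, such a pair contributes $z_j c + \sdag{z_j}\,\sdag{c} = 2\,\mathrm{Re}(z_j c)$ to each Pauli coefficient, so reality falls out of the conversion itself. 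You instead argue semantically: expand first, then invoke self-adjointness of Pauli tensors together with linear independence of distinct Pauli strings. That argument is mathematically sound, but the conclusion you need is an equation between \emph{syntactic} coefficients derived from the \emph{syntactic} hypothesis $\sdag{\hat{H}} \equiv \hat{H}$ (which is all that \rulelab{T-Her} provides), so to deploy linear independence you must additionally establish that the equational theory is sound for the matrix semantics and that semantically equal Pauli-canonical forms have identical coefficients --- a uniqueness-of-canonical-forms lemma that the paper's pairing lemma is designed to avoid. This is provable but is extra mechanization work you should budget for. Your anticipated difficulty about maintaining length-$n$ tensors through the induction is real; the paper handles it by baking the fixed arity $n = \slen{\iota}$ into \Cref{def:canon-form} and carrying it as an invariant of the induction on typing derivations.
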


\begin{proof}
Fully mechanized proofs were done via the equational theory in \Cref{sec:eqthm} and the assumption of a $\dag$-canonical program, with the extra lemma stated in \Cref{thm:canonical-good1}.
\end{proof}

\begin{lemma}[Guaranteed Canonicalization]\label{thm:canonical-good1}\rm 
Given a $\dag$-canonical program (\Cref{def:dag-form-canon}) $e=\sum_j \uapp{z_j}{e_j}$, typed as $\tjudge{\textcolor{spec}{\bigotimes^n t(2)}}{e}{\textcolor{spec}{\quan{F}{\hmx}{\bigotimes^n t(2)}}}$, for every $\uapp{z_j}{e_j}$, either $z_j$ is a real number, or we can find another $\uapp{z_k}{e_k}$ in $e$, such that $e_j = \sdag{e_k}$, and $z_j=\sdag{z_k}$.
\end{lemma}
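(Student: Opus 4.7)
My plan is to exploit the Hermiticity of $e$ together with the equational rules to produce two $\dag$-canonical representations that must agree term by term. Since $e$ is typed as $\quan{F}{\hmx}{\textcolor{spec}{\bigotimes^n t(2)}}$, rule \rulelab{T-Her} gives $\textcolor{spec}{\bigotimes^n t(2)} \vdash \sdag{e} \equiv e$. I would first compute $\sdag{e}$ symbolically: the equational rules $\sdag{(e_1 + e_2)} \equiv \sdag{e_1} + \sdag{e_2}$ and $\sdag{(\sdag{e})} \equiv e$ together with the fact that the scalar factor $z_j$ distributes as $\sdag{z_j}$ give $\sdag{e} \equiv \sum_j \sdag{z_j}\, \sdag{e_j}$. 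Since each $e_j$ is itself $\dag$-canonical with no sum operators, I would then re-canonicalize each $\sdag{e_j}$ by pushing the outer $\dag$ inward through tensor and application using $\sdag{(e_1 \otimes e_2)} \equiv \sdag{e_1} \otimes \sdag{e_2}$ and $\sdag{(\sapp{e_1}{e_2})} \equiv \sapp{\sdag{e_2}}{\sdag{e_1}}$ until it sits on atomic operators, obtaining a $\dag$-canonical term I call $e'_j$. This yields $\sum_j z_j\, e_j \equiv \sum_j \sdag{z_j}\, e'_j$, both sides $\dag$-canonical.

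The crux is then a \emph{uniqueness of $\dag$-canonical representation} sub-lemma: if two $\dag$-canonical sums $\sum_j z_j e_j$ and $\sum_k w_k f_k$ (with the $e_j$ pairwise syntactically distinct modulo the basic structural equivalences, and similarly for $f_k$) are equivalent under the equational theory, then there is a bijection $\sigma$ such that $e_j = f_{\sigma(j)}$ and $z_j = w_{\sigma(j)}$. The cleanest route is semantic: by Lemma~\ref{thm:equiv-sem} equivalent programs denote equal matrices, and distinct $\dag$-canonical non-sum terms built from atomic operators $\sdag{a}, a$ (or after compilation, Pauli strings) denote linearly independent matrices over a finite-dimensional space. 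Matching coefficients of this linearly independent basis forces the bijection.

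Applying uniqueness to $\sum_j z_j\, e_j \equiv \sum_j \sdag{z_j}\, e'_j$ yields, for every $j$, some $k$ with $e_j = e'_k$ (equivalently $e_j \equiv \sdag{e_k}$ after un-canonicalizing) and $z_j = \sdag{z_k}$. Two cases arise. If $\sigma(j) = j$, then $e_j = \sdag{e_j}$ up to the canonicalization rewrites and $z_j = \sdag{z_j}$, so $z_j$ is real, matching the first disjunct. If $\sigma(j) = k \ne j$, then $e_k$ is the partner summand required by the second disjunct, with $e_j = \sdag{e_k}$ and $z_j = \sdag{z_k}$ as desired.

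The main obstacle will be the linear-independence step supporting the uniqueness lemma. Two complications need care: (i)~the equational theory includes \rulelab{E-Ten}, which is restricted to non-fermionic shapes, so the bookkeeping of how $\dag$ interacts with tensor ordering on fermionic sites must not introduce spurious equivalences; and (ii)~one must fix a concrete notion of ``syntactically distinct'' $\dag$-canonical terms (e.g.\ a normalized positional-tensor form indexed by site) so that the bijection $\sigma$ is well-defined. Both are handled by selecting a strict canonical representative for each $e_j$ before appealing to Lemma~\ref{thm:equiv-sem}, after which the linear-independence of distinct tensor products of atomic operators in $\Hilb_{2^n}$ finishes the argument.
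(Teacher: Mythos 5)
The paper itself gives no written proof of this lemma (it is discharged only in the Rocq development), so I can only assess your argument on its own terms. Your opening move is the right one: the $\hmx$ typing comes from rule \rulelab{T-Her}, which supplies a \emph{derivation} of $\teq{\sdag{e}}{e}$, and pushing $\dag$ through the sum gives $\teq{\sum_j \uapp{\sdag{z_j}}{\sdag{e_j}}}{\sum_j \uapp{z_j}{e_j}}$, after which one wants to match summands and split into the self-adjoint case ($z_j$ real) and the paired case.

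The gap is in your uniqueness sub-lemma, specifically the claim that distinct $\dag$-canonical non-sum terms denote linearly independent matrices. This is false. Over a single $t(2)$ site, $I$, $\mapp{a}{\sdag{a}}$, and $\mapp{\sdag{a}}{a}$ are three syntactically distinct non-sum $\dag$-canonical terms satisfying $I = \mapp{a}{\sdag{a}} + \mapp{\sdag{a}}{a}$ semantically; worse, $\mapp{a}{a}$ denotes the zero matrix outright. Since $\dag$-canonical terms may contain arbitrarily long compositions $\mapp{e}{e}$, there are infinitely many such terms living in the $4^n$-dimensional matrix space, so no linear-independence argument can go through, and semantic equality of two canonical sums does not force a coefficient-matching bijection. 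The saving grace — and the route you should take instead — is that the matching need only be extracted from the \emph{syntactic} derivation of $\teq{\sdag{e}}{e}$. The equational theory of \Cref{fig:exp-proofsystem-1} contains no semantic identities such as $\teq{I}{\mapp{a}{\sdag{a}} + \mapp{\sdag{a}}{a}}$; once both sides are in $\dag$-canonical form, the remaining derivable equivalences between them are generated by associativity/commutativity of $+$ together with the $\dag$-pushing rules, so the derivation literally induces a permutation $\sigma$ of summands with $e_j = \sdag{e_{\sigma(j)}}$ and $z_j = \sdag{z_{\sigma(j)}}$; the fixed points of $\sigma$ give the real-coefficient case and the rest give the paired case. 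This syntactic route also delivers the on-the-nose equality $e_j = \sdag{e_k}$ demanded by the statement, whereas your semantic detour would only yield equivalence up to $\equiv$.
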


With the \Cref{thm:equiv-sem}, we can show that the Hamiltonian simulation of $\hat{H}$ and its equivalent canonical form $\hat{H'}$ produce the same result.

\subsection{Gadgetization}\label{sec:gadget}

The \emph{gadgetization} step is defined as the transformation of a canonicalized Pauli-string-based Hamiltonian $\hat{H}$ to another Hamiltonian $\hat{H}'$, with each Pauli string in $\hat{H}'$ being two-local --- a $k$-local Pauli string refers to that the number of non-identity Pauli operations (not equal to $I$) is at most $k$.
This step might happen before or after Trotterization. 

The simplest solution is to perform a unitary synthesization \cite{Whitfield_2011} in the synthesization stage (\Cref{sec:decompose}), which were performed by most previous compiler works \cite{10.1145/3632923,cao2024marqsimreconcilingdeterminismrandomness}, as these works hid the step as a subroutine of their Trotterization algorithms.
However, there are quantum algorithms for performing the transformation, such as the perturbative gadget algorithms.
We believe that it is necessary to substantiate it as a step in our certified compiler, and clarify that there is a key component in compiling a Hamiltonian simulation that transforms a higher local Pauli-string-based Hamiltonian to a two-local Hamiltonian, with the note that this step might happen in the synthesization, which is why the step is marked \textcolor{green}{green} in \Cref{fig:compilationprocess}.

\begin{figure}[t]
    \centering
    \includegraphics[width=1\textwidth]{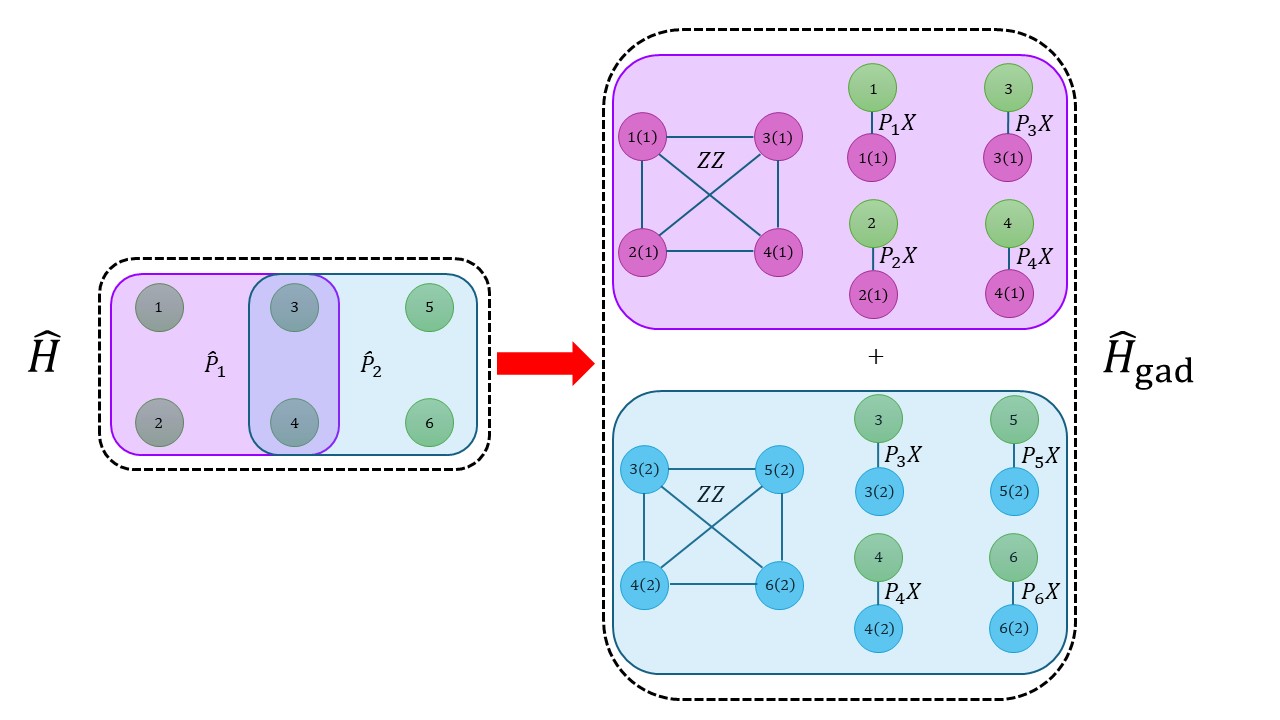}
    \caption{Perturbative gadget for a sum (the dashed areas) of two $4$-local Pauli strings. The \textcolor{teal}{green} numbered nodes are the original qubits; the \textcolor{purple}{purple} (e.g., $3(1)$) and \textcolor{blue}{blue} (e.g., $5(2)$) nodes are ancilla qubits.}
    \label{fig:gadget}
\end{figure}

The rest of the section provides an example demonstrating the \emph{perturbative gadget} algorithm \cite{PhysRevA.77.062329}, which is formalized in \Cref{appx:gadget}. We start with the $\quan{F}{\hmx}{\bigotimes^n t(2)}$ typed canonicalized Pauli-string-based Hamiltonian $\hat{H}=\sum_j r_j \hat{P}_j$, canonicalized in \Cref{sec:compilecanonical}, and produce a two-local Hamiltonian via the algorithm. Assume that we have a $4$-local Hamiltonian with two Pauli strings, as $\hat{H} = r_1 \hat{P}_1 + r_2 \hat{P}_2$; $\hat{P}_1$ and $\hat{P}_2$ are defined on six different Pauli operations $\{\pau_1,...,\pau_6\}$ as follows, demonstrated as left in \Cref{fig:gadget}.

{
\[
\hat{P}_1 = \pau_1 \otimes \pau_2 \otimes \pau_3 \otimes \pau_4 \otimes I \otimes I 
\qquad
\hat{P}_2 =   I \otimes I \otimes \pau_3 \otimes \pau_4 \otimes \pau_5 \otimes \pau_6 
\]
}

As shown in \Cref{fig:gadget}, for each Pauli string $\hat{P}_j$ ($j \in [0,1]$) in a $k$-local Hamiltonian, we assume that $k$ ancilla qubits are provided. To generate the result $H_{\cn{gad}}$ Hamiltonian, we are given $4$ extra qubits for each Pauli string.
For each Pauli string $\hat{P}_j$, the algorithm generates a connection Hamiltonian $\hat{H}^{\cn{anc}}$ to connect the ancilla qubits corresponding to the non-$I$ terms in $\hat{P}_j$, and a behavioral Hamiltonian $\hat{H}^{\cn{v}}$ to capture the constraints in $\hat{P}_j$ from the ancilla qubits back to the original qubits.
To generate the connection Hamiltonians, we produce the following two Hamiltonians for the Pauli strings $\hat{P}_1$ and $\hat{P}_2$:

{\small
\[
\hat{H}^{\cn{anc}}_1 =\sum_{m=1}^3 \sum_{n=m+1}^4 \left(I - Z(m(1)) \circ Z(n(1)) \right)
\qquad
\hat{H}^{\cn{anc}}_2 = \sum_{m=3}^5 \sum_{n=m+1}^6 \left(I - Z(m(2)) \circ Z(n(2)) \right)
\]
}
We use
$\pau(j(k))$ to denote applying the Pauli operation $\pau$ to the $j$-th site in a $k$-th Pauli string $\hat{P}_k$.
Hence
$Z(j(1))$ and $Z(j(2))$ represent the application of $Z$ operation to the ancilla qubit for the $j$-th site in the first and second Pauli string, respectively.
To generate the behavior Hamiltonians, we produce the following two Hamiltonians  for the Pauli strings $\hat{P}_1$ and $\hat{P}_2$:

{\small
\begin{center}
$
\begin{array}{l}
\displaystyle \hat{H}^{\cn{v}}_1 =\sum_{n=1}^4 r'_n \pau_n(n) \circ X(n(1)) \quad \cn{where  } r'_1 = r_1\wedge (n\neq 1 \Rightarrow r'_n = 1)
\\[1.4em]
\displaystyle \hat{H}^{\cn{v}}_2 = \sum_{n=3}^6 r'_n \pau_n(n) \circ X(n(2)) \quad \cn{where  } r'_3 = r_3\wedge (n\neq 3 \Rightarrow r'_n = 1)
\end{array}
$
\end{center}
}
\noindent Similarly, $X((j(1))$ and $X(j(2))$ represent the application of $X$ operation to the ancilla qubit for the $j$-th site in the first and second Pauli strings, respectively.
The result Hamiltonian $\hat{H}_{\cn{gad}}$ is the sum of the four Hamiltonians $\theta_1\hat{H}^{\cn{anc}}_1$, $\theta_2\hat{H}^{\cn{anc}}_2$, $\theta_3\hat{H}^{\cn{v}}_1$ and $\theta_4\hat{H}^{\cn{v}}_2$, with some amplitude values ($\theta_j$ for $j\in[1,4]$). The simulation of $\hat{H}_{\cn{gad}}$ approximates the simulation of the original Hamiltonian $\hat{H}$; the algorithm, the approximation lemma, and the whole compiler pipeline theorem including perturbative gadget are given in \Cref{appx:gadget}.

\subsection{Trotterization}\label{sec:trotter}
%\liyi{Need better examples, and the error rates might have problems.}

Trotterization \cite{Lloyd96} is the step to turn the Hamiltonian simulation operator of a sum of Pauli strings to a sequence of Pauli string simulations, each corresponding to a unitary operation, executable in a quantum computer.
Trotterization is based on the Lie-Trotter product formula (left below) \cite{Lie1880} and the approximated formula, on the right, developed by Suzuki \cite{529425}.

%The former transforms a particle system from a creation-annihilation operation-based description into a Pauli operation-based one. 
%This is significant as the Pauli-operators are easily implemented in quantum computers.
%In \qsnd, we define operations on $t(n,m)$ typed particle systems, while quantum computers are $t(1,2)$ typed, i.e., a two-dimensional particle system with only one spin.
%Hamiltonians can be portrayed as functions. Without considering the particle numbers in a system, the transformation rewrites a function $t(n,m) \to t(n,m)$ ($\quan{F}{\xi}{t(n,m)}$) to be typed $t(1,2) \to t(1,2)$. In addition, as we mentioned in \Cref{sec:sim-gates}, quantum computers require the use of Pauli groups for defining particle behaviors rather than creators and annihilators, i.e., the transformed functions are programmable in Pauli groups because Pauli groups are self-Hermitian; an example is \Cref{sec:boson}.

{\small
\begin{center}
$\cn{exp}(A+B)=\lim_{m\to \infty} (\Pi^m \,\sapp{\cn{exp}({\frac{A}{m}})}{\cn{exp}({\frac{B}{m}})})
\qquad
\qquad
\cn{exp}(\delta(A+B))=\sapp{\cn{exp}(\delta(A))}{\cn{exp}(\delta(B))}+O(\delta^2)
$
\end{center}
}

The Suzuki approximation above suggests that the error rate $O(\delta^2)$ is related to the square of the variable term $\delta$.
Thus, to simulate a quantum system $\eexp{{r \hat{H}}}$ with respect to a time period $r$, one can select a large $m$ to consecutively apply the exponential $\sapp{\cn{exp}({\frac{A}{m}})}{\cn{exp}({\frac{B}{m}})}$ to the state $m$ times, so $\frac{r}{m}$ diminishes such that the error rate $O({(\frac{r}{m})}^2)$ is negligible.
Such tactics can be generalized to an arbitrary number of summation terms.

Through Trotterization, given a time period $r$ and an $n$-qubit Pauli-string-based Hamiltonian $\hat{H}=\sum_{j=0}^{d-1} \theta_j \hat{P}_j$ (with $\hat{P}_j=\bigotimes_{k=0}^{n-1} \mathpzc{P}_{j,k}$), there is $m$ such that $\Pi^m \eexp{\frac{r}{m}\theta_0\hat{P}_0} \circ ...\circ \Pi^m \eexp{\frac{r}{m} \theta_j\hat{P}_j} \circ ...$ approximating $\eexp{{r \hat{H}}}$, where $\Pi^m U = U \circ ... \circ U$, repeating $m$ times of $U$.
Each matrix exponential operation, a.k.a., Hamiltonian simulation operation, generates a unitary, and Trotterization makes all these generated small unitaries in sequence.
The error is related to $r$ and $\theta_j$, detailed below.
%In the context of quantum simulations $\eexp{\theta_j \hat{P}_j }$, $\delta$ refers to the $\sminus \uapp{i}{\theta_j}$.

In \qsnd, we enable two kinds of Trotterization algorithms, the standard algorithm and QDrift \cite{Campbell_2019}.
The standard algorithm uses the Lie-Trotter method. The QDrift method randomizes the sequence of exponential operations, allowing a faster Hamiltonian simulation. 
We first state the general theorem based on an error bound $\epsilon$ (specialized to the two algorithms), then discuss it individually, unveiling the error bounds in different settings.

\begin{lemma}[Trotterization Correctness]\label{thm:trotter-good}\rm 
Given $\hat{H}=\sum_{j=0}^{d-1} \theta_j \hat{P}_j$ and time $r$, typed as $\tjudge{\textcolor{spec}{\bigotimes^n t(2)}}{\hat{H}}{\textcolor{spec}{\quan{F}{\hmx}{\bigotimes^n t(2)}}}$, there is $m$ and $N$, such that, $\eexp{\uapp{r}{\hat{H}}}$'s Trotterization procedure results in $\Pi^N U(d,m,n,r)$, the error is bound by, \\[0.5em]
$
\dabs{\denote{\tjudge{\textcolor{spec}{\bigotimes^{n} t(2)}}{\eexp{\theta\hat{H}}}{\textcolor{spec}{\quan{F}{\umx}{\bigotimes^n t(2)}}}}_0 - \denote{\tjudge{\textcolor{spec}{\bigotimes^n t(2)}}{\Pi_{k=0}^{N-1} U(d,m,n,r)}{\textcolor{spec}{\quan{F}{\umx}{\bigotimes^n t(2)}}}}_0}<\epsilon$.
\\
\end{lemma}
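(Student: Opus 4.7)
The plan is to prove the bound by combining the Suzuki-style local expansion already quoted in the paper with a global accumulation argument, specialized in parallel to the two algorithms the compiler emits. First, I would invoke \Cref{thm:equiv-sem} so that without loss of generality $\hat{H}$ can be taken in the Pauli-canonical form of \Cref{def:canon-form}; by \Cref{thm:canonical-good} this rewriting is semantically faithful, which lets the subsequent error estimate refer only to the coefficients $\theta_j$ and the fixed norm $\lVert \hat{P}_j\rVert = 1$ of each Pauli string. I would then fix the parameters of the compiler: for the standard algorithm I set $N = m$ and let $U(d,m,n,r) = \circ_{j=0}^{d-1} \eexp{(r/m)\,\theta_j \hat{P}_j}$, while for QDrift $m$ controls the importance-sampling rate and $U(d,m,n,r)$ is the per-sample exponential drawn with probability $\theta_j/\lVert\theta\rVert_1$. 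The semantics of \rulelab{S-Sim} gives each of these exponentials a well-defined unitary value via the power series, so the expression $\Pi^N U(d,m,n,r)$ is interpretable by the typing rules in \Cref{fig:exp-proofsystem-1}.

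Next, I would derive a single-step bound. For the standard case this is the classical first-order Trotter inequality
\[
\Bigl\lVert \eexp{(r/m)\hat{H}} - \circ_{j=0}^{d-1} \eexp{(r/m)\,\theta_j \hat{P}_j}\Bigr\rVert \;\le\; \frac{r^2}{2m^2}\sum_{j<k}\lVert [\theta_j\hat{P}_j,\theta_k\hat{P}_k]\rVert,
\]
obtained by differentiating the two sides and integrating the Duhamel remainder; the commutator norms are $\le 2\lvert\theta_j\theta_k\rvert$ since Pauli strings are unitary. For QDrift I would instead use the expected-channel bound of Campbell, which yields a per-step operator-norm error of order $(r\lVert\theta\rVert_1/m)^2/d$ against the averaged channel.

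Then I would accumulate over the $N$ steps using the standard telescoping lemma $\lVert U_1\cdots U_N - V_1\cdots V_N\rVert \le \sum_i \lVert U_i-V_i\rVert$, valid because each factor is a unitary of operator norm one. This gives the global bound $\epsilon = \tfrac{r^2}{2m}\sum_{j<k}\lvert\theta_j\theta_k\rvert \lVert[\hat{P}_j,\hat{P}_k]\rVert$ for the standard algorithm (specializing to $\pi^2/16$ for the running two-site Hubbard example quoted in \Cref{sec:hubbard-example}) and the analogous $\epsilon = O(r^2\lVert\theta\rVert_1^2/m)$ for QDrift (specializing to $3\pi^2/16$ on the same example). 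Picking any $m$ large enough to drive this below the user's tolerance closes the existential in the statement.

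The main obstacle I anticipate is the mechanization of the Duhamel remainder in \rocq: our semantics in \rulelab{S-Sim} interprets $\eexp{e}$ via the power-series least fixed point, but the single-step commutator bound is most naturally phrased with the integral form $\eexp{(A+B)\delta} - \eexp{A\delta}\eexp{B\delta} = \int_0^\delta \eexp{(A+B)(\delta-s)}[A,\eexp{As}B\eexp{-As}]\eexp{As}\,\mathrm{d}s$. Rather than developing an integration library, I would replace the integral by a term-by-term comparison of the two power series, bounding the $k$-th order difference by a sum of nested commutators whose norms telescope into the stated estimate; this is more tedious but stays inside the combinatorial/algebraic fragment already used to establish \Cref{thm:type-progress-oqasm}, and it factors cleanly through the type soundness theorem so that every intermediate operator remains of kind $\umx$.
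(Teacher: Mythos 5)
Your proposal is essentially the paper's argument: both reduce to a single-step commutator (Suzuki-type) bound, accumulate it by telescoping using the fact that every factor is a unitary of norm one, and then specialize to the standard algorithm and to QDrift via the bounds in Eqs.~\ref{eq:trottererr_std} and~\ref{eq:trottererr_qdrift}. Two differences are worth noting. First, granularity: the paper (\Cref{thm:theorem_std_trotter}) telescopes over all $N=dm$ individual exponential factors at once and applies the generic two-term lemma $\lVert\eexp{rB}\circ\eexp{rA}-\eexp{r(A+B)}\rVert\le\tfrac{r^2}{2}\lVert B\circ A-A\circ B\rVert$ to the expanded list $\{\tfrac{\theta_j}{m}\hat{P}_j\}$, whereas you telescope over the $m$ sweeps, bounding each sweep by $\tfrac{r^2}{2m^2}\sum_{j<k}\lVert[\theta_j\hat{P}_j,\theta_k\hat{P}_k]\rVert$; your grouping exhibits the $1/m$ decay of the error more directly, while the paper's per-factor version is the form actually mechanized. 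Second, your instinct to derive the single-step bound from the Duhamel integral and then retreat to a term-by-term power-series comparison lands exactly where the paper starts: the mechanization proves the two-term lemma inductively on the series from \rulelab{S-Sim}, so no integration library is needed. The one point you should tighten is QDrift: a per-sample operator-norm bound against $\eexp{\tfrac{r}{N}\hat{H}}$ does not hold for an individual random draw; the guarantee is for the \emph{averaged} channel $\sum_i\tfrac{\theta_i}{\lambda}\eexp{\tfrac{\lambda r}{N}L_i}$ versus $\eexp{\tfrac{r}{N}L}$, measured in the diamond norm on superoperators as in \Cref{thm:theorem_qdrift}. You gesture at this (``against the averaged channel''), but the accumulation step must then also be carried out at the level of channels (subadditivity of the diamond norm under composition of trace-preserving maps) rather than via the unitary telescoping lemma, which is the only structural change your plan needs to match the paper's proof.
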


\begin{proof}
To show \Cref{thm:trotter-good} for the standard Trotterization and QDrift, our Coq proof performs induction on $N$, with key error bound lemmas discussed below.
\end{proof}

As a running example, we use the following simple Ising model and show its Trotterized results along with the two algorithms' descriptions.

\[
\hat{H}_{I1} = \sum_j r_h X(j) + \sum_{j} {Z(j)}\circ{Z(j\splus 1)}
\]
%We provide the method of deciding $\alpha_k$, $\tau_k$, and the error $\epsilon$ in \label{thm:trotter-good} for the two methods, respectively.

\myparagraph{The Standard Algorithm.}
In the standard form, for each Pauli string $\theta_j \hat{P}_j$, we construct an application sequence $\Pi^m \eexp{\frac{r}{m} \theta_j \hat{P}_j }$, i.e., we repeat the simulation of $\theta_j \hat{P}_j$ for $m$ times with a scaled time slot $\frac{r}{m}$.
Formally, $U(d,m,n,r)$ is defined as $\eexp{\frac{r}{m} \theta_j \hat{P}_j }$ with $N=d * m$, i.e., $m$ being a splitting number 
of each Pauli string and $d$ being the number of Pauli strings in $\hat{H}$.
Thus, the whole procedure produces an application sequence, explained below.
%The error rate $\epsilon$ is related to $O(\frac{r t^2}{m^2})$.

For example, to simulate the Pauli-string-based Hamiltonian $\hat{H}_{I1}$ with a time slot $r$ and splitting number $m$, we compute the approximation of the exponent $\eexp{r{\hat{H}_{I1}}}$ as follows.

{\footnotesize
\begin{center}
$\hspace{-0.5em}
\begin{array}{l@{\;}c@{\;}l}
{\eexp{r{\hat{H}_{I1}}}}
&=& {{\Pi}^m\left(\eexp{\frac{r}{m}{(\sum_{j} \sapp{Z(j)}{Z(j\splus 1)} + r_h \sum_j X(j))}}\right)}\\
&=& {{\Pi}^m\left(\sapp{\sapp{\eexp{{{\frac{r}{m}}\sapp{Z(0)}{Z(1)}}}}{ \sapp{\eexp{{\frac{r}{m}}{\sapp{Z(1)}{Z(2)}}}}{...}}}{\sapp{\eexp{{\frac{r}{m}r_h}{X(0)}}}{\sapp{\eexp{{\frac{r}{m}r_h}{X(1)}}}{...}}} \right)}
\end{array}
$
\end{center}
}

Let $\hat{H}=\sum_{j=0}^{d\sminus 1} \theta_j \hat{P}_j$ with $N=d * m$, if we list all the trotted Hamiltonian simulation sequences by listing all Pauli strings, we form an $N$-element set as $\hat{H}_k\in \{\frac{\theta_0}{m}\hat{P}_0,\frac{\theta_0}{m}\hat{P}_0,...,\frac{\theta_j}{m}\hat{P}_j,\frac{\theta_j}{m}\hat{P}_j,..., \frac{\theta_{d\sminus 1}}{m}\hat{P}_{d\sminus 1}\}$.
$\hat{H}$ can be viewed as a sum of $N$ different element Hamiltonians $\hat{H}_k$ in the set.
We estimate the error bound $\epsilon$ related to $O(r^2)$, and the exact formula is listed below \cite{PhysRevX.11.011020}.

{\small
\[
\displaystyle
\epsilon = \displaystyle { \frac{r^2}{2} \sum_{k=1}^N \left\lVert 
(\sum_{j=k+1}^N \hat{H}_j) \circ \hat{H}_k - \hat{H}_k \circ (\sum_{j=k+1}^N \hat{H}_j)
\right\rVert}.
\label{eq:trottererr_std}
\tag{5.4.1}
\]
}

The Coq proof relies on a key error bound lemma (\Cref{appx:trotterthm}) based on the error bound in \Cref{eq:trottererr_std}.

\myparagraph{QDrift.}
We manage the Pauli strings in $\hat{H}=\sum_{k=0}^{d\sminus 1} \theta_k \hat{P}_k$ as a set $\{\theta_0\hat{P}_0,..,\theta_{d\sminus 1}\hat{P}_{d\sminus 1}\}$. Let $\lambda = \sum_{k=0}^{d\sminus 1} \slen{\theta_k}$, we probabilistically pick a Pauli string in the set and repeat the process $N$ times; such picks are listed as a sequence $m_0, ...m_k,..., m_{N\sminus 1}$, each $m_k \in [0,d)$. The QDrift procedure produces an application sequence $\Pi_{k=0}^{N-1} \eexp{\frac{r\lambda}{N}\theta_{m_k}\hat{P}_{m_k} }$, i.e., the $U(d,m,n,r)$ is set to $\eexp{\frac{r\lambda }{N}\theta_{m_k}\hat{P}_{m_k} }$.
Here the parameter $m$ in $U(d,m,n,r)$ is not explicitly used but it satisfies $N = d*m$.

We demonstrate the key difference of QDrift with respect to the standard Trotterization using the $\hat{H}_{I1}$ example above. To approximate the exponential $\eexp{r{\hat{H}_{I1}}}$, we randomly choose $N$ Pauli strings in the set $\{r_h X(0), r_h X(1),...,Z(0) \circ Z(1), Z(1) \circ Z(2), ...\}$, and then construct the trottered sequence. One example choice is as follows.
%For example, to simulate $\hat{H}_{I1}$ in \Cref{eq:hi1} with a time slot $t$ through the pick of a splitting number $n$, we compute the exponent $\eexp{\sapp{\hat{H}_{I1}}{t}}$ using the approximation as follows.

{\footnotesize
\begin{center}
$\eexp{r{\hat{H}_{I1}}}
= \sapp{\sapp{\eexp{{\frac{r\lambda}{N}}{\sapp{Z(j)}{Z(j\splus 1)}}}}{\sapp{\eexp{{\frac{r\lambda}{N}}{\sapp{Z(k)}{Z(k\splus 1)}}}}{...}}}{\sapp{\eexp{{\frac{r\lambda}{N}r_h{X(k)}}}}{\sapp{\eexp{{\frac{r\lambda}{N}r_h}{X(0)}}}{...}}}
$ 
\end{center}
}

Let $\hat{H}=\sum_{k=0}^{d-1} \theta_k \hat{P}_k$ and $\lambda = \sum_{k=0}^{d-1} \slen{\theta_k}$, QDrift randomly picks $N$ different Pauli strings for simulation, the error rate $\epsilon$ is related to $O(\frac{r^2}{N})$, with the exact bound listed below \cite{Campbell_2019}.

{\small
\[
\displaystyle
\epsilon = \displaystyle{ \frac{2\lambda^2 r^2}{N}}
\tag{5.4.2}
\label{eq:trottererr_qdrift}
\]
}

The Coq proof relies on a key error bound lemma (\Cref{appx:trotterthm}) based on the error bound in \Cref{eq:trottererr_qdrift}.

\subsection{Synthesization}\label{sec:decompose}

Synthesization applies the application sequence of matrix exponentials of Pauli strings generated from Trotterization to quantum circuits, either digital or analog.
In synthesizing an application sequence $\Pi_{j} \eexp{\theta_j\hat{P}_j}$, it is enough to describe the synthesization of a single matrix exponential term $\eexp{\theta_j \hat{P}_j}$. We discuss the digital and analog-based synthesization below.

%Digital Synthesization Pseudocode
\begin{algorithm}[t]
{\small
\caption{Digital Synthesization of \( U = \exp(-i \theta \hat{P}) \) for a Pauli string \(\hat{P}\)}
\label{fig:digitaldeco}
\begin{algorithmic}[1]
\STATE \textbf{Input:} Pauli string \( \hat{P} = \bigotimes_{k=0}^{n\sminus 1} \mathpzc{P}_{k} \), time period \( \theta \)
\STATE \textbf{Output:} Digital circuit \( U = \exp(-i \theta \hat{P}) \)
\STATE

\STATE Collect sites \( \{m_1, m_2, \ldots, m_l\} \) where \( \mathpzc{P}_{m_j} \neq I \), with \( l \leq n \)

\STATE \( U \leftarrow \cn{Rz}(2\theta)(m_1) \)
\STATE $j = m_1$
\FOR{ \( m_j \in \{m_2, \ldots, m_l\} \)}
    \STATE \( U \leftarrow \cn{CX}(m_{j-1}, m_j) \circ U \circ \cn{CX}(m_{j-1}, m_j) \)
\ENDFOR

\FOR{ \( m_j \in \{m_1, \ldots, m_l\} \)}
    \IF{\( \mathpzc{P}_{m_j} = Y \)}
        \STATE \( U \leftarrow \cn{Rz}\left(\tfrac{\pi}{2}\right)(m_j) \circ \cn{H}(m_j) \circ U \circ \cn{H}(m_j) 
        \circ \cn{Rz}\left(-\tfrac{\pi}{2}\right)(m_j) \)
    \ELSIF{\( \mathpzc{P}_{m_j} = X \)}
        \STATE \( U \leftarrow \cn{H}(m_j) \circ U \circ \cn{H}(m_j) \)
    \ENDIF
\ENDFOR

\STATE \textbf{Return} \( U \)
\end{algorithmic}
}
\end{algorithm}

\myparagraph{Digital Synthesization.}
The digital synthesization synthesized the simulation of a Pauli string, $\eexp{\theta_j \hat{P}_j}$, to elementary quantum gates, e.g., single-qubit gates, such as \cn{H}, \cn{Rz}, \cn{Ry}, and \cn{Rx} gates, as well as two qubit controlled-not gate \cn{CX} ($\cn{CX}(x,y)$ controlling on $x$ and applying $\cn{X}$ gate on $y$).

The procedure focuses on the non-$I$ terms in a Pauli string $\hat{P}=\bigotimes_{j=0}^{n-1} \mathpzc{P}_j$. If $\hat{P}$ has only one non-$I$ term, the compilation generates a rotation gate, listed in \Cref{fig:paulisim}.
In the Bloch qubit sphere interpretation \cite{mike-and-ike}, the rotation gate $\cn{Rx}$, $\cn{Ry}$, $\cn{Rz}$ can be expressed as some Hamiltonian simulations of Pauli matrices ($\cn{Rz}(2 \theta)(j)$ applies a $Z$-axis rotation in the $j$-th qubit), which essentially suggests that the simulation of singleton Pauli $X$, $Y$, and $Z$ terms are the above phase rotation gates.
For example, the gate synthesis of $\eexp{{\frac{r}{n}}{X(j)}}$ generates an $\cn{Rx}$ gate, as $\cn{Rx}(\frac{2r}{n})(j)$. 
%The gate synthesis of the $z$-axis interaction exponent $\eexp{\sapp{\sapp{Z(j)}{Z(j\splus 1)}}{\frac{t}{n}}}$ is introduced in \Cref{sec:overview2}.
%Ultimately, the simulation of the whole $\hat{H}_{I1}$ system is a series of applications of the above two gate patterns.
We describe the algorithm in \Cref{fig:paulisim} for a Pauli string with more than one non-$I$ term, i.e., a $k$-local Pauli string with $k>1$.
 
 \ignore{
\begin{definition}[Digital Synthesization]\label{def:digital-algorithm}\rm 
Let $U=\eexp{\hat{P} \theta}$ and the Pauli string $\hat{P}=\bigotimes_{k=0}^n \alpha(k)$ is $k$-local with $k > 1$, to generate unitary quantum gates, we perform the following.

\begin{enumerate}
\item We collect all the non-$I$ positions as $\{m_1,...,m_l\}$ with $l < n$, we generate the circuit as $U'=\Pi_{d<j} \cn{CX}(m_d,m_j) \circ \cn{RZ}(2\theta)(m_j)\circ \Pi_{j>d} \cn{CX}(m_d,m_j)$, with $d,j \in  \{m_1,...,m_l\}$.

\item With the $U'$ implementation in (1), we re-examine each non-$I$ position as $\{m_1,...,m_l\}$, if $m_j$ is Pauli $Y$, we compose $U'$ with $\cn{RX}(\frac{7\pi}{2})(m_j)\circ U' \circ \cn{RX}(\frac{\pi}{2})(m_j)$, if $m_j$ is Pauli X, we compose $U'$ with $\cn{RX}(\pi)(m_j) \circ \cn{RY}(\frac{\pi}{2})(m_j)\circ U' \circ \cn{RY}(\frac{\pi}{2})(m_j)\circ \cn{RX}(\pi)(m_j)$.
\end{enumerate}
\end{definition}
}

\ignore{
\begin{algorithm}[t]
{\small
\caption{Digital Synthesization of \( U = \exp(-i \theta \hat{P}) \) for a \(k\)-local Pauli string \(\hat{P}\)}
\label{fig:digitaldeco}
\begin{algorithmic}[1]
\STATE \textbf{Input:} Pauli string \( \hat{P} = \bigotimes_{k=0}^{n\sminus 1} \alpha^{(k)} \), time period \( \theta \)
\STATE \textbf{Output:} Digital circuit implementing \( U = \eexp{ \hat{P}\theta} \)
\STATE

\STATE Collect non-identity positions \( \{m_1, m_2, \ldots, m_l\} \) where \( \alpha^{(m_j)} \neq I \), with \( l < n \)

\STATE Construct base circuit:
\STATE \( U' \leftarrow \left(\prod_{d < j} CX(m_d, m_j)\right) \circ RZ(2\theta)(m_j) \circ \left(\prod_{j > d} CX(m_d, m_j)\right) \)
\STATE where \( d, j \in \{m_1, \ldots, m_l\} \)

\FOR{each \( m_j \in \{m_1, \ldots, m_l\} \)}
    \IF{\( \alpha^{(m_j)} = Y \)}
        \STATE \( U \leftarrow RX\left(\tfrac{7\pi}{2}\right)(m_j) \circ U' \circ RX\left(\tfrac{\pi}{2}\right)(m_j) \)
    \ENDIF
    \IF{\( \alpha^{(m_j)} = X \)}
        \STATE \( U \leftarrow RX(\pi)(m_j) \circ RY\left(\tfrac{\pi}{2}\right)(m_j) \circ U' \circ RY\left(\tfrac{\pi}{2}\right)(m_j) \circ RX(\pi)(m_j) \)
    \ENDIF
\ENDFOR

\STATE \textbf{Return} \( U \) as the implemented unitary circuit
\end{algorithmic}
}
\end{algorithm}
}

\begin{figure}[t]
{\small
\begin{center}
$
\begin{array}{c}
{\eexp{{\theta}{X}}}=\cn{RX}(2{\theta})
=
\begin{psmallmatrix}
\cos{{\theta}} & -i\sin{{\theta}}\\
-i\sin{{\theta}} & \cos{{\theta}}
\end{psmallmatrix}
\qquad
{\eexp{{\theta}{Y}}}=\cn{RY}(2{\theta})
=
\begin{psmallmatrix}
\cos{{\theta}} & -\sin{{\theta}}\\
\sin{{\theta}} & \cos{{\theta}}
\end{psmallmatrix}
\\[0.3em]
{\eexp{{\theta}{Z}}}=\cn{RZ}(2{\theta})
=
\begin{psmallmatrix}
\eexp{{\theta}} & 0\\
0 & \cn{exp}{(i {\theta})}
\end{psmallmatrix}
\end{array}
$
\end{center}
}
\vspace*{-0.8em}
  \caption{Pauli matrices simulated to rotation gates. $\eexp{\uapp{\theta}{\pau}}$ can be thought as simulating $\pau$ on time $\theta$.}
  \label{fig:paulisim}
\end{figure}

\begin{figure*}[t]
{\small
\begin{tabular}{c@{$\;\;=\;\;$}c}
\begin{minipage}{0.22\textwidth}
$
\eexp{\theta{\sapp{Z(j)}{Z(j\splus 1)}}}
$
\end{minipage}
&
\begin{minipage}{0.3\textwidth}
{
  \Qcircuit @C=0.5em @R=0.5em {
    \lstick{} & & \ctrl{1}       & \qw  & \ctrl{1}      &\qw\\
    \lstick{} & & \targ         & \gate{\cn{Rz}(2\theta)}       & \targ    & \qw 
    }
}   
\end{minipage}
\end{tabular}
\hspace{1em}
\begin{minipage}{0.48\textwidth}
{\small
$
\eexp{\theta ZZ}=
\begin{psmallmatrix}
\cn{exp}(\sminus i\frac{\theta}{2}) & 0 & 0 & 0\\
 0 & \cn{exp}(i\frac{\theta}{2}) & 0 & 0\\
 0 & 0 & \cn{exp}(i\frac{\theta}{2}) & 0\\
 0 & 0 & 0 & \cn{exp}(\sminus i\frac{\theta}{2})
\end{psmallmatrix}
$
}
  \end{minipage}
}
\vspace*{-0.8em}
\caption{Circuit and unitary for $ZZ$ Interaction. The left circuit applies on $j$ and $j\splus 1$ qubits.}
\label{fig:zz-inter}
\end{figure*}

\begin{figure*}[t]
{\small
\begin{minipage}[b]{0.3\textwidth}
{\centering
{
  \Qcircuit @C=0.5em @R=0.5em {
    \lstick{} & \ctrl{1} & \qw          & \qw                            & \qw      & \ctrl{1} & \qw      \\
    \lstick{} & \targ    & \ctrl{1}     & \qw                            & \ctrl{1} & \targ & \qw \\
    \lstick{} & \qw      & \targ        & \multigate{1}{\eexp{\theta ZZ}} & \targ    & \qw   & \qw \\
    \lstick{} & \qw      & \qw          & \ghost{\eexp{\theta ZZ}}      & \qw      & \qw   & \qw
    }
}}
\subcaption{$\eexp{{\theta}{ZZZZ}}$ gate.}
\label{fig:zzzz-gate}  
\end{minipage}
\begin{minipage}[b]{0.29\textwidth}
{\centering
{
  \Qcircuit @C=0.5em @R=0.5em {
    \lstick{} & \gate{\cn{H}} & \multigate{3}{\eexp{{\theta}{ZZZZ}}} & \gate{\cn{H}} & \qw   \\
    \lstick{} &\gate{\cn{H}}  & \ghost{\eexp{{\theta}{ZZZZ}}}        & \gate{\cn{H}} & \qw  \\
    \lstick{} & \gate{\cn{H}} & \ghost{\eexp{{\theta}{ZZZZ}}}        & \gate{\cn{H}} & \qw  \\
    \lstick{} & \gate{\cn{H}} & \ghost{\eexp{{\theta}{ZZZZ}}}        & \gate{\cn{H}} & \qw
    }
}}
\subcaption{$\eexp{{\theta}{XXXX}}$ gate.}
\label{fig:xxxx-gate}  
\end{minipage}
\begin{minipage}[b]{0.29\textwidth}
{\centering
{
  \Qcircuit @C=0.5em @R=0.5em {
    \lstick{} & \qw & \multigate{3}{\eexp{{\theta}{XXXX}}} & \qw & \qw   \\
    \lstick{} &\qw  & \ghost{\eexp{{\theta}{XXXX}}}        & \qw & \qw  \\
    \lstick{} & \gate{\cn{S}} & \ghost{\eexp{{\theta}{XXXX}}}        & \gate{\sdag{\cn{S}}} & \qw  \\
    \lstick{} & \gate{\cn{S}} & \ghost{\eexp{{\theta}{XXXX}}}        & \gate{\sdag{\cn{S}}} & \qw
    }
}}
\subcaption{$\eexp{{\theta}{XXYY}}$ gate.}
\label{fig:xxyy-gate}  
\end{minipage}
}
\vspace*{-0.5em}
\caption{Synthesized circuits for some long Pauli strings.}
\label{fig:bosongates}
\vspace*{-0.8em}
\end{figure*}

We first show the gate synthesization of the simple $t(2)$ typed Ising model $\hat{H}_{I1}$ in \Cref{sec:trotter}, which produces a sequence of $\eexp{\frac{r}{n} {Z(j)}\circ{Z(j\splus 1)}}$ and $\eexp{\frac{r}{n} r_h X(j)}$ operations (the sequence length is related to $n$). If $n$ is large enough, the operation sequence is a good approximation to $\eexp{r \hat{H}_{I1}}$.
The $\eexp{\frac{r}{n} r_h X(j)}$ operation produces a \cn{Rx} gate \cite{mike-and-ike}, while $\eexp{ \frac{r}{n} {Z(j)}\circ{Z(j\splus 1)}}$ produces a digital circuit as the left in \Cref{fig:zz-inter}, performing two controlled-not gates and an $\cn{Rz}(2\frac{r}{n})$ gate in the middle for the $j$-th and $j\splus 1$-th qubits,
i.e., $\cn{Rz}(2\theta)$ is an $z$-axis phase rotation gate.

The above assumes a two-local Pauli string with non-$I$ terms being $Z$.
If we do not perform a perturbative gadget, we might have $k$-local Pauli strings with $k > 2$.
The general algorithm is described in \Cref{fig:digitaldeco}.

We show an example of a long Pauli string synthesization in \Cref{fig:bosongates}.
Such long Pauli strings might appear in the Hamiltonian simulation of many systems, such as the bosonic system $\hat{H}$ in \Cref{sec:bosonsystem}.
%We show an example of the synthesization of the Hamiltonian simulation of a large $k$-local Pauli string, i.e., the example of the boson system $\hat{H}_T$ in \Cref{sec:bosonsystem}.
%We apply the matrix exponential to the Hamiltonian to perform the simulation with respect to time $r$, i.e., $\eexp{{\hat{H}_T}{r}}$.
After applying Trotterization on the $\hat{H}$ for boson systems, we generate a sequence of Hamiltonian simulations of Pauli strings.
Some possible Pauli string terms might be $4$-local, as $Z \otimes Z \otimes Z \otimes Z$ (abbreviated as $ZZZZ$), $X\otimes X \otimes X \otimes X$ (abbreviated as $XXXX$) and $X\otimes X \otimes Y \otimes Y$ (abbreviated as $XXYY$).
The simulation of the $ZZZZ$ term results in a quadruple $Z$ interaction in \Cref{fig:zzzz-gate}, containing a $ZZ$ interaction application in the middle, shown in \Cref{fig:zz-inter}, with the surrounding of controlled-not gates.
The digital gate synthesization of $\eexp{{\theta}{XXXX}}$ and $\eexp{{\theta}{XXYY}}$ is based on the quadruple $Z$ interaction with reapirs for $X$ and $Y$ conversion, which results in the circuits in \Cref{fig:xxxx-gate,fig:xxyy-gate}.
We show the correctness lemma below.

\begin{lemma}[Ditigal Synthesization Correctness]\label{thm:decompose-good}\rm 
$\eexp{\theta \hat{P}}$ can be synthesized as a sequence of unitary operations as $\Pi_{k}U_k$, based on the elementary gates, $\{\cn{H}, \cn{Rx}, \cn{Ry}, \cn{Rz}, \cn{CX}\}$.
\end{lemma}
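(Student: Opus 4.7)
The plan is to prove Lemma \ref{thm:decompose-good} by structural induction on the Pauli string $\hat{P}$, following the structure of Algorithm \ref{fig:digitaldeco}. I would first dispose of the trivial cases: if $\hat{P}$ consists entirely of identity factors then $\eexp{\theta \hat{P}} = \eexp{\theta} I$ reduces to a global phase (or an empty circuit up to that phase); if $\hat{P}$ has exactly one non-identity factor $\mathpzc{P}_{m_1}$, then I would appeal directly to the three identities shown in Figure \ref{fig:paulisim}, which certify that $\eexp{\theta\,Z(m_1)} = \cn{Rz}(2\theta)(m_1)$, and analogously for $X$ and $Y$, producing a single rotation gate.

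For the multi-site case, the core algebraic lemma I need is the \emph{CNOT-ladder identity}:
\[
\cn{CX}(c,t) \circ (Z(c) \circ Z(t)) \circ \cn{CX}(c,t) = Z(c),
\]
and its $n$-fold generalization, which conjugates a length-$\ell$ $Z$-string to a single-qubit $Z$ on the leftmost active site. Since conjugation commutes with the matrix exponential ($U\,\eexp{\theta A}\,\sdag{U} = \eexp{\theta\,U A \sdag{U}}$ for unitary $U$), this immediately gives
\[
\eexp{\theta\,\textstyle\bigotimes_{j} Z(m_j)} \;=\; L \circ \cn{Rz}(2\theta)(m_1) \circ \sdag{L},
\]
where $L$ is the CNOT ladder built in lines 7--9 of Algorithm \ref{fig:digitaldeco}. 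I would prove this by induction on the length $\ell$ of the active-site list, with the inductive step essentially the same two-site calculation. At this point I would justify the \emph{basis-change wrapping} in lines 10--16: using $X = H Z H$ and $Y = \sdag{S} H Z H S$ (up to sign, matching the paper's convention), I can replace each $X$- or $Y$-factor in the original Pauli string by $Z$ while compensating with the indicated Clifford gates on the corresponding wire. Because these wrappings are site-local and cancel in pairs on each wire outside the core $Z$-string exponential, the overall circuit implements $\eexp{\theta\hat{P}}$ exactly.

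The final step is to assemble these pieces: the algorithm's output is, up to reordering of commuting operations, exactly $W \circ L \circ \cn{Rz}(2\theta)(m_1) \circ \sdag{L} \circ \sdag{W}$, where $W$ is the product of the Clifford wrappers. Each atomic piece already lies in the target gate set $\{\cn{H}, \cn{Rx}, \cn{Ry}, \cn{Rz}, \cn{CX}\}$ (with $\cn{S}$ expressible as $\cn{Rz}(\pi/2)$ up to global phase, and $\sdag{\cn{S}}$ similarly), so the sequence $\Pi_k U_k$ is well-formed. The worked examples in Figure \ref{fig:bosongates} for $ZZZZ$, $XXXX$, and $XXYY$ provide concrete sanity checks for all three cases and can be mechanized as unit tests of the general construction.

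The main obstacle will be the mechanized matrix-level verification of the CNOT-ladder conjugation identity in Rocq: although it is conceptually a simple inductive argument on the length of the active-site list, the indexing bookkeeping for placing $\cn{CX}(m_{j-1}, m_j)$ at the correct tensor positions, and proving the exponential-conjugation lemma $U\,\eexp{\theta A}\,\sdag{U} = \eexp{\theta\, U A \sdag{U}}$ for non-adjacent wires, requires careful lifting through the $n$-site tensor structure. Once this lemma and the two Clifford conjugation identities $H Z H = X$ and $\sdag{S} H Z H S = Y$ are in hand, the remainder of the proof is a direct rewrite chain driven by the loops of Algorithm \ref{fig:digitaldeco}.
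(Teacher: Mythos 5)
Your overall strategy is sound and genuinely different from the paper's proof: the paper's mechanization proceeds by induction on the Pauli canonical form, runs Algorithm~\ref{fig:digitaldeco} to obtain $U$, and then directly interprets both $\eexp{\theta\hat{P}}$ and $U$ as matrices in the VOQC framework and equates the two denotations, whereas you factor the argument through the conjugation lemma $U\,\eexp{\theta A}\,\sdag{U} = \eexp{\theta\, U A \sdag{U}}$ and a small library of Clifford conjugation identities. That decomposition is more modular, but your core lemma is stated incorrectly, and as written the inductive step fails. With the paper's convention that $\cn{CX}(c,t)$ controls on $c$ and targets $t$, conjugation gives $\cn{CX}(c,t)\circ(Z(c)\circ Z(t))\circ\cn{CX}(c,t) = Z(t)$, not $Z(c)$: a $Z$ on the control commutes through $\cn{CX}$, while a $Z$ on the target copies onto the control, so the product $Z(c)\circ Z(t)$ collapses onto the \emph{target}. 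Consequently the ladder $L$ built from the gates $\cn{CX}(m_{j-1},m_j)$ satisfies $L\circ Z(m_1)\circ\sdag{L} = Z(m_1)$ (the site $m_1$ only ever appears as a control), and your claimed identity $\eexp{\theta\bigotimes_j Z(m_j)} = L\circ\cn{Rz}(2\theta)(m_1)\circ\sdag{L}$ degenerates to $\cn{Rz}(2\theta)(m_1)$. The fix is to place the rotation on the accumulating target: either put $\cn{Rz}$ on $m_l$ with the ladder as given, or reverse the control/target orientation so that $m_1$ is the final target. Note that \Cref{fig:zz-inter} puts the rotation on the target wire, which is the correct placement, and which the pseudocode of Algorithm~\ref{fig:digitaldeco} (rotating $m_1$) appears to contradict; your proof should resolve, not inherit, that inconsistency.

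A second, smaller point: the $Y$ basis change must be stated with the correct adjoint ordering, since a sign error propagates into the exponent. One has $\cn{S}\,X\,\sdag{\cn{S}} = Y$ but $\sdag{\cn{S}}\,X\,\cn{S} = -Y$, and $\eexp{-\theta Y}\neq\eexp{\theta Y}$ in general, so ``up to sign'' is not good enough here. Once the ladder direction and the $Y$ conjugation are pinned down, the rest of your rewrite chain (single-site base cases via \Cref{fig:paulisim}, pairwise-cancelling site-local wrappers, final assembly) goes through and is consistent with the worked circuits in \Cref{fig:bosongates}.
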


\begin{proof}
    Fully mechanized proofs were done by induction on the Pauli canonical forms. By applying \Cref{fig:digitaldeco}, we produce a unitary $U$ for $\eexp{\theta \hat{P}}$. Then, we interpret the two terms as matrix representations in VOQC \cite{VOQC} and equate the two interpretations.
\end{proof}

\myparagraph{Analog Synthesization.}
Many quantum computers provide backends to perform analog quantum computation, i.e., we can view a quantum computer as a Hamiltonian simulator to simulate small Hamiltonians in sequence. Each simulation step in a quantum simulator takes in a positive real number $\theta$ and a Pauli string $\hat{P}$, and produce the simulation result as $\eexp{\theta \hat{P}}$. The procedure is machine-dependent. 
For example, an IBM machine permits the simulation of one or two-local Pauli strings in the set $\{X, Z, ZZ\}$ ($ZZ$ means a two-local Pauli string with only $Z$ Pauli operations). Another machine (Indiana), developed by Richerme \emph{et al.} \cite{richerme2025multimodeglobaldrivingtrapped,Kyprianidis_2024}, permits the simulation of $k$-local Pauli strings in the set $\{Z, X^*\}$, where $X^*$ permits a Pauli string simulation with an arbitrary number of $X$ Paulis.
We show two analog synthesization algorithms in \Cref{fig:analogdeco,fig:inco} to compile our Pauli strings to the IBM and Indiana machines, respectively.

\ignore{
\begin{definition}[Analog Synthesization]\label{def:analog-algorithm}\rm 
Let $U=\eexp{\hat{P} \theta}$ and the Pauli string $\hat{P}=\bigotimes_{k=0}^n \alpha(k)$ is two-local, to generate unitary quantum gates, we collect all the non-$I$ positions as $m_1$ and $m_2$, and performs the following. 

\begin{enumerate}
\item If $\alpha(m_1)$ and $\alpha(m_2)$ and are $Z$, we construct $U'=\eexp{Z(m_1) Z(m_1) \theta}$.
\item If either $\alpha(m_1)$ or $\alpha(m_2)$ and are $Y$, we construct an additional $f(U') = \eexp{X(m_j)\frac{7\pi}{2}} \circ U' \circ \eexp{X(m_j)\frac{\pi}{2}}$, where $U'$ is generated in case (1) and $j \in [1,2]$.
\item If either $\alpha(m_1)$ or $\alpha(m_2)$ and are $X$, we construct an additional $f(U') = \eexp{X(m_j)\pi} \circ \eexp{Y(m_j)\frac{\pi}{2}}\circ U' \circ \eexp{Y(m_j)\frac{\pi}{2}}\circ \eexp{X(m_j)\pi}$, where $U'$ is generated in case(1) and $j \in [1,2]$.
\end{enumerate}
\end{definition}
}

\ignore{
%Analog synthesization Pseudocode
\begin{algorithm}[t]
{\small
\caption{IBM-based Analog Synthesization of \( U = \exp(-i \theta \hat{P}) \) for $2$-Local Pauli Strings}
\label{fig:analogdeco}
\begin{algorithmic}[1]
\STATE \textbf{Input:} $2$-local Pauli string \( \hat{P} = \bigotimes_{k=0}^{n\sminus 1} \mathpzc{P}_{k} \), time period \( \theta \)
\STATE \textbf{Output:} Analog circuit \( U = \exp(-i \theta \hat{P}) \)

\STATE Collect sites \( \{m_1, m_2, \ldots, m_l\} \) where \( \mathpzc{P}_{m_j} \neq I \), with \( l \leq n \)
\IF{$l < 2$}
    \STATE Raise Exception ("Required 2-Local Pauli Strings!")
    \STATE Exit
\ELSIF{$l = 1$}
    \STATE \(  \textbf{Return} \hspace{0.4 em} \eexp{\theta\pau(m_1)} \)
\ENDIF

\STATE \( U \leftarrow \eexp{ \theta Z({m_1}) \circ Z({m_2})} \)

\FOR{ \( m_j \in \{m_1, m_2\} \)}
    \STATE \( H(m_j) = \eexp{\frac{\pi}{4} X(m_j)} \circ  \eexp{\frac{\pi}{4} Z(m_j)} \circ  \eexp{\frac{\pi}{4} X(m_j)} \)
    \IF{\( \mathpzc{P}_{m_j} = Y \)}
        \STATE \( U \leftarrow \eexp{\frac{\pi}{4}Z(m_j)} \circ 
        \eexp{\frac{\pi}{4} X(m_j)} \circ  \eexp{\frac{\pi}{4} Z(m_j)} \circ  \eexp{\frac{\pi}{4} X(m_j)}
        \circ U \circ 
        \eexp{\frac{\pi}{4} X(m_j)} \circ  \eexp{\frac{\pi}{4} Z(m_j)} \circ  \eexp{\frac{\pi}{4} X(m_j)}\circ \pexp{\frac{\pi}{4}Z(m_j)} \)
    \ELSIF{\( \mathpzc{P}_{m_j} = X \)}
        \STATE \( U \leftarrow 
            \eexp{\frac{\pi}{4} X(m_j)} \circ  \eexp{\frac{\pi}{4} Z(m_j)} \circ  \eexp{\frac{\pi}{4} X(m_j)}
        \circ U \circ 
        \eexp{\frac{\pi}{4} X(m_j)} \circ  \eexp{\frac{\pi}{4} Z(m_j)} \circ  \eexp{\frac{\pi}{4} X(m_j)}
        \)
    \ENDIF
\ENDFOR

\STATE \textbf{Return} \( U \)
\end{algorithmic}
}
\end{algorithm}
}

\begin{algorithm}[t]
{\small
\caption{IBM-based Analog Synthesization of \( U = \exp(-i \theta \hat{P}) \) for $2$-Local Pauli Strings}
\label{fig:analogdeco}
\begin{algorithmic}[1]
\STATE \textbf{Input:} $2$-local Pauli string \( \hat{P} = \bigotimes_{k=0}^{n\sminus 1} \mathpzc{P}_{k} \), time period \( \theta \)
\STATE \textbf{Output:} Analog Hamiltonian simulation \( U = \Pi^m\eexp{  r \hat{P}} \)
\STATE
\STATE Identify non-identity sites \( m_1 \) and \( m_2 \) where \( \mathpzc{P}_{m_j} \neq I \)

\STATE \( U \leftarrow \eexp{ \theta Z({m_1}) \circ Z({m_2})} \)

\FOR{ \( m_j \in \{m_1, m_2\} \) }
    \IF{\( \pau_{m_j} \neq Z \)}
            \STATE \textbf{let} $U_1 =  \eexp{\frac{\pi}{4} X(m_j)} \circ  \eexp{\frac{\pi}{4} Z(m_j)} \circ  \eexp{\frac{\pi}{4} X(m_j)}$ \textbf{in}
                 \STATE  \( U \leftarrow  U_1 \circ U \circ U_1\) 
    \ENDIF
    \IF{\( \pau_{m_j} = Y \)}
       \STATE \( U \leftarrow \eexp{ \tfrac{\pi}{4} Z({m_j})} \circ U \circ \eexp{\tfrac{7\pi}{4} Z({m_j})} \)
    \ENDIF
\ENDFOR

\STATE \textbf{Return} \( U \)
\end{algorithmic}
}
\end{algorithm}

To perform analog synthesization for the IBM machine, we assume that a two-local Hamiltonian is achieved by applying a perturbative gadget.
The simulation of a one-local Pauli can be dealt with by the matrix exponential formulas in \Cref{fig:paulisim}.
The Indiana machine permits the simulation of a $k$-local Pauli string, i.e., we do not need perturbative gadget to generate two-local Pauli strings. 

%Analog Synthesization Pseudocode
\begin{algorithm}[t]
{\small
\caption{Indiana Analog Synthesization of \( U = \exp(-i \theta \hat{P}) \)}
\label{fig:inco}
\begin{algorithmic}[1]
\STATE \textbf{Input:} Pauli string \( \hat{P} = \bigotimes_{k=0}^{n\sminus 1} \mathpzc{P}_{k} \), time period \( \theta \)
\STATE \textbf{Output:} Analog circuit \( U = \exp(-i \theta \hat{P}) \)
\STATE

\STATE Collect sites \( \{m_1, m_2, \ldots, m_k\} \) where \( \mathpzc{P}_{m_j} \neq I \), with \( l \leq n \)

\STATE \( \hat{P'} = I \)
\FOR{ \( m_j \in \{m_1, \ldots, m_k\} \)}
    \STATE \( \hat{P'} \leftarrow \hat{P'} \circ X(m_j) \)
\ENDFOR

\STATE  $U \leftarrow \eexp{\uapp{\theta}{\hat{P'} }}$

\FOR{ \( m_j \in \{m_1, \ldots, m_k\} \)}
    \IF{\( \mathpzc{P}_{m_j} = Y \)}
        \STATE \( U \leftarrow \eexp{\frac{\pi}{4} Z(m_j)} \circ U \circ \eexp{\frac{7\pi}{4} Z(m_j)} \)
    \ELSIF{\( \mathpzc{P}_{m_j} = Z \)}
        \STATE \textbf{let} $U_1 =  \eexp{\frac{\pi}{4} X(m_j)} \circ  \eexp{\frac{\pi}{4} Z(m_j)} \circ  \eexp{\frac{\pi}{4} X(m_j)}$ \textbf{in}
                 \STATE  \( U \leftarrow  U_1 \circ U \circ U_1\) 
    \ENDIF
\ENDFOR

\STATE \textbf{Return} \( U \)
\end{algorithmic}
}
\end{algorithm}

As an example, to simulate a simple Ising system $\hat{H}_{I1}$ above, we can directly perform the simulation $\eexp{\frac{r}{n}\sapp{Z(j)}{Z(k)}}$ and $\eexp{\frac{r}{n}r_h {X(j)}}$ for sites $j$ and $k$, in the IBM setting. If $\frac{r}{n}$ is negative, we can instead simulate a positive number $\frac{r}{n}+2m\pi$. The direct Hamiltonian simulation permits a more efficient circuit compilation.
The above analog synthesization step essentially implements the $ZZ$-interaction gate in \Cref{fig:zz-inter}, and also shows the reason for some interaction gates being much more efficient than elementary gates in IBM machines, evidenced in \citet{10.1145/3632923}.

To perform the simulation $\eexp{{\theta}{XXYY}}$ in the Indiana machine, we can construct the simulation procedures (\Cref{fig:inco}) as follows:
We construct the simulation opeartion $\eexp{\theta XXXX}$, with the single qubit simulations $\eexp{\frac{\pi}{4} Z}$ and $\pexp{\frac{7\pi}{4} Z}$ around the last two qubits.

Other quantum machines might have different permitted Pauli string sets, such as Quera machines \cite{quera2024}. Consequently, the machine Hamiltonian analog synthesization step must employ different analog synthesization methods for each quantum machine.
%The Ising system simulation is a toy example. \Cref{sec:boson} presents another example involving the simulation of a bosonic system.
We show the correctness lemma for the two machines below.

\begin{lemma}[Analog Synthesization Correctness]\label{thm:analog-decompose-good}\rm 
$\eexp{\theta \hat{P}}$ can be synthesized as a sequence of unitary operations as $\Pi_{k} U_k$, based on the simulations of Pauli strings permitted in the IBM machine ($\{X, Z, ZZ\}$) and the Indiana machine ($\{Z, X^*\}$).
\end{lemma}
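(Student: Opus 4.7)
The plan is to proceed by case analysis on the two target machines (IBM and Indiana) and, within each case, by induction on the loop that sweeps over the non-identity sites of $\hat{P}$. The linchpin is the classical identity
\[
V \,\eexp{\theta A}\, V^{\dagger} \;=\; \eexp{\theta\, V A V^{\dagger}},
\]
valid for any unitary $V$, which lets us convert ``change the Pauli at site $m_j$'' into ``conjugate the running product $U$ by a basis-change unitary supported at $m_j$.'' I will first prove three single-qubit Pauli conjugation identities by direct $2\times 2$ matrix computation (these are the same identities used in \Cref{fig:bosongates,fig:zz-inter}):
(i) with $U_{1} \triangleq \eexp{\tfrac{\pi}{4}X}\circ\eexp{\tfrac{\pi}{4}Z}\circ\eexp{\tfrac{\pi}{4}X}$ one has $U_{1}ZU_{1}=X$ and $U_{1}XU_{1}=Z$ (so $U_{1}$ is Hadamard up to a global phase, in particular $U_{1}^{\dagger}=U_{1}$);
(ii) $\eexp{\tfrac{\pi}{4}Z}\,X\,\eexp{\tfrac{7\pi}{4}Z}=Y$, using $\eexp{\tfrac{7\pi}{4}Z}=\eexp{-\tfrac{\pi}{4}Z}\cdot e^{-2\pi i Z}=\eexp{-\tfrac{\pi}{4}Z}$;
(iii) conjugating an $X$ (resp.\ a $Z$) at site $m_j$ by a unitary supported only on $m_j$ leaves unchanged any Pauli factor at any other site $m \neq m_j$ in the same tensor string.

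For the IBM case, I will show by induction on the number of loop iterations already performed that the invariant of \Cref{fig:analogdeco} is: after processing sites $\{m_{1},\ldots,m_{i}\}$, the variable $U$ equals $\eexp{\theta\,\hat{Q}_{i}}$ where $\hat{Q}_{i}$ is obtained from $Z(m_{1})\circ Z(m_{2})$ by replacing the $Z$ at each already-processed site $m_{j}$ with $\pau_{m_{j}}$. The base case is immediate ($U$ initialised to $\eexp{\theta Z(m_{1})\circ Z(m_{2})}$, a permitted $ZZ$ simulation). In the inductive step, if $\pau_{m_{j}}\neq Z$, the line $U\leftarrow U_{1}\circ U\circ U_{1}$ is precisely the conjugation covered by identity~(i), turning the $Z$ at $m_{j}$ into $X$; if additionally $\pau_{m_{j}}=Y$, the subsequent line conjugates by $\eexp{\pi/4 Z}$ and $\eexp{7\pi/4 Z}$, which by identity~(ii) promotes the $X$ at $m_{j}$ to $Y$. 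At every step the newly inserted unitaries are single-qubit simulations of $X$ or $Z$ --- both permitted on IBM --- and the already-existing inner unitary is by IH a product of simulations drawn from $\{X,Z,ZZ\}$, so the final $U$ is a product of permitted analog simulations, as required. The Indiana case follows the same template applied to \Cref{fig:inco}: the invariant now starts from $\eexp{\theta\,\bigotimes_{j=1}^{k}X(m_{j})}$, a single $X^{*}$ Pauli-string simulation (permitted on Indiana), and each loop iteration conjugates the running $X$ at site $m_{j}$ into $Y$ (via identity~(ii)) or into $Z$ (via identity~(i)), using conjugators composed of $\eexp{(\cdot)Z}$ and $\eexp{(\cdot)X}$, both in $\{Z,X^{*}\}$.

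The main obstacle is bookkeeping for the non-touched sites: we must verify that when we conjugate by a unitary supported only on site $m_{j}$, the Pauli factors already placed at previously processed sites $m_{1},\ldots,m_{j-1}$ are not disturbed. This is where identity~(iii) (commutation across tensor factors) is essential, together with \rulelab{E-Ten} from \Cref{fig:exp-proofsystem-1} to move conjugators through the tensor structure; note that \rulelab{E-Ten} is applicable here because after the particle-transformation step (\Cref{sec:transformation}) the whole expression is typed over $\bigotimes^{n} t(2)$, which is boson-only in the sense of $\iota^{\cn{b}}$. A secondary subtlety, already handled in the computation above, is keeping track of the $-i\theta$ convention in $\eexp{(\cdot)}$ so that $\eexp{7\pi/4\,Z}$ really inverts $\eexp{\pi/4\,Z}$ (i.e.\ the offset $2\pi$ is absorbed into a trivial global phase $e^{-2\pi i Z}=I$); this is routine but must be checked to keep the mechanised proof rigorous. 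All remaining cases (e.g.\ degenerate one-local $\hat{P}$) reduce to \Cref{fig:paulisim}.
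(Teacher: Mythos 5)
Your overall strategy is essentially the paper's: the mechanized proof likewise reduces the lemma to equating the matrix interpretation of $\eexp{\theta\hat{P}}$ with that of the unitary produced by \Cref{fig:analogdeco,fig:inco} (the same methodology as the digital case), and your loop-invariant formulation with three local conjugation identities is a reasonable way to organize exactly that computation.

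However, your identity (i) is false as stated, and the error breaks the inductive step. With the paper's convention $\eexp{\alpha P}=\cn{exp}(-i\alpha P)$, a direct computation gives $\eexp{\frac{\pi}{4}X}\circ\eexp{\frac{\pi}{4}Z}\circ\eexp{\frac{\pi}{4}X}=\tfrac{1}{2\sqrt{2}}(I-iX)(I-iZ)(I-iX)=-i\,\cn{H}$. So $U_1$ is Hadamard only up to the global phase $-i$, and in particular $U_1^\dagger=i\,\cn{H}\neq U_1$; self-adjointness does not follow from ``Hadamard up to a global phase'' unless that phase is $\pm 1$. Consequently $U_1^2=-I$ and $U_1\,M\,U_1=(U_1\,M\,U_1^\dagger)\,U_1^2=-\,U_1\,M\,U_1^\dagger$, so the update $U\leftarrow U_1\circ U\circ U_1$ produces $-\eexp{\theta\hat{Q}}$ rather than $\eexp{\theta\hat{Q}}$: each non-$Z$ site processed in the IBM loop (and each non-$X$ site in the Indiana loop) contributes an untracked factor of $-1$. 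Your identity (ii) is correct, since $\eexp{\frac{7\pi}{4}Z}=(\eexp{\frac{\pi}{4}Z})^\dagger$ exactly, so the $Y$-promotion step is phase-clean; but the invariant ``$U=\eexp{\theta\hat{Q}_i}$'' is not preserved through the $U_1$-conjugation, and the final matrix equality fails by a sign whenever the number of such sites is odd. To repair this you must either (a) weaken the conclusion to equality up to global phase, or (b) strengthen the invariant to $U=(-1)^{c_i}\eexp{\theta\hat{Q}_i}$ and discharge the accumulated phase, e.g.\ by replacing one of the two conjugators with $U_1^\dagger=\eexp{\frac{7\pi}{4}X}\circ\eexp{\frac{7\pi}{4}Z}\circ\eexp{\frac{7\pi}{4}X}$, which remains a product of permitted single-site $X$ and $Z$ simulations on both machines. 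The rest of the plan---identity (iii), the appeal to \rulelab{E-Ten} over the boson-only type $\bigotimes^{n}t(2)$, and the reduction of the one-local case to \Cref{fig:paulisim}---is sound.
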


\begin{proof}
    Fully mechanized proofs were done by the same methodology as proving \Cref{thm:decompose-good}. Here, we equate the two matrix interpretations of $\eexp{\theta \hat{P}}$ and the unitaries generated via \Cref{fig:analogdeco,fig:inco}.
\end{proof}

\subsection{The Whole Pipeline Theorem}\label{sec:mainthm}

Below, we show the correctness theorem for compiling a Hamiltonian simulation of a constraint expression $e$, typed as $\tjudge{\iota}{e}{\quan{F}{\hmx}{\iota'}}$, to quantum circuits. We first show the correctness theorem below for the compilation pipeline in \Cref{fig:compilationprocess}, without the perturbative gadget step; such a pipeline is modeled as $\quan{F}{\hmx}{\iota} \vdash (e,r) \gg U : \quan{F}{\umx}{\iota'}$. Such a compilation is split into the four steps above. The theorem is a complete version of \Cref{thm:compile-good-simp}.

\begin{theorem}[Compilation Correctness]\label{thm:compile-good}\rm 
Given $e$, typed as $\tjudge{\iota}{e}{\quan{F}{\hmx}{\iota}}$, and a time period $r$, we compile its simulation $\eexp{\uapp{r}{e}}$ to quantum circuit, as $\quan{F}{\hmx}{\iota} \vdash (e,r)  \gg U : \quan{F}{\umx}{\iota'}$, for every state $\psi$, typed as $\iota \vdash \psi$, we transform it to $\psi'$, typed as $\iota' \vdash \psi'$, let $\psi_1=\denote{\tjudge{\iota}{U}{\quan{F}{\umx}{\iota}}}_0(\psi)$, we transform $\psi_1$ to $\psi'_1$, typed as $\iota' \vdash \psi'_1$; the error is bound by $\dabs{\psi_1' -\denote{\tjudge{\iota'}{\eexp{e r}}{\quan{F}{\umx}{\iota'}}}_0(\psi')}<\epsilon$. 
\end{theorem}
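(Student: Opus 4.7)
The plan is to prove \Cref{thm:compile-good} by tracking both programs and states through each stage of the compilation pipeline and composing the per-stage correctness lemmas already proved in \Cref{sec:transformation,sec:compilecanonical,sec:trotter,sec:decompose}. Concretely, given an input $e$ with $\tjudge{\iota}{e}{\quan{F}{\hmx}{\iota}}$ and time $r$, the pipeline $\quan{F}{\hmx}{\iota} \vdash (e,r) \gg U : \quan{F}{\umx}{\iota'}$ factors as four composable judgments: (i) particle transformation $\iota \vdash_0 e \gg_{\cn{t}} \hat{H}_1 \triangleright \iota'$, (ii) canonicalization $\iota' \vdash \hat{H}_1 \equiv \hat{H}_2$ with $\hat{H}_2$ Pauli-canonical, (iii) Trotterization $\eexp{r\hat{H}_2} \gg \Pi_{k=0}^{N-1} U_k$, and (iv) synthesization $U_k \gg V_k$ (digital or analog), yielding $U = \Pi_{k=0}^{N-1} V_k$. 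Correspondingly, the input state $\psi$ is transformed to $\psi'$ by the rules \rulelab{PTS-Sum}, \rulelab{PTS-Fe}, \rulelab{PTS-Bo}, and after executing $U$ on $\psi'$, the result $\psi_1 = \denote{U}_0(\psi')$ must be compared against the state $\psi_1'$ obtained by first running the source semantics $\denote{\eexp{er}}_0$ on $\psi$ and then transforming.

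The proof will proceed by induction on the pipeline derivation, carrying a commutation invariant of the form ``transformation then simulation equals simulation then transformation'' across each compilation stage. First, I would apply \Cref{thm:particle-trans} to lift the state-level commutation from $e$ to $\hat{H}_1$, giving $\iota \vdash \denote{\eexp{er}}_0(\psi) \gg_{\cn{t}} \denote{\eexp{r\hat{H}_1}}_0(\psi') \triangleright \iota'$; here I must also show that transformation commutes with $\eexp{-}$, which follows from its commutation with finite powers of $\hat{H}_1$ together with the convergence of the power series in rule \rulelab{S-Sim}, using \Cref{thm:type-progress-oqasm} to guarantee that all intermediate applications are well-typed. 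Next, \Cref{thm:canonical-good} combined with \Cref{thm:equiv-sem} gives $\denote{\eexp{r\hat{H}_1}}_0 = \denote{\eexp{r\hat{H}_2}}_0$, so this step contributes zero error. Then \Cref{thm:trotter-good} bounds $\lVert \denote{\eexp{r\hat{H}_2}}_0 - \denote{\Pi_k U_k}_0 \rVert < \epsilon_T$, where $\epsilon_T$ is given by \Cref{eq:trottererr_std} or \Cref{eq:trottererr_qdrift}, depending on the chosen algorithm. Finally, \Cref{thm:decompose-good} (or \Cref{thm:analog-decompose-good}) yields $\denote{\Pi_k U_k}_0 = \denote{U}_0$ exactly, since synthesization produces semantically equal circuits; this stage contributes zero error.

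Combining these via the triangle inequality and the unitary invariance of the operator norm, I obtain $\lVert \psi_1' - \denote{U}_0(\psi')\rVert \le \epsilon_T \cdot \lVert \psi \rVert$, so setting $\epsilon := \epsilon_T \cdot \lVert \psi \rVert$ (or any backend-specific bound derived from the same per-stage estimates) closes the theorem. Type preservation at every intermediate judgment, required for the semantic interpretations to be well-defined, comes from \Cref{thm:type-progress-oqasm} applied inductively along the pipeline.

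The main obstacle I anticipate is the first step: showing that particle transformation commutes with the matrix exponential (not merely with polynomials in $\hat{H}_1$). \Cref{thm:particle-trans} only gives commutation at the level of a single application of $e$ to a state, and extending this to $\eexp{re}$ requires commuting the transformation past an infinite series while respecting the fermionic sign context $g$ introduced in rule \rulelab{S-OpF} and propagated via $\mathpzc{S}$. The delicate point is that transformation changes the site type from $t^{\aleph}(2)$ to $t(2)$, folding the anti-commutation into Pauli $Z$-tails (\rulelab{PT-FeLow}, \rulelab{PT-FeUp}); one must verify that the resulting qubit semantics reproduces the original sign structure at every iterate $\hat{H}_1^n$ uniformly, so that the limit commutes. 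Once this uniform commutation is established, the remaining stages compose cleanly, since canonicalization is semantics-preserving and synthesization exact, leaving Trotterization as the unique source of quantified error.
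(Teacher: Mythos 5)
Your proposal matches the paper's proof, which is precisely the composition of Lemmas~\ref{thm:particle-trans}, \ref{thm:canonical-good} (via Lemma~\ref{thm:equiv-sem}), \ref{thm:trotter-good}, and one of Lemmas~\ref{thm:decompose-good} or \ref{thm:analog-decompose-good}, with Trotterization as the sole source of the error bound $\epsilon$. The additional detail you supply on commuting the particle transformation past the matrix exponential and the fermionic sign context is a genuine subtlety the paper's one-line proof leaves implicit, but it does not change the route.
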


\begin{proof}
    Fully mechanized proofs are done by composing the \Cref{thm:particle-trans,thm:canonical-good,thm:trotter-good,thm:decompose-good,thm:analog-decompose-good}.
\end{proof}

The theorem states that the distance of the compiled circuit $U$ and the semantic definition $\eexp{r e}$ is less than a threshold $\epsilon$, defined as the error bound in Trotterization in \Cref{thm:trotter-good}. The proof combines \Cref{thm:particle-trans,thm:canonical-good,thm:trotter-good} and one of the \Cref{thm:decompose-good,thm:analog-decompose-good}, for compiling the simulation to digital and analog machines, respectively.
The compilation correctness theorem, including the perturbative gadget algorithm, is given in \Cref{appx:gadget}.

\ignore{

We now discuss the target machine Hamiltonian fitting step, mainly for analog-based compilation but can also appear in the circuit-based compilation, i.e., the main goal of the pulse synthesis step in a circuit-based compilation is to fit target machine Hamiltonians \cite{10.1145/3632923,Li_2022,Tacchino_2019}.
As we mentioned in \Cref{sec:sim-gates}, gates are simulated in the underlying quantum machines because they are essentially Hamiltonian simulators.
Instead of performing quantum gate syntheses, one can generate good small Hamiltonians that can be directly implemented as specific target machine Hamiltonians and perform machine-level simulations.
For example, an IBM machine is based on the Heisenberg system \cite{Auerbach1998-jd} and permits the following machine Hamiltonian between any of the two adjacent sites $j$ and $j\splus 1$.

{\small
\[
\hat{H}_{\cn{IBM}} = z_1\sapp{Z(j)}{X(j\splus 1)} + z_2\sapp{Z(j)}{Z(j\splus 1)}+z_3 Z(j) + z_4 X(j\splus 1)
\]
}

As we mentioned in \Cref{sec:sim-gates,sec:qcompile}, quantum gates are simulated based on target machine-level Hamiltonians, and simple quantum circuit generation might not be an effective way of implementing a program.
There are several solutions. First, we can rewrite the quadruple $Z$ interaction for IBM machines to utilize more $ZZ$ or $XZ$ interactions built into the machines, such as in \Cref{fig:zzzz-gate}. 
Second, there are more effective bosonic system approximations \cite{peng2023quantum}, which can reduce the quadruple site interactions to two-site ones, such as $ZZ$ interactions, significantly optimizing the gate numbers in the simulation.
In fact, \citet{Cao_2015} approximated a higher-order site interaction with the linear sum of two or three site interactions.
In addition, other types of quantum machines support more forms of interactions, some of which allow more than two-site interactions \cite{Kyprianidis_2024}; such systems can better utilize quantum computers.

\myparagraph{Energy State Calculation.}\label{sec:groundenergy}
Another key application is the energy state computation based on expectation value calculation in \Cref{sec:background}.
One of the central pieces in energy state computation is to find the exact ground state energy expectation value \(E_0\) and the state $\psi$ holding the energy value.
Based on the variational principle, the expectation value of the Hamiltonian over all possible states $\psi$ is minimized:

%\vspace{-0.3em}
{\small
\begin{center}
$
{\displaystyle E_0 = \min_{\cn{nor}(\psi)} \sapp{\sdag{\cn{nor}(\psi)}}{\sapp{\hat{H}}{\cn{nor}(\psi)}}}
$
\end{center}
}
%\vspace{-0.3em}

The exact ground state can be obtained by minimizing this expectation value over all normalized states.
In \qsnd, we compile the computation to HPC software through classical optimization techniques, such as gradient descent methods, to ``guess'' the minimal expectation value and its associated state of a given Hamiltonian.
Many HPC particle computation platforms use other sophisticated techniques based on the guessing-out strategy. For example, Gamess \cite{540141112,Zahariev2023} and NWChem \cite{10.1063/5.0004997} are HPC software to compute energy states of chemical molecule structures, and users can define the structures in second quantization in the software.
The quantum supremacy of the energy state calculation problems is murky, as some researchers \cite{Lee2023} do not believe in the existence of such supremacy.
Nevertheless, the typical way to calculate in quantum computers is through the variational quantum eigensolver concept \cite{TILLY20221}, having similar procedures as the HPC counterpart.

}

%\input{case}
%\input{compilation}
%\input{evaluation}
%\vspace{-0.3em}
\section{Related Work}
\label{sec:related}

This section reviews related work most relevant to our approach. While many other contributions exist, we focus on those directly connected to the semantics, type systems, and compilation techniques discussed in this paper.

\noindent{\textbf{\emph{Second Quantization.}}}
Second quantization, or occupation number representation, is a mathmetical system used to describe and analyze quantum particle systems.
The key ideas of the methodology were introduced by Paul Dirac \cite{Dirac1988} and were later developed by Pascual Jordan \cite{Jordan1928} and Vladimir Fock \cite{Fock1932,Reed:1975uy}. 
In this approach, the quantum states are represented in the Fock state basis, which is constructed by filling up each single-site state with a certain number of identical particles \cite{Becchi10}. Later, many works were developed for describing second quantization in analyzing different systems, such as fermions in lattice-based systems \cite{Levin:2003ws},
chemical and biological molecule systems \cite{Gori2023}, nuclear physical systems \cite{Johnson_2013,Christiansen2004}.
Recently, researchers developed algebraic formalisms based on second quantization and Lie algebra \cite{Batista_2004,SCHWARZ2021115601}
as a new extension of second quantization, in order to define the transformations from one particle system to another,
such as defining a generalization of Jordan-Wigner Transformation \cite{Batista_2001} to transform different particle systems to $t(2)$ typed particle system, a.k.a., quantum computer systems.
\citet{ying2014quantum} developed a quantum version of recursion and program control flow based on creators and annihilators in second quantization.

The difference between these works and \qsnd is that \qsnd examines the programming aspects of second quantization as we identify its constraint, dynamic, and application semantics, with a certified compiler.

\noindent{\textbf{\emph{Quantum Circuit-based Programming Languages.}}}
\ignore{
Quantum circuit-based language developments are focal.
Q\# \cite{Svore_2018}, Quilc \cite{smith2020opensource}, ScaffCC \cite{JavadiAbhari_2015}, Project Q \cite{Steiger_2018}, Criq \cite{cirq_developers_2023_10247207}, Qiskit \cite{Qiskit2019} are industrial quantum circuit languages, without formal semantics, but based on a standardized denotational quantum circuit semantics.
There are formally verifying quantum circuit programs, including  \qwire~\cite{RandThesis}, \sqir~\cite{PQPC}, and \qbricks~\cite{qbricks}, quantum Hoare logic and its subsequent works \cite{qhoare,qhoreusage,10.1145/3571222}, Qafny \cite{li2024qafny}. These tools have been used to verify a range of quantum algorithms, from Grover's search to quantum phase estimation.
There are works verifying quantum circuit optimizations  (e.g., \voqc~\cite{VOQC}, CertiQ~\cite{Shi2019}), as well as verifying quantum circuit compilation procedures,
including ReVerC~\cite{reverC} and ReQWIRE~\cite{Rand2018ReQWIRERA}.

There are functional language interpretations for quantum circuits \cite{10.1007/978-3-319-89366-2_19,DIAZCARO2019104012,Selinger2013,van_Tonder_2004,Altenkirch,10.1007/978-3-642-40922-6_5,Voichick_2023,Green2013,mingsheng-ying,qml-update,qml-thesis,10.1145/3571204}, with formal semantics.
These languages are mainly designed to encode quantum circuits as functional language constructs, such as quantum pattern matching.
Silq \cite{sliqlanguage} proposed a language to perform automatic uncomputation with denotational semantics for describing circuit behaviors.
ZX-calculus \cite{Coecke_2011,wang2023completeness} turns elements appearing in quantum semantics into a diagram calculus for expressing unitary operations without measurement.
}
Quantum circuit-based language development is a central focus in the field. Several industrial quantum circuit languages have been developed, including Q\#~\cite{Svore_2018}, Quilc~\cite{smith2020opensource}, ScaffCC~\cite{JavadiAbhari_2015}, ProjectQ~\cite{Steiger_2018}, Cirq~\cite{cirq_developers_2023_10247207}, and Qiskit~\cite{Qiskit2019}. While these languages lack formal semantics, they are grounded in a standardized denotational semantics for quantum circuits. In contrast, a number of tools focus on formal verification of quantum circuit programs. These include \qwire~\cite{RandThesis}, \sqir~\cite{PQPC}, and \qbricks~\cite{qbricks}, as well as quantum Hoare logic and its extensions~\cite{qhoare,qhoreusage,10.1145/3571222}, and Qafny~\cite{li2024qafny}. These systems have been used to verify a range of quantum algorithms, including Grover’s search and quantum phase estimation. Verification has also been applied to quantum circuit optimizations, such as in \voqc~\cite{VOQC} and CertiQ~\cite{Shi2019}, and to quantum circuit compilation processes, as demonstrated by ReVerC~\cite{reverC} and ReQWIRE~\cite{Rand2018ReQWIRERA}.

Several functional languages have been developed to model quantum circuits with formal semantics, including~\cite{10.1007/978-3-319-89366-2_19,DIAZCARO2019104012,Selinger2013,van_Tonder_2004,Altenkirch,10.1007/978-3-642-40922-6_5,Voichick_2023,Green2013,mingsheng-ying,qml-update,qml-thesis,10.1145/3571204}. These languages are typically designed to encode quantum circuits as functional constructs, such as through quantum pattern matching. Silq~\cite{sliqlanguage} introduces automatic uncomputation within a denotational semantics framework for describing circuit behavior. Finally, the ZX-calculus~\cite{Coecke_2011,wang2023completeness} provides a diagrammatic language for expressing unitary operations without measurement, offering an alternative to algebraic quantum semantics.

\noindent{\textbf{\emph{Analog Quantum Computing Languages and Pulse Level Programming.}}}
\ignore{
There are a few pulse-level and target machine Hamiltonian programming interfaces that view quantum computers as analog quantum simulators,
such as IBM Qiskit Pulse \cite{10.1145/3505636}, QuEra Bloqade \cite{bloqade2023quera}, and Pasqal
Pulser \cite{Silverio2022pulseropensource} was developed by hardware service providers. 
There are computational quantum physics packages, e.g., QuTip \cite{JOHANSSON20121760}, supporting Hamiltonian simulation on a classical computer.
These works are inspired by many classical analog compilations \cite{10.1145/3373376.3378449,10.1145/2980983.2908116}
SimuQ \cite{10.1145/3632923} provides a formal language on analog quantum computing based on constructing a Hamiltonian with Pauli groups.
\citet{cao2024marqsimreconcilingdeterminismrandomness} provides a better Trotterization algorithm than QDrift.
Compared to these works, \qsnd provides the first verified compilation framework from a user application second quantization Hamiltonian simulation to digital and analog quantum circuits.
}
Several pulse-level and target-machine Hamiltonian programming interfaces treat quantum computers as analog quantum simulators. Notable examples include IBM Qiskit Pulse~\cite{10.1145/3505636}, QuEra Bloqade~\cite{bloqade2023quera}, and Pasqal Pulser~\cite{Silverio2022pulseropensource}, all developed by hardware service providers. In parallel, computational quantum physics packages such as QuTiP~\cite{JOHANSSON20121760} support Hamiltonian simulation on classical computers. These efforts are inspired by a long history of classical analog compilation techniques~\cite{10.1145/3373376.3378449,10.1145/2980983.2908116}. SimuQ~\cite{10.1145/3632923} introduces a formal language for analog quantum computing based on the construction of Hamiltonians using Pauli groups. Recent work by~\citet{cao2024marqsimreconcilingdeterminismrandomness} proposes a Trotterization algorithm that improves upon QDrift.

Compared to these approaches, \qsnd provides the first verified compilation framework that transforms user-level Hamiltonian programs, written in second quantization form, into both digital and analog quantum circuits.

\noindent{\textbf{\emph{Quantum computational Approaches for Analyzing Particle Systems.}}}
Many quantum computation software and algorithms have been proposed for analyzing particle systems.
%Many works utilize HPC systems \cite{Zahariev2023,540141112,10.1063/5.0004997,6651029} to calculate molecule energy states, based on some variational methods.
There are software tools \cite{10.1145/3511715,10.1145/3503222.3507715,POWERS2021100696,PhysRevA.109.042418} for performing particle system Hamiltonian simulation, by viewing it as a quantum algorithm and compiling it to quantum circuits.
Many other works \cite{Cervera_Lierta_2018,Yang_2020,YAMAGUCHI2002343,doi:10.1021/acs.jctc.2c00974} discuss simulating different particle systems in quantum computers.
The variational quantum eigensolver is one of the most well-known algorithms \cite{Tilly_2022,Cerezo2022} to analyze particle systems.
It is mainly used for performing energy state computation, such as estimating the ground energy state of water molecules \cite{Nam2020}.
\section{Conclusion and Future Work}\label{sec:conclusion}
%\vspace{-0.5em}

We present \qsnd, the first verified compilation framework for second-quantization-based Hamiltonian simulation, aiming to define and analyze lattice-based particle systems.
%We generalize quantum computer systems to be a specific quantum particle system, so we can view the compilation of quantum particle systems to quantum computers as mapping from one kind of particle system to another, definable in \qsnd.
We examine the programmability of the second quantization formalism and identify that second quantization programs are constraints describing a system, and the Hamiltonian simulation is the dynamic semantics of the constraint-based system.
We then show the \qsnd syntax, semantics, and type system, a generalization of the qubit-based system, to permit the definition of different particle systems and classical simulation problems.
Via the typing and equational properties, we show the verified compilation procedure of compiling the Hamiltonian simulation for \qsnd programs to quantum circuits.
%Under the \qsnd generalization, we intend to connect the previously unlinked dots, fragmented in different layers, together, so one can utilize \qsnd to define the tasks in these dots, such as user applications, quantum particle system transformations, and target machine specifications.
%Moreover, we enable the definitions of many quantum computation models, such as circuit-based models, analog Hamiltonian simulations, and adiabatic evolution.
%We provide typing information for different particle systems, which can be used to properly program a system and transform between different systems.
%For example, the exponential and logarithm operations can be analyzed by the least fixed point analysis tools, as well as different Hamiltonian analytical tools can be imposed to cut a large Hamiltonian into smaller ones that can be directly implemented in target machine Hamiltonians, such as the example in \Cref{sec:boson}.
%For example, the exponential and logarithm operations can be analyzed by the least fixed point analysis tools, as well as different Hamiltonian analytical tools can be imposed to cut a large Hamiltonian into smaller ones that can be directly implemented in target machine Hamiltonians, such as the example in \Cref{sec:boson}.
%Finally, we plan to fully investigate the commutation and anti-commutation properties of bosons and fermions so that \qsnd can be used to analyze real-world particle systems.

\myparagraph{Future Work.}
We intend to certify many different Hamiltonian-level quantum optimization algorithms and develop tools for analyzing Hamiltonian-based programs.
Another one is to develop different application semantics, especially for computing ground energy states useful in scientific computation, i.e., we can view the application semantics as computing a ground energy state. One of the previous frameworks, which performs the task and naturally extends from Hamiltonian simulation, is adiabatic quantum computation (AQC).
This is a model of quantum computation that solves optimization problems by evolving a quantum system slowly from an initial Hamiltonian $\hat{H}_m$ with a known ground state to a final Hamiltonian $\hat{H}_p$ whose ground state encodes the solution.
The \qsnd application semantics can be naturally extended to AQC for energy state computation.

%\section{Data-Availability Statement}
%The Rocq proofs mentioned in the paper are available on Zenodo: \url{https://zenodo.org/records/15073601}, and will be submitted for evaluation. The artifact is produced for the mechanized proof of the lemmas and theorems in the paper.

\ignore{
By maintaining the system's lowest energy state throughout the process, AQC leverages the adiabatic theorem to find optimal solutions efficiently. The first line below shows the time-dependent Hamiltonian $\hat{H}(r)$ that varies with time $r$ from $\hat{H}_m$ to $\hat{H}_p$. The Schr\"odinger equation on the right would give the exact evolution of this Hamiltonian. The actual solution used in AQC is the sequence on the second line, where each step generates a quantum operation via a Hamiltonian simulation ($\eexp{\uapp{\Delta r}{\hat{H}(r)}}$) for a small time period $\Delta r$. 
In theory, the AQC procedure can be understood as starting with an initial quantum state $\ket{\varphi(0)}$ and applying a sequence of unitary quantum operations, each of which is of the form $\eexp{\uapp{{\Delta r}}{\hat{H}(r)}}$ for a different $t\in\{\Delta r, 2\Delta r, 3 \Delta r, ...,T\}$. 

{\small
\begin{center}
$
\begin{array}{l}
\hat{H}(r)=(1-\frac{r}{T})\hat{H}_m + \frac{r}{T}\hat{H}_p
\qquad
\qquad
i\frac{d}{dt}\ket{\varphi(r)}=\hat{H}(r)\ket{\varphi(r)}
\\[0.4em]
\ket{\varphi(T)}\approx \eexp{\uapp{\Delta r}{\hat{H}(T)}} \circ ...\circ \eexp{\uapp{\Delta r}{\hat{H}(2\Delta r)}}\circ\eexp{\uapp{\Delta r}{\hat{H}(\Delta r)}}\ket{\varphi(0)}
\end{array}
$
\end{center}
}

We conceptually define a time-dependent Hamiltonian $\hat{H}(r)$ that evolves from $\hat{H}(0) = \hat{H}_m$ to $\hat{H}(1) = \hat{H}_p$ as a function of a normalized time parameter $r \in [0,1]$. For example, a linear schedule would use $\hat{H}(r) = (1-r) \hat{H}_m + r \hat{H}_p$. Initially ($r=0$) the system is in the ground state (eigenstate) of $\hat{H}_m$ (the initial superposition state). As $r$ increases, the Hamiltonian gradually incorporates the problem constraints. If the change is slow enough, the quantum adiabatic theorem guarantees that the state of the system will remain in the instantaneous ground state of $\hat{H}(r)$. By the end ($r=1$), the system will be in the ground state of $\hat{H}_p$, which is exactly the state encoding the optimal solution of our problem.
In practice, this continuous AQC is realized through a sequence of many small discrete steps, each applying a slight alteration to the Hamiltonian and consequently a quantum operation on the state. Eventually, after the full schedule, the system's state ideally encodes the solution in its qubit amplitudes. Finally, we measure the qubits in the computational basis. With high probability (if the adiabatic process succeeded and if decoherence and other errors are minimal), the measurement will collapse the system into the basis state corresponding to the lowest energy, i.e., the optimal solution.
}
%AQC replaces the explicit search or brute-force classical process with a physical process. As the Hamiltonian changes, the quantum state naturally ``searches'' the landscape of possibilities by continuously adjusting the amplitudes of basis-ket states, concentrating more weight on low-energy (better) solutions. By the end, the amplitude is ideally concentrated on the optimal solution. This is analogous to a landscape where the system begins as a liquid spread out over all valleys and then slowly freezes into the lowest valley (the global minimum) due to the energy bias introduced by $\hat{H}_p$. The power of this approach is that it exploits quantum parallelism and tunneling: the state can explore many solutions concurrently and avoid getting stuck in local minima under the right conditions, potentially outperforming certain classical algorithms on hard problems.

%\input{semantics}

\section*{Acknowledgments}
This material is based upon work supported by the National Science Foundation under Grant No. OSI-2435255.

%% Bibliography
\bibliography{reference}

%\end{document}

\newpage
\appendix
%\appendix
\section{In Depth Background}\label{app:paulimap}

This section provides additional background information.

\subsection{Quantum Hardware Circuit Operations}\label{sec:background3}

          \begin{wrapfigure}{r}{3.5cm}
          %  \vspace*{-0.2em}
            {\qquad
              \footnotesize
              \Qcircuit @C=0.5em @R=0.5em {
                \lstick{\ket{0}} & \gate{\cn{H}} & \ctrl{1} & \qw &\qw & & \dots & \\
                \lstick{\ket{0}} & \qw & \targ & \ctrl{1} & \qw & &  \dots &  \\
                \lstick{\ket{0}} & \qw & \qw   & \targ & \qw & &  \dots &  \\
                & \vdots &   &  &  & & & \\
                & \vdots &  & \dots & & & \ctrl{1} & \qw  \\
                \lstick{\ket{0}} & \qw & \qw & \qw &\qw &\qw & \targ & \qw
              }
            }
            \caption{GHZ Circuit}
            \label{fig:background-circuit-examplea}
          \end{wrapfigure}
          
Computation on a quantum value consists of a series of \emph{quantum operations}, each acting on a subset of qubits in the quantum value. In the standard form, quantum computations are expressed as \emph{circuits}, as in \Cref{fig:background-circuit-examplea}, which depicts a circuit that prepares the Greenberger-Horne-Zeilinger (GHZ) state \cite{Greenberger1989} --- an $n$-qubit entangled value of the form: $\ket{\text{GHZ}^n} = \frac{1}{\sqrt{2}}(\ket{0}^{\otimes n}+\ket{1}^{\otimes n})$, where $\ket{d}^{\otimes n}=\Motimes_{d=0}^{n\sminus 1}\ket{d}$.
In these circuits, each horizontal wire represents a qubit, and boxes on these wires indicate quantum operations, or \emph{gates}. The circuit in \Cref{fig:background-circuit-examplea} uses $n$ qubits and applies $n$ gates: a \emph{Hadamard} (\coqe{H}) gate and $n-1$ \emph{controlled-not} (\coqe{CNOT}) gates. Applying a gate to a quantum value \emph{evolves} it.
Its traditional semantics is expressed by multiplying the value's vector form by the gate's corresponding matrix representation: $n$-qubit gates are $2^n$-by-$2^n$ matrices.
Except for measurement gates, a gate's matrix must be \emph{unitary} and thus preserve appropriate invariants of quantum values' amplitudes.

A \emph{measurement} operation extracts classical information from a quantum value. It collapses the value to a basis state with a probability related to the value's amplitudes (\emph{measurement probability}), e.g., measuring $\frac{1}{\sqrt{2}}(\ket{0} + \ket{1})$ collapses the value to $\ket{0}$ with probability $\frac{1}{2}$,  and likewise for $\ket{1}$, returning classical value $0$ or $1$, respectively. 
A more general form of quantum measurement is \emph{partial measurement}, which measures a subset of qubits in a qubit array;
such operations often have simultaneity effects due to entanglement, \ie{} in a Bell pair $\frac{1}{\sqrt{2}}(\ket{00} + \ket{11})$, measuring one qubit guarantees the same outcome for the other --- if the first bit is measured as $0$, the second bit is too.

\subsection{Connecting Pauli Strings and Second Quantization}\label{app:paulimapa}

There are many interpretations of creators and annihilators in second quantization based on different particle types, which \qsnd classifies them in our type system.
The basic form is the creators and annihilators in \emph{a two-dimensional boson-like particle system} with $t(2)$ typed sites: the lowering $a$ and raising $\sdag{a}$ operations in ladder algebra, with the semantic rules below.
In this system, as shown above, applying a single creator $\sdag{a}$ (having the matrix representation 
$\begin{psmallmatrix}
0 & 0\\
1 & 0
\end{psmallmatrix}$) turning $\ket{0}$ to $\ket{1}$, but it eliminates $\ket{1}$ to be $\zero$,
while applying a single annihilator $a$ (having the matrix representation
$\begin{psmallmatrix}
0 & 1\\
0 & 0
\end{psmallmatrix}$) turning $\ket{1}$ to $\ket{0}$, but it eliminates $\ket{0}$ to be $\zero$.

{\small
\[\hspace{-0.2em}
\begin{array}{c}
\sdag{a}\ket{0}= \ket{1}
\qquad\qquad
\sdag{a}\ket{1}= \zero
\qquad\qquad
a\ket{1}= \ket{0}
\qquad\qquad
a\ket{0}= \zero
\\[0.3em]
\mapp{\sdag{a}}{a}\ket{1}=\ket{1}
\qquad\quad
\mapp{\sdag{a}}{a}\ket{0}=\zero
\qquad\quad
\mapp{a}{\sdag{a}}\ket{0}=\ket{0}
\qquad\quad
\mapp{a}{\sdag{a}}\ket{1}=\zero
\\[0.3em]
{\mapp{\sdag{a}}{a}}\otimes {\mapp{a}{\sdag{a}}}\ket{1}\ket{0}=\ket{1}\ket{0}
\qquad
{\mapp{\sdag{a}}{a}}\otimes {\mapp{a}{\sdag{a}}}\ket{b_1}\ket{b_2}=\zero\;\;\text{if}\;b_1\neq 1 \vee b_2 \neq 0
\end{array}
\]
}

We define the $\mathbb{0}$ and $\mathbb{1}$ operations in \Cref{sec:intro} by composing creators and annihilators,
i.e., $\mapp{\sdag{a}}{a}$, defining $\mathbb{1}$, preserves $\ket{1}$, and $\mapp{a}{\sdag{a}}$, defining $\mathbb{0}$, preserves $\ket{0}$, as shown in the second line above.
As an instance in a two qubit system, $\frac{1}{\sqrt{2}}(\ket{0}\ket{1} + \ket{1}\ket{0})$, applying $\mathbb{1}$ to the second site results the state $\frac{1}{\sqrt{2}}\ket{0}\ket{1}$ \footnote{This is the constraint semantics not the quantum dynamic unitary semantics, so operations are not necessarily unitary.}, by preserving $\ket{1}$ in the second site.
The composition effect can also deal with a composed state of multiple qubits, such as the third line above, with transitions below.

{\small
\begin{center}
$
{{\mapp{\sdag{a}}{a}}\otimes {\mapp{a}{\sdag{a}}}\ket{1}\ket{0}}
=
{{\mapp{\sdag{a}}{a}}\otimes {\mapp{a}{\sdag{a}}} (\ket{1} \otimes \ket{0})}
=
{{\mapp{\sdag{a}}{a}}\ket{1}\otimes {\mapp{a}{\sdag{a}}} \ket{0}}
={\ket{1}\ket{0}}
$
\end{center}
}

We can see that Pauli strings are a special case of tensor strings, as they can be constructed by performing the above two-dimensional lowering and raising operations. The identity operation $I$ below is assumed to be two-dimensional.

{\small
\[
\begin{array}{l@{\qquad\quad}l@{\qquad\quad}l@{\qquad\quad}l}
I \triangleq  
\begin{bmatrix}
1 & 0\\
0 & 1
\end{bmatrix}
&
X \triangleq  
\begin{bmatrix}
0 & 1\\
1 & 0
\end{bmatrix}
&
Y \triangleq  
\begin{bmatrix}
0 & \sminus i\\
i & 0
\end{bmatrix}
&
Z \triangleq  
\begin{bmatrix}
1 & 0\\
0 & \sminus 1
\end{bmatrix}
\\[1em]
I = \sapp{a}{\sdag{a}} + \sapp{\sdag{a}}{a} = \mathbb{0} + \mathbb{1}
&
X = \sdag{a} + a
&
Y = i a - i \sdag{a}
&
Z = \sapp{a}{\sdag{a}} - \sapp{\sdag{a}}{a} =\mathbb{0} -  \mathbb{1}
\end{array}
\]
}

The second line above defines Pauli operations based on the two-dimensional lowering and raising operations above, with the matrix forms of the Pauli operations listed as the first line above.
Here, \(i\) is a complex amplitude value, and \(i a - i \sdag{a}\) is an abbreviation of \(\uapp{i}{a} + (\uapp{(\sminus i)}{\sdag{a}})\). 
Note that the meaning of Pauli groups in terms of second quantization differs from the gates in a quantum circuit, which provides a way of defining particle constraints by describing particle movements as spin rotations across \(X\), \(Y\), and \(Z\) bases.

\subsection{Linear Sums and Quantum Choices}\label{appx:linearsum}

A linear sum in a state connects kets in \Cref{fig:data} represents superposition state; the sum operation \(e_1 + e_2\) is reinterpreted a \emph{quantum choice} operation, in the sense that one can perform either \(e_1\) or \(e_2\) to the state \(\psi\) at the same time,
e.g., it might produce a superposition if \(\psi\) is a state just initialized as a ket.

When second-quantization users employ quantum choices to constrain a lattice-based molecular system, they typically have two intentions.
The \emph{first intention} is 
%similar to the choice selection in the equal-sum example above, and we intend 
to "create" a state going in two different directions simultaneously in a possible superposition.
For example, applying a quantum choice operation $a + \sdag{a}$ on a $t(4)$ particle state $\ket{1}$, with state \emph{normalization}, results in the following:

{\small
\begin{center}
$\frac{1}{\sqrt{2}}\ket{0}+\frac{1}{\sqrt{2}}\ket{2}$
\end{center}
}

In the Hubbard system $\hat{H}$, for each $j$-th site, the sum of the two terms, $\sdag{a}(j) \circ a(j\splus 1)$ and $\sdag{a}(j\splus 1) \circ a(j)$, model all the possibilities of electrons, being relocated to adjacent particle site. 
Normalization means that the norm of the amplitudes in different kets must be summed to $1$, referring to the total probabilities of different parts.
State normalization permits Hamiltonian simulation and quantum computing, while the non-normalized state plays a role in computing expectation values.
In the extended version of \qsnd, we permit a \cn{nor} operation to normalize a state. Quantum simulation is essentially defined as $\uapp{\eexp{\uapp{r}{\hat{H}}}}{\psi(0)}$ for arbitrary state $\psi(0)$, applying $\eexp{\uapp{r}{\hat{H}}}$ to a normalized state $\psi(0)$.
We show a theorem in \Cref{sec:theorem} to link simulations on normalized and non-normalized states.

The \emph{second intention} is to add repulsion, i.e., we view a quantum choice term as some extra force applied to the system to increase or decrease the probability of observing a basis vector state.
Recall that the normalization procedure ensures that a system's total probability is $1$. Conceptually, adding a quantum choice term for certain basis vectors with a positive or negative amplitude increases or reduces the choice of observing the basis vectors, respectively.
For example, the term $z_u \sum_{j} \mathbb{1}(j) \circ  \mathbb{1}(j\splus 1)$, in the Hubbard system (\Cref{sec:hubbard}), usually refers to the repulsion force, reducing the likelihood of the happening of some basis vectors. Let's name the above term as $\hat{H}_u$ and the term $\sminus z_t \sum_{j} (\sapp{\sdag{a}(j)}{ a(j\splus 2)} + \sapp{\sdag{a}(j\splus 2)}{ a(j)})$ as $\hat{H}_t$.
Notice that the complex amplitudes $\sminus z_t$ and $z_u$ have opposite signs.
Applying the whole Hamiltonian to a quantum state $\psi$ results in a quantum state of two pieces: $({\hat{H}_t}{\psi})+({\hat{H}_u}{\psi})$, connected by $+$. A non-zero ket in the state ${\hat{H}_u}{\psi}$ has an amplitude with an opposite sign as the same basis-vector amplitude in ${\hat{H}_t}{\psi}$, which means that the sum of their amplitudes is reduced; thus, the quantum choice term $\hat{H}_u$ acts as a penalty for a specific group of ket states.  
To see how this happens, realize that the Hubbard system describes $t(2)$ typed particles, so there are four cases of the basis vectors of two adjacent particles:

{\small
\begin{center}
$\dualb{}{0}{0} \qquad\qquad\dualb{}{0}{1} \qquad\qquad \dualb{}{1}{0} \qquad\qquad \dualb{}{1}{1}$
\end{center}
}

Assuming that we are dealing with a single particle site, the $\hat{H}_u$ term turns all basis vectors, except $\dualb{}{1}{1}$, to $\zero$, where $\dualb{}{1}{1}$ actually represents two electrons in opposite spins.
Assume that a single $t(2,2)$ typed particle state $\varphi$ contains the ket $\dualb{z}{1}{1}$, applying $\hat{H}_u$ to a state $\varphi$ results in a ket $\dualb{z_u * z}{1}{1}$.
When applying $\hat{H}_t$ to $\varphi$, we expect to have another ket $\dualb{\sminus z_t * z}{1}{1}$, with the same basis vector as above.
The sum of the two sub-parts turns the amplitude of the basis vector $\dualb{}{1}{1}$ to become $z*(z_u-z_t)$, with $\slen{z*(z_u-z_t)}\le\slen{z}$.
This is why the additional repulsion term acts as a penalty on the ket $\dualb{}{1}{1}$ to reduce its probability of happening.

\subsection{Energy Expectation Value Calculation.}\label{sec:groundenergy}

Other than Hamiltonian simulation, another key application is the energy state computation based on expectation value calculation, described above.
One of the central pieces in energy state computation is to find the exact ground state energy expectation value \(E_0\) and the state $\psi$ holding the energy value.
Based on the variational principle, the expectation value of the Hamiltonian over all possible states $\psi$ is minimized:

%\vspace{-0.3em}
{
\begin{center}
$
{\displaystyle E_0 = \min_{\psi} \uapp{\sdag{\psi}}{(\uapp{\hat{H}}{\psi}})}
$
\end{center}
}
%\vspace{-0.3em}

The exact ground state can be obtained by minimizing this expectation value over all possible normalized states.
In \qsnd, we compile the computation to HPC software through classical optimization techniques, such as gradient descent methods, to ``guess'' the minimal expectation value and its associated state of a given Hamiltonian.
Many HPC particle computation platforms use other sophisticated techniques based on the guessing-out strategy. For example, Gamess \cite{540141112,Zahariev2023} and NWChem \cite{10.1063/5.0004997} are HPC software to compute energy states of chemical molecule structures, and users can define the structures in second quantization in the software.
The quantum supremacy of the energy state calculation problems is murky, as some researchers \cite{Lee2023} do not believe in the existence of such supremacy.
Nevertheless, the typical way to calculate in quantum computers is through the variational quantum eigensolver concept \cite{TILLY20221}, having similar procedures as the HPC counterpart.

\subsection{Direct Sum and Quantum State and Hilbert spaces}\label{appx:directsum}

\Cref{sec:staterep} provides a Hilbert space state representation for \qsnd, this section expands the background of constructing high dimensional Hilbert spaces based on the direct sum concept.

Assuming that readers are familiar with \emph{inner product spaces}, vector spaces equipped with an inner product \cite[p.~167]{linear-algebra}.
A finite-dimensional \emph{Hilbert} space is simply a finite-dimensional inner product space \cite[p.~66]{mike-and-ike}.
The definition of Hilbert space is restricted in the infinite-dimensional case, but we are only concerned with finite dimensions in this work.
All of our Hilbert spaces will be denoted with the letter $\Hilb$ and will be over the field $\complex$ of complex numbers.

Dirac Bra-ket notation is a convenient notation for describing vectors and operators in Hilbert spaces.
A vector is written using ket notation, writing ``$\ket{Q}$'' rather than ``$\vec{x}_Q$'' used in other areas of mathematics.
The inner product of any vectors $\ket{\phi}$ and $\ket{\psi}$ is written $\ip{\phi}{\psi}$, and is linear in the second argument $\ket{\psi}$.
The outer product, written $\op{\phi}{\psi}$, is the linear operator defined such that $\ket{\phi}\!\!\ip{\psi}{\psi'} = \ip{\psi}{\psi'} \cdot \ket{\phi}$, where ``$\cdot$'' denotes scalar multiplication.
It is often useful to treat a ``ket'' $\ket{\psi}$ as a \emph{column vector} and a ``bra'' $\bra{\psi} = \ket{\psi}^\dagger$ as a \emph{row vector} (where ``$\dagger$'' denotes the conjugate transpose), and then both the inner and outer product are simply matrix multiplication.
To distinguish a quantum state with computational basis-vector, \qsnd write quantum states $\ket{\psi}$ and $\ket{\phi}$ as $\psi$ and $\phi$, we denote row vectors as $\sdag{\phi}$ and $\sdag{\psi}$. The inner product is written as $\phi \, \sdag{\psi}$.

%The trivial Hilbert space is the set $\{0\}$, a zero-dimensional Hilbert space with only one vector, the zero vector ($\zero$). The set $\complex$ forms a one-dimensional Hilbert space over the complex numbers, spanned by the one-element orthonormal basis $\{1\}$. The inner product of two complex numbers $z_1$ and $z_2$ is $z_1^{*} z_2$.

\begin{figure}[t]
{\small
\begin{align*}
	\ket{\phi} &\triangleq \begin{bmatrix} a \\ b \end{bmatrix} & \ket{\psi} &\triangleq \begin{bmatrix} c \\ d \\ e \end{bmatrix} \\
	\ket{\phi} \oplus \ket{\psi} &= \begin{bmatrix} a \\ b \\ c \\ d \\ e \end{bmatrix} & \ket{\phi} \otimes \ket{\psi} &= \begin{bmatrix} a c \\ a d \\ a e \\ b c \\ b d \\ b e \end{bmatrix} \\
\end{align*}
\[
\begin{aligned}
\begin{aligned}
	E\subcap{a} &\triangleq \begin{bmatrix} f & g & h & i \\ j & k & l & m \\ n & o & p & q \end{bmatrix} \\ E\subcap{b} &\triangleq \begin{bmatrix} r & s \\ t & u \\ v & w \end{bmatrix} \\
\end{aligned}
\qquad
	E\subcap{a} \oplus E\subcap{b} = \begin{bmatrix} f & g & h & i & 0 & 0 \\ j & k & l & m & 0 & 0 \\ n & o & p & q & 0 & 0 \\ 0 & 0 & 0 & 0 & r & s \\ 0 & 0 & 0 & 0 & t & u \\ 0 & 0 & 0 & 0 & v & w \end{bmatrix} \\
\end{aligned}
\]
\vspace{5mm}
\[
	E\subcap{a} \otimes E\subcap{b} = \begin{bmatrix} f r & f s & g r & g  s & h  r & h  s & i  r & i  s \\ f  t & f  u & g  t & g  u & h  t & h  u & i  t & i  u \\ f  v & f  w & g  v & g  w & h  v & h  w & i  v & i  w \\
j r & j s & k r & k  s & l  r & l  s & m  r & m  s \\ j  t & j  u & k  t & k  u & l  t & l  u & m  t & m  u \\ j  v & j  w & k  v & k  w & l  v & l  w & m  v & m  w \\
n r & n s & o r & o  s & p  r & p  s & q  r & q  s \\ n  t & n  u & o  t & o  u & p  t & p  u & q  t & q  u \\ n  v & n  w & o  v & o  w & p  v & p  w & q  v & q  w
	\end{bmatrix}
	\]
}
\caption{Matrix examples for direct sums and tensor products}
\label{fig:dsum-matrix}
\end{figure}

Our algebraic data types in this paper rely on two ways of composing Hilbert spaces: (external) direct sums $\oplus$ and tensor products $\otimes$.
Given two Hilbert spaces $\Hilb\subcap{m}$ and $\Hilb\subcap{n}$ with $\dim(\Hilb\subcap{m}) = M$ and $\dim(\Hilb\subcap{n}) = N$, the direct sum and tensor product are defined so that $\dim(\Hilb\subcap{m} \oplus \Hilb\subcap{n}) = M + N$ and $\dim(\Hilb\subcap{m} \otimes \Hilb\subcap{n}) = M \cdot N$.
These operators apply to Hilbert spaces, as well as vectors and operators within these spaces.
For example, if $\ket{\phi} \in \Hilb\subcap{m}$ and $\ket{\psi} \in \Hilb\subcap{n}$, then $\ket{\phi} \oplus \ket{\psi} \in \Hilb\subcap{m} \oplus \Hilb\subcap{n}$ and $\ket{\phi} \otimes \ket{\psi} \in \Hilb\subcap{m} \otimes \Hilb\subcap{n}$, and if $E\subcap{m} : \Hilb\subcap{m,1} \to \Hilb\subcap{m,2}$ and $E\subcap{n} : \Hilb\subcap{n,1} \to \Hilb\subcap{n,2}$ are linear operators, then $E\subcap{m} \oplus E\subcap{n} : \Hilb\subcap{m,1} \oplus \Hilb\subcap{n,1} \to \Hilb\subcap{m,2} \oplus \Hilb\subcap{n,2}$ is also a linear operator, as well as $E\subcap{m} \otimes E\subcap{n} : \Hilb\subcap{m,1} \otimes \Hilb\subcap{n,1} \to \Hilb\subcap{m,2} \otimes \Hilb\subcap{n,2}$.
%Following standard notational conventions, we sometimes write $\ket{\phi, \psi}$ for $\ket{\phi} \otimes \ket{\psi}$.

The tensor product will be familiar to anyone with a basic understanding of quantum computing.
It is used to describe \emph{joint states}.
If a qubit inhabits a two-dimensional Hilbert space $\Hilb_2$, and a qutrit inhabits a three-dimensional Hilbert space $\Hilb_3$, then the six-dimensional Hilbert space $\Hilb_2 \otimes \Hilb_3$ describes a physical system containing a qubit and a qutrit, possibly entangled.
In the context of programming languages, tensor products correspond nicely to \emph{product types}.

Qubit arrays serve as a sufficient physical model for most of the interesting results in quantum computing.
An $n$-qubit system is described by a $2^n$-dimensional Hilbert space, which is what allows quantum computing to achieve its famed exponential advantages.
For a programmer, however, this can be fairly limiting for expressiveness, effectively restricting one's data types to those whose cardinality is a power of two.
To allow for arbitrary finite data types, the direct sum is useful, serving as an ``\textsc{or}'' where the tensor product serves as ``\textsc{and}.''
The five-dimensional Hilbert space $\Hilb_2 \oplus \Hilb_3$ describes quantum states that are \emph{either} a qubit \emph{or} a qutrit, or a superposition of the two.

These constructions may be more intuitive given matrix representations.
The direct sum of two vectors is their concatenation, while the direct sum of two matrices is their block diagonalization, as shown by the examples in Figure~\ref{fig:dsum-matrix}.

Suppose Hilbert space $\Hilb\subcap{m}$ is spanned by an orthonormal basis $B\subcap{m}$, and Hilbert space $\Hilb\subcap{n}$ is spanned by an orthonormal basis $B\subcap{n}$.
Then $\Hilb\subcap{m} \oplus \Hilb\subcap{n}$ is spanned by an orthonormal basis $\{\ket{u} \oplus 0 : \ket{u} \in B\subcap{m}\} \cup \{ 0 \oplus \ket{v} : \ket{v} \in B\subcap{n}\}$, and $\Hilb\subcap{m} \otimes \Hilb\subcap{n}$ is spanned by an orthonormal basis $\{\ket{u} \otimes \ket{v} : \ket{u} \in B\subcap{m}, \ket{v} \in B\subcap{n} \}$.
The following identities may be useful for understanding operations on these spaces.

\begin{align*}
	\left( \ket{\phi_\zero} \oplus \ket{\psi_\zero} \right) + \left( \ket{\phi_\one} \oplus \ket{\psi_\one} \right) &= \left( \ket{\phi_\zero} + \ket{\phi_\one} \right) \oplus \left( \ket{\psi_\zero} + \ket{\psi_\one} \right) \\
	z \left( \ket{\phi} \oplus \ket{\psi} \right) &= z \ket{\phi} \oplus z \ket{\psi} \\
	\left( \ket{\phi} \oplus \ket{\psi} \right)^\dagger &= \bra{\phi} \oplus \bra{\psi} \\
	\left( \bra{\phi_\zero} \oplus \bra{\psi_\zero} \right) \left( \ket{\phi_\one} \oplus \ket{\psi_\one} \right) &= \ip{\phi_\zero}{\phi_\one} + \ip{\psi_\zero}{\psi_\one} \\
	(E\subcap{m} \oplus E\subcap{n}) \left(\ket{\phi} \oplus \ket{\psi} \right) &= E\subcap{m} \ket{\phi} \oplus E\subcap{n} \ket{\psi} \\
	z \ket{\phi} \otimes \ket{\psi} &= \ket{\phi} \otimes z \ket{\psi} \\
																	&= z \left( \ket{\phi} \otimes \ket{\psi} \right) \\
	\left( \ket{\phi} \otimes \ket{\psi} \right)^\dagger &= \bra{\phi} \otimes \bra{\psi} \\
	\left( \bra{\phi_\zero} \otimes \bra{\psi_\zero} \right) \left( \ket{\phi_\one} \otimes \ket{\psi_\one} \right) &= \ip{\phi_\zero}{\phi_\one} \cdot \ip{\psi_\zero}{\psi_\one} \\
	(E\subcap{m} \otimes E\subcap{n}) \left(\ket{\phi} \otimes \ket{\psi} \right) &= E\subcap{m} \ket{\phi} \otimes E\subcap{n} \ket{\psi}
\end{align*}

%Category theory can help describe the relationship between the direct sum and tensor product.
%Together, they make the category of finite-dimensional vector spaces and linear operators into a \emph{symmetric bimonoidal category} \cite[p.~130]{symmetric-bimonoidal}, also known as a ``rig category.''
%This means that the category of finite-dimensional vector spaces is a monoidal category with respect to both the direct sum and the tensor product, and there are isomorphisms with some well-characterized properties similar to what one would expect from addition and multiplication.
%In particular, there is a distributivity isomorphism from $\Hilb\subcap{a} \otimes \left(\Hilb\subcap{b} \oplus \Hilb\subcap{c}\right)$ to $\Hilb\subcap{a} \otimes \Hilb\subcap{b} \oplus \Hilb\subcap{a} \otimes \Hilb\subcap{c}$, mapping $\ket{a} \otimes \left( \ket{b} \oplus \ket{c} \right)$ to $\ket{a} \otimes \ket{b} \oplus \ket{a} \otimes \ket{c}$.

%Qunity's semantics is defined in terms of norm-non-increasing operators, operators $E$ such that $\mathbb{I} - E^\dagger E$ is a positive operator.
%It is not hard to see that these operators are closed under composition as well as the direct sum and tensor product, meaning that there is also a symmetric bimonoidal category %of finite-dimensional Hilbert spaces and norm-non-increasing operators.
%This is the category on which Qunity's pure denotational (categorical) semantics is defined.

\ignore{
\begin{figure}[t]
\vspace*{-0.5em}
    \includegraphics[width=0.6\textwidth]{simulation}
    \vspace*{-1em}
            \caption{Hamiltonian as a snapshot (left) and quantum simulation (unitary) as movie generation (right); $\cn{exp}$ turns a Hamiltonian to unitary and \cn{log} can perform a reverse.}\label{fig:ham-sim}   
\vspace*{-1.5em}
\end{figure}

\noindent{\textbf{\emph{Two Modes and Two Applications in Second Quantization.}}}
In second quantization, operations can be generalized to Hamiltonian $\hat{H}$, a Hermitian matrix ($\hat{H} = \sdag{\hat{H}}$), with two major applications: energy expectation value calculation and quantum Hamiltonian simulation.
Second quantization for particle systems is how scientific computing users in physics and chemistry think of programs; typically, they view a program having two modes, shown in \Cref{fig:ham-sim}, and the two modes corresponding mainly to two different applications below.

{\small
\begin{center}
$
E=\frac{
\sapp{\sdag{\psi}}{(\sapp{\hat{H}}{\psi})}
}
{\dabs{\psi}^2}
\qquad
\qquad
\psi(t) = \sapp{\eexp{\sapp{\hat{H}}{t}}}{\psi(0)}
$
\end{center}
}
%\vspace*{-0.6em}

One can define a Hamiltonian matrix $\hat{H}$ to describe static particle interaction information in the snapshot mode, capturing a moment that indicates potential particle movements, analogous to the left side of \Cref{fig:ham-sim}.
Defining the snapshot can help users compute the energy states of holding the particle structures in $\hat{H}$, the expectation value calculation (the left one above),
where ground energy state computation refers to finding the minimal expectation value among different choices of $\psi$ (\Cref{sec:qcompile}).
Additionally, one can perform Hamiltonian simulation, the simulation mode, by applying a matrix exponential to $\hat{H}$, as $\eexp{\sapp{\hat{H}}{t}}$ (can be understood as $\eexp{\sapp{t}{\hat{H}}}$), producing a unitary (executable in quantum computers). It is analogous to automatically generating a movie (right in \Cref{fig:ham-sim}), showing the trace particle movements propagating $\hat{H}$ in time $t$. Any circuit-based quantum computation can be interpreted as a form of Hamiltonian simulation.
One could also apply a logarithm to the unitary to get the intermediate snapshot for performing other applications.

Many applications can be formulated based on the two modes, e.g., adiabatic evolution \cite{farhi2000quantum} is a kind of Hamiltonian simulation, but its main purpose is to wait for the simulation to converge to a ground energy gate for eigenvalue computation (\Cref{sec:cases}).
In addition, many physical system users also intend to transform one particle system into another, analogous to compiling a program for another system to find solutions.
More details are in \Cref{sec:qcompile}.

%where they define physical systems in terms of Hamiltonians with applications, such as performing ground energy state computation and Hamiltonian simulation. 
%Lattice-based Hamiltonian matrices (\Cref{sec:background}) describe particle interactions by locating particles in a fixed-site structure formed as a lattice, and Hamiltonian simulation simulates the Hamiltonian's transition behavior for a quantum state over a period of time.
%In this program view, the quantum computer behavior is a procedure of a lattice-based particle system Hamiltonian simulation.
%A trivial generalization is to generalize the Hamiltonian simulator view of quantum computers as a particle system, describable by second quantization, a standard physical formalism for describing quantum particle movements. 

\noindent{\textbf{\emph{Quantum Measurement and Expectation Value Computation.}}}
In a quantum computer, a \emph{measurement} operation typically refers to computational basis measurement to extract classical information from a quantum state by collapsing the state to a basis state with a probability related to the value's amplitudes (\emph{measurement probability}), e.g., measuring $\frac{1}{\sqrt{2}}(\ket{0} + \ket{1})$ collapses the value to $\ket{0}$ with probability $\frac{1}{2}$,  and likewise for $\ket{1}$, returning classical value $0$ or $1$, respectively.
A more general Von Neumann measurement \cite{mike-and-ike} is to define a set of orthonormal basis $\{\psi_i\}$, and define the measurement operation based on two normalized state $\psi$ and $\psi'$ ($\dabs{\psi}=\dabs{\psi'}=1$), as $\slen{\sapp{\sdag{\psi}}{\psi'}}^2$, where $\psi$ is a linear sum of the above orthonormal bases, $\sdag{\psi}$ is a row vector, and $\sapp{\sdag{\psi}}{\psi'}$ computes an inner product.
This means the following: given the state $\psi'$, we measure the probability of projecting it to the state $\psi$.
For example, we can define Hadamard bases $\ket{\pm}=\frac{1}{\sqrt{2}}(\ket{0}\pm\ket{1})$; the probabilities of projecting $\ket{+}$ on to $\ket{0}$ and the vice versa case, $\slen{\sapp{{\bra{0}}}{\ket{+}}}^2$ and $\slen{\sapp{{\bra{+}}}{\ket{0}}}^2$ ($\sdag{\ket{d}} =\bra{d}$), are both half.
In quantum computing, $\psi$ can be achieved by applying a unitary $U$ to initial states $\ket{0}^{\otimes n}$.
if $\psi=U\ket{0}^{\otimes n}$, we can think of the energy computation as  $\slen{\sapp{\sdag{\psi}}{\psi'}}^2=\slen{\sapp{\bra{0}^{\otimes n}}{(\sapp{\sdag{U}}{\psi'})}}^2$,
i.e., we apply the inverse unitary of $U$ to $\psi'$ and see the probability of measuring (computational basis measurement) out of zero.
The above expectation value calculation can be thought of as computing the inner product $\sapp{\sdag{\psi}}{\psi'}$, where $\psi$ is normalized and $\psi'$ is the result of $\sapp{\hat{H}}{\psi}$, not necessary normalized. In \qsnd, we do not directly include a measurement operation but include inner products and normalization operations to permit the definition of expectation value calculation, whereas measurement output operations, i.e., the operations that produce the probability of projecting a quantum state $\psi$ onto $\psi'$,  can be defined on top of these operations.
}

\section{Error Bound Lemmas for Trotterization}\label{appx:trotterthm}

The Coq proof of the standard Trotterization relies on a key error bound lemma below.

\begin{lemma}
[Error bound for Trotterization]\label{thm:theorem_std_trotter}\rm 
Given $\hat{H}=\sum_{i=1}^N \hat{H}_i$ and time $r$, 
 we have error bound as below: 
\[
||\Pi_{i=1}^N \eexp{r \hat{H}_i} - \eexp{r \hat{H}}|| \leq 
\displaystyle { \frac{r^2}{2} \sum_{k=1}^N \left\lVert 
(\sum_{j=k+1}^N \hat{H}_j) \circ \hat{H}_k - \hat{H}_k \circ (\sum_{j=k+1}^N \hat{H}_j)
\right\rVert}.
\]
Here ||.|| denotes the operator norm.
\end{lemma}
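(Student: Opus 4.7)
The plan is to proceed by induction on $N$. The base case $N=1$ is trivial since both sides equal $e^{r\hat{H}_1}$. For the inductive step, I would split the global error into a ``head'' and a ``tail'' by writing
\[
\Pi_{i=1}^{N} e^{r\hat{H}_i} - e^{r\hat{H}} \;=\; e^{r\hat{H}_1}\bigl(\Pi_{i=2}^{N} e^{r\hat{H}_i} - e^{r S_1}\bigr) \;+\; \bigl(e^{r\hat{H}_1} e^{r S_1} - e^{r(\hat{H}_1+S_1)}\bigr),
\]
where $S_k := \sum_{j=k+1}^N \hat{H}_j$ so that $\hat{H} = \hat{H}_1 + S_1$. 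Applying the operator norm with the triangle inequality, the first summand is controlled by the inductive hypothesis applied to the operators $\hat{H}_2,\dots,\hat{H}_N$ (whose sum is $S_1$), using $\|e^{r\hat{H}_1}\| = 1$ by Hermiticity. This yields $\frac{r^2}{2}\sum_{k=2}^N \|S_k \hat{H}_k - \hat{H}_k S_k\|$. The second summand is exactly the $k=1$ term once I establish the two-operator Trotter bound.

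The central sub-lemma is thus: for Hermitian $A,B$, $\|e^{rA} e^{rB} - e^{r(A+B)}\| \le \tfrac{r^2}{2}\|AB - BA\|$. I would prove this by a variation-of-parameters argument. Let $g(s) := e^{sA} e^{sB}$ and $h(s) := e^{s(A+B)}$; both satisfy $g(0) = h(0) = I$. A direct computation gives $g'(s) = (A+B) g(s) + R(s) g(s)$ with $R(s) = e^{sA} B e^{-sA} - B$, while $h'(s) = (A+B) h(s)$. Solving the inhomogeneous ODE for $g - h$ yields
\[
g(r) - h(r) \;=\; \int_0^r e^{(r-u)(A+B)} R(u)\, g(u)\, du.
\]
For Hermitian $A, B$ all exponentials in sight are unitary, so $\|e^{(r-u)(A+B)}\| = \|g(u)\| = 1$. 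Furthermore, $R(u) = \int_0^u e^{sA}[A,B] e^{-sA}\,ds$, so $\|R(u)\| \le u\|[A,B]\|$. Integrating, $\|g(r) - h(r)\| \le \tfrac{r^2}{2}\|[A,B]\|$, giving the sub-lemma.

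Plugging $A = \hat{H}_1$, $B = S_1$ into the sub-lemma produces exactly $\tfrac{r^2}{2}\|S_1 \hat{H}_1 - \hat{H}_1 S_1\|$, which combines with the inductive contribution to complete the sum $\sum_{k=1}^N$. The main obstacle, I expect, will be the mechanization of the ODE / variation-of-parameters step in Rocq: the integral representations of matrix exponentials and the commutator identity $\tfrac{d}{ds}(e^{sA} B e^{-sA}) = e^{sA}[A,B] e^{-sA}$ require a reasonably developed library for operator-valued calculus, and care must be taken to handle the operator norm over $\mathbb{C}^{2^n \times 2^n}$ uniformly. Once that infrastructure is in place, the inductive gluing argument is routine, and Hermiticity (which we have in our typing discipline by \Cref{thm:type-progress-oqasm}) ensures the unit-norm bounds on all exponentials used above.
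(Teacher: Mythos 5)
Your proposal is correct and takes essentially the same route as the paper: the induction on $N$ unrolls to exactly the paper's telescoping decomposition, the unitarity of the peeled-off exponentials is used identically, and everything reduces to the same two-operator bound $\|e^{rA}e^{rB}-e^{r(A+B)}\|\le \tfrac{r^2}{2}\|AB-BA\|$. The only substantive difference is that you supply an explicit Duhamel/variation-of-parameters proof of that sub-lemma, which the paper merely asserts ``can be proved by an inductive proof''; your argument for it is sound (keeping in mind that $\eexp{rA}$ denotes $\exp(-irA)$, so the exponentials of Hermitian operators are indeed unitary).
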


The Coq proof scratch of \Cref{thm:theorem_std_trotter} is listed below.

\begin{proof} 
\begin{align*}
& \indent \Pi_{i=1}^N \eexp{r \hat{H}_i} - \eexp{r \hat{H}} \\
& = \Pi_{i=1}^N \eexp{r \hat{H}_i} - \eexp{r \sum_{j=N}^N H_j} \cdot \Pi_{i=1}^{N-1}\eexp{r \hat{H}_i}  \\
& \indent + \eexp{r \sum_{j=N}^N H_j} \cdot \Pi_{i=1}^{N-1}\eexp{r \hat{H}_i} 
 - \eexp{r \sum_{j=N-1}^N H_j} \cdot \Pi_{i=1}^{N-2}\eexp{r \hat{H}_i} \\
& \indent + ... \\
& \indent + \eexp{r \sum_{j=2}^N H_j} \cdot \Pi_{i=1}^{1}\eexp{r \hat{H}_i} 
- \eexp{r \hat{H}} 
\end{align*}

Using the triangle inequality of operator norm, we get 
\begin{align}
& \indent ||\Pi_{i=1}^N \eexp{r \hat{H}_i} - \eexp{r \hat{H}}|| \\
& \leq \sum_{k=1}^N || \eexp{r \sum_{j=k+1}^N \hat{H}_j} \cdot \Pi_{i=1}^k\eexp{r \hat{H}_i} 
 - \eexp{r \sum_{j=k}^N \hat{H}_j} \cdot \Pi_{i=1}^{k-1}\eexp{r \hat{H}_i} || \\
& \leq \sum_{k=1}^N || \eexp{r \sum_{j=k+1}^N \hat{H}_j} \cdot \eexp{r \hat{H}_k} 
 - \eexp{r \sum_{j=k}^N \hat{H}_j} ||  \\
&  \leq \frac{r^2}{2} \sum_{k=1}^N || (\sum_{j=k+1}^N \hat{H}_j) \circ \hat{H}_k
- \hat{H}_k \circ (\sum_{j=k+1}^N \hat{H}_j)||.
\end{align}
Step (2) is from the triangle inequality of operator norm; step (3) 
is because the norm of a unitary matrix is 1; step (4) can be proved by an inductive proof, which shows
\begin{equation}
    ||\eexp{rB} \circ \eexp{rA} - \eexp{r(A+B)}|| \leq \frac{r^2}{2} ||B \circ A - A 
    \circ B||.
\end{equation}
\end{proof}

The Coq proof for QDrift relies on a key error bound lemma below.
Here, we rely on the concept of a superoperator of a quantum state.
Quantum physicists have two main ways to mathematically represent a quantum state: a \emph{pure} state is represented by a state \emph{vector} as in \Cref{fig:data}, while a \emph{mixed} state is represented by a density \emph{operator} (or density \emph{matrix}).
A computation on a pure state is described by a linear operator from the input Hilbert space to the output Hilbert space, while a computation on mixed states is described by a \emph{superoperator}, a sort of higher-order function that acts as a linear operator from density operators to density operators.

\begin{lemma}
[Error bound for QDrift]\label{thm:theorem_qdrift}\rm 
Given $\hat{H}=\sum_{i=0}^{d\sminus 1} \theta_i \hat{H}_i$ and time $r$, where 
$\theta_i$ is positive and is decided by normalizing the terms such that $\max_i||H_i|| = 1$. 
Let $\lambda = \sum_i \theta_i$. We randomly sample $N$ times among the terms 
following the probability distribution that the frequency of $H_i$ in the sampled list is $\theta_i/\lambda$. Let $\Vec{j} = \{j_1, j_2, ..., j_N\}$ denote the order of the sampling where $j_k \in [1, d]$,
and create superoperator $L$ and $L_i$ for the Hamiltonion which satisfies that
for any quantum state $\psi$, with its density operator $\rho=\psi \, \sdag{\psi}$, $L(\rho) = -i(\hat{H}\rho - \rho \hat{H})$
and $L_i(\rho) = -i(\hat{H}_i\rho - \rho \hat{H}_i)$.
Then we can get the following error bound:
\[
||\Pi_{k=1}^N \eexp{\frac{\lambda r}{N} L_{j_k}} - \Pi^N \eexp{\frac{r}{N} L}||_\diamond \leq 
\displaystyle{ \frac{2\lambda^2 r^2}{N}}.
\]
Here $|| . ||_\diamond$ denotes the diamond norm.
\end{lemma}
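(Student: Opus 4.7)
The plan is to adapt Campbell's diamond-norm argument for QDrift to our setting, where the left-hand side of the bound must be interpreted as the diamond distance between (i) the channel $\mathbb{E}_{\vec j}[\Pi_{k=1}^N \eexp{(\lambda r/N) L_{j_k}}]$ obtained by averaging over the random sampling $\vec j$ and (ii) the ideal channel $\eexp{r L}$. I would carry out the argument in three steps that mirror the structure already used for \Cref{thm:theorem_std_trotter}, but lifted to superoperators measured in the diamond norm.

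First I would introduce the per-step averaged channel $\mathcal{E} := \sum_{i=0}^{d-1} (\theta_i/\lambda)\,\eexp{(\lambda r/N) L_i}$ and the per-step ideal channel $\mathcal{U} := \eexp{(r/N) L}$. A direct combinatorial identity for i.i.d.\ sampling from $\{(L_i, \theta_i/\lambda)\}$ gives $\mathbb{E}_{\vec j}[\Pi_{k=1}^N \eexp{(\lambda r/N) L_{j_k}}] = \mathcal{E}^N$, while $\mathcal{U}^N = \eexp{r L}$. This reduces the goal to bounding $\norm{\mathcal{E}^N - \mathcal{U}^N}_\diamond$.

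Second, I would Taylor-expand $\mathcal{E}$ and $\mathcal{U}$ in powers of $r/N$ and exploit the first-order cancellation: since $L = \sum_i \theta_i L_i$, we have $\sum_i (\theta_i/\lambda)(\lambda r/N) L_i = (r/N) L$, so the zeroth- and first-order terms of $\mathcal{E}$ and $\mathcal{U}$ coincide and $\mathcal{E} - \mathcal{U}$ is $O((r/N)^2)$. Using the normalization $\norm{\hat H_i} \le 1$, one gets $\norm{L_i}_\diamond \le 2$ and $\norm{L}_\diamond \le 2\lambda$, and summing the operator-valued Taylor tails from order two onward (weighted convexly by $\theta_i/\lambda$) yields the single-step estimate $\norm{\mathcal{E} - \mathcal{U}}_\diamond \le 2\lambda^2 r^2/N^2$.

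Third, I would close the argument with the telescoping identity $\mathcal{E}^N - \mathcal{U}^N = \sum_{k=0}^{N-1} \mathcal{E}^k \circ (\mathcal{E} - \mathcal{U}) \circ \mathcal{U}^{N-1-k}$, combined with diamond-norm submultiplicativity under composition and the fact that $\mathcal{E}^k$ and $\mathcal{U}^{N-1-k}$ are CPTP and hence diamond-norm contractive. This gives $\norm{\mathcal{E}^N - \mathcal{U}^N}_\diamond \le N \cdot \norm{\mathcal{E} - \mathcal{U}}_\diamond \le 2\lambda^2 r^2 / N$, which is exactly the stated bound. The main obstacle in the Coq development will be the single-step estimate $\norm{\mathcal{E}-\mathcal{U}}_\diamond \le 2\lambda^2 r^2/N^2$: it requires operator-valued power-series manipulations, termwise diamond-norm bounds via $\norm{L_i}_\diamond \le 2\norm{\hat H_i}$, and convergent tail estimates for a convex mixture of matrix exponentials. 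The telescoping and the linear-in-$N$ accumulation are essentially the superoperator-level analogue of the calculation in \Cref{thm:theorem_std_trotter}, but the supporting diamond-norm infrastructure (submultiplicativity and CPTP contractivity) may need to be developed as auxiliary lemmas if not already present in the Coq library.
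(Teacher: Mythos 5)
Your proposal follows essentially the same route as the paper's proof: you form the per-step averaged channel $\sum_i (\theta_i/\lambda)\,\cn{exp}(-i\tfrac{\lambda r}{N}L_i)$ and compare it to $\cn{exp}(-i\tfrac{r}{N}L)$ via Taylor expansion with first-order cancellation, obtain the single-step bound $2\lambda^2 r^2/N^2$ from $\norm{L_i}_\diamond \le 2$ and $\norm{L}_\diamond \le 2\lambda$, and accumulate over $N$ steps by telescoping with CPTP contractivity. Your write-up is somewhat more explicit than the paper's (notably in interpreting the left-hand side as an expectation over the sampled sequence and in spelling out the telescoping identity), but the decomposition and key estimates are identical.
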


The Coq proof scratch of \Cref{thm:theorem_qdrift} is listed below.

\begin{proof}

    Based on the definition of superoperators $L \rho = i(\hat{H}\rho - \rho \hat{H})$,  triangle inequality of diamond norm, and $|| \rho ||_\diamond = 1$,
    we can get $|| L ||_\diamond \leq || H || \leq 2\lambda$. 
    
    Similarly, as $\max_i||H_i|| = 1$, we have $|| L_i ||_\diamond \leq || H_i || \leq 2$. 

    In each step of QDrift, a randomly chosen gate is
    $\varepsilon = \sum_i \frac{\theta_i}{\lambda} \eexp{\frac{\lambda r}{N} L_i}$.

    It is compared to 
    $\mu = \eexp{r L/N}$.

    Using the expansion of $\eexp{r A} = \sum_{n=0}^\infty \frac{(i r)^n A^n}{n!}$,
we can get 
$d = || \mu - \varepsilon ||_\diamond \leq
2\sum_{n=2}^{\infty} \frac{1}{n!} (\frac{2\lambda r}{N})^n$.

Using $\sum_{n=2}^{\infty} \frac{x^n}{n!} \leq \frac{x^2}{2} e^x$,
we get $d \leq \frac{2\lambda^2r^2}{N^2} e^{2\lambda r/N} \simeq \frac{2\lambda^2r^2}{N^2}$.

Thus, the total error of the $N$ random sampling satisfies
$D \leq N d \leq  \frac{2\lambda^2r^2}{N} e^{2\lambda r/N} \simeq \frac{2\lambda^2r^2}{N}$.
So we get the claimed error bound.
\end{proof}

\section{Gate Syntheses in Analog Quantum Machines}\label{appx:gatesyn}

At the quantum computer machine level, every gate is implemented as the simulation of a gate Hamiltonian over a certain time period. To define a unitary gate \(U\), theoretically, we need to find its snapshot Hamiltonian by taking the logarithm $\elog{U}$.
Since $U$ is usually small, we typically perform reverse engineering by computing the eigendecomposition of \(U\) and forming a $t(2)$-typed Hamiltonian \(\hat{H}_U\) based on the eigenstates found through eigendecomposition, and simulates the gate based on \(\eexp{r\hat{H}_U}\) with respect to time $r$. We show below the simulation of a Hadamard (\cn{H}) and controlled-not (\cn{CX}) gate, with respect to time period \(\pi\).

{\small
\[
\hat{H}_{\cn{H}} = \frac{1}{2\sqrt{2}} \cdot (X+Z)
\qquad
\cn{H} = \eexp{\pi \hat{H}_{\cn{H}}}
\]
}

{\small
\[
\hat{H}_{\cn{H}} = (\frac{1}{2} \sminus \frac{\sqrt{2}}{4}) \cdot \mathbb{0} + (\frac{1}{2} \sminus \frac{\sqrt{2}}{4} ) \cdot \mathbb{1} + (\sminus \frac{\sqrt{2}}{4}) \cdot X
\qquad
\cn{H} = \eexp{\pi \hat{H}_{\cn{H}}}
\]
}

The above formula is the one-particle Hamiltonian for simulating a Hadamard gate (\cn{H}). To reverse engineer the Hamiltonian, we perform an eigendecomposition on Hadamard gates (\cn{H}), find an eigenstate \(\psi = -\sin(\frac{\pi}{8})\ket{0} + \cos(\frac{\pi}{8})\ket{1}\). We can then construct Hadamard Hamiltonians by the matrix multiplication $\sapp{\psi}{\sdag{\psi}}$, which is the \qsnd Hamiltonian $\hat{H}_{\cn{H}}$ above. The Hadamard gate (\cn{H}) can then be defined as a simulation of the Hamiltonian, as \(\cn{H} = \eexp{\pi \hat{H}_{\cn{H}}}\), with the time period \(\pi\).
Similarly, we can find the two-particle Hamiltonian for a controlled-not gate (\cn{CX}) through eigendecomposition and implement it in \qsnd, as shown below. The gate can then be defined as a simulation of the Hamiltonian as \(\cn{CX} = \eexp{\pi \hat{H}_{\cn{CX}}}\), with the time period \(\pi\).

{\small
\[
\hat{H}_{\cn{CX}} = \mathbb{1} \otimes (\sminus \frac{1}{2} \cdot X + \frac{1}{2} \cdot I)
\qquad
\cn{CX} = \eexp{\pi \hat{H}_{\cn{CX}}}
\]
}

In short, matrix exponential operations turn a Hamiltonian snapshot into a quantum system simulation, while a matrix logarithm operation, modeled in \qsnd, reverses a simulation to a Hamiltonian snapshot. For example, by applying a matrix logarithm on the \cn{H} and \cn{CX} gates, we might find a set of results, and \(\hat{H}_{\cn{H}}\) and \(\hat{H}_{\cn{CX}}\) are in the set \footnote{Matrix logarithms are not unique. \qsnd fixes the uniqueness by choosing one implementation. However, the \qsnd type system and equivalence relations are general enough without relying on the specific implementation. }.

\noindent{\textbf{\emph{Hamiltonian for Better Compilation.}}}
Finding the right target machine Hamiltonian to simulate a gate, i.e., gate synthesis, is usually referred to as control pulse level quantum programming \cite{Li_2022,Tacchino_2019} as control pulses can theoretically be described as \emph{target machine Hamiltonians}, and previous work \cite{10.1145/3632923} has discovered the advantage of compiling system Hamiltonians directly to target machine Hamiltonians by bypassing the series of simulations of individual gates.
To see how this bypassing works, we simulate a simple $1D$ Ising model system, describing phase transition and quantum magnetism behaviors in insulating magnetic systems.

{\small
\[
\hat{H}_{I2} = \sum_{j} \sapp{Z(j)}{Z(j\splus 1)} + z_h \sum_j X(j)
\tag{3.3.2}\label{eq:hi1}
\]
}

The $\hat{H}_{I2}$ equation is a special $1D$ Ising model with uniformed interaction strength, where the first terms in the quantum choice represent the particle interactions, while the second terms represent effects from the external transverse magnetic field, with $z_h$ being the external transverse magnetic field strength.
Simulating the system with respect to a time period $r$ is equivalent to compiling the exponent $\eexp{r\hat{H}_{I2}}$ to quantum computers. The main step involves the Trotterization procedure, a technique that splits $\hat{H}_{I2}$ into small Hamiltonians, executed in sequence, each of which is implementable in a target machine Hamiltonian.
One small target machine Hamiltonian is $\sapp{Z(j)}{Z(j\splus 1)}$, applying on $j$-th and $j\splus 1$-th qubits, with respect to a very small time period $\frac{r}{n}$, i.e., Trotterization tries to find a large number $n$ so the simulation of many small time periods $\frac{r}{n}$ can correctly approximate the simulation of the original system Hamiltonian.
The result of gate synthesizing on the machine Hamiltonian, as to find gates representing $\eexp{\sapp{\frac{r}{n}{Z(j)}}{Z(j\splus 1)}}$, is shown as the left-hand side circuit in \Cref{fig:zz-inter}, performing the circuit program, performing two controlled-not gates and an $\cn{Rz}(2\frac{r}{n})$ gate in the middle for the $j$-th and $j\splus 1$-th qubits,
i.e., $\cn{Rz}(\theta)$ is an $z$-axis phase rotation gate.
As we mentioned above, these quantum gates are eventually simulated as Hamiltonians in pulse-level programming.

On the other hand, instead of performing the sequence of the three gate simulations above,
one can directly implement the small Hamiltonian in an IBM machine, which supports direct pulse level programming and provides the $ZZ$ interaction gate shown as the right in \Cref{fig:zz-inter}; the special gate essentially simulates the target machine Hamiltonian $\sapp{Z(j)}{Z(j\splus 1)}$ in a much more effective way than the above gate synthesis step.
The bypass demonstrates the potential for improving compilation effectiveness.
In many cases, even if one is willing to implement system Hamiltonians as circuit gates, it is still valuable for them to notice the potential optimization done through the above procedure, definable in \qsnd, as many optimization frameworks \cite{VOQC,10.1145/3519939.3523433} decompose a $ZZ$ gate as the three gates above, which is counterproductive.

\section{Perturbative Gadget}\label{appx:gadget}

%Here, we describe the algorithm step by step.

\ignore{

\begin{algorithm}[t]
\caption{Perturbative Gadget}\label{fig:gadget}
\begin{algorithmic}[1]
\STATE \textbf{Input:} A \(k\)-local Hamiltonian \( \hat{H} = \sum_{j=1}^{N} r_j \hat{H}_j \).
\STATE \textbf{Output:} A two-local Hamiltonian \( H^{\cn{gad}} \) simulating the original \(k\)-local Hamiltonian.

\STATE \textbf{Step 1: Introduce Ancilla Qubits}
\FOR{each term \( \hat{H}_j \) in \( \hat{H} \)}
    \STATE Introduce \( k \) ancilla qubits for the \( j \)-th term.
\ENDFOR

\STATE \textbf{Step 2: Construct the Gadget Hamiltonian \( \hat{H}^{\cn{gad}} \)}
\FOR{each term \( \hat{H}_j \) in \( \hat{H} \)}
    \STATE Construct ancilla interaction terms: \( \hat{H}^{\cn{anc}} = \sum_{1 \leq i < j \leq k} \frac{1}{2} \left( I - Z(i)(j) \circ Z(k)(j) \right) \)
    \STATE Construct interaction terms between computational qubits and ancillas: \( \hat{H}^{\cn{v}}_j = \sum_{m=1}^{k} r_{j} \pau_{j,m}(j) \circ X(j)(m) \)
    \STATE Form partial gadget Hamiltonian: \( \hat{H}_j^{\cn{gad}} = \hat{H}^{\cn{anc}} + \hat{H}^{\cn{v}}_j \)
\ENDFOR

\STATE Form full gadget Hamiltonian: \( \hat{H}^{\cn{gad}} = \sum_{j=1}^{N} \hat{H}_j^{\cn{gad}} \)

\STATE \textbf{Step 3: Apply Perturbation Theory}
\STATE Apply perturbation theory to \( \hat{H}^{\cn{gad}} \) to obtain the effective low-energy Hamiltonian approximating \( \hat{H} \).

\STATE \textbf{Step 4: Output}
\STATE Return \( \hat{H}^{\cn{gad}} \) as the two-local Hamiltonian simulating \( \hat{H} \).
\end{algorithmic}
\end{algorithm}

In the algorithm, the $z_{\lambda}$ value selection is a component that depends on the specific applications. Here, we assume that such $z_{\lambda}\in[0,1)$ exists and is a real number. Our correctness statement depends on the $z_{\lambda}$ selections, which are proved in Coq.

\begin{lemma}[Perturbative Gadget Correctness]\label{thm:gadget-good}\rm 
Given $\hat{H}=\sum_j \bigotimes_{k=0}^n \alpha(j,k)$, typed as $\quan{F}{\hmx}{\bigotimes^n t(2)}$, via the algorithm in \Cref{fig:gadget}, we produce $\hat{H'} = \sum_j  (z(j)\hat{H}_1(j) + \hat{H}_2)$, for every $r$ and state $\psi$, we add $k*j$ ancilla $t(2)$ qubits to $\psi$ and transform it to $\psi'$,
Let $\psi_1 = \denote{\tjudge{\bigotimes^n t(2)}{\eexp{\hat{H} r}}{\quan{F}{\umx}{\bigotimes^n t(2)}}}_0(\psi)$ and $\psi'_1=\denote{\tjudge{\bigotimes^n t(2)}{\eexp{\hat{H'} r}}{\quan{F}{\umx}{\bigotimes^n t(2)}}}_0(\psi')$, by transforming $\psi_1$ to $\psi_2$ via the same ancilla qubit addition above as $\psi_2$, we show that $\dabs{\psi_2} - \dabs{\psi'_1}$ is bound by $O(z_{\lambda}^4)$.
\end{lemma}

% Gadget example for r=2 and k=4
\myparagraph{Example: Gadget Hamiltonian for \( r = 2 \), \( k = 4 \)}

We explicitly construct the perturbative gadget Hamiltonian corresponding to a target composite Hamiltonian with \( r = 2 \) terms, each acting on \( k = 4 \) qubits. The full gadget Hamiltonian consists of interaction and ancillary parts, designed to replicate the effect of the target Hamiltonian in a low-energy subspace.

\subsection*{Target Composite Hamiltonian}

The target Hamiltonian is defined as:
\[
H_{\text{comp}} = c_1 H_1 + c_2 H_2,
\]
where each local term \( H_s \) is given by a tensor product of Pauli operators:
\[
H_1 = \sigma_{1,1} \otimes \sigma_{1,2} \otimes \sigma_{1,3} \otimes \sigma_{1,4}, \quad
H_2 = \sigma_{2,1} \otimes \sigma_{2,2} \otimes \sigma_{2,3} \otimes \sigma_{2,4}.
\]

\subsection*{Gadget Hamiltonian}

The corresponding gadget Hamiltonian takes the form:
\[
H_{\text{gad}} = H_1^{\text{anc}} + H_2^{\text{anc}} + V_1 + V_2,
\]
where \( H_s^{\text{anc}} \) encodes the energy penalty terms enforcing consistency constraints, and \( V_s \) introduces perturbative couplings between the target and auxiliary systems.

\subsubsection*{Ancillary Terms}

Each \( H_s^{\text{anc}} \) is defined as:
\[
H_s^{\text{anc}} = \frac{1}{2} \sum_{1 \leq i < j \leq 4} \left( I - Z_{s,i} Z_{s,j} \right),
\]
which we enumerate explicitly as follows:

\textbf{For \( s = 1 \):}
\begin{align*}
H_1^{\text{anc}} = \frac{1}{2} \Big(&
(I - Z_{1,1}Z_{1,2}) +
(I - Z_{1,1}Z_{1,3}) +
(I - Z_{1,1}Z_{1,4}) + \\
& (I - Z_{1,2}Z_{1,3}) +
(I - Z_{1,2}Z_{1,4}) +
(I - Z_{1,3}Z_{1,4}) \Big)
\end{align*}

\textbf{For \( s = 2 \):}
\begin{align*}
H_2^{\text{anc}} = \frac{1}{2} \Big(&
(I - Z_{2,1}Z_{2,2}) +
(I - Z_{2,1}Z_{2,3}) +
(I - Z_{2,1}Z_{2,4}) + \\
& (I - Z_{2,2}Z_{2,3}) +
(I - Z_{2,2}Z_{2,4}) +
(I - Z_{2,3}Z_{2,4}) \Big)
\end{align*}

\subsubsection*{Interaction Terms}

The interaction Hamiltonians \( V_s \) couple the auxiliary and target degrees of freedom. Each is defined as:
\[
V_s = \sum_{j=1}^{4} c_{s,j} \, \sigma_{s,j} \otimes X_{s,j},
\]
where the coefficients \( c_{s,j} \) are:
\[
c_{s,j} =
\begin{cases}
c_s, & \text{if } j = 1, \\
1, & \text{otherwise}.
\end{cases}
\]

Explicitly, we have:

\textbf{For \( s = 1 \):}
\[
V_1 =
c_1 \, \sigma_{1,1} \otimes X_{1,1} +
\sigma_{1,2} \otimes X_{1,2} +
\sigma_{1,3} \otimes X_{1,3} +
\sigma_{1,4} \otimes X_{1,4}
\]

\textbf{For \( s = 2 \):}
\[
V_2 =
c_2 \, \sigma_{2,1} \otimes X_{2,1} +
\sigma_{2,2} \otimes X_{2,2} +
\sigma_{2,3} \otimes X_{2,3} +
\sigma_{2,4} \otimes X_{2,4}
\]
}

Below, we show the perturbative gadget algorithm.

\begin{definition}[Perturbative Gadget]\label{def:pgadget}
Let \( \hat{H}^{\cn{target}}  \) be a \( k \)-local Hamiltonian acting on \( n \) qubits, expressed as

\[
\hat{H}^{\cn{target}}  = \sum_{j=1}^N r_j \hat{P}_j,
\]

where each \( \hat{P}_j \) is a tensor product of \( k \) single-qubit Pauli operators acting on a subset of the \( n \) computational qubits:

\[
\hat{P}_j = \pau_{j,1} \otimes \pau_{j,2} \otimes \cdots \otimes \pau_{j,k}.
\]

To simulate \( \hat{H}^{\cn{target}}  \) using only two-local interactions, we construct a \emph{gadget Hamiltonian} \( \hat{H}^{\cn{gad}} \), acting on the original \( n \) computational qubits together with \( N k \) ancilla qubits. The gadget Hamiltonian is defined as:
\[
\hat{H}_{\cn{gad}} = \sum_{j=1}^{N} \hat{H}_j^{\text{anc}} + \lambda \sum_{j=1}^{N} \hat{H}^{\cn{V}}_j,
\]
where
\[
\hat{H}_j^{\text{anc}}  = \frac{1}{2} \sum_{1 \leq m < n \leq k} \left( I - Z(m(j)) \circ Z(n(j)) \right),
\]
and
\[
\hat{H}^{\cn{v}}_j = \sum_{n=1}^{N} r_{j,n} \, \pau_{j,n}(n) \circ X(n(j)),
\quad \text{with} \quad
r_{j,n} =
\begin{cases}
r_j, & \text{if } n = 1, \\
1, & \text{otherwise}.
\end{cases}
\]

Here, \( X(n(j)) \) and \( Z(n(j)) \) are Pauli operators acting on the \( n \)-th ancilla qubit associated with the \( j \)-th term. This construction ensures that all terms in \( \hat{H}^{\cn{gad}} \) involve only two-qubit interactions.
\end{definition}

\ignore{
\begin{definition}[Gadget Hamiltonian]
Let \( H_{\text{target}} \) be a \( k \)-local Hamiltonian acting on \( n \) qubits, expressed as
\[
H_{\text{target}} = \sum_{s=1}^r c_s H_s,
\]
where each \( H_s \) is a tensor product of \( k \) single-qubit Pauli operators acting on a subset of the \( n \) computational qubits:
\[
H_s = \sigma_{s,1} \otimes \sigma_{s,2} \otimes \cdots \otimes \sigma_{s,k}.
\]
Each local operator is of the form
\[
\sigma_{s,j} = \hat{n}_{s,j} \cdot \vec{\sigma}_{s,j},
\]
where \( \hat{n}_{s,j} \in \mathbb{R}^3 \) is a unit vector, and \( \vec{\sigma}_{s,j} = (X, Y, Z) \) denotes the Pauli vector acting on the relevant qubit.

To simulate \( H_{\text{target}} \) using only two-local interactions, we construct a \emph{gadget Hamiltonian} \( H_{\text{gad}} \), acting on the original \( n \) computational qubits together with \( rk \) ancilla qubits. The gadget Hamiltonian is defined as:
\[
H_{\text{gad}} = \sum_{s=1}^{r} H_s^{\text{anc}} + \sum_{s=1}^{r} V_s,
\]
where
\[
H_s^{\text{anc}} = \frac{1}{2} \sum_{1 \leq i < j \leq k} \left( I - Z_{s,i} Z_{s,j} \right),
\]
and
\[
V_s = \sum_{j=1}^{k} c_{s,j} \, \sigma_{s,j} \otimes X_{s,j},
\quad \text{with} \quad
c_{s,j} =
\begin{cases}
c_s, & \text{if } j = 1, \\
1, & \text{otherwise}.
\end{cases}
\]

Here, \( X_{s,j} \) and \( Z_{s,j} \) are Pauli operators acting on the \( j \)-th ancilla qubit associated with the \( s \)-th term. This construction ensures that all terms in \( H_{\text{gad}} \) involve only two-qubit interactions.
\end{definition}
}

We show below the two lemmas for perturbative gadget, which are mechanized in \rocq.

\begin{lemma}[Eff Hamiltonian Existance]\label{thm:eff}\rm
Let \( \hat{H}^{\cn{target}} \) and \( \hat{H}^{\cn{gad}} \) be defined as in \Cref{def:pgadget}, and suppose the perturbation strength satisfies \( \lambda \leq \lambda_{\max} \), where
\[
\lambda_{\max} = \frac{k - 1}{4} \left( \sum_{j=1}^{N} |r_j| + N(k - 1) \right)^{-1}.
\]
Then, there exists a function \( f(\lambda) = O(\mathrm{poly}(\lambda)) \) and a constant \( \Xi = O(\mathrm{poly}(k)) \) such that the effective Hamiltonian satisfies
\[
\hat{H}^{\cn{eff}}(\hat{H}^{\cn{gad}}, 2n) = \frac{\lambda^k}{\Xi} \, \hat{H}^{\cn{target}}  \otimes \ket{0}\bra{0}^{\otimes rk} + f(\lambda)\, \Pi + O(\lambda^{k+1}),
\]
where \( \Pi \) is the projector onto the support of \( \hat{H}^{\cn{eff}}(\hat{H}^{\cn{gad}}, 2n) \).
\end{lemma}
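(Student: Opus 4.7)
The strategy is the standard perturbative-gadget argument of Jordan--Farhi, carried out in a form suitable for mechanization. First I would split the gadget Hamiltonian into an unperturbed part $\hat{H}_0 = \sum_{j=1}^N \hat{H}_j^{\cn{anc}}$ and a perturbation $\lambda \hat{V} = \lambda \sum_{j=1}^N \hat{H}^{\cn{v}}_j$. The operator $\hat{H}_0$ is diagonal in the computational basis of the ancillas, and on each $k$-qubit ancilla block associated with term $j$ its ground space is spanned exactly by $\ket{0}^{\otimes k}$ and $\ket{1}^{\otimes k}$ (all other bitstrings incur an energy penalty proportional to the number of mismatched pairs). The low-energy projector on the full ancilla register is therefore $\Pi = \bigotimes_{j=1}^N (\ketbra{0}{0}^{\otimes k} + \ketbra{1}{1}^{\otimes k})$, tensored with the identity on the computational qubits, with spectral gap $\Delta \ge 1$ between ground and excited subspaces.

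Second, I would invoke the resolvent-based self-energy expansion of the effective Hamiltonian:
\[
\hat{H}^{\cn{eff}}(z) \;=\; \Pi \hat{H}_0 \Pi \;+\; \sum_{m=1}^{\infty} \lambda^m \, \Pi \hat{V}\bigl(G_0(z)\hat{V}\bigr)^{m-1}\Pi,
\]
where $G_0(z) = (z - Q \hat{H}_0 Q)^{-1}$ is the resolvent restricted to the high-energy subspace $Q = I - \Pi$. The convergence of this series in the regime $\lambda \le \lambda_{\max}$ follows from the geometric bound $\lVert \lambda \hat{V} \rVert \,\lVert G_0 \rVert \le \lambda\bigl(\sum_j |r_j| + N(k-1)\bigr)\cdot \Delta^{-1} < 1$, which is precisely the form of $\lambda_{\max}$ stated. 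Orders $1 \le m < k$ cannot connect $\ket{0}^{\otimes k}$ to itself nontrivially on any ancilla block because each application of $\hat{H}^{\cn{v}}_j$ flips exactly one ancilla bit via $X(n(j))$; the only closed walks of length $<k$ return to $\ket{0}^{\otimes k}$ without distinguishing the $\pau_{j,n}$ factors, so these contributions collapse to scalars on the code space and may be absorbed into the $f(\lambda)\Pi$ term.

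Third, and this is the main computation, I would evaluate the $m = k$ contribution. The only walks that survive and are nontrivial on the computational qubits are those that flip each of the $k$ ancillas in block $j$ exactly once (any order) and return to $\ket{0}^{\otimes k}$; summing over the $k!$ permutations with the corresponding resolvent denominators yields a combinatorial constant $\Xi = O(\mathrm{poly}(k))$ independent of $j$, and the product of $X$-flips forces the accompanying computational-side Pauli factors to multiply out to $r_j \bigotimes_n \pau_{j,n}$. Summing over $j$ reproduces $\hat{H}^{\cn{target}}$ with coefficient $\lambda^k/\Xi$, tensored with $\ketbra{0\cdots 0}{0\cdots 0}^{\otimes Nk}$ on the ancillas. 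Finally, the tail $\sum_{m > k} \lambda^m (\cdots)$ is geometrically bounded by $O(\lambda^{k+1})$ using the same resolvent estimate as above.

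\textbf{Main obstacle.} The combinatorial identification of $\Xi$ in the $k$-th order term is the delicate step: one must enumerate all length-$k$ closed walks on a hypercube edge-graph restricted to a single ancilla block, weight each by a product of resolvent energy denominators (which depend on the intermediate Hamming weights of the ancilla state), and show that the resulting sum equals a closed-form polynomial in $k$ while factoring cleanly as $\Xi^{-1} \cdot r_j \bigotimes_n \pau_{j,n}$. Mechanizing this will likely require a dedicated lemma about sums over permutations of $[1,k]$ weighted by harmonic-like denominators, together with an induction on $k$ to collapse the lower-order closed-walk contributions into the $f(\lambda)\Pi$ term so that they do not leak off-diagonal pieces into the effective Hamiltonian on the code space.
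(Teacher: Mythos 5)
The paper prints no prose proof of this lemma---it is deferred to the Rocq mechanization and to the cited perturbative-gadget reference---so your proposal can only be judged against that standard argument, which is indeed the route you take: split $\hat{H}^{\cn{gad}}$ into the ancilla penalty part $\hat{H}_0=\sum_j\hat{H}_j^{\cn{anc}}$ plus $\lambda\hat{V}$, expand the self-energy via the restricted resolvent, kill orders $1\le m<k$ on the code space, extract the target at order $k$, and bound the tail geometrically. The skeleton is right, and your treatment of the low orders (each ancilla bit flipped an even number of times forces the accompanying computational-side factors to square to $I$, so those walks contribute scalars absorbed into $f(\lambda)\Pi$) is sound.

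However, your central order-$k$ computation contains a genuine inconsistency. You say the surviving walks ``flip each of the $k$ ancillas in block $j$ exactly once (any order) and return to $\ket{0}^{\otimes k}$.'' Flipping each of $k$ bits exactly once maps $\ket{0}^{\otimes k}$ to $\ket{1}^{\otimes k}$, not back to $\ket{0}^{\otimes k}$; a length-$k$ closed walk at $\ket{0}^{\otimes k}$ requires every bit to be flipped an even number of times, and such walks carry only trivial Pauli products, so your computation as written yields a vanishing (or purely scalar) $\ket{0}^{\otimes k}\!\to\!\ket{0}^{\otimes k}$ matrix element at order $k$ and fails to produce $r_j\bigotimes_n\pau_{j,n}$. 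The term that actually survives is the \emph{off-diagonal} element $\ketbra{1\cdots1}{0\cdots0}$ (plus its adjoint) inside the two-dimensional ground space $\mathrm{span}\{\ket{0}^{\otimes k},\ket{1}^{\otimes k}\}$ of each ancilla block, and passing from that to the diagonal form $\hat{H}^{\cn{target}}\otimes\ketbra{0}{0}^{\otimes rk}$ asserted in the lemma needs an additional step you omit: diagonalizing the ancilla ground space in the GHZ-like $\pm$ basis (equivalently, restricting to the appropriate symmetric sector), which is also where the sign conventions and the constant $\Xi$ are fixed. Relatedly, you quote a gap $\Delta\ge1$, but the gap of each $\hat{H}_j^{\cn{anc}}$ is $k-1$ (an intermediate ancilla state of Hamming weight $w$ has energy $w(k-w)$), and it is this value, combined with the standard $\lVert\lambda\hat{V}\rVert<\Delta/4$ convergence criterion, that yields the stated $\lambda_{\max}=\tfrac{k-1}{4}\bigl(\sum_j|r_j|+N(k-1)\bigr)^{-1}$; with $\Delta=1$ your bound does not match the lemma. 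Those same denominators $i(k-i)$ are what make your permutation sum for $\Xi$ tractable, since every ordering of the $k$ flips visits the identical sequence of intermediate energies.
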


\begin{lemma}[Perturbative Gadget Correctness]\label{thm:gadgetcorrect}\rm
Let \( \hat{H}^{\cn{target}}  \) and \( \hat{H}^{\cn{gad}} \) be as in \Cref{thm:eff}. Then there exists a perturbation threshold \( \lambda^* \), with
\[
\lambda^* \ll \lambda_{\max} \quad \text{and} \quad \lambda^* \ll \frac{E_1^{\cn{target}} - E_0^{\cn{target}}}{\Xi \| O_{\cn{err}} \|},
\]
such that for all \( \lambda \leq \lambda^* \), the ground states \( \psi_0 = |\psi_0\rangle\langle\psi_0| \) of \( \hat{H}^{\cn{target}}  \) and \( \phi_0 = |\phi_0\rangle\langle\phi_0| \) of \( \hat{H}^{\cn{gad}} \) satisfy:
\[
\left\| \psi_0 - \mathrm{Tr}_{\text{aux}}[\phi_0] \right\|_2 = O(\lambda),
\]
where \( \mathrm{Tr}_{\text{aux}}[\cdot] \) denotes the partial trace over all ancilla qubits.
\end{lemma}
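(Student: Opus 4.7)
The plan is to chain together two perturbative approximations: (i) the Schrieffer--Wolff style projection that relates the true ground state $\phi_0$ of $\hat{H}^{\cn{gad}}$ to its restriction onto the low-energy sector where every ancilla block is in the state $\ket{0}^{\otimes k}$, and (ii) the effective-Hamiltonian identity of Lemma~\ref{thm:eff}, which tells us what that restricted operator looks like up to $O(\lambda^{k+1})$ corrections. The overall strategy is to first choose $\lambda^*$ small enough that both approximations are simultaneously controlled, then combine their error bounds to recover the claimed $O(\lambda)$ trace-distance estimate.

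The first step is to isolate the unperturbed low-energy subspace. At $\lambda = 0$ the gadget Hamiltonian reduces to $\sum_j \hat{H}_j^{\text{anc}}$, whose kernel is exactly $\mathcal{H}_{\text{comp}} \otimes \Pi_0$, where $\Pi_0 = \ketbra{0}{0}^{\otimes Nk}$ projects each ancilla block onto its fully-aligned state; all other eigenvalues of $\sum_j \hat{H}_j^{\text{anc}}$ are bounded below by $1$. I would then invoke standard degenerate perturbation theory (Bloch expansion, or equivalently a Schrieffer--Wolff transformation truncated at order $k$) to produce a unitary $U_\lambda = I + O(\lambda)$ such that $U_\lambda^\dagger \hat{H}^{\cn{gad}} U_\lambda$ is block-diagonal with respect to the $\Pi_0$ splitting, and whose low-energy block coincides with $\hat{H}^{\cn{eff}}(\hat{H}^{\cn{gad}}, 2n)$ given by Lemma~\ref{thm:eff}.

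Next I would pick $\lambda^*$ explicitly as the minimum of $\lambda_{\max}$ (so Lemma~\ref{thm:eff} applies) and a constant multiple of $(E_1^{\cn{target}} - E_0^{\cn{target}})/(\Xi \norm{O_{\cn{err}}})$. Under this choice, the effective-Hamiltonian identity rescales the target gap by $\lambda^k/\Xi$ while the residual $O(\lambda^{k+1})$ is a strictly lower-order perturbation; the constant shift $f(\lambda)\Pi$ is irrelevant for eigenvectors. Applying Davis--Kahan (or the analogous spectral-perturbation lemma we have mechanized for gapped Hermitian operators) to $\hat{H}^{\cn{eff}}$ then yields that its ground state $\ket{\tilde\phi_0}$ differs from $\ket{\psi_0} \otimes \ket{0}^{\otimes Nk}$ by $O(\lambda^{k+1}/(\lambda^k \cdot \text{gap})) = O(\lambda)$ in $2$-norm. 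Lifting back via $U_\lambda$ gives $\ket{\phi_0} = U_\lambda (\ket{\psi_0} \otimes \ket{0}^{\otimes Nk}) + O(\lambda)$, and since $U_\lambda = I + O(\lambda)$, taking the partial trace over the ancillas produces $\mathrm{Tr}_{\text{aux}}[\phi_0] = \psi_0 + O(\lambda)$ in trace norm, from which the stated $2$-norm bound follows.

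The main obstacle, and the step I expect to consume most of the Rocq development, is formalizing the Schrieffer--Wolff bound that $U_\lambda - I = O(\lambda)$ in operator norm while simultaneously controlling the block-off-diagonal remainder. This requires an infrastructure for spectral gaps, resolvent integrals, and norm-convergent operator series that is substantially heavier than anything used in the preceding pipeline; in particular, the Davis--Kahan step must be instantiated for a Hamiltonian whose gap shrinks like $\lambda^k$, so the constants must be tracked carefully to ensure they collapse into the single $O(\lambda)$ bound once $\lambda \leq \lambda^*$. Everything else --- the partial trace, the combination of error terms, and the final conversion between $2$-norm and trace-norm estimates --- reduces to routine lemmas already available in our library.
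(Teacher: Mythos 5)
The paper offers no written argument for this lemma: both \Cref{thm:eff} and \Cref{thm:gadgetcorrect} are simply stated and asserted to be ``mechanized in \rocq,'' so your proposal can only be judged against the standard perturbative-gadget analysis (Jordan--Farhi, the paper's cited reference) that the formalization presumably follows. At that level your architecture is the right one --- block-diagonalize $\hat{H}^{\cn{gad}}$ around the ancilla ground space, invoke \Cref{thm:eff} to identify the low-energy block, apply a gapped eigenvector-perturbation bound to the rescaled operator, and finish with the partial trace --- and your tracking of the shrinking gap $\lambda^k(E_1^{\cn{target}}-E_0^{\cn{target}})/\Xi$ against the $O(\lambda^{k+1})$ residual correctly explains where the second threshold condition on $\lambda^*$ comes from.

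There is, however, a concrete error in your first step. The kernel of $\hat{H}_j^{\text{anc}} = \tfrac{1}{2}\sum_{m<n}\left(I - Z(m(j))\circ Z(n(j))\right)$ is \emph{two}-dimensional per ancilla block, spanned by $\ket{0}^{\otimes k}$ \emph{and} $\ket{1}^{\otimes k}$ (any fully aligned configuration has zero energy), not the single state $\ket{0}^{\otimes k}$ as you claim. This degeneracy is the entire mechanism of the gadget: the perturbation $\hat{H}^{\cn{v}}_j$ flips one ancilla at a time, so the product $\pau_{j,1}\otimes\cdots\otimes\pau_{j,k}$ --- and hence $\hat{H}^{\cn{target}}$ --- only emerges at order $\lambda^k$ from processes that tunnel between the two aligned states by flipping each of the $k$ ancillas exactly once. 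If you project onto $\ketbra{0}{0}^{\otimes Nk}$ from the outset, then $\Pi_0 \hat{H}^{\cn{v}}\Pi_0 = 0$ and the leading nontrivial contribution to your Schrieffer--Wolff series is a second-order identity-like shift, which contradicts the formula you are importing from \Cref{thm:eff}; the two halves of your argument are inconsistent as written. The repair is to carry the full $2^N$-fold degenerate ancilla ground space through the block-diagonalization and only restrict to the $\ket{0}^{\otimes Nk}$ sector afterwards, in the form that \Cref{thm:eff}'s statement (with its $\otimes\,\ketbra{0}{0}^{\otimes rk}$ factor) already anticipates. The remainder of your plan, including the Davis--Kahan step and the norm conversions, survives this correction unchanged.
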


We can now demonstrate the entire pipeline's compilation correctness, including the error introduced by the perturbative gadget algorithm.

\begin{theorem}[Compilation Correctness Including Perturbative Gadget]\label{thm:compile-good1}\rm 
Given $e$, typed as $\tjudge{\iota}{e}{\quan{F}{\hmx}{\iota}}$, and a time period $r$, we compile the simulation $\eexp{\uapp{r}{e}}$ to quantum circuit, via the compilation pipeline $\quan{F}{\hmx}{\iota} \vdash (e,r)  \gg U : \quan{F}{\umx}{\iota'}$, for every state $\psi$, typed as $\iota \vdash \psi$, we transform it to $\psi'$, typed as $\iota' \vdash \psi'$, let $\psi_1=\denote{\tjudge{\iota}{U}{\quan{F}{\umx}{\iota}}}_0(\psi)$, we transform $\psi_1$ to $\psi'_1$, typed as $\iota' \vdash \psi'_1$; thus, $\dabs{\psi_1' -\denote{\tjudge{\iota'}{\eexp{e r}}{\quan{F}{\umx}{\iota'}}}_0(\psi')}<\epsilon+O(\lambda)$. 
\end{theorem}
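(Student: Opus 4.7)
The plan is to extend the proof of \Cref{thm:compile-good} by inserting the perturbative gadget step into the compilation pipeline and propagating its error via the triangle inequality for the operator norm. I would structure the argument as a sequence of four hops, each already certified by one of the earlier lemmas, and bound the total deviation by summing the per-hop errors.

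First I would unfold the compilation judgment $\quan{F}{\hmx}{\iota} \vdash (e,r) \gg U : \quan{F}{\umx}{\iota'}$ into its intermediate Hamiltonians: apply \Cref{thm:particle-trans} to rewrite $e$ as a Pauli-string Hamiltonian $\hat{H}$ (exact, no error), then \Cref{thm:canonical-good} combined with \Cref{thm:equiv-sem} to canonicalize $\hat{H}$ to the form $\sum_j r_j \hat{P}_j$ (exact), then invoke the new gadget step to obtain a two-local Hamiltonian $\hat{H}^{\cn{gad}}$ on an extended register, and finally apply \Cref{thm:trotter-good} followed by \Cref{thm:decompose-good} or \Cref{thm:analog-decompose-good} to arrive at the executable circuit $U$. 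The key identity to establish is
\begin{equation*}
\psi'_1 - \denote{\eexp{er}}_0(\psi') = \bigl(\psi'_1 - \denote{\eexp{r\hat{H}^{\cn{gad}}}}_0(\psi'_{\cn{anc}})\bigr) + \bigl(\mathrm{Tr}_{\text{aux}}[\denote{\eexp{r\hat{H}^{\cn{gad}}}}_0(\psi'_{\cn{anc}})] - \denote{\eexp{r\hat{H}}}_0(\psi')\bigr),
\end{equation*}
where $\psi'_{\cn{anc}}$ is $\psi'$ tensored with the appropriate ancilla ground state $\ket{0}^{\otimes Nk}$. The first bracket is bounded by $\epsilon$ via \Cref{thm:trotter-good} and the synthesization lemmas (which are exact), while the second bracket is the quantity controlled by the gadget.

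The main obstacle is the second bracket, because \Cref{thm:gadgetcorrect} as stated gives only a \emph{static} ground-state approximation: $\|\psi_0 - \mathrm{Tr}_{\text{aux}}[\phi_0]\|_2 = O(\lambda)$. To lift this to a statement about the full unitary time evolution acting on an arbitrary input state $\psi'$, I would use the effective-Hamiltonian decomposition from \Cref{thm:eff}, which writes $\hat{H}^{\cn{eff}}(\hat{H}^{\cn{gad}}, 2n) = \tfrac{\lambda^k}{\Xi}\hat{H}^{\cn{target}}\otimes\ketbra{0}{0}^{\otimes Nk} + f(\lambda)\Pi + O(\lambda^{k+1})$. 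Combining this with standard perturbation-theoretic bounds on Schrieffer--Wolff transformations, the propagator $\eexp{r\hat{H}^{\cn{gad}}}$ restricted to the low-energy subspace agrees with $\eexp{r\hat{H}}\otimes I_{\cn{anc}}$ up to an error $O(\lambda)$ in operator norm, provided $\lambda \leq \lambda^*$ as in \Cref{thm:gadgetcorrect}. After partial-tracing the ancilla register, this yields the required $O(\lambda)$ bound on the second bracket.

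Finally, the triangle inequality gives $\dabs{\psi'_1 - \denote{\eexp{er}}_0(\psi')} \leq \epsilon + O(\lambda)$, matching the stated bound. The technically hardest sub-step will be the perturbative argument in the previous paragraph: one must verify that the low-energy effective dynamics coincide with the target dynamics up to $O(\lambda)$ uniformly in $r$ (on the relevant bounded interval), and that the rescaling factor $\lambda^k/\Xi$ is correctly absorbed by a rescaled simulation time. Once this bridging lemma is in place, the composition of the remaining lemmas is a straightforward assembly, mirroring the proof of \Cref{thm:compile-good} but carried out on the extended register containing the ancilla qubits introduced by the gadget construction.
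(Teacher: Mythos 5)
Your proposal follows the same route the paper intends: the paper gives no written proof of this theorem beyond placing it after \Cref{thm:eff} and \Cref{thm:gadgetcorrect}, and the evident intention is exactly your composition --- reuse the chain behind \Cref{thm:compile-good} (\Cref{thm:particle-trans}, \Cref{thm:canonical-good}, \Cref{thm:trotter-good}, and one of the synthesization lemmas) and add the gadget error via the triangle inequality, yielding $\epsilon + O(\lambda)$. So at the level of structure you are reconstructing the paper's argument faithfully.

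The genuinely valuable part of your write-up is the observation that \Cref{thm:gadgetcorrect} is a \emph{static} statement about ground states and \Cref{thm:eff} is a statement about an effective Hamiltonian, whereas the theorem to be proved is a \emph{dynamical} claim about $\eexp{r\hat{H}^{\cn{gad}}}$ acting on an arbitrary well-typed input state. Neither lemma, as stated, bounds $\dabs{\mathrm{Tr}_{\text{aux}}[\eexp{r\hat{H}^{\cn{gad}}}(\psi'_{\cn{anc}})] - \eexp{r\hat{H}}(\psi')}$; your proposed bridging lemma (low-energy propagator agreement up to $O(\lambda)$, uniform in $r$ on a bounded interval, with the simulation time rescaled by $\Xi/\lambda^k$ to cancel the prefactor in \Cref{thm:eff}) is precisely what must be supplied, and the paper's appendix does not state it explicitly. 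Two caveats you should keep in view when writing this out: (i) the rescaling of $r$ by $\Xi/\lambda^k$ interacts with the Trotterization error, since both \Cref{eq:trottererr_std} and \Cref{eq:trottererr_qdrift} grow as $r^2$, so the $\epsilon$ in the first bracket must be computed for the rescaled time, not the original $r$; and (ii) the claim only holds for inputs supported (up to $O(\lambda)$) on the low-energy subspace of $\hat{H}^{\cn{gad}}$, which is why the ancillas must be initialized in the specific state the gadget construction prescribes --- an arbitrary $\psi'_{\cn{anc}}$ would not work. With those two points handled, your assembly is correct.
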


\section{Other Particle Transformation Methods}\label{sec:jw-trans}

\Cref{sec:transformation} provides a simple particle transformation method. We introduce another transformation method, the Bravyi-Kitaev transformation, which might result in more effective compilation.
%\subsection{Bravyi--Kitaev Mapping}

The Bravyi--Kitaev mapping stores the parity of fermionic modes in a nonlocal Fenwick tree data structure. A Fenwick tree is a partial ordering of binary representations where each node is a copy of its parent with one of the ones changed to zero. This structure allows encoding fermionic operators with Pauli weight scaling as \( O(\log m) \) for \( m \) fermionic modes.

Each fermionic mode corresponds to a qubit associated with a node in the Fenwick tree. Each qubit stores the total parity of the fermionic modes below it in the tree. Fermionic creation and annihilation operators can be constructed using three key subsets of the Fenwick tree: the update set \( U(\alpha) \), parity set \( P(\alpha) \), and flip set \( F(\alpha) \). They act as follows:

\[
a_i \mapsto \frac{1}{2} \left(
\bigotimes_{j \in U(i)} X_j \otimes X_i \otimes \bigotimes_{j \in P(i)} Z_j
+ i \bigotimes_{j \in U(i)} X_j \otimes Y_i \otimes \bigotimes_{j \in P(i) \setminus F(i)} Z_j
\right),
\]

\[
a_i^\dagger \mapsto \frac{1}{2} \left(
\bigotimes_{j \in U(i)} X_j \otimes X_i \otimes \bigotimes_{j \in P(i)} Z_j
- i \bigotimes_{j \in U(i)} X_j \otimes Y_i \otimes \bigotimes_{j \in P(i) \setminus F(i)} Z_j
\right).
\]

\begin{definition}[Update Set \(U(\alpha)\)]
Let the update set \( U(\alpha) \) be such that \(\beta \in U(\alpha)\) if and only if there exists an index \( i_0 \) such that \(\alpha_{i_0} = 0\), and for all \( i > i_0 \), \(\alpha_i = \beta_i\), and for all \( i < i_0 \), \(\beta_i = 1\). Equivalently, \(\beta \in U(\alpha)\) if and only if \(\alpha \prec \beta\) in the Fenwick tree partial order.
\end{definition}

\begin{definition}[Parity Set \(P(\alpha)\)]
The parity set \( P(\alpha) \) contains elements \(\beta \in P(\alpha)\) if and only if there exists an index \( i_0 \) such that \(\alpha_{i_0} = 1\), \(\beta_{i_0} = 0\), for all \( i > i_0\), \(\beta_i = \alpha_i\), and for all \( i < i_0 \), \(\beta_i = 1\). This implies
\[
\prod_{j \in P(i)} q_j = \prod_{j < i} f_j.
\]
\end{definition}

\begin{definition}[Flip Set \(F(\alpha)\)]
The flip set \( F(\alpha) \) contains elements \(\beta \in F(\alpha)\) if and only if there exists an index \( i_0 \) such that \(\beta_{i_0} = 0\), \(\beta_i = \alpha_i\) for all \( i \neq i_0\), and for all \( i < i_0\), \(\alpha_i = 1\). This implies
\[
\prod_{j \in F(i)} q_j = \prod_{j \prec i} f_j.
\]
\end{definition}

\begin{algorithm}
\caption{Generate Update Set \( U(\alpha) \)}
\begin{algorithmic}[1]
\STATE \textbf{Input:} Fermionic index \(\alpha\), number of modes \(n\)
\STATE \textbf{Output:} Update set \( U(\alpha) \)
\STATE \( \text{updateSet} \gets \emptyset \)
\STATE \( l \gets \text{bit-length of } \alpha \)
\FOR{ \( i = 0 \) to \( l - 1 \)}
    \IF{bit \( i \) of \(\alpha\) is 0}
        \STATE Create \(\beta\) by setting bit \( i \) of \(\alpha\) to 1 and bits \( < i \) to 0
        \IF{ \(\beta < n\) }
            \STATE Add \(\beta\) to \(\text{updateSet}\)
        \ENDIF
    \ENDIF
\ENDFOR
\STATE \textbf{Return} \(\text{updateSet}\)
\end{algorithmic}
\end{algorithm}

\begin{algorithm}
\caption{Generate Parity Set \( P(\alpha) \)}
\begin{algorithmic}[1]
\STATE \textbf{Input:} Fermionic index \(\alpha\)
\STATE \textbf{Output:} Parity set \( P(\alpha) \)
\STATE \( \text{paritySet} \gets \emptyset \)
\STATE \( l \gets \text{bit-length of } \alpha \)
\FOR{ \( i = l-1 \) down to \( 0 \) }
    \IF{bit \( i \) of \(\alpha\) is 1}
        \STATE Create \(\beta\) by setting bit \( i \) to 0 and bits \( < i \) to 1
        \STATE Add \(\beta\) to \(\text{paritySet}\)
    \ENDIF
\ENDFOR
\STATE \textbf{Return} \(\text{paritySet}\)
\end{algorithmic}
\end{algorithm}

\begin{algorithm}
\caption{Generate Flip Set \( F(\alpha) \)}
\begin{algorithmic}[1]
\STATE \textbf{Input:} Fermionic index \(\alpha\)
\STATE \textbf{Output:} Flip set \( F(\alpha) \)
\STATE \( \text{flipSet} \gets \emptyset \)
\STATE \( i \gets 0 \)
\WHILE{bit \( i \) of \(\alpha\) is 1}
    \STATE Create \(\beta\) by setting bit \( i \) to 0
    \STATE Add \(\beta\) to \(\text{flipSet}\)
    \STATE \( i \gets i + 1 \)
\ENDWHILE
\STATE \textbf{Return} \(\text{flipSet}\)
\end{algorithmic}
\end{algorithm}

We show an example of transforming the T-J model via the BK-transformation in \Cref{sec:tj}.

\ignore{
\subsection{Bravyi--Kitaev Mapping}\label{appx:gadget}

We believe that it is necessary to substantialize it as a step in our certified compiler, and clarify that there is a key component in compiling a Hamiltonian simulation that transforming a higher local Pauli string based Hamiltonian to a two-local Hamiltonian, with the note that this step might happen in the decomposition, which is why the step is grayed out in \Cref{fig:compilationprocess}.

The rest of the section explains the perturbative gadget algorithm. We start with the canonicalized Pauli string based Hamiltonian $\hat{H}=\sum_j \bigotimes_{k=0}^n \alpha(j,k)$ with type $\quan{F}{\hmx}{\bigotimes^n t(2)}$ produced in \Cref{sec:compilecanonical}, and produce a two-local Hamiltonian via the algorithm .

Recall that a Hamiltonian represents a constraint applying to a quantum particle system. The algorithm views a $k$-local Pauli string as a length $k$ conjuncted constraint manipulating $k$ different sites. It transforms the constraint into a quantum choice, a.k.a., linear sum, of independent constraints applying to the $k$ sites.
To understand this, we look at a similar example of graph coloring. Given an undirected graph $G = (V, E)$ and a set of $n$ colors, we want to construct a tensor string describing the constraint of a vertex having the $k$-th color. For example, if there are totally $16$ colors, we can assume that the particular vertex site is typed as $t(16)$, which will be transformed to four $t(2)$ sites ($\bigotimes^4 t(2)$), so that he $14$-th color can be represented as $\mathbb{1} \otimes \mathbb{1}\otimes \mathbb{1} \otimes \mathbb{0}$, essentially $14$'s binary representation.
The transformed Pauli string of the above term is clearly not two local. To rewrite the tensor string, note that the reason we tensor out the four terms ($\mathbb{1}$ or $\mathbb{0}$) together is that we want to enumerate the $14$-th color. Instead of enumerating colors, one can think of each vertex site as having $16$ color occupation cells; the $k$-th cell is $t(2)$ typed, and the binary basis-vector represents if the $k$-th color occupies the cell. Therefore, the tensor string constraint on the site becomes $\mathbb{1}(14)+\sum_{j\neq 14}\mathbb{0}(j)$, i.e., we constrain the $14$-th cell being turned on and the other cells being off. In this case, we need to place an additional constraint $\cn{sum}_1$ that there is only one cell having a color turning on \cite{Lucas_2014}, which is represented as the Hamiltonian $V_s$ in \Cref{fig:gadget}.

\begin{theorem}
Let \( H_{\text{target}} \) and \( H_{\text{gad}} \) be defined as in Definition~1, and suppose the perturbation strength satisfies \( \lambda \leq \lambda_{\max} \), where
\[
\lambda_{\max} = \frac{k - 1}{4} \left( \sum_{s=1}^{r} |c_s| + r(k - 1) \right)^{-1}.
\]
Then, there exists a function \( f(\lambda) = O(\mathrm{poly}(\lambda)) \) and a constant \( \Xi = O(\mathrm{poly}(k)) \) such that the effective Hamiltonian satisfies
\[
H_{\mathrm{eff}}(H_{\mathrm{gad}}, 2n) = \frac{\lambda^k}{\Xi} \, H_{\text{target}} \otimes \ket{0}\bra{0}^{\otimes rk} + f(\lambda)\, \Pi + O(\lambda^{k+1}),
\]
where \( \Pi \) is the projector onto the support of \( H_{\mathrm{eff}}(H_{\mathrm{gad}}, 2n) \).
\end{theorem}

\begin{lemma}
Let \( H_{\text{target}} \) and \( H_{\text{gad}} \) be as in Theorem~1. Then there exists a perturbation threshold \( \lambda^* \), with
\[
\lambda^* \ll \lambda_{\max} \quad \text{and} \quad \lambda^* \ll \frac{E_1^{\text{target}} - E_0^{\text{target}}}{\Xi \| O_{\text{err}} \|},
\]
such that for all \( \lambda \leq \lambda^* \), the ground states \( \psi_0 = |\psi_0\rangle\langle\psi_0| \) of \( H_{\text{target}} \) and \( \phi_0 = |\phi_0\rangle\langle\phi_0| \) of \( H_{\text{gad}} \) satisfy:
\[
\left\| \psi_0 - \mathrm{Tr}_{\text{aux}}[\phi_0] \right\|_2 = O(\lambda),
\]
where \( \mathrm{Tr}_{\text{aux}}[\cdot] \) denotes the partial trace over all ancilla qubits.
\end{lemma}

\section{A Hubbard Model for Hydrogen Chains}\label{sec:hubbard} % Added a label. Feel free to change/remove

The Hubbard model (system) describes the interactions between elementary particles, explicitly focusing on the electrons having fermionic behavior. 
Here, we focus on the one-dimensional Hydrogen chain, one of the most quintessential systems described using the Hubbard model.
In this system, we can view a site as a $\Motimes^2 \aleph$ typed Hydrogen atom (fermion). \Cref{eq:hubbard} describes the Hamiltonian for this system.

There are two terms in the Hamiltonian. The first accounts for the energy due to the movement (hopping) of electrons, whereas the second term accumulates the energy due to electron repulsion. As the atoms are arranged in a one-dimensional array, the electrons can only move (hop) to the neighboring adjacent atoms, such as moving from $j$-th to $j\splus 1$-th site, in this system.

When simulating the Hubbard system, users typically attempt to manipulate different $z_t$ and $z_u$ values to utilize the Hubbard system for analyzing various particle behaviors.
In simulating the Hydrogen chain  \cite{melo_2021}, we assign a constant to $z_t$ and make $z_u$ dependent on the time periods.
For a period of $T$. $z_u$ will transit from the initial value $z_{u0}$ at time $t = 0$ to the final value $Z_{uf}$ at time $t = T$,
and the equation looks like $ z_u(t) = (1 - \frac{t}{T} )z_{u0} + \frac{t}{T}z_{uf}$.
The naive compilation of the Hubbard system is similar to the one described in \Cref{sec:boson}, except that we also need to include $Z$ terms by enforcing anti-commutation.
To better compile the system, previous researchers \cite{melo_2021} tried to use unconventional quantum state mapping from a $\Motimes^2 \aleph$ typed state to a $\Motimes^2 t(2)$ typed qubit state. For example, for a two-particle system, they utilize an optimized state mapping as follows, where $\ket{k}_0$ and $\ket{k}_1$ are marked as the first and second quantum particles, respectively.

{\small
\begin{center}
$
\begin{array}{l}
\ket{1}_0\ket{1}_1\otimes \ket{0}_0\ket{0}_1\to \ket{0}\ket{0}
\qquad
\ket{1}_0\ket{0}_1\otimes \ket{0}_0\ket{1}_1\to \ket{0}\ket{1}
\\[0.1em]
\ket{0}_0\ket{1}_1\otimes \ket{1}_0\ket{0}_1\to \ket{1}\ket{0}
\qquad
\ket{0}_0\ket{0}_1\otimes \ket{1}_0\ket{1}_1\to \ket{1}\ket{1}
\end{array}
$
\end{center}
}

They assume that the other ket states do not exist.\
With the unconventional mapping, they can rewrite the Hubbard system to an Ising system as:

{\small
\begin{center}
$
\hat{H}_S=\sminus z_t(X \otimes I + I \otimes X) + z_u Z \otimes Z
$
\end{center}
}

The compilation of this new optimized system is similar to the one in \Cref{sec:qcompile}. Through Trotterization, the simulation of the system generates a series of $X$-axis rotation gates $\cn{Rx}$ as well as $\cn{ZZ}$ interaction gates.

\myparagraph{Determining the Parameters \(z_t\) and \(z_u\) in Hubbard System.}

The parameters \(z_t\) and \(z_u\) in the Hubbard system are crucial for accurately describing the physical system. For a $1D$ chain of hydrogen atoms, these parameters can be determined as follows:

1. \textbf{Hopping Integral \(z_t\)}: The hopping integral \(z_t\) represents the kinetic energy associated with an electron hopping from one site to another. It can be estimated using the overlap integral of the atomic orbitals on neighboring sites. For hydrogen atoms, the 1s orbitals are used. The hopping integral can be calculated as:

\[
z_t = \int \varphi_{1s}^*(r - R_i) \hat{H} \varphi_{1s}(r - R_j) \, dr
\]

where \(\varphi_{1s}(r)\) is the 1s orbital wave function, \(\hat{H}\) is the Hamiltonian of the system, and \(R_i\) and \(R_j\) are the positions of the neighboring atoms. In practice, this integral is often approximated using empirical or computational methods, such as density functional theory (DFT).

2. \textbf{On-Site Interaction \(z_u\)}: The on-site interaction \(z_u\) represents the Coulomb repulsion between two electrons occupying the same site. For hydrogen atoms, this can be approximated using the Coulomb integral:

\[
z_u = \int \varphi_{1s}^*(r_1) \varphi_{1s}^*(r_2) \frac{e^2}{|r_1 - r_2|} \varphi_{1s}(r_1) \varphi_{1s}(r_2) \, dr_1 \, dr_2
\]

where \(e\) is the electron charge, and \(\varphi_{1s}(r)\) is the 1s orbital wave function. This integral can also be evaluated using computational techniques, providing an estimate of the electron repulsion energy at each site.
}

\section{A Hubbard Model for Hydrogen Chains}\label{sec:hubbard} % Added a label. Feel free to change/remove

The Hubbard model (system) describes the interactions between elementary particles, specifically focusing on the electrons having fermionic behavior. 
Here, we focus on the one-dimensional Hydrogen chain, one of the most quintessential systems described using the Hubbard system, which consists of many sites of $t^{\aleph}(2)$ typed Hydrogen atoms (fermions). \Cref{sec:hubbard-example} describes a two-site version for this system, as we list the simplified version below.

{\small
\[
\hat{H}_T = \sminus z_t \sum_{j} 
\left(\sapp{\stype{\sdag{a}(j)}{t^{\aleph}(2)}}{ \stype{a(j\splus 1)}{{t^{\aleph}(2)}}} + \sapp{\stype{\sdag{a}(j\splus 1)}{{t^{\aleph}(2)}}}{ \stype{a(j)}{{t^{\aleph}(2)}}}\right) + z_u \sum_{j}  \sapp{\stype{\mathbb{1}(j)}{{t^{\aleph}(2)}}}{\stype{\mathbb{1}(j\splus 1)}{{t^{\aleph}(2)}}}
\]
}

We explain the physical constraints of the two terms in the Hamiltonian. The first accounts for the energy due to the movement (hopping) of electrons, whereas the second term accumulates the energy due to electron repulsion. As the atoms are arranged in a one-dimensional array, the electrons can only move (hop) to the neighboring adjacent atoms, such as moving from $j$-th to $j\splus 1$-th site, in this system.
%The constituent creation-annihilation products capture the contribution from every potential hopping. For instance, $a^{\dagger}_{\sigma}(1)a_{\sigma}(2)$ represents the contribution to kinetic energy by the hopping of the electron with spin $\sigma$ at site $2$ to site $1$. The scale at which these individual contributions affect the total energy is expressed through the parameter $J_{h}$, and it can be determined based on the specific simulation performed. In addition to the kinetic energy, the repulsion between electrons will also affect the total energy. The most significant interaction for an atomic Hydrogen-based system arises from the two electrons occupying the sole orbital. In scenarios where an orbital is half-filled, the Hubbard model assumes the absence of interactions between electrons in a hydrogen atom. This is captured by the second term, where $n_{\sigma}(j) = 0$ when the Hydrogen atom at site $j$ is not occupied by an electron with spin $\sigma$. $U$ scales up the contributions to the appropriate energy scale, which depends on the system of interest.\par

When simulating the Hubbard system, users typically attempt to manipulate different $z_t$ and $z_u$ values to utilize the Hubbard system for analyzing various particle behaviors.
In simulating the Hydrogen chain  \cite{melo_2021}, we assign a constant to $z_t$ and make $z_u$ dependent on the time periods.
For a period of $T$. $z_u$ will transit from the initial value $z_{u0}$ at time $t = 0$ to the final value $Z_{uf}$ at time $t = T$,
and the equation looks like $ z_u(t) = (1 - \frac{t}{T} )z_{u0} + \frac{t}{T}z_{uf}$.

\myparagraph{Determining the Parameters \(z_t\) and \(z_u\) in Hubbard System}
The parameters \(z_t\) and \(z_u\) in the Hubbard system are crucial for accurately describing the physical system. For a $1D$ chain of hydrogen atoms, these parameters can be determined as follows:

1. \textbf{Hopping Integral \(z_t\)}: The hopping integral \(z_t\) represents the kinetic energy associated with an electron hopping from one site to another. It can be estimated using the overlap integral of the atomic orbitals on neighboring sites. For hydrogen atoms, the 1s orbitals are used. The hopping integral can be calculated as:

\[
z_t = \int \psi_{1s}^*(r - R_i) \hat{H} \psi_{1s}(r - R_j) \, dr
\]

where \(\psi_{1s}(r)\) is the 1s orbital wave function, \(\hat{H}\) is the Hamiltonian of the system, and \(R_i\) and \(R_j\) are the positions of the neighboring atoms. In practice, this integral is often approximated using empirical or computational methods, such as density functional theory (DFT).

2. \textbf{On-Site Interaction \(z_u\)}: The on-site interaction \(z_u\) represents the Coulomb repulsion between two electrons occupying the same site. For hydrogen atoms, this can be approximated using the Coulomb integral:

\[
z_u = \int \psi_{1s}^*(r_1) \psi_{1s}^*(r_2) \frac{e^2}{|r_1 - r_2|} \psi_{1s}(r_1) \psi_{1s}(r_2) \, dr_1 \, dr_2
\]

where \(e\) is the electron charge, and \(\psi_{1s}(r)\) is the 1s orbital wave function. This integral can also be evaluated using computational techniques, providing an estimate of the electron repulsion energy at each site.

\begin{figure}[h]
{\small
  \[
  \begin{array}{c}
  \aleph \;\;::=\;\; \uparrow \; \mid \; \downarrow 
        \end{array}
  \]
}
{\small
\begin{mathpar}  
        \inferrule[S-Ten]{}
        { \denote{\inferrule[]{\tjudge{\iota}{e}{\quan{F}{\zeta}{\iota}}\\ \tjudge{\iota'}{e'}{\quan{F}{\zeta}{\iota'}}}{\tjudge{\iota \ttimes \iota'}{e \otimes e'}{\quan{F}{\zeta}{\iota \otimes \iota'}}}}_g (w_1 \otimes w_2) := \denote{\tjudge{\iota}{e}{\quan{F}{\zeta}{\iota}}}_{g}{w_1}\,\textcolor{purple}{\otimes}\,\denote{\tjudge{\iota'}{e'}{\quan{F}{\zeta}{\iota'}}}_{\textcolor{spec}{\funsa{S}{\iota}{w_1}}(g)}{w_2} }
\end{mathpar}
}
{\small
\[
\textcolor{spec}
{
\begin{array}{l}
\funsa{S}{t^{\uparrow}(m)}{\ket{j}}(g_1,g_2) = (g_1 +j, g_2)
\qquad
\funsa{S}{t^{\downarrow}(m)}{\ket{j}}(g_1,g_2) = (g_1, g+j)
\\
\funsa{S}{t(m)}{\ket{j}}(g_1,g_2) = (g_1,g_2)
\qquad
\funsa{S}{\iota \otimes \iota'}{\eta \otimes \eta'}(g) = \funsa{S}{\iota'}{\eta'}(\funsa{S}{\iota}{\eta}(g))
\end{array}
}
\]
}
\vspace*{-1.2em}
  \caption{Extneded \qsnd to spinning fermions.}
  \label{fig:fermionextended}
  \vspace*{-1em}
\end{figure}

\myparagraph{Including Spin Terms.}
In the traditional Hubbard model, there are two types of fermions, spin-up ($\uparrow$) and spin-down ($\downarrow$) fermions.
In constraining the system, we can think of the two kinds of fermions as two different particles and split the fermion type flag $\aleph$ in the \qsnd, into two different type flags $\uparrow$ and $\downarrow$. We then rewrite the semantic rule for \rulelab{S-Ten} to be the one in \Cref{fig:fermionextended}. The main difference is to modify the context $g$ to become a pair of real numbers, one for recording the effects of each type of the two fermions.
We can then rewrite the Hubbard model equation above to the one below.

{\small
\[
\hat{H}_T = \sminus z_t \sum_{j} 
\left(\sapp{{\sdag{a}(j)}}{ {a(j\splus 2)}} + \sapp{{\sdag{a}(j\splus 2)}}{ {a(j)}}\right) + z_u \sum_{\cn{even}(j)}  \sapp{{\mathbb{1}(j)}}{{\mathbb{1}(j\splus 1)}}
\]
}

In the new Hamiltonian, we model the $\uparrow$ and $\downarrow$ fermions to appear alternatively; that is, odd-indexed sites store the $\uparrow$ fermion, and even-indexed sites store the $\downarrow$ fermion. The first term in the Hamiltonian restricts that fermions can only reach same kind fermion sites, e.g., $\sapp{{\sdag{a}(j)}}{ {a(j\splus 2)}}$ means that the disappearing of a fermion in the $j\splus 2$ site results in the fermion appearing in the $j$-th site.
The second term above is a constraint that we restrict the appearing of opposite kind fermions in the adjacent sites, e.g., $\sapp{{\mathbb{1}(j)}}{{\mathbb{1}(j\splus 1)}}$ means that $j$-th and $j\splus 1$-th sites are not likely to have fermions occupied at the same time. Since the term is a sum over $\cn{even}(j)$, the above restriction only constrains the same kind of fermions. 

\ignore{
\section{Quantum Mechanics}\label{appx:qmechan}

\subsection{Hilbert Space \(\mathpzc{H}\)}

A complex vector space with an inner product satisfying:

\begin{itemize}
  \item Conjugate Symmetry: \(\langle \psi | \phi \rangle = \langle \phi | \psi \rangle^\ast\)
  \item Linearity: \(\langle \psi | a\phi + b\chi \rangle = a\langle \psi | \phi \rangle + b\langle \psi | \chi \rangle\), where \(a\) and \(b\) are complex numbers.
  \item Positive Definiteness: \(\langle \psi | \psi \rangle \geq 0\) and the equality holds, if and only of \(|\psi\rangle = 0\).
\end{itemize}

\subsection{Completeness}

Every Cauchy sequence \(\{|\psi_n\rangle\}\) in \(\mathpzc{H}\) converges to an element in \(\mathpzc{H}\). A Cauchy sequence is defined where, for every \(\epsilon > 0\), there exists an \(N\) such that \(|\psi_n - \psi_m| < \epsilon\) for all \(m, n > N\). (For finite-dimensional Hilbert spaces, the completeness requirement is trivially satisfied.)

\subsection{Composite Systems}

\begin{itemize}
  \item Direct Sum (\(\oplus\)): Represents the combination of two independent quantum systems described by Hilbert spaces \(\mathpzc{H}_1\) and \(\mathpzc{H}_2\). The direct sum space \(\mathpzc{H}_1 \oplus \mathpzc{H}_2\) consists of all ordered pairs \(|\psi_1\rangle, |\psi_2\rangle\) where \(|\psi_1\rangle \in \mathpzc{H}_1\) and \(|\psi_2\rangle \in \mathpzc{H}_2\). It represents scenarios where the systems do not interact or influence each other.
  
  \item Tensor Product (\(\otimes\)): Describes the space for composite quantum systems. The tensor product \(\mathpzc{H}_1 \otimes \mathpzc{H}_2\) is formed by combining elements from \(\mathpzc{H}_1\) and \(\mathpzc{H}_2\) such that every pair of vectors \(|\psi_1\rangle \in \mathpzc{H}_1\) and \(|\psi_2\rangle \in \mathpzc{H}_2\) contributes a new vector \(|\psi_1\rangle \otimes |\psi_2\rangle\) to \(\mathpzc{H}_1 \otimes \mathpzc{H}_2\). This space encompasses all possible states of the combined system, including entangled states.
\end{itemize}

\subsection{Operators}

\begin{itemize}
  \item Direct Sum: For operators \(\hat{A}, \hat{B}\), the direct sum \((\hat{A} \oplus \hat{B})\) acts on each component of the direct sum space independently: \((\hat{A} \oplus \hat{B})(|\psi_1\rangle, |\psi_2\rangle) = (\hat{A}|\psi_1\rangle, \hat{B}|\psi_2\rangle)\).
  
  \item Tensor Product: For operators \(\hat{A}, \hat{B}\), the tensor product \((\hat{A} \otimes \hat{B})\) combines the actions of each operator on the respective Hilbert spaces: \((\hat{A} \otimes \hat{B})(|\psi_1\rangle \otimes |\psi_2\rangle) = (\hat{A}|\psi_1\rangle) \otimes (\hat{B}|\psi_2\rangle)\).
  
  \item Hermitian Operators: An operator \(\hat{A}\) is Hermitian if it equals its own adjoint: \(\hat{A} = \hat{A}^\dagger\). This means for all vectors \(|\psi\rangle, |\phi\rangle\) in \(\mathpzc{H}\), we have \(\langle \psi|\hat{A}|\phi \rangle = \langle \hat{A}\psi|\phi \rangle^\ast\). Hermitian operators represent observable quantities in quantum mechanics and have real eigenvalues.
  
  \item Unitary Operators: An operator \(\hat{U}\) is Unitary if its inverse is its adjoint: \(\hat{U}^{-1} = \hat{U}^\dagger\). This implies \(\hat{U}\hat{U}^\dagger = \hat{U}^\dagger\hat{U} = \hat{I}\), where \(\hat{I}\) is the identity operator. Unitary operators preserve the inner product and are used to describe the time evolution and symmetries in quantum systems.
  
  \item Projection Operators: A Projection operator \(\hat{P}\) satisfies \(\hat{P}^2 = \hat{P}\) and \(\hat{P} = \hat{P}^\dagger\). These operators project vectors onto a subspace of the Hilbert space. If \(|\phi\rangle\) is a vector in the space, then \(\hat{P}|\phi\rangle\) is the projection of \(|\phi\rangle\) onto the subspace defined by \(\hat{P}\).
\end{itemize}

}

\ignore{
\section{Quantum Mechanics Representations and Many-particle quantum mechanics}\label{sec:qmechan}

This section introduces some concepts in quantum mechanics.

\subsection{Fundamental Postulates of Quantum Mechanics}

\begin{enumerate}
  \item State Postulate: Every quantum system is completely described by its state vector, which is a unit vector in a complex Hilbert space \(\mathpzc{H}\). The state vector is represented as \(|\psi \rangle\). In mathematical terms: \(|\psi \rangle \in \mathpzc{H}\), with \(\langle \psi | \psi \rangle = 1\).
  
  \item Observable Postulate: Physical observables in quantum mechanics are represented by Hermitian operators (denoted as \(\hat{A}, \hat{B}, \hat{C}, \ldots\)) acting on the Hilbert space \(\mathpzc{H}\). The possible measurement outcomes of an observable correspond to the eigenvalues of its associated Hermitian operator. Mathematically, if \(\hat{A}\) is an observable, and \(\lambda\) is a measurement outcome, then: \(\hat{A} |a \rangle = \lambda |a \rangle\), where \(|a \rangle \in \mathpzc{H}\).
  
  \item Measurement Postulate: If an observable \(\hat{A}\) with non-degenerate eigenvalues is measured in a system in state \(|\psi \rangle\), the probability of obtaining eigenvalue \(\lambda\) is given by \(|\langle a |\psi \rangle|^2\), where \(|a \rangle\) is the eigenvector of \(\hat{A}\) associated with \(\lambda\). After the measurement, the state of the system collapses to \(|a \rangle\). For a normalized state, this is expressed as: \(P(\lambda) = |\langle a |\psi \rangle|^2\), Post-measurement state: \(|\psi' \rangle = \frac{\hat{P}_a |\psi \rangle}{\sqrt{\langle \psi |\hat{P}_a |\psi \rangle}}\) where \(\hat{P}_a = |a \rangle \langle a |\) is the projection operator onto the state \(|a \rangle\).
  
  \item Evolution Postulate: The time evolution of a quantum state is governed by the Schrödinger equation. If \(|\psi(t) \rangle\) describes the state of the system at time \(t\), then its time evolution is given by: \(i \hbar \frac{d}{dt} |\psi(t) \rangle = \hat{H} |\psi(t) \rangle\) where \(\hat{H}\) is the Hamiltonian operator of the system, which corresponds to the total energy observable.
\end{enumerate}

For further details, see \cite{dirac,griffiths,eisberg}.

\subsection{Position Representation}
In the position representation, states are expressed as wave functions $\psi(x)$ in real space, where $x$ denotes the position. The wave function gives the probability amplitude for finding a particle at position $x$. The position operator $\hat{x}$ and the momentum operator $\hat{p}$ are represented as:
\begin{align}
\hat{x} \psi(x) &= x \psi(x), \\
\hat{p} \psi(x) &= -i \hbar \frac{d}{dx} \psi(x).
\end{align}

\subsection{Momentum Representation}
In the momentum representation, states are described as wave functions $\phi(p)$ in momentum space. The operators are represented as:
\begin{align}
\hat{p} \phi(p) &= p \phi(p), \\
\hat{x} \phi(p) &= i \hbar \frac{d}{dp} \phi(p).
\end{align}

\subsection{Energy Representation}
In systems with a well-defined Hamiltonian, states can be represented in terms of energy eigenstates. This representation is particularly useful for solving the Schrödinger equation.

\subsection{Second Quantization (Occupation Number Representation)}
Second quantization, or the occupation number representation, describes systems with variable numbers of indistinguishable particles by the number of particles occupying each possible state, referred to as a mode. The state of a system can be represented as:
\begin{equation}
|n_1, n_2, n_3, \ldots \rangle
\end{equation}
Creation ($\hat{a}^\dagger_i$) and annihilation ($\hat{a}_i$) operators, which satisfy commutation or anticommutation relations for bosons and fermions, respectively, are introduced to change the occupation numbers of the modes.

\subsubsection{Commutation Properties and Actions of Creation and Annihilation Operators}
The creation and annihilation operators play a crucial role in the framework of second quantization, with their actions and commutation or anticommutation relations defined based on the type of particles (bosons or fermions).

\textbf{For Bosons:}
\begin{align}
[\hat{a}_i, \hat{a}_j^\dagger] &= \delta_{ij}, \\
[\hat{a}_i, \hat{a}_j] &= [\hat{a}_i^\dagger, \hat{a}_j^\dagger] = 0.
\end{align}

\textbf{For Fermions:}
\begin{align}
\{\hat{a}_i, \hat{a}_j^\dagger\} &= \delta_{ij}, \\
\{\hat{a}_i, \hat{a}_j\} &= \{\hat{a}_i^\dagger, \hat{a}_j^\dagger\} = 0.
\end{align}

\subsection{Second Quantization in Many-Particle Quantum Mechanics}

Many-particle quantum mechanics underpins the theoretical framework necessary for understanding phenomena across condensed matter physics, chemical systems, and nuclear physics. Traditional wave function approaches scale poorly with the number of particles due to the combinatorial explosion of configuration space. Second quantization addresses these complexities by focusing on occupation numbers and field operators rather than individual particle coordinates, thus offering a scalable approach to studying systems of identical particles.

Second quantization transcends the limitations of first quantization by introducing a more abstract but immensely powerful framework, ideally suited for systems of identical particles, like electrons in a metal or photons in electromagnetic fields.

\subsubsection{Creation and Annihilation Operators}
Central to the formalism are the creation (\(\hat{c}^\dagger_{i,\sigma}\)) and annihilation (\(\hat{c}_{i,\sigma}\)) operators, which respectively add and remove particles from quantum states. These operators are defined for each quantum state labeled by index \(i\) and spin \(\sigma\), following specific algebraic rules:

For fermions (e.g., electrons), the anticommutation relations are:
\begin{align*}
    \{\hat{c}_{i,\sigma}, \hat{c}_{j,\sigma'}^\dagger\} &= \delta_{ij} \circ \delta_{\sigma\sigma'}, \\
    \{\hat{c}_{i,\sigma}, \hat{c}_{j,\sigma'}\} &= \{\hat{c}_{i,\sigma}^\dagger, \hat{c}_{j,\sigma'}^\dagger\} = 0,
\end{align*}
reflecting the Pauli exclusion principle, ensuring no two fermions can occupy the same quantum state.

For bosons, the commutation relations are:
\begin{align*}
    [\hat{b}_{i}, \hat{b}_{j}^\dagger] &= \delta_{ij}, \\
    [\hat{b}_{i}, \hat{b}_{j}] &= [\hat{b}_{i}^\dagger, \hat{b}_{j}^\dagger] = 0,
\end{align*}
allowing any number of bosons to occupy the same state.

\subsubsection{Fock Space and Quantum States}
Fock space, or the state space of many-body systems, is constructed from the vacuum state \(\ket{0}\), which contains no particles. States with particles are built by applying creation operators to the vacuum:
\begin{equation*}
    \ket{n_{1,\uparrow}, n_{1,\downarrow}, n_{2,\uparrow}, \ldots} = \prod_{i} (\hat{c}_i^\dagger)^{n_i} \ket{0},
\end{equation*}
Here, \(n_{i,\sigma}\) are occupation numbers indicating how many particles occupy the state specified by \(i\) and \(\sigma\). For fermions, \(n_{i,\sigma}\) can be either 0 or 1, while for bosons, it can be any non-negative integer.

\subsubsection{Operators in Fock Space}
Within Fock space, physical observables are represented by operators constructed from the creation and annihilation operators. The number operator \(\hat{n}_{i,\sigma} = \hat{c}_{i,\sigma}^\dagger \circ \hat{c}_{i,\sigma}\) counts the number of particles in state \((i, \sigma)\). 

The total number operator for the system is:
\begin{equation*}
    \hat{N} = \sum_{i,\sigma} \hat{n}_{i,\sigma},
\end{equation*}
which sums the occupation numbers across all states.

The kinetic energy of particles hopping between lattice sites or states can be modeled as:
\begin{equation*}
    \hat{T} = \sum_{ij,\sigma} t_{ij} \hat{c}_{i,\sigma}^\dagger \circ \hat{c}_{j,\sigma},
\end{equation*}
where \(t_{ij}\) represents the kinetic energy matrix elements, indicating the probability amplitude for a particle to hop from state \(j\) to state \(i\).

Potential energy terms, often arising from external fields or intrinsic properties, are represented as:
\begin{equation*}
    \hat{V} = \sum_{i,\sigma} V_i \circ \hat{n}_{i,\sigma},
\end{equation*}
where \(V_i\) denotes the potential energy associated with particles in state \(i\).

Interaction terms, particularly relevant in electron systems for capturing Coulomb repulsion, take the form:
\begin{equation*}
    \hat{H}_{\text{int}} = \frac{1}{2} \sum_{ijkl} \sum_{\sigma,\sigma'} U_{ijkl} \circ \hat{c}_{i,\sigma}^\dagger \circ \hat{c}_{j,\sigma'}^\dagger \circ \hat{c}_{l,\sigma'} \circ \hat{c}_{k,\sigma},
\end{equation*}
where \(U_{ijkl}\) are the elements of the interaction matrix, encapsulating the strength and nature of particle-particle interactions within the many-body system.

\subsubsection{Quantum Statistics and Symmetry}
The statistical properties of particles in quantum mechanics — Fermi-Dirac for fermions and Bose-Einstein for bosons — naturally emerge from the anticommutation and commutation relations of the creation and annihilation operators. This elegant unification underlies the power of second quantization, encapsulating the quantum statistics inherently without imposing them externally.

Moreover, the symmetries of physical systems, such as translational, rotational, and spin symmetries, can be systematically explored within this framework by examining the commutation relations of the Hamiltonian with relevant symmetry operators, leading to conservation laws and selection rules essential for understanding quantum dynamics and spectra.

\subsection{Conclusion}
The second quantization formalism provides a potent and versatile approach to describing and analyzing many-body quantum systems. By transcending the limitations of traditional wave function approaches, it lays the groundwork for a deeper understanding of complex quantum phenomena, from superfluidity and superconductivity to quantum magnetism and beyond.
For further details, see \cite{fetter,altland,negele}.
}

\ignore{
\section{Quantum Spin Lattice Models}
Quantum spin lattice models describe interacting spin systems arranged on a lattice. These models are essential for studying magnetic interactions, quantum entanglement, and the statistical mechanics of spin systems.

\subsection{N-dimensional Chain of Sites}
Consider a lattice composed of \(N\) sites, arranged in one or more dimensions. Each site, labeled by an index \(i = 1, 2, \ldots, N\), hosts a quantum spin entity. The local spin at each site can occupy a space spanned by a set of basis states. For example, in a spin-\(1/2\) system, the basis at each site would be \(\{\ket{\uparrow}, \ket{\downarrow}\}\).

\subsection{Fock Space of Spin Lattice Model}
The Fock space, or the Hilbert space, for the entire spin lattice system, is constructed as the tensor product of the Hilbert spaces of individual sites:
\begin{equation*}
    \mathpzc{H} = \bigotimes_{i=1}^N \mathpzc{H}_i,
\end{equation*}
where \(\mathpzc{H}_i\) represents the Hilbert space associated with the \(i\)-th site. For a spin-\(1/2\) lattice, this equates to considering all possible configurations of spins up and down across the entire lattice. This Fock space accommodates all potential states of the system, ranging from all spins aligned (e.g., \(\ket{\uparrow, \uparrow, \ldots, \uparrow}\)) to all possible mixed configurations (e.g., \(\ket{\uparrow, \downarrow, \ldots, \uparrow}\)).

\subsection{Many-Particle Operators in Fock Space}
Operators that act within this Fock space can represent various physical quantities, like total spin, magnetic moment, or spin-spin interaction energies. An example of such an operator is the total spin in the z-direction:
\begin{equation*}
    \hat{S}^z_{\text{total}} = \sum_{i=1}^N \hat{S}_i^z,
\end{equation*}
which sums the z-component of the spin at each site across the entire lattice.

\subsection{Connection between Sigma and Spin Operators}
The spin operators for a spin-\(\frac{1}{2}\) particle are related to the Pauli matrices as follows:
\begin{align*}
    S^x &= \frac{\hbar}{2}\sigma^x, \quad S^y = \frac{\hbar}{2}\sigma^y, \quad S^z = \frac{\hbar}{2}\sigma^z, \\
    \sigma^x &= \begin{pmatrix} 0 & 1 \\ 1 & 0 \end{pmatrix}, \quad
    \sigma^y = \begin{pmatrix} 0 & -i \\ i & 0 \end{pmatrix}, \quad
    \sigma^z = \begin{pmatrix} 1 & 0 \\ 0 & -1 \end{pmatrix}.
\end{align*}
These operators facilitate the description and manipulation of spins within the lattice model.

\subsection{Commutation Relations of Pauli Matrices}
The Pauli matrices satisfy the following commutation relations:
\begin{align*}
    [\sigma^x, \sigma^y] &= 2i\sigma^z, \\
    [\sigma^y, \sigma^z] &= 2i\sigma^x, \\
    [\sigma^z, \sigma^x] &= 2i\sigma^y.
\end{align*}
These relations are fundamental in the algebra of spin operators and play a crucial role in defining the dynamics and interactions within quantum spin systems.

\subsection{Raising and Lowering Operators}
The spin raising (\(\sigma^+\)) and lowering (\(\sigma^-\)) operators are defined as:
\begin{align*}
    \sigma^+ &= \sigma^x + i\sigma^y = \begin{pmatrix} 0 & 2 \\ 0 & 0 \end{pmatrix}, \\
    \sigma^- &= \sigma^x - i\sigma^y = \begin{pmatrix} 0 & 0 \\ 2 & 0 \end{pmatrix}.
\end{align*}
These operators are used to flip the spins: \(\sigma^+\) raises the spin from down to up, while \(\sigma^-\) lowers it from up to down.

\subsection{Properties of Sigma Operators}
The raising and lowering operators \(\sigma^+\) and \(\sigma^-\) satisfy the following properties:
\begin{align*}
    \sigma^+ \ket{\downarrow} &= \ket{\uparrow}, \quad \sigma^+ \ket{\uparrow} = 0, \\
    \sigma^- \ket{\uparrow} &= \ket{\downarrow}, \quad \sigma^- \ket{\downarrow} = 0.
\end{align*}
Furthermore, they relate to \(\sigma^z\) through the commutators:
\begin{align*}
    [\sigma^z, \sigma^+] &= 2\sigma^+, \\
    [\sigma^z, \sigma^-] &= -2\sigma^-.
\end{align*}
}

\ignore{
\subsection{Second Quantization in Spin Lattice Models}
In the context of quantum spin lattice models, second quantization introduces operators that alter the spins at lattice sites. These include the spin raising and lowering operators (\(S_i^+\) and \(S_i^-\)), which are analogous to the creation and annihilation operators but for spins.

\subsubsection{Spin Raising and Lowering Operators}
The operators increase or decrease the spin component along the z-axis, respectively:
\begin{align}
S_i^+ &= S_i^x + i S_i^y, \\
S_i^- &= S_i^x - i S_i^y
\end{align}

\subsubsection{Action on Spins}
The action of these operators on the spin at site \(i\) is:
\begin{align}
S_i^+ |\downarrow_i \rangle &= |\uparrow_i \rangle, \\
S_i^- |\uparrow_i \rangle &= |\downarrow_i \rangle
\end{align}
These equations model the flipping of spins within the lattice.
}

\section{Other Physical Models Definable in \qsnd}\label{sec:othermodels}

Here, we show many other systems that are definable in \qsnd.

\subsection{Ising Model}
The Ising model is one of the simplest quantum spin lattice models. It considers spins that can be in one of two states (up or down) interacting with their nearest neighbors. The Hamiltonian for the quantum Ising model in a transverse field is given by:
\begin{equation}
H = -J \sum_{\langle i,j \rangle} Z(i) \circ Z(j) - h \sum_i X(i)
\end{equation}
where \(J\) represents the interaction strength between neighboring spins, \(h\) is the external magnetic field (which is a constant), and the sum \(\langle i,j \rangle\) runs over all pairs of nearest neighbors.

\subsection{Heisenberg Model}
The Heisenberg model includes interaction in all three spin components. Its Hamiltonian is expressed as:
\begin{equation}
H = -J \sum_{\langle i,j \rangle} (X(i) \circ  X(j) + Y(i) \circ  Y(j) + Z(i) \circ  Z(j))
\end{equation}
In this model, \(J\) represents the interaction strength, and the \(S\) terms are spin-\(\frac{1}{2}\) operators, allowing for complex spin interactions.

\subsection{XY Model}
The XY model restricts interactions to the X and Y axis components. Its Hamiltonian is:
\begin{equation}
H = -J \sum_{\langle i,j \rangle} (X(i) \circ  X(j) + Y(i) \circ  Y(j))
\end{equation}

The quantum spin lattice models are fundamental for understanding quantum phase transitions, magnetic properties, and many-body quantum phenomena. They provide a rich framework for exploring quantum correlations, entanglement, and the effects of quantum fluctuations on macroscopic systems.

\subsection{t-J Model}\label{sec:tj}

%\liyi{central points: show uitilities. CS people like "deatails".}
%\liyi{Show Ham, --> explain the system in terms of operations in Ham. if no different, maybe just point to the previous. }
%\liyi{compute energy for an example. maybe. or do similar things. show 2 particle or 3 particles. }
%\liyi{Show the exp expansion --> how it is done. }
%\liyi{Show the simulation result. Show how the exponential term produces a unitary/circuit. the exp of the system relates to some quantum circuit, and produce some simulation results. }

The t-J model is a pivotal framework in condensed matter physics for exploring high-temperature superconductivity in cuprate materials. Originating from the Hubbard model in the limit of strong on-site Coulomb repulsion, it prohibits double occupancy of lattice sites, thereby allowing for a focused analysis of electron hopping and spin interactions. This approach is key to understanding superconductivity and magnetism in strongly correlated electron systems.
\myparagraph{Hilbert Space}
The Hilbert space \(\mathpzc{H}\) excludes states with double occupancy. Basis states \(|s_1, s_2, ..., s_N \rangle\) represent each site as empty (0), occupied by an electron with spin up (\(\uparrow\)), or spin down (\(\downarrow\)). This space is formally defined as:
\[
\mathpzc{H} = \{ |s_1, s_2, ..., s_N \rangle : s_i \in \{0, \uparrow, \downarrow\} \, \forall \, i \}
\]

\myparagraph{Hamiltonian}
The Hamiltonian \(H\) includes terms for electron mobility and spin interactions:
\[
H = -t \sum_{\langle ij \rangle, \sigma} (a^\dagger(i,\sigma) \circ a(j,\sigma) + a^\dagger(j,\sigma) \circ  a(i,\sigma)) + J \sum_{\langle ij \rangle} (\mathbf{S}(i) \cdot \mathbf{S}(j) - \frac{1}{4}\mathbb{1}(i) \circ  \mathbb{1}(j))
\]

\myparagraph{Hopping Term (\(-t\))}
Allows electrons to move between adjacent sites, governed by:
\[
-t \sum_{\langle ij \rangle, \sigma} (a^\dagger(i,\sigma) \circ  a(j,\sigma) + a^\dagger(j,\sigma) \circ  a(i,\sigma))
\]

\myparagraph{Spin-Spin Interaction Term (\(J\))}
Models antiferromagnetic interactions:
\[
J \sum_{\langle ij \rangle} (\mathbf{S}(i) \cdot \mathbf{S}(i) - \frac{1}{4}\mathbb{1}(i) \circ  \mathbb{1}(j))
\]

\myparagraph{Spin Operators and Pauli Matrices}
\[
S^x(i) = \frac{\hbar}{2} X, \quad S^y(i) = \frac{\hbar}{2} Y, \quad S^z(i) = \frac{\hbar}{2} Z
\]

For further details, see \cite{sachdev,auerbach,anderson}.

\myparagraph{Transformed t-J Model}
The Jordan-Wigner transformation allows us to map the fermionic operators in the t-J model into spin operators. Below is the transformed t-J model for a one-dimensional lattice system.

\myparagraph{Hopping Term}
The fermionic hopping terms in the t-J model are transformed as follows:
\begin{equation}
-t \sum_{\langle i,j \rangle} \left( \exp\left(i \pi \sum_{l=i}^{j-1} n_l\right) S_i^+  \circ  S_j^- + \text{h.c.} \right),
\end{equation}
where \(S_i^+\) and \(S_j^-\) are spin raising and lowering operators at sites \(i\) and \(j\), respectively. The term \(n_l = \frac{1}{2}(I - Z)\) corresponds to the number operator after transformation, and the exponential term arises due to the non-local string of \(Z\) operators, ensuring the preservation of fermionic anticommutation relations.

\myparagraph{Spin-Spin Interaction Term}
The exchange interaction term, crucial in the t-J model, transforms into:
\begin{equation}
J \sum_{\langle i,j \rangle} \left(\frac{1}{2}(S_i^+ \circ  S_j^- + S_i^- \circ  S_j^+) + S_i^z \circ  S_j^z\right),
\end{equation}
where the terms now involve combinations of the spin raising, lowering, and z-component operators, indicative of the transformed spin interactions.

\ignore{
\section{Jordan-Wigner Transformation}\label{sec:jw-tramsform}
The Jordan-Wigner transformation is a critical tool in quantum mechanics and condensed matter physics, allowing for the mapping between spin systems and fermionic systems. This transformation is particularly useful in the study of lattice models where it enables the analysis of spin chains in the language of fermions, facilitating the application of methods and insights from fermionic systems to solve spin-based problems.

The Jordan-Wigner transformation provides a way to convert between spin operators and fermionic creation and annihilation operators. This is crucial in one-dimensional lattice models, enabling the transformation of spin models, like the XY model or the transverse field Ising model, into models of spinless fermions.

Given a one-dimensional lattice with \(N\) sites, the transformation is defined as follows:
\begin{align*}
    c_j &= \left( \prod_{m=1}^{j-1} \sigma_m^z \right) \sigma_j^-, \\
    c_j^\dagger &= \left( \prod_{m=1}^{j-1} \sigma_m^z \right) \sigma_j^+,
\end{align*}
where \(c_j\) and \(c_j^\dagger\) are the fermionic annihilation and creation operators at site \(j\), and \(\sigma_j^+\) (\(\sigma_j^-\)) are the spin raising (lowering) operators at site \(j\).

Incorporating the shorthand \( K_j = \prod_{m < j} \sigma_m^z \), the Jordan-Wigner transformation maps the fermionic operators as follows:
\begin{align*}
c_{j,\sigma} &= K_j \sigma_j^-, \\
c_{j,\sigma}^\dagger &= K_j \sigma_j^+.
\end{align*}
This representation simplifies the expression of fermionic operators in terms of spin variables, a transformation that proves especially useful in one-dimensional lattice systems.

\subsection{Transformed t-J Model}
The Jordan-Wigner transformation allows us to map the fermionic operators in the t-J model into spin operators. Below is the transformed t-J model for a one-dimensional lattice system.

\subsubsection{Hopping Term}
The fermionic hopping terms in the t-J model are transformed as follows:
\begin{equation}
-t \sum_{\langle i,j \rangle} \left( \exp\left(i \pi \sum_{l=i}^{j-1} n_l\right) S_i^+ S_j^- + \text{h.c.} \right),
\end{equation}
where \(S_i^+\) and \(S_j^-\) are spin raising and lowering operators at sites \(i\) and \(j\), respectively. The term \(n_l = \frac{1}{2}(1 - \sigma_l^z)\) corresponds to the number operator after transformation, and the exponential term arises due to the non-local string of \(\sigma^z\) operators, ensuring the preservation of fermionic anticommutation relations.

\subsubsection{Spin-Spin Interaction Term}
The exchange interaction term, crucial in the t-J model, transforms into:
\begin{equation}
J \sum_{\langle i,j \rangle} \left(\frac{1}{2}(S_i^+ S_j^- + S_i^- S_j^+) + S_i^z S_j^z\right),
\end{equation}
where the terms now involve combinations of the spin raising, lowering, and z-component operators, indicative of the transformed spin interactions.
}

\ignore{
\section{Bravyi-Kitaev Transformation}
The Bravyi-Kitaev transformation offers an alternative to the Jordan-Wigner transformation, providing a more efficient mapping of fermionic operators to qubits for certain quantum simulation tasks. This transformation is particularly beneficial for quantum computation, where it can lead to reduced gate counts in quantum algorithms.

\subsection{Fundamentals of the Bravyi-Kitaev Transformation}
The Bravyi-Kitaev transformation redefines fermionic creation and annihilation operators in terms of Pauli operators, in a manner that maintains the anticommutation relations while optimizing the locality of terms. For a system of \( N \) fermionic modes, the transformation can be expressed as:
\begin{equation*}
    \hat{c}_j \rightarrow \frac{1}{2} (\hat{Q}_j \hat{X}_j + i \hat{Q}_j \hat{Y}_j), \quad \hat{c}_j^\dagger \rightarrow \frac{1}{2} (\hat{Q}_j \hat{X}_j - i \hat{Q}_j \hat{Y}_j),
\end{equation*}
where \( \hat{X}_j \), \( \hat{Y}_j \), and \( \hat{Q}_j \) are products of Pauli matrices that encode the fermionic anticommutation relations into qubit operations. The exact form of \( \hat{Q}_j \) depends on the specific fermionic mode \( j \) and the chosen encoding.

}

\myparagraph{Transformed t-J Model via BK transformation}
In the context of the t-J model, the Bravyi-Kitaev transformation enables the efficient simulation of electron dynamics and spin interactions on a quantum computer. The transformed Hamiltonian in the Bravyi-Kitaev framework for the t-J model takes the form:

\[
\begin{array}{l}
    \hat{H}_{\text{t-J}}^{\text{BK}} = -t \sum_{\langle m,j \rangle, \sigma} \left(\frac{1}{2} (\hat{Q}(m) \circ \hat{X}(m) - i \hat{Q}(m) \circ \hat{Y}(n)) \frac{1}{2} (\hat{Q}(j) \circ \hat{X}(j) + i \hat{Q}(j) \circ \hat{Y}(j)) + \text{h.c.}\right) \\
    \qquad\qquad + J \sum_{\langle m,j \rangle} \left(\mathbf{S}(m) \cdot \mathbf{S}(j) - \frac{1}{4} \mathbb{1}(m) \circ \mathbb{1}(j)\right)
    \end{array}
\]

where \( \mathbf{S}(m) \) and \( \mathbb{1}(m) \) are expressed in terms of qubit operators using the Bravyi-Kitaev encoding. The hopping terms (\( t \)) and the spin-spin interaction terms (\( J \)) now involve the manipulation of qubits instead of fermions, which is more natural for quantum computational frameworks.

The Bravyi-Kitaev transformation offers improved scaling of quantum resources compared to the Jordan-Wigner transformation and can be crucial for implementing quantum simulations of condensed matter systems in practice.

The Bravyi-Kitaev transformation provides a significant advantage in quantum simulation by mapping fermionic algebra to qubit operations more efficiently than the Jordan-Wigner transformation. It is especially useful in the context of quantum computing, where it enables more effective resource utilization and can potentially lead to faster quantum algorithms for simulating many-particle systems in condensed matter physics.

\end{document}